\newtheorem{thm}{Theorem}[]
\newtheorem{lem}{Lemma}
\newtheorem*{lem*}{Lemma}
\newtheorem{ass}{Assumption}
\newtheorem{rmk}{Remark}
\newtheorem{defi}{Definition}
\newcommand\numberthis{\addtocounter{equation}{1}\tag{\theequation}}
\algnewcommand\algorithmicforeach{\textbf{for each}}
\begin{document}

\title{Constraints on the perfect phylogeny mixture model and their effect on reducing degeneracy}

\author{
\IEEEauthorblockN{
John Marangola\IEEEauthorrefmark{1}\IEEEauthorrefmark{4},
Azadeh Sheikholeslami\IEEEauthorrefmark{2}\IEEEauthorrefmark{4},
Jos\'e Bento\IEEEauthorrefmark{3}\IEEEauthorrefmark{5}
}\\
\IEEEauthorblockA{\IEEEauthorrefmark{1}Massachusetts Institute of Technology}
\IEEEauthorblockA{\IEEEauthorrefmark{2}Suffolk University}
\IEEEauthorblockA{\IEEEauthorrefmark{3}Boston College}
\thanks{\IEEEauthorrefmark{4}These authors contributed equally to this work.}
\thanks{\IEEEauthorrefmark{5}Corresponding author: Jos\'e Bento (jose.bento@bc.edu).}
}

\maketitle

\begin{abstract}
The perfect phylogeny mixture (PPM) model is useful due to its simplicity and applicability in scenarios where mutations can be assumed to accumulate monotonically over time. 
It is the underlying model in many tools \cite{miller2014sciclone,roth2014pyclone,malikic2015clonality,popic2015fast,el2015reconstruction,deshwar2015phylowgs,el2016inferring,reiter2017reconstructing,satas2017tumor,husic2019mipup,myers2019calder,toosi2019bamse,xiao2020fastclone,andersson2021devolution,baghaarabani2021conifer,hurtado2025phyclone} that have been used, for example, to infer phylogenetic trees for tumor evolution and reconstruction \cite{el2016inferring}. 
Unfortunately, the PPM model gives rise to substantial ambiguity --  in that many different phylogenetic trees can explain the same observed data -- even in the idealized setting where data are observed perfectly, i.e. fully and without noise.
This ambiguity has been studied in this perfect setting \cite{pradhan2018non}, which proposed a procedure to bound the number of solutions given a fixed instance of observation data. Beyond this, studies have been primarily empirical.
Recent work proposed adding extra constraints to the PPM model to tackle ambiguity \cite{myers2019calder}.

In this paper, we first show that the extra constraints of \cite{myers2019calder}, called \emph{longitudinal constraints} (LC), often fail to reduce the number of distinct trees that explain the observations. 
We then propose novel alternative constraints to limit solution ambiguity and study their impact when the data are observed perfectly. 
Unlike the analysis in \cite{pradhan2018non}, our theoretical results---regarding both
the inefficacy of the LC and the extent to which our new constrains reduce ambiguity are not tied to a single observation instance. 
Rather, our theorems hold over large ensembles of possible inference problems.
To the best of our knowledge, we are the first to study degeneracy in the PPM model in this ensemble-based theoretical framework.
\end{abstract}

\section{Introduction}

Phylogenetic tree inference plays a central role in advancing scientific knowledge. In the medical field, for example, it can critically improve cancer treatment outcomes \cite{schwartz2017evolution}. 
Understanding the evolutionary relationships among tumor clones is crucial for devising effective treatment strategies that target vulnerabilities shared among ancestral clones.

The standard approach to analyzing tumor genomes is bulk tumor sequencing, which provides a snapshot of the overall genetic mutations present within a tumor sample. To reconstruct ancestral relationships from bulk sequencing data, it is necessary to impose structural constraints on the evolutionary model. One such constraint that is frequently employed is the \emph{infinite sites assumption} \cite{kimura1969number, hudson1983properties}. 
The infinite sites assumption states that a mutation arises at a specific locus only once during the tumor's evolutionary progression. This implies that throughout evolution the same genetic alteration cannot arise independently in separate lineages but can only be inherited, resulting in what is known as a perfect phylogeny. 
Although the perfect phylogeny assumption is restrictive in certain scenarios \cite{bonizzoni2014and, bonizzoni2014explaining, zafar2017sifit, bonizzoni2017beyond, davis2017tumor, el2015reconstruction}, it captures many cancer evolutionary processes and has therefore been utilized extensively in the literature \cite{dang2017clonevol, deshwar2015phylowgs, el2015reconstruction, jia2018efficient}.

The Perfect Phylogeny Mixture (PPM) model \cite{hudson1983properties, kimura1969number} formalizes the infinite sites assumption by relating relative abundances of mutants $M^*$, their ancestral relationships $U^*$, and the observed frequencies of distinguishing mutations $F^*$. 
In equation \eqref{eq:example_values_intro_paper_1}, we present an illustrative example of each of these quantities and explain their interpretation. 

\begin{align}\label{eq:example_values_intro_paper_1}
&M^* 
= \begin{bmatrix}
{\bf M}_{1,:} \\[4pt]
{\bf M}_{2,:} \\[4pt]
{\bf M}_{3,:} \\[4pt]
{\bf M}_{4,:}
\end{bmatrix}
=
\begin{bmatrix}
0.6 & 0.3 & 0.1 & 0.0 \\
0.1 & 0.3 & 0.4 & 0.5 \\
0.2 & 0.3 & 0.4 & 0.5 \\
0.1 & 0.1 & 0.1 & 0.0 
\end{bmatrix};\\
&U^*
=
\begin{bmatrix}
1 & 1 & 1 & 1 \\
0 & 1 & 1 & 0 \\
0 & 0 & 1 & 0 \\
0 & 0 & 0 & 1
\end{bmatrix};
F^*
=
\begin{bmatrix}
1.0 & 1.0 & 1.0 & 1.0 \\
0.3 & 0.6 & 0.8 & 1.0 \\
0.2 & 0.3 & 0.4 & 0.5 \\
0.1 & 0.1 & 0.1 & 0.0
\end{bmatrix}.\nonumber
\end{align}

Matrix $M^*$ describes how the relative abundances of mutant types $1, 2, 3$ and $4$ evolve over discrete sampling times $t = 1, 2, \dots, 5$. 
The entry $M^*_{i,t}$ denotes the fraction of mutant type $i$ in the population at time $t$.
Matrix $U^*$ encodes the ancestral relationships among mutant types in a binary matrix, where $U_{i,j} = 1$ indicates that mutant $i$ is an ancestor of mutant $j$. 
By definition, $U_{i,i} = 1$ for all $i$.
The matrix $U^*$ describes a rooted tree, with root mutant type $1$, which has children mutant types $4$ and $2$, and where mutant type $3$ is a child of mutant type $2$. 
Throughout the paper, and to simplify language, we will sometimes refer to ``mutant type $i$'' simply as ``mutant $i$''.
Under the infinite sites assumption, the structure encoded by $U^*$ we can extract unique mutations that accumulate and are never lost, and that distinguish child and parent. 
In particular, mutant $1$ carries the symbolic root (null) mutation $1$, which is inherited by all its descendants. 
Mutant $i$ acquires mutation $i$ and transmits it to all of its descendants. Because each mutant is uniquely identified by the mutation it acquires and transmits, the columns of $U^*$ encode both ancestral relationships and  mutation content.

At each point in time, the matrices $M^*$ and $U^*$ together determine the relative abundances of different mutations in the population.
For example, all mutants share mutation $1$, and therefore ${\bf F}^*_{1,:}$, the frequency of mutation $1$  across time, is constant and equal to $[1, 1, 1, 1, 1]$. 
Mutation $2$ is shared by mutants $2$ and $3$, and thus its frequency over time is ${\bf F}^*_{2,:}  = {\bf M}^*_{2,:} + {\bf M}^*_{3,:} = [0.1, 0.3, 0.4, 0.5] + [0.2, 0.3, 0.4, 0.5] = [0.3, 0.6, 0.8, 1.0]$. 
We can compute ${\bf F}^*_{3,:}$ and ${\bf F}^*_{4,:}$ by analogous reasoning. These operations can be compactly expressed in matrix form as
\begin{equation} \label{eq:first_time_F_UM_is_introduced_paper_1}
F^* = U^* M^*,
\end{equation}
which holds generally for any valid PPM instance.

The problem of inferring a phylogenetic tree from bulk data under the PPM model is to infer $U^*$ (and $M^*$) from $F^*$, or from a noisy or partially observed version of $F^*$. 
In addition to the fact that solving this problem is NP-hard \cite{ bodlaender1992two, steel1992complexity, el2015reconstruction},
it frequently admits multiple distinct solutions, corresponding to different plausible evolutionary trajectories that explain the same data.
This multiplicity of solutions challenges biological interpretation—for instance, when designing treatment strategies targeting shared vulnerabilities among ancestral clones. 
Throughout this paper, we refer to this multiplicity of solutions as \emph{degeneracy}. More degeneracy implies more distinct solutions.

In the example of \eqref{eq:example_values_intro_paper_1}, degeneracy manifests as ambiguity in the placement of mutant type $4$ within the evolutionary tree. 
One solution follows $U^*$, where mutant type $4$ is a child of mutant type $1$. 
Alterative solutions place mutant $4$ as a child of mutant $2$ or mutant $3$. 
For each of these three choices of $U$, a corresponding abundance matrix  $M$ exists that satisfies \eqref{eq:first_time_F_UM_is_introduced_paper_1}. 
For any other choice of $U$, no meaningful $M$ satisfies \eqref{eq:first_time_F_UM_is_introduced_paper_1}.

As the number of mutants increases or the number of sampling times decreases, the degree of degeneracy can grow rapidly.  
Indeed, \cite{pradhan2018non} reports instances with as few as $13$ mutants and $5$ time samples that admit at least $380$ alternative solutions. 
They also propose a numerical procedure to bound the number of solutions for a specific instance. This procedure consists of two steps: first, constructing a directed acyclic graph (DAG) from specific and perfect observation data $F^*$,
where edges represent admissible parent-child relationships;
and Second, counting the spanning trees of this DAG using  combinatorial techniques.

To reduce degeneracy, additional constraints may be imposed on the PPM model. 
For instance, consider again the example in \eqref{eq:example_values_intro_paper_1} and the solution in which mutant $4$ is inferred as a child of mutant $2$. 
The  corresponding ancestral matrix is
$U= [[1  ,   1 ,    1 ,    1];
     [0    , 1   ,  1   ,  1];
     [0   ,  0   ,  1  ,   0];
     [0   ,  0  ,   0   ,  1]]$,
and the associated abundance matrix is 
$M = U^{-1} F^* = [[0.7,  0.4, 0.2, 0.0];$ $[
0.0   , 0.2  ,  0.3   , 0.5];$ $
[0.2  ,  0.3  ,  0.4  ,  0.5];$ $
[0.1  ,  0.1  ,  0.1     ,    0.0]]$, where rows are sub-arrays. 
This solution exhibits two notable properties that can be inferred from $M$. 
First, at time $t=1$, mutant $4$ is present while its infered parent, mutant $2$, has not yet appeared; 
mutant $2$ only emerges at time $t = 2$. 
Second, the mutant abundances in $M$ vary more rapidly over time than in the ground-truth solution $M^*$. 
For example, defining $r(M) = \sum^4_{i=1} \sum^3_{t=1} |M_{i,t+1}-M_{i,t}|$, we obtain $r(M) = 1.6$ for this solution, compared to $r(M^*) = 1.4$. 
    
These observations motivate additional constraints on the PPM model that exclude such solutions (e.g., where mutant $4$ is inferred as a child of mutant $2$), and thereby reduce degeneracy. 
The first class of constraints, known as \emph{longitudinal conditions} (LC), was introduced in \cite{myers2019calder} and  explicitly forbids a child mutant from appearing before its parent.
The second class, which we term \emph{dynamic constraints} (DC), is introduced in this work and restricts the temporal variation of mutant abundances; for example, one may forbid solutions with $r(M) > 1.5$. 
Such a constraint would exclude both the solution where mutant $4$ is a child of mutant $2$ and the solution where mutant $4$ is a child of mutant $3$. 

Our goal is to theoretically study the degeneracy of \eqref{eq:first_time_F_UM_is_introduced_paper_1} and of two variants aimed at reducing it: one incorporating LC and the other incorporating DC.

\subsection{Paper organization and summary of main results}

In Section \ref{sec:notation_paper_1}, we introduce the notation. 
In Section \ref{sec:background_ppm_paper_1}, we give background on the PPM model.
In Section \ref{sec:degeneracy_paper_1}, we discuss degeneracy in the classical model,
and in Section \ref{sec:longitudional_conditions_definition_paper_1}, we review the LC recently proposed to tackle degeneracy. 
 In Section \ref{sec:related_work_paper_1}, we review other related work.
In Section \ref{sec:main_results_paper_1}, we present our main results, which constitute, to our knowledge, the first theoretical study of degeneracy under both LC and DC.
Complete proofs and auxiliary results are provided in the appendices.
In Section \ref{sec:numerical_paper_1}, we present numerical experiments, and in Section \ref{sec:conclusion_paper_1}, we conclude the paper.
Our main theoretical contributions are as follows:
\begin{itemize}[leftmargin=10pt]
\item In Section \ref{sec:redefinition_of_longitudional_conditions_paper_1}, we prove an equivalent reformulation of the LC introduced in \cite{myers2019calder} that is simpler, in that it involves only mutant abundances and ancestral relationships, and not mutation frequencies;
\item In Section \ref{sec:results_about_calder_inefficiency_paper_1}, we introduce an ensemble of problems for which we prove that LC do not reduce degeneracy;
\item In Section \ref{sec:effect_of_dynamic_restrictions_on_degeneracy_paper_1}, we introduce a second ensemble of problems for which we compute a lower bound on the degeneracy of the PPM model. 
We then focus on small deviations from the ground truth phylogenetic tree and derive an upper bound on the degeneracy when DC are imposed on the evolutionary trajectories of mutants.
\end{itemize}

\section{Notation} \label{sec:notation_paper_1}
This section defines the primary notation used throughout the paper. 

\par\noindent\textbf{Acronyms and global parameters.} We use the following acronyms throughout the paper: PPM stands for Perfect Phylogeny Mixture, LC for Longitudinal Conditions, ELC for Extended Longitudinal Conditions, and DC for Dynamic Constraints. We denote by $q$ the number of mutant types (equivalently, mutations) and by $n$ the number of (time) samples.

\par\noindent\textbf{Indices and indexing conventions.} We use $i$ to index mutant types, mutations, or equivalently nodes in a phylogenetic tree, and $t$ to index samples. We write $[q]\equiv\{1,\dots,q\}$ and $[n]\equiv\{1,\dots,n\}$. Unless stated otherwise, unordered index sets are assumed to be ordered increasingly when used for indexing.

\par\noindent\textbf{Matrices, vectors, and ground truth quantities.} Upper-case letters denote matrices and bold lower-case letters denote vectors. In particular, $F\in\mathbb{R}^{q\times n}$ is the matrix of mutation frequencies and $M\in\mathbb{R}^{q\times n}$ is the matrix of mutant abundances. Bold notation such as ${\bf F}_{i,:}$ and ${\bf F}_{:,t}$ refers to rows and columns of $F$. Ground truth quantities are marked with a superscript $^*$; for example, $F^*$, $M^*$, and $U^*$ denote the true mutation frequencies, abundances, and ancestral relationships.

\par\noindent\textbf{Tree and ancestry notation.} We consider directed labeled rooted trees (arborescences) with node set $[q]$ and root $r$. For a node $i$, $\partial i$ denotes the set of its children, and $\Delta i$ denotes the set consisting of $i$ together with all its descendants. The matrix $U\in\{0,1\}^{q\times q}$ encodes ancestral relationships: $U_{i,j}=1$ if and only if $i=j$ or $i$ is an ancestor of $j$. We also use a parent-operator matrix $T$.
We use $\partial^* i$ and $\Delta^* i$ to denote the corresponding sets in the ground truth tree encoded by $U^*$.

\par\noindent\textbf{Linear algebra conventions.} For a matrix $A$, $A^\top$ denotes its transpose and $A^{-1}$ its inverse when it exists. Inequalities such as $A\geq {\bf 0}$ are interpreted component-wise, e.g. $A \geq {\bf 0}$ means that $A_{i,j} \geq 0,\; \forall i,j$. The identity matrix is denoted by $I$, or by $I_n$ or $I_{n \times n}$ for an $n\times n$  identity matrix when its dimension needs to be specified. The vector ${\bf 1}$ denotes an all-ones vector, with ${\bf 1}_{1\times n}$ and ${\bf 1}_{n\times1}$ indicating its shape when needed.

\par\noindent\textbf{Sets and multisets.} Calligraphic letters denote sets or multisets. For a (multi)set $\mathcal{S}$, $|\mathcal{S}|$ denotes its cardinality, counting multiplicities if $\mathcal{S}$ is a multiset. 
We use $\cap$ to denote (multi)set intersection and $\cup$ to denote union. Repeated elements are not removed when performing union involving multisets. We use $-$ to denote (multi)set different. Consider $\mathcal{C} = \mathcal{A}-\mathcal{B}$. If $\mathcal{A}$ has $x$ copies of element $e$ and $\mathcal{B}$ has $y\leq x$ copies of $e$, then $\mathcal{C}$ has $y-x$ copies of $e$. If $y > x$, $\mathcal{C}$ has no $e$. Given a multiset $\mathcal{S}$ and an array $M$, $M_{\mathcal{S}} \equiv \sum_{j\in \mathcal{S}} M_{j}$, where repeated elements are added multiple times in the sum.

\par\noindent\textbf{Subvectors and submatrices.} Given a vector ${\bf v}$ and an ordered index set $\mathcal{S}$, the vector ${\bf v}_{\mathcal{S}}$ consists of the entries of ${\bf v}$ indexed by $\mathcal{S}$ in that order. Given a matrix $A$ and ordered index sets $\mathcal{S}$ and $\mathcal{Q}$, $A_{\mathcal{S},\mathcal{Q}}$ denotes the corresponding submatrix. For $A\in\mathbb{R}^{q\times n}$, we write $A_{:,\mathcal{S}}=A_{[q],\mathcal{S}}$ and $A_{\mathcal{S},:}=A_{\mathcal{S},[n]}$.
If the order of an indexing set $\mathcal{S}$ is not specified, we assume its elements are ordered in increasing order.

\par\noindent\textbf{Probability and norms.} We use $\mathbb{P}(E)$ to denote the probability of an event $E$, $\mathbb{E}(X)$ the expectation of a random variable $X$, and $\mathbb{P}(A\mid B)$ the conditional probability of $A$ given $B$. Unless specified otherwise, $|\cdot|$ and $|\cdot|_2$ denote the Euclidean norm for vectors and the Frobenius norm for matrices, while $|\cdot|_1$ denotes the vector $1$-norm or the $1$-norm of a vectorized matrix.

\par\noindent \textbf{Time-related definitions.} For each mutant type $i$, $t_i^{\min}$ and $t_i^{\max}$ denote the birth and death times as defined in \cite{myers2019calder}. The quantities ${t'}_i^{\min}$ and ${t'}_i^{\max}$ denote the corresponding times under our definitions (see Section~\ref{sec:redefinition_of_longitudional_conditions_paper_1}). Using the ground truth matrices $M^*$ and $U^*$, we write ${t^*}'^{\min}_i$ and ${t^*}'^{\max}_i$ for the birth and death times computed according to our definitions, and ${t^*}_i^{\min}$ and ${t^*}_i^{\max}$ for those computed using the definitions of \cite{myers2019calder}.

\section{Background}

\subsection{Inference using the PPM Model}\label{sec:background_ppm_paper_1}

Consider a sample with different types of mutants, possibly comprising multiple clones per type, where each type of mutant has a unique set of mutations that distinguishes it from other types. 
The PPM model relates the following three quantities: the (relative) abundance of different mutations, the (relative) abundance of each mutant type, and the ancestral relationships among mutant types.
The PPM model assumes that the descendants of a mutant inherit all of its mutations and have additional mutations. Mutations are not lost, they only accumulate, and the same mutation does not appear in separate lineages. This is sometimes called the infinite sites assumption.

If there are $q$ different types of mutants in the sample, we assume that among these is a unique mutant type, the \emph{root mutant type}, from which all the other mutants types descend. 
We assume that the ancestral relationship between mutants types is a rooted directed tree, sometimes also called an \emph{arborescence}.
For mathematical convenience, we assume that the root mutant type has
a mutation, the \emph{root mutation}, which is shared by all the mutants. We assume that the other mutants accumulate at most $q-1$ new mutations, hence, $q$ mutations in total. We note that a single ``new mutation''  may correspond to changes at multiple genome positions, a detail that we abstract away in this paper.
This allows us to label mutant types and mutations using the same labels. The mutant type $i$ is the mutant with the fewest mutations (i.e. the simplest mutant) that has mutation $i$. Without loss of generality, we use integers to describe mutations and mutant types,  and let $i \in [q] \equiv \{1,\dots,q\}$. 

It follows that the fraction (or number) of clones in one sample that have a mutation $i$ is the sum of (a) the fraction (or number) of clones of the simplest mutant type that has mutation $i$ and (b) the fraction (or number) of all its descendants, which according to the PPM model are the only other clones that also have the mutation $i$. 
We are interested in inferring 
phylogenetic trees from $n$ different population samples, and these relationships hold for each sample.
If we distinguish samples using $t\in[n]$ for the $t$-th sample,
we 
have
\begin{align}
F_{i,t} &=  M_{i,t} + \hspace{-0.5cm} \sum_{\substack{j \text{ descendant} \\ \text{of } i}}\hspace{-0.5cm} M_{j,t}  =M_{i,t} +  \sum_{j \neq i: U_{i,j}=1} M_{j,t}   \label{eq:simple_PPM_model_expanded_paper_1}  \\
  &= \sum^q_{j=1} U_{i,j} M_{j,t} = [U {M}]_{i,t} \label{eq:simple_PPM_model_paper_1},
\end{align}
where
\begin{enumerate}[leftmargin=10pt]
    \item ${M} \in {\mathbb{R}_0^+}^{q\times n}$ and $M_{i,t}$ is the fraction of mutant type $i$ in the population in sample $t$, in which case $\sum^q_{i=1} M_{i,t} = 1$, or $M_{i,t}$ is the raw count of all clones of mutant type $i$ in sample $t$;
   \item ${F}\in {\mathbb{R}_0^+}^{q\times n}$ and $F_{i,t}$ is the fraction of mutation $i$ in the population in sample $t$, in which case $F_i \leq 1$, or it is the raw count of all mutations $i$ in that sample;
    \item $U \in \{0,1\}^{q\times q}$ is the ancestral matrix, with entries indexed by nodes of the arborescence, which represent mutant types (or equivalently mutations, since they share labels as explained).
    Entry $U_{i,j} = 1$ if and only if $i=j$ or node $j$ is an ancestor of node $i$;
    \begin{itemize}
        \item[--] It follows immediately from this definition that the $j$-th column of $U$ has a $1$ in row $i$ if and only if $i=j$ or if node $i$ is a child of node $j$, or a child of a child of $j$, etc. Hence
        $U = I + {T} + {T}^2 + \dots + {T}^{q-1}$, where ${T}$ is a matrix representation of the operator that takes children to parents in the ancestry tree. 
        That is, ${T}$ satisfies ${T} {\bf e}_i = {\bf e}_j$ if $j$ is the parent of $i$, where ${\bf e}_i$ is the $i$-th canonical basis vector. 
        Since ${T}^{q} = {T}^{q+1} = \dots = 0$, it follows that $U = (I - {T})^{-1}$. Thus, the same ancestral relationship among a set of mutant types, or mutations can be represented using either $U$ or ${T}$;
        \item[--] For a fixed $U$ and a node $i$, we define $\Delta i$ as the set consisting of $i$ and all of its descendants, and $\partial i$ as the set of children of $i$;
    \end{itemize}
\end{enumerate}

In this paper, we study the multiplicity of solutions to the following inference problem.
Assume there exists an unknown triple $U^*$, $M^*$ and $F^*$ that satisfies \eqref{eq:simple_PPM_model_paper_1}.  We observe a clean, corrupted, or masked version of $F^*$, which we denote by $F$ to distinguish it from $F^*$. From $F$, we seek to recover $U^*$ and $M^*$. 

Model \eqref{eq:simple_PPM_model_paper_1} allows ancestral relationships $U$ with multiple roots, that is, a forest of directed rooted trees. To simplify the exposition, we make the following assumption.
\begin{ass}\label{ass:single_root_trees_assumption_paper_1}
Matrix $U^*$ has a unique root, $r^*$. 
\end{ass}
Accordingly, when searching for an alternative $U$ that explains $F$, we require that $U$ encode a single-root directed tree.

If the $n$ samples are unrelated, we have $n$ separate inference problems. 
We focus instead on the more interesting and useful setting in which the ground truth ancestral matrix $U^*$ is shared across all samples, while the abundance of mutant type $i$ might vary across samples, i.e., $M^*_{i,t}$ might be different from $M^*_{i,t'}$ for $t \neq t'$. 

The relationship between the observed $F$ and the ground truth $F^*$ leads to several inference formulations. We now list a few. Assume that $M^*$ represent fractions rather than counts.
\begin{enumerate}[leftmargin=10pt]
    \item If $F=F^*$ and $F^*$ is fully observed, we seek  factorizations of the form $F = UM$, subject to the constraints ${\bf 1}^{\top} M = {\bf 1}^{\top}$, $M \geq {\bf 0}$, and that $U$ is a valid ancestral matrix;
    \item If only $q' < q$ rows of $F^*$ are observed, we replace ${\bf 1}^{\top} M = {\bf 1}^{\top}$ by ${\bf 1}^{\top} M \leq {\bf 1}^{\top}$ in the first formulation. 
    During inference, both $F$ and $M$ have $q'$ rows (see Appendix \ref{appendix:PPM_partially_observed_paper_1} for more details);
    \item If $F$ is a corrupted version of $F^*$, we impose the constraint that ${F}$ be close to $UM$, for example by minimizing $\|{F} - U M\|_2$ subject to ${\bf 1}^{\top} M \leq {\bf 1}^{\top}$ and $M \geq {\bf 0}$, where $\|\cdot\|_2$ is the Frobenius matrix norm; 
\end{enumerate}

These inference procedures can be interpreted as maximum likelihood, or maximum a posteriori estimation, under  appropriately defined posteriors and priors.
In all cases, the resulting inference problem is computationally hard \cite{ bodlaender1992two, steel1992complexity, el2015reconstruction},
and a myriad of heuristic methods have been proposed.
Examples of commonly used tools include PhyloSub \cite{jiao2014inferring}, PhyloWGS \cite{deshwar2015phylowgs}, AncesTree \cite{el2015reconstruction}, and BEAST \cite{suchard2018bayesian}, along with additional literature cited in \cite{kapli2020phylogenetic}. 

\subsection{Degeneracy in the PPM model} \label{sec:degeneracy_paper_1}

Even when we observe the frequency of all mutations in the samples with perfect accuracy, one might find multiple solutions that explain the observations.

In a situation where we work with relative abundances, assume that we observe $F = F^*$ and we want to find a candidate ancestry matrix $U$ to explain it. Given $U$, the candidate $M$ is fixed and equals 
$M = {U}^{-1} F^* = F^* - {T} F^*$, and must satisfy conditions ${\bf 1}^{\top} M = {\bf 1}^{\top}$ 
and $M \geq {\bf 0}$. We can study the degeneracy of this problem by counting how many ancestry matrices $U$ there are for which
\begin{align} 
&U^{-1} F^*  \geq 0,\label{eq:first_cond_for_solution_paper_1}\\
&{\bf 1}^{\top} {U}^{-1} F^* = {\bf 1}^{\top}\label{eq:second_cond_for_solution_paper_1}.
\end{align}
Condition \eqref{eq:second_cond_for_solution_paper_1} can be simplified under the following assumption. Recall that $r^*$ is the unique root of $U^*$. 
\begin{ass}\label{ass:assumption_no_root_node_is_zero_paper_1}
When doing inference, there is a sample $t$ for which $M^*_{r^*,t} > 0$.
\end{ass}

\begin{lem}\label{th:assumption_2_lemma_paper_1}
Let $n$ and $q$ be fixed. Let $F^* = U^* M^*$, where $U^*$ is an ancestry matrix with unique root $r^*$ (Assumption \ref{ass:single_root_trees_assumption_paper_1}) and $M^*$ is a mutant frequency matrix, i.e. $M^* \geq 0,\;{\bf 1}^{\top} M^* = {\bf 1}^{\top}$, satisfying  Assumption \ref{ass:assumption_no_root_node_is_zero_paper_1}.
Let $U$ be an ancestry matrix with a single root $r$ and $M = U^{-1}F^*$. Then
 ${\bf 1}^{\top} M = {\bf 1}^{\top}$ if and only if $r=r^*$.  
\end{lem}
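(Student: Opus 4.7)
The plan is to reduce the constraint ${\bf 1}^\top M = {\bf 1}^\top$ to a statement about a single row of $F^*$, and then read off the conclusion from the structure of the two trees.

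First I would use the identity $U^{-1} = I - T$ given in the background section. This lets me rewrite ${\bf 1}^\top M = {\bf 1}^\top U^{-1} F^* = ({\bf 1}^\top - {\bf 1}^\top T) F^*$. The key computation is to determine the row vector ${\bf 1}^\top T$. Because $T$ takes each child canonical basis vector to its parent's, the entry $T_{i,j}$ equals $1$ iff $i$ is the parent of $j$. Consequently the column sums of $T$ are exactly $1$ for every non-root node and $0$ for the root $r$, so ${\bf 1}^\top T = {\bf 1}^\top - {\bf e}_r^\top$. Substituting, ${\bf 1}^\top U^{-1} = {\bf e}_r^\top$ and hence ${\bf 1}^\top M = {\bf e}_r^\top F^* = {\bf F}^*_{r,:}$. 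So the desired condition ${\bf 1}^\top M = {\bf 1}^\top$ is equivalent to ${\bf F}^*_{r,:} = {\bf 1}^\top_{1 \times n}$.

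Next I would evaluate ${\bf F}^*_{r,:}$ using $F^* = U^* M^*$. By definition of $U^*$, ${\bf F}^*_{r,t} = \sum_{j \in \Delta^* r} M^*_{j,t}$. The ``if'' direction is immediate: when $r = r^*$, every node is a descendant of $r^*$, so $\Delta^* r^* = [q]$ and ${\bf F}^*_{r^*,t} = \sum_{j=1}^{q} M^*_{j,t} = 1$ for every $t$ by the column-stochasticity of $M^*$.

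For the ``only if'' direction, suppose $r \neq r^*$. Since $r^*$ is the unique root of $U^*$, it has no proper ancestor, so $r$ cannot be an ancestor of $r^*$; combined with $r \neq r^*$, this gives $r^* \notin \Delta^* r$. Hence
\begin{equation*}
{\bf F}^*_{r,t} \;=\; \sum_{j \in \Delta^* r} M^*_{j,t} \;=\; 1 - \sum_{j \notin \Delta^* r} M^*_{j,t} \;\leq\; 1 - M^*_{r^*,t}.
\end{equation*}
By Assumption~\ref{ass:assumption_no_root_node_is_zero_paper_1} there exists some $t$ with $M^*_{r^*,t} > 0$, which yields ${\bf F}^*_{r,t} < 1$, so ${\bf F}^*_{r,:} \neq {\bf 1}^\top$. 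I do not anticipate a serious obstacle; the only mildly delicate point is keeping the row-versus-column bookkeeping for $T$ straight, since it is the column sums of $T$ (not the row sums) that decide which canonical basis vector ${\bf 1}^\top U^{-1}$ collapses to.
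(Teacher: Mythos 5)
Your proof is correct and follows essentially the same route as the paper's: both reduce ${\bf 1}^\top M$ to ${\bf e}_r^\top F^* = {\bf U}^*_{r,:}{\bf M}^*_{:,t}$ via the identity ${\bf 1}^\top(I-T) = {\bf e}_r^\top$, and then conclude using ${\bf U}^*_{r^*,:} = {\bf 1}^\top$ in one direction and $U^*_{r,r^*}=0$ (your $r^*\notin\Delta^* r$) together with Assumption~\ref{ass:assumption_no_root_node_is_zero_paper_1} in the other. Your column-sum justification of ${\bf 1}^\top T = {\bf 1}^\top - {\bf e}_r^\top$ is a slightly more explicit rendering of a step the paper states without comment.
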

\begin{proof}
Let $U = (I - T)^{-1}$ have a unique root $r$ and let ${\bf e}_r$ be the $r$-th canonical basis vector in Euclidean space.
Let $t$ be one instant for which $M^*_{r^*,t} > 0$, which by Assumption \ref{ass:assumption_no_root_node_is_zero_paper_1} we know exists.
We can write ${\bf 1}^{\top} {\bf M}_{:,t} = {\bf 1}^{\top} U^{-1} U^* {\bf M}^*_{:,t}  = {\bf 1}^{\top} (I - T) U^* {\bf M}^*_{:,t}  = {\bf e}_r^{\top} U^* {\bf M}^*_{:,t}= {
\bf U}^*_{r,:} {\bf M}^*_{:,t}$.
If $r = r^*$ then ${\bf U}^*_{r^*,:}= {\bf 1}^{\top}$ and thus we can continue ${
\bf U}^*_{r,:} {\bf M}^*_{:,t}= {\bf 1}^{\top} {\bf M}^*_{:,t} = 1$.
If $r \neq r^*$, note that $U^*_{r,r^*}=0$, and therefore
${
\bf U}^*_{r,:} {\bf M}^*_{:,t}= U^*_{r,r^*} {M}^*_{r^*,t} + \sum_{v \neq r^*} U^*_{r,v} {M}^*_{v,t} \leq \sum_{v \neq r^*} {M}^*_{v,t} = 1 - {M}^*_{r^*,t} < 1$.
\end{proof}
\begin{rmk}\label{rmk:comment_on_removing_sum_condition_in_other_cases_paper_1}
The proof of Lemma \ref{th:assumption_2_lemma_paper_1} extends easily to the case where ${\bf 1}^{\top} M^* = {\bf 1}^{\top}$. 
In particular, keeping the lemma's statement mostly unchanged, the following hold: 
if $U$ and $U^*$ have unique roots $r$ and $r^*$ respectively, and  ${\bf 1}^{\top} {M}^* = {\bf C}^{\top}$, then ${\bf 1}^{\top} {M} = {\bf C}^{\top} \Leftrightarrow  r=r^*$; 
if $U$ and $U^*$ have unique roots and ${\bf 1}^{\top} {M}^* \leq {\bf C}^{\top}$,  then ${\bf 1}^{\top} {M} \leq {\bf C}^{\top}$, regardless of whether $r=r^*$ or $r \neq r^*$. 
\end{rmk}

Lemma \ref{th:assumption_2_lemma_paper_1} tells us that, if Assumptions \ref{ass:single_root_trees_assumption_paper_1} and  \ref{ass:assumption_no_root_node_is_zero_paper_1} hold,
when we study how many solutions $(U,M)$ explain $F^*$ in the scenario where ${\bf 1}^{\top} M^* = {\bf 1}^{\top}$, we can 
avoid enforcing \eqref{eq:second_cond_for_solution_paper_1} during inference, and instead enforce \eqref{eq:first_cond_for_solution_paper_1} together with requiring that the root of $U$ satisfies $r=r^*$. 

Condition \eqref{eq:first_cond_for_solution_paper_1} can be written as 
$
F^*  \geq T F^*,
$
where $T$ is the tree matrix associated with $U$ as described in the third point in Section \ref{sec:background_ppm_paper_1}.
This condition was called the \emph{sum condition} by \cite{pradhan2018non}, where it was rewritten as,
\begin{equation} \label{eq:sum_condition_paper_1}
    {\bf F}^*_{i,:} \geq \sum_{j \in \partial i} {\bf F}^*_{j,:} \;\;\forall i = 1,\dots,q,
\end{equation}
where $ \partial i$ is the set of all children of $i$ in $U$. Equation \eqref{eq:sum_condition_paper_1} implies that for all $i$ and $j$ such that $j\in\partial i$, 
%
\begin{equation} 
\label{eq:subset_of_sum_condition_paper_1}
{\bf F}^*_{i,:} \geq {\bf F}^*_{j,:}.   
\end{equation}
This is a necessary condition that any $U$ must satisfy to explain $F^*$.

Condition \eqref{eq:subset_of_sum_condition_paper_1}
admits a graph-theoretic interpretation.
Assume that $F^*$ has no repeated columns. This implies that there is no sequence of distinct mutations $i_1,\dots,i_k$ such that $F_{i_1,:} \geq F_{i_2,:}\geq \dots \geq F_{i_k,:}\geq F_{i_1,:}$. 
The directed graph $G_{F^*} = (V_{F^*},E_{F^*})$, whose vertices are the mutations and vertex $i$ connects to vertex $j$ if and only if ${\bf F}^*_{i,:} \geq {\bf F}^*_{j,:}$ is therefore a directed acyclic graph (DAG).
%
Any solution $U$ must be a 
subgraph of $G_{F^*}$, since it must satisfy \eqref{eq:sum_condition_paper_1} and hence also \eqref{eq:subset_of_sum_condition_paper_1}.

Since the observed
$F^*$ can be explained by $U^*$, which is a rooted directed tree on all $q$ nodes, the DAG $G_{F^*}$ contains the true tree $U^*$ and is therefore \emph{weakly connected}, i.e., any node can be reached via the root $r^*$ of $U^*$. 
Since $G_{F^*}$ is acyclic, $r^*$ is the unique node from which all nodes are reachable, and hence $G_{F^*}$ itself has a unique root (source) node. 
Since any solution must be a subgraph of $G_{F^*}$ and include all $q$ nodes, 
any solution $U$ that explains $F^*$ must be a 
spanning directed tree of $G_{F^*}$ rooted at $r^*$.
Any
spanning tree of $G_{F^*}$ is not automatically a solution, but it is if it also satisfies \eqref{eq:sum_condition_paper_1} 
\footnote{It is possible to generalize the discussion to the case where there are multiple mutants from which all others descend. That is, multiple possible roots. In that case, $G_{F^*}$ contains directed cycles and $F^*$ necessarily has repeated columns.}.  

The authors of \cite{pradhan2018non} use this fact to upper bound the number of solutions for a given $F=F^*$. 
Any spanning tree of $G_{F^*}$ is constructed by choosing, for each node $i\neq r^*$,  a single parent among the parents of $i$ in $G_{F^*}$. 
Therefore, the number of spanning trees of $G_{F^*}$ is the product of the in-degrees of all nodes except $r^*$.
This gives an upper bound on the number of solutions, since not all spanning trees of $G_{F^*}$ satisfy \eqref{eq:sum_condition_paper_1}.

The same work \cite{pradhan2018non} also studies, numerically, how $q$ and $n$ affect the number of solutions, how the software tools PhyloWGS \cite{deshwar2015phylowgs} and CANOPY \cite{jiang2016assessing} handle \emph{degenerate problems} (i.e. problems with multiple solutions), and proposes several experimental avenues to reduce degeneracy.

In Section \ref{sec:numerical_paper_1} we discuss some of these numerical results for completeness.

The PPM model is not time-aware. For example, permuting the $n$ columns of $F^*$ 
yields the same set of solutions. 
The idea of adding time-related constraints to the PPM model is therefore natural, as it aims to capture real-life scenarios where samples are collected over time.
As a bonus side-effect, these additional constraints may also reduce degeneracy. 

\subsection{Longitudinal conditions}\label{sec:longitudional_conditions_definition_paper_1}

The work of \cite{myers2019calder}  introduces time-related conditions that couple $M$ and $U$ beyond equation \eqref{eq:simple_PPM_model_paper_1} within the PPM model. These conditions are intuitive and aim to reduce degeneracy.
\begin{enumerate}
    \item Each sample $t$ gains a temporal meaning where $t=1$ is the first sample in time, $t=2$ is the second sample in time, etc; 
    the model does not explicitly represent clock time elapsed between samples, although in many experimental settings the time between consecutive samples is constant;
    \item When a mutant dies, it goes permanently extinct; \label{list:logitidional_conditions_english_2_paper_1}
    \item A mutant cannot be born after its parent dies, or before its ancestors are born. \label{list:logitidional_conditions_english_3_paper_1}
\end{enumerate}

Mathematically, the authors in \cite{myers2019calder} define the birth $t^{\min}_v$ time and the death $t^{\max}_v$ time of the mutants of type $v$ as in Definition \ref{def:birth_and_death_time_calder_def_paper_1}.
\begin{defi}[Birth and death time] \label{def:birth_and_death_time_calder_def_paper_1}
For any mutant $v\in  [q]$,
\begin{align}
    &t^{\min}_v = \min \{t\in[n]: F_{v,t} > 0\} \label{eq:CALDER_t_min_def_paper_1},\\
    &t^{\max}_v = \min\{t\in[n] : t \geq t^{\min}_v\land M_{v,t} = 0\}\label{eq:CALDER_t_max_def_paper_1}.
 \end{align}
\end{defi}
After these definitions, they require that the following conditions, called \emph{longitudinal conditions}, hold.
\begin{defi}[Longitudinal conditions (LC)]\label{def:longitudional_conditions_calder_paper_1}
For any mutants $v,w\in [q]$ and sample $t\in[n]$, 
\begin{align}
    &t \geq t^{\max}_v \implies M_{v,t} = 0\label{eq:CALDER_extinction_def_paper_1},\\
    &{T}_{v,w} = 1 \implies t^{\min}_w \leq t^{\max}_v \label{eq:CALDER_cont_def_paper_1}.
\end{align}
\end{defi}

The authors in  \cite{myers2019calder}  empirically study the effect of the LC in Definition \ref{def:longitudional_conditions_calder_paper_1} in two ways. 
First, they propose and benchmark a phylogenetic tool called CALDER that enforces the LC during inference. They generate simulated data 
where the underlying ground-truth ancestry relations are known
and mutants evolve, respecting the LC. 
They check whether the reconstructed trees are close to the ground truth or not.
They observe that CALDER outperforms the tools PhyloWGS \cite{deshwar2015phylowgs}
and CITUP \cite{malikic2015clonality}. 
Namely, CALDER's median tree error is $0.269$, compared to  $0.297$ for PhyloWGS and $0.552$ for CITUP.
Second, again on simulated data, they apply a modified Gabow--Myers algorithm \cite{gabow1978finding} to compare the number of solutions of the PPM model when the LC are, or not, enforced, under the assumption of error-free sample observations. 
In their experiments, most often using the LC yields the same number of solutions as not using them. Conditioned on cases where the LC reduce degeneracy, they observe an  average reduction in the number of solutions by about $30\%$ (see Section \ref{sec:numerical_paper_1} for more details).
One thing that is lacking in \cite{myers2019calder} is theoretical results regarding the reduction in degeneracy that the extra constraints on the PPM model induce. 

Note that unless 
mutants are observed being born or dying, enforcing the LC cannot reduce the number of possible solutions in any way. 
Furthermore, accurately computing birth and death times requires accurately estimating when sequences become zero or non-zero, which can be challenging when dealing with noisy data. 
In these situations, the effect of the LC on degeneracy might be unclear, or even harmful, if misestimations end up excluding the correct solution.

The authors in \cite{myers2019calder} are aware of this problem. In particular, when there is uncertainty in $F$, and unless precautions are taken, one can easily get numerical solutions where $M > 0$ for all samples, in which case enforcing the LC is not helpful. The authors address this by relaxing the definition of $0$ in their code.
We later show that the efficacy of the LC goes beyond problems where $M>0$ for all samples, and that many instances where the birth and dead times are clearly and precisely observed, might not benefit, in terms of inference, from these longitudinal conditions being enforced.

\subsubsection{Discussion of Definitions \ref{def:birth_and_death_time_calder_def_paper_1} and \ref{def:longitudional_conditions_calder_paper_1}}

\label{sec:discussion_of_definitions_and_their_problems_paper_1}

Definitions \ref{def:birth_and_death_time_calder_def_paper_1} and \ref{def:longitudional_conditions_calder_paper_1} have two minor issues.
First, the quantities \eqref{eq:CALDER_t_min_def_paper_1} and \eqref{eq:CALDER_t_max_def_paper_1} might be undefined under the original definition. 
Second, if \eqref{eq:CALDER_t_min_def_paper_1}--\eqref{eq:CALDER_cont_def_paper_1} are not jointly assumed, they do not have the meaning described in the three items above.
Let us discuss each of these problems in more detail. 

The crux of the first problem is that observations are limited to $t \in [n]$, but the system might have been evolving before $t=1$ and might keep evolving after $t=n$, and certainly evolves between observation times $t$ and $t+1$. 

This limited window can lead to undefined birth or death times.
Consider a scenario with only two mutants, $v$ and $w$, where $v$ is the father of $w$. 
\begin{itemize}
    \item If ${\bf M}_{v,:} = {\bf 1}$ and ${\bf M}_{w,:} = {\bf 0}$, it means $v$ never died during the observation period, and $w$ was never observed being born or dying during the observation period. In this case, $t^{\min}_w$, $t^{\max}_w$ and $t^{\max}_v$ are undefined;
    \item If ${\bf M}_{v,:} = {\bf 0}$ and ${\bf M}_{w,:} = {\bf 1}$, it means $v$ was born, produced $w$, and became extinct before $t=1$, and at $t=1$ only $w$ remains alive, dominating the population. In this case,  $t^{\max}_w$ is undefined;
    \item If ${\bf M}_{w,:} = [0,0,1,1]$ and ${\bf M}_{v,:} = {\bf 0}$. In this case $t^{\max}_w$ is undefined. 
\end{itemize}

Beyond the problem of $t^{\max}$ or $t^{\min}$ not being defined, in the last two examples we have 
$t^{\min}_v=t^{\max}_v$ ($=1$ or $=3$, respectively) despite mutant $v$ never being observed. If the clock time  between two consecutive samples is large, this can perhaps be accepted and attributed to a sampling problem: mutant $v$ was born and died between two samples being taken. 
However, mathematically, Definition \ref{def:birth_and_death_time_calder_def_paper_1}  allows $t^{\min}_v=t^{\max}_v$ even if the time between samples is very small.
Let us set aside limitations coming from the observation window or discrete sampling, in other words, let us assume that ``existence'' and ``observation of existence'' are one and the same thing. 
Regarding the second problem, several issues arise. For example, the definition of $t^{\min}_v$, by itself, is not equivalent to the first instant when the mutant $v$ comes into existence, but rather to the first instant when either it or any of its descendants comes into existence.
Similarly, $t^{\max}_v$ alone is not necessarily equivalent to the first instant when mutant $v$ goes extinct, but to the first instant \emph{after $t^{\min}_v$} when mutant $v$ is observed to be dead.
For example, without assuming \eqref{eq:CALDER_extinction_def_paper_1}--\eqref{eq:CALDER_cont_def_paper_1}, a child $w$ of mutant $v$ can be born before $v$, and hence $t^{\min}_v$ can be smaller than the first time when $v$ appears in the population. 
This also makes $t^{\max}_v = t^{\min}_v$, and hence $t^{\max}_v$ can be smaller than the last time when $v$ appeared in the population.
Furthermore, some of the conditions described in the three items above do not follow from the conditions \eqref{eq:CALDER_extinction_def_paper_1}--\eqref{eq:CALDER_cont_def_paper_1}, but rather from the definitions \eqref{eq:CALDER_t_min_def_paper_1}--\eqref{eq:CALDER_t_max_def_paper_1} themselves. 
For example, the fact that a mutant $w$ cannot be born before its ancestor $v$ is born follows directly from the definition of $t^{\min}_v$ in \eqref{eq:CALDER_t_min_def_paper_1}.
Indeed, assuming both $t^{\min}_v$ and $t^{\min}_w$ are defined, the PPM model requires that $F_{v,t} \geq  F_{w,t}$ (see equation \ref{eq:sum_condition_paper_1}).

\section{Related work}
\label{sec:related_work_paper_1}

We already cited prior work that  advances theory on the degeneracy of the PPM model, essentially \cite{pradhan2018non}.
Here we consider models other than the PPM model, i.e. different evolutionary models, and cite other 
prior work studying degeneracy or topics closely related to it.
Although these models differ substantially from the PPM model, the concepts used to develop and analyze them provide useful context and points of comparison.

All model specify a stochastic process describing how mutations accumulate along the edges of an evolutionary tree.
However, different models capture different types of mutational events, including substitutions, insertions and deletions, or large-scale rearrangements, most of which are not considered by the PPM model\footnote{The PPM model considers only restricted substitutions where a character mutates at most once along the tree.}.
Nodes of this tree correspond to \emph{taxa}, representing distinct biological groupings such as species, family, or class. 
Typically, only taxa at the leaves of the tree are observed. In the PPM model we do not even have this, as $U^*$ is fully unobserved.
From these observations, one seeks to infer both the parameters of the mutational process and the underlying tree topology.
Inference is commonly performed via maximum likelihood estimation, and to assess the uncertainty over trees and parameters Bayesian approaches are used.

\subsection{Degeneracy related concepts}

A sufficient amount of data is required to recover the true evolutionary parameters, including the true tree, with high confidence. Several terms related to degeneracy are discussed in the literature in  this context.

\begin{itemize}[leftmargin=10pt]
\item  \textit{Identifiable:} 
An evolutionary model whose stochastic process defines a injective map from the parameters to the probability distribution of the observed samples.
Although identifiability implies the existence of a set-theoretic inverse from distributions to parameters, and hence there is no degeneracy in this set-theoretic sense, such an inverse need not be measurable or realizable as the limit of any estimator based on finite, or even an infinite sequence of samples.
Hence, identifiability does not imply that there is a \emph{consistent} estimator (see third bullet), and there might be degeneracy in this sense. These cases however are pathological, which is not the case for the evolutionary models used in the literature.
\item \textit{Generically identifiable}: Identifiable except on a set of parameters of measure zero. 
\item \textit{Consistent:}
    A estimator for which the output parameters, including the estimated  tree, converge (in probability) to the true parameters, including the true tree, as the number of samples goes to infinity. For a consistent estimator, degeneracy disappears as we get more and more observed data. 
    Non indentifiability implies non consistency of all estimators.
\item  \textit{Compatible trees:} When there is degeneracy, not all trees are equally bad as alternative explanations of the truth. A tree $T'$ is \emph{compatible} with tree $T$ if $T'$ can be reduced to $T$ by merging groups of internal nodes.

\item \textit{Non-identifiable mixture.}
A mixture model, such as the PPM model, where multiple choices of mixture weights  and component parameters  produce the same overall distribution, for the same fixed number of mixtures.
This phenomenon can occur even when each component model is individually identifiable.

\item \textit{Non-identifiable mixture distribution.}
Stronger concept than non-identifiable mixtures. A mixture models allows for
fundamentally different mixture representations, possibly with different numbers of components or entirely different structures.

\end{itemize}

Phylogenetic networks generalize trees by allowing reticulation events such as recombination or hybridization.
They are often introduced as alternative generative models that remain identifiable in settings where tree-based models are not.
Networks provide a way to resolve or avoid tree-level degeneracy by expanding the model class rather than attempting to select among incompatible trees.

\subsection{Other models}
We now discuss some well known models.

\paragraph{Substitution models} \label{sec:review_of_some_models_paper_1}

The Cavender-Farris-Neyman (CFN) model describes the evolution of a fixed-length list of binary characters via a substitution model defined by a two-state, time-reversible Markov process. 
In the CFN,  each taxon (typically a species) is represented by a sequence of observed binary characters, which model DNA bases. In the CFN model, each character is either $0$ or $1$. 

The CFN can be used with real data. In this case, real sequences are first aligned using a multiple sequence alignment method (MSA) so that for every position, $i$, the $i$-th character across all taxa is assumed to have descended from the same character in the shared ancestor. 
These aligned positions are commonly referred to as "sites." 
Afterwards, and in the case of the CFN, the bases are binarized.

In the CFN, all sites evolve independently of each other and in the same way, and the evolution, ie., substitutions, happens along the edges of the tree. Unlike in the PPM model, the same position can mutate multiple times.
The phylogenetic tree is parameterized by a tree topology and edge parameters, which are $2$ by $2$ transition matrices that control the expected number of substitutions per site along an edge in the tree. 

During inference, the interior nodes of the tree correspond to latent (unobserved) states, or distributions over states, while the leaf nodes correspond to observations (taxa with observed character states, eg., the actual data). We encourage the reader to see \cite{semple2003phylogenetics} for a more comprehensive reference. 

The Jukes--Cantor model (JC) \cite{jukes1969evolution}, uses  a simple time-reversible, continuous-time Markov chain with four states (C, T, A, G), all of which are handled symmetrically, analogous to CFN. 
The Kimura two-parameter model (K2) \cite{kimura1980simple} is a more accurate model that accounts for two different types of substitutions:  ``transitions'' and ``transversions''.
Transitions are substitutions occurring within what are called ``purines'' or   `pyrimidines'', whereas transversions are mutations occurring between a ``purine'' and a ``pyrimidine''. In K2, transitions occur with a higher frequency than transversions. The Kimura three--parameter model (K3) \cite{kimura1981estimation} is a model that takes into account how many hydrogen bonds are changed by certain mutations.

\paragraph{Coalescent models}
Coalescent models ~\cite{marjoram2006modern} are some of the most popular population genetics models. Rather than stochastically modeling substitutions forward in time, such as group-based substitution models, coalescent models describe the evolutionary process by going backwards in time and finding points at which certain pairs of fragments coalesce or merge together.

Now that we have discussed some terminology and models, we are ready to discuss some results.

\subsection{Related results} The problem of non-uniqueness of solutions in the PPM model~\cite{el2016inferring, malikic2015clonality} and the implications of degeneracy~\cite{qi2019implications} are widely recognized. ~\cite{qi2019implications} shows that non-uniqueness is a widespread phenomenon in this setting and that degeneracy is exacerbated by increasing the number of mutations and counteracted by increasing the number of samples, which helps reduce degeneracy by revealing the branching structure of additional mutations. ~\cite{qi2019implications} also show that experimental techniques such as long read sequencing and single cell sampling can help reduce the size of the \textit{solution space}, and in turn, help with degeneracy, although these experimental techniques do not address the root of the problem, they  lead to a combinatorial reduction in the number of arborescences in general.

Phylogenetic mixture models are useful for analyzing heterogeneous evolution. Prior work~\cite{vstefankovivc2007phylogeny, casanellas2012space, allman2006identifiability, hollering2021identifiability}, study identifiability and consistency under various group-based phylogenetic substitution models. In particular,~\cite{vstefankovivc2007phylogeny} shows that the maximum likelihood tree topology provably differs from the generating tree topology for several  mutation models (JC, K3, CFN, K2) when two trees from the same generating topology with arbitrarily small perturbations to the same transition matrix are mixed and the likelihood is maximized over non-mixture distributions. 
For any evolution model whose transition matrices are parameterized by multi-linear polynomials~\cite{vstefankovivc2007phylogeny} also proves that one of the following must hold: either (1) there exists a linear test (a separating hyperplane) that can be used to identify the topology, or (2) there exists mixtures that are fundamentally non-identifiable. ~\cite{vstefankovivc2007phylogeny} also shows that the CFN model has a non-identifiable mixture, the JC and K2 models have no ambiguous mixtures, and the K3 model has a non-identifiable mixture distribution. 

Other work~\cite{casanellas2012space} builds upon previous work characterizing identifiability ~\cite{vstefankovivc2007phylogeny, allman2006identifiability} by establishing a non-identifiability an upper bound $h_0(q)$ on the number of mixture components that can be used for equivariant models to be identifiable. For the equivariant models JC69, K80, K81, SSM, GMM with $q$ taxa, they show that mixtures with less than $h_0(q)$  trees are identifiable, and those with more are not. They provide upper bounds that are exponential in the number of taxa for each of the aforementioned models.

Other prior work studies degeneracy in the context of coalescent models. 
Work \cite{li2011inference} uses these models to infer a  population size history (i.e. size over time) from the complete diploid genome sequence of a present day human. 
For a single population in the same problem setting, \cite{kim2015can} provides a provable information theoretic lower bound on the number of samples needed be able to distinguish between two population histories. In particular, ~\cite{kim2015can} proves a lower bound 
on the amount of data needed to infer a single population history correctly. This bound
is exponential in the number of samples $t$ . Along a similar line of work, ~\cite{kim2019many} generalizes this analysis to the multiple subpopulation setting with known coalescence times. In this setting, they show that the number of samples is exponential in the number of subpopulations even for recent history.

In the simplified multiple subpopulation model, they allow for subpopulations to split, merge and grow; however, there is no "admixture" between subpopulations that are distinct.

An interesting recent paper that studies identifiability in the network setting is \cite{pardi2015reconstructible}. From a broader discussion of the topic of phylogenetic networks see \cite{huson2006application}.

\section{Main results} \label{sec:main_results_paper_1}

In this section, we present our main results.

Our overarching contribution is to show that a rigorous treatment of degeneracy under the PPM model is possible. 
Unlike prior work, which has focused on specific observations $F$ (e.g., \cite{pradhan2018non}), our theorems hold over ensembles of problems. The key idea is to construct these ensembles so that they are simultaneously amenable to analytical treatment while still covering a broad range of realistic problem instances.
Our specific contributions are, first, showing that under perfect observations and  general conditions, imposing the longitudinal conditions (LC) does not reduce degeneracy; and
second, introducing novel conditions that, also under perfect observations, do provably reduce degeneracy. 
The theoretical results for these new conditions are more restricted, both because the ensemble of problems considered is more specific and because the degeneracy analysis is restricted to counting ancestries $U \neq U^*$ that are close to $U^*$. 
We reiterate that throughout this section we operate under Assumption \ref{ass:single_root_trees_assumption_paper_1}. 
No additional assumptions, including Assumption \ref{ass:assumption_no_root_node_is_zero_paper_1}, should be presumed unless explicitly stated or proven to follow from other assumptions.

Prior to presenting these major contributions, we introduce two preliminary ones. The first is a refinement of Definition \ref{def:birth_and_death_time_calder_def_paper_1} to address the issues discussed in Section \ref{sec:discussion_of_definitions_and_their_problems_paper_1}. 
This refinement is necessary to rigorously prove degeneracy results for our ensembles of problems without excluding valid problem instances solely due to undefined birth or death times.
The second is to provide a new representation for the LC, and prove that our new definitions are equivalent to Definition \ref{def:longitudional_conditions_calder_paper_1}. 
These reformulations are valuable in themselves, and also enable a more direct and transparent analysis of degeneracy, 
thereby simplifying the subsequent proofs.

%
%

\subsubsection{Birth and death time edge cases} \label{sec:redefine_birth_death_time_original_paper_1}

We extend Definition \ref{def:birth_and_death_time_calder_def_paper_1} so that $t^{\min}_v$ and $t^{\max}_v$ are always well-defined while preserving their intended interpretation.
There is no unique way to achieve this (see Appendix \ref{appendix:alternative_definition_paper_1}).

\begin{defi}[Extended birth and death time] \label{def:extended_birth_and_death_time_calder_def_paper_1}
For any mutant $v\in  [q]$,
\begin{align}
    &t^{\min}_v = \min \{t\in[n]: F_{v,t} > 0\} ,\\
    &t^{\max}_v = \min\{t\in[n] : t \geq t^{\min}_v\land M_{v,t} = 0\},
 \end{align}
 and furthermore, 
 \begin{itemize}
\item if $t^{\min}_v$ is undefined, we set $t^{\min}_v = 1$;
\item if $t^{\max}_v$ is undefined,
we set $t^{\max}_v = M_1$, where $M_1 > n$ is a fixed constant.
\end{itemize}
\end{defi}

For convenience, and specifically to reduce the number of special cases we need to consider in our proofs, we extend the LC to exclude solutions in which a mutant is born and dies at the same observation time. 
\begin{defi}[Extended longitudinal conditions (ELC)] \label{defi:extended_longitudional_conditions_paper_1}
For any mutants $v,w\in [q]$ and any sample $t\in[n]$, 
\begin{align}
    &t \geq t^{\max}_v \implies M_{v,t} = 0,\\
    &\mathcal{T}_{v,w} = 1 \implies t^{\min}_w \leq t^{\max}_v, \\
    &t^{\min}_v < t^{\max}_v \label{eq:extra_condition_to_add_to_longitudional_conditions_paper_1}.
\end{align}
\end{defi}
In Definition \ref{defi:extended_longitudional_conditions_paper_1}, birth and death times are defined according to Definition \ref{def:extended_birth_and_death_time_calder_def_paper_1}.
Our main results are based on Definition \ref{defi:extended_longitudional_conditions_paper_1}, i.e., the {Extended Longitudinal Conditions} (ELC), rather than 
Definition \ref{def:longitudional_conditions_calder_paper_1}. 

Under Definition \ref{def:extended_birth_and_death_time_calder_def_paper_1}, 
adding the condition $t^{\min}_v < t^{\max}_v$ is equivalent to imposing  
\begin{equation}\label{eq:extra_long_constraint_no_zero_M_paper_1}
{\bf M}_{v,:} \neq {\bf 0},
\end{equation}
as proved in Appendix \ref{app:equivalent_of_extending_calder_long_conditions_in_one_or_in_another_way_paper_1} (Lemma \ref{th:equivalent_of_extending_calder_long_conditions_in_one_or_in_another_way_paper_1}).
Consequently, any solution satisfying the ELC necessarily satisfies ${\bf M}_{v,:} \neq {\bf 0}$ for all mutants $v$.

In the context of recovering the ground truth $U^*$ and $M^*$ from an observed $F$, with or without assuming that the ELC hold, our theoretical results will invoke Assumption \ref{ass:assumption_no_root_node_is_zero_paper_1}. 
This assumption plays a role analogous to the non-zero constraint above but applies only to the root mutant, thereby avoiding the need to enforce ${\bf 1}^{\top} M = {\bf 1}^{\top}$ during inference. 

\subsection{Redefinition of birth and death time and of LC} \label{sec:redefinition_of_longitudional_conditions_paper_1}

We now propose a new definition of birth and death times of a mutant $v$ that depends explicitly on transitions of $M_{v,t}$ between zero and non-zero values.
Unlike Definitions \ref{def:birth_and_death_time_calder_def_paper_1} and \ref{def:extended_birth_and_death_time_calder_def_paper_1}, in which computing $t^{\min}_v$ requires checking whether $F_{v,t}>0$ and computing $t^{\max}_v$ depend on $t^{\min}_v$, this formulation makes the dependence on $M$ explicit, which is crucial for our probabilistic analysis of degeneracy.
This explicit dependence simplifies the analysis, because the ensemble of problems considered later is defined through a probability distribution over $M$.
We then re-express the LC introduced in \cite{myers2019calder} so that, together with the new birth and death time definitions, they retain exactly the same meaning as in \cite{myers2019calder}. 
This equivalence is formalized in Lemma \ref{th:equivalent_CALDER_conditions_paper_1}.

\begin{defi}[New definition for birth and death time]\label{defi:new_definition_of_birth_and_death_time_paper_1}
Define 
    \begin{align}
    &t'^{\min}_v = \max \{t\in[n]: M_{v,t-1} = 0 \land M_{v,t} > 0 \} \label{eq:our_t_min_def_paper_1},\\
    &t'^{\max}_v = \min\{t\in[n] : M_{v,t-1} > 0 \land M_{v,t} = 0 \}\label{eq:our_t_max_def_paper_1},
\end{align}
and extend these definitions with the following special cases:
\begin{itemize}
    \item if ${t'}^{\min}_v$ is undefined, then  ${t'}^{\min}_v = 1$;
    \item if ${t'}^{\max}_v$ is undefined, then 
     (i) if $M_{v,n} = 0$, we set $t'^{\max}_v = 1$ and 
     (ii) if $M_{v,n} > 0$, we set $t'^{\max}_v = M_1$, where $M_1 > n$ is a constant. 
\end{itemize}
\end{defi}

\begin{defi}[Extended longitudinal conditions expressed using new birth and death time]\label{defi:new_definition_of_LC_paper_1}
For all mutants $v$ and $w$, 
\begin{align}
&   t'^{\min}_v < t'^{\max}_v,\label{eq:longitudional_condition_bento_1_paper_1}\\
&\mathcal{T}_{v,w} =1 \implies t'^{\min}_w \in [ t'^{\min}_v ,  t'^{\max}_v]. \label{eq:longitudional_condition_bento_2_paper_1}
 \end{align}
\end{defi}
In Definition \ref{defi:new_definition_of_LC_paper_1}, birth and death times are defined according to Definition \ref{defi:new_definition_of_birth_and_death_time_paper_1}.

\begin{lem} \label{th:equivalent_CALDER_conditions_paper_1}
The conditions in Definitions \ref{defi:extended_longitudional_conditions_paper_1} and  \ref{defi:new_definition_of_LC_paper_1} are equivalent. Furthermore, if these conditions hold, then Definitions \ref{def:extended_birth_and_death_time_calder_def_paper_1} and \ref{defi:new_definition_of_birth_and_death_time_paper_1} are equivalent, in the sense that $t'^{\min}_v=t^{\min}_v$ and $t'^{\max}_v=t^{\max}_v$.
\end{lem}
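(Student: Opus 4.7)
My plan is to establish a shared structural consequence of either condition set---namely, that $\mathbf{M}_{v,:}$ has a single ``pulse'' shape (a leading block of zeros, then a block of strictly positive values, then a trailing block of zeros)---and then show that the two pairs of birth/death times pick out exactly the two transitions of that pulse. Once that is in hand, the equivalence of Definitions \ref{defi:extended_longitudional_conditions_paper_1} and \ref{defi:new_definition_of_LC_paper_1} reduces to a line-by-line rewriting, modulo the few undefined-time edge cases.

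First, I would show that the ELC of Definition \ref{defi:extended_longitudional_conditions_paper_1} force the pulse shape. For $t < t^{\min}_v$, $F_{v,t}=0$ together with $F=UM$ and $M\geq 0$ gives $M_{v,t}=0$. For $t \geq t^{\max}_v$, the extinction axiom gives $M_{v,t}=0$. The definition of $t^{\max}_v$ as a minimum forces $M_{v,t}>0$ for $t^{\min}_v \leq t < t^{\max}_v$; in particular, the added condition $t^{\min}_v < t^{\max}_v$ is precisely what rules out the degenerate case $M_{v,t^{\min}_v}=0$ (which would collapse $t^{\max}_v$ onto $t^{\min}_v$). I also need to rule out $F_{v,:}\equiv 0$ under the ELC: in that case $M_{v,:}\equiv 0$, the extended convention would assign $t^{\min}_v=t^{\max}_v=1$, and condition \eqref{eq:extra_condition_to_add_to_longitudional_conditions_paper_1} would fail. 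Having isolated the pulse, the new-definition transitions in \eqref{eq:our_t_min_def_paper_1}--\eqref{eq:our_t_max_def_paper_1} must coincide with $t^{\min}_v$ and $t^{\max}_v$ (and when $t^{\max}_v$ is undefined, so $M_{v,n}>0$, both definitions assign $M_1$). The three ELC clauses then translate directly: $t'^{\min}_v<t'^{\max}_v$ is the pulse condition, $t'^{\min}_w\leq t'^{\max}_v$ is the parent--child clause rewritten, and the missing lower bound $t'^{\min}_w\geq t'^{\min}_v$ of \eqref{eq:longitudional_condition_bento_2_paper_1} comes from the sum condition \eqref{eq:subset_of_sum_condition_paper_1}: $F_{v,t}\geq F_{w,t}\geq M_{w,t}$ implies $t^{\min}_w\geq t^{\min}_v$, and $t^{\min}=t'^{\min}$ has already been established.

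Second, I would run the converse. Assume the new LC. If $\mathbf{M}_{v,:}$ had two distinct $0$-to-positive transitions, there would be an intervening positive-to-$0$ transition; its index would upper-bound $t'^{\max}_v$ and lower-bound $t'^{\min}_v$ in the wrong direction, contradicting $t'^{\min}_v<t'^{\max}_v$. So again $\mathbf{M}_{v,:}$ is a single pulse, with transitions exactly at $t'^{\min}_v$ and $t'^{\max}_v$. Next I would show by induction on depth in the ground-truth tree, using the parent--child clause $t'^{\min}_w\in[t'^{\min}_v,t'^{\max}_v]$, that every descendant $j\in\Delta v$ satisfies $M_{j,t}=0$ for $t<t'^{\min}_v$; since $M_{v,t'^{\min}_v}>0$, expanding $F_{v,t}=\sum_{j\in\Delta v}M_{j,t}$ shows $t'^{\min}_v$ is the first index at which $F_{v,t}>0$, so $t^{\min}_v=t'^{\min}_v$. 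Then $t^{\max}_v=t'^{\max}_v$ follows from the pulse shape, and the two ELC clauses (extinction and parent--child) plus \eqref{eq:extra_condition_to_add_to_longitudional_conditions_paper_1} are then immediate.

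The main obstacle I expect is not the algebra but the bookkeeping around the edge cases of undefined times and the boundary at $t=1$ in \eqref{eq:our_t_min_def_paper_1}--\eqref{eq:our_t_max_def_paper_1}: each combination of (birth defined/undefined) $\times$ (death defined/undefined) has to be checked to confirm that the extended conventions in Definitions \ref{def:extended_birth_and_death_time_calder_def_paper_1} and \ref{defi:new_definition_of_birth_and_death_time_paper_1} give the same value, and in particular that the pathological case $\mathbf{M}_{v,:}\equiv \mathbf{0}$ is excluded by both \eqref{eq:extra_condition_to_add_to_longitudional_conditions_paper_1} and \eqref{eq:longitudional_condition_bento_1_paper_1} (the latter is where the auxiliary Lemma \ref{th:equivalent_of_extending_calder_long_conditions_in_one_or_in_another_way_paper_1} cited in the excerpt is useful). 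Everything else is a direct translation once the single-pulse structure is in place.
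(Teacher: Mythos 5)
Your proposal is correct and follows essentially the same route as the paper's proof in Appendix~\ref{appendix:proof_of_calder_equivalence_lemma_paper_1}: both arguments reduce to showing that either condition set forces each $\mathbf{M}_{v,:}$ into a single zero--positive--zero pulse, identifying $t^{\min}_v,t^{\max}_v$ with $t'^{\min}_v,t'^{\max}_v$ at the pulse's two transitions (and, as you do, obtaining the missing lower bound in \eqref{eq:longitudional_condition_bento_2_paper_1} from $F_{v,t}\geq F_{w,t}\geq M_{w,t}$), followed by the same bookkeeping over the undefined-time exceptions. The paper merely packages this as two sub-lemmas (Lemmas~\ref{th:logitudional_equivalence_lemma_1_paper_1} and~\ref{th:logitudional_equivalence_lemma_2_paper_1}) with an explicit four-inequality/four-implication case analysis, rather than your up-front pulse characterization, so no substantive difference or gap.
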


\begin{rmk}
If mutant $v$ is never observed, and hence ${\bf M}_{v,:} = {\bf 0}$, then $t^{\min}_v = t'^{\min}_v = t^{\max}_v = t'^{\max}_v = 1$. 
However, both the ELC in Definition \ref{defi:extended_longitudional_conditions_paper_1} and our re-expression of them in Definition \ref{defi:new_definition_of_LC_paper_1} prevent solutions of this kind. 
This is a design choice, that we made to simplify the redefinition of birth and death times and the ELC, as well as the proof of Lemma \ref{th:equivalent_CALDER_conditions_paper_1}, which establishes the equivalence between the new and original definitions.
It is possible to avoid excluding these outcomes, but this requires detailing, for example, how to interpret ${\bf M}_{v,:} = {\bf 0}$. It could mean that $v$ is not yet born within the observation window, or that $v$ was born and died before any observation was made, or that $v$ was born and died between sampling periods during the observation window. Such interpretations must be made in conjunction with what happens to other values of $M$, particularly those corresponding to descendants of $v$. 
\end{rmk}

\begin{rmk}
According to the ELC, expressed using either Definition \ref{defi:extended_longitudional_conditions_paper_1} or  Definition \ref{defi:new_definition_of_LC_paper_1}, it is possible that, for example, between time $t = 3$ and time $t=4$, a parent mutant $v$ transitions from $M_{v,3} = 1$ to $M_{v,4} = 0$, while a child mutant $w$ takes over the population, transitioning from $M_{w,3} = 0$ to $M_{w,4} = 1$. In this case, $t^{\max}_v = t'^{\max}_v = t^{\min}_w = t'^{\min}_w$.
There is nothing special about $ t = 3$ and $t=4$ in this example. If observation had instead begun at $t = 4$, then this time point would correspond to $ = 1$, yielding a situation in which, ${\bf M}_{v,:} = {\bf 0}$ and ${\bf M}_{w,:} = {\bf 1}$. 
This outcome, with ${\bf M}_{v,:} = {\bf 0}$, is excluded by the  definition of the ELC.
\end{rmk}

\begin{rmk}
Lemma \ref{th:equivalent_CALDER_conditions_paper_1} and all associated definitions hold even if $M$ does not represent mutant frequencies that sum to $1$, or even if the entries of $M$ sum to a value strictly less than $1$. Everything remains valid, for example, if $M$ represents absolute counts.
\end{rmk}

The birth and death times have been redefined in Definition \ref{defi:new_definition_of_birth_and_death_time_paper_1} explicitly in terms of $M$. 
However, the redefinition of the ELC in Definition \ref{defi:new_definition_of_LC_paper_1}, although expressed using these redefined birth and death times, does not yet make its dependence on $M$ explicit.
We address this issue with Lemma \ref{th:quasi_convex_long_cond_M_no_eps_paper_1} below.

Let $\mathcal{L}$ be the set of mutant abundances $M$ that satisfy the ELC (Definition \ref{defi:extended_longitudional_conditions_paper_1} or Definition \ref{defi:new_definition_of_LC_paper_1}), and have  non-negative, bounded entries. The matrix $M$ may represent either mutant frequencies or absolute mutant counts.
\begin{lem}\label{th:quasi_convex_long_cond_M_no_eps_paper_1}
$M \in \mathcal{L}$ if and only if $M$ simultaneously satisfies 
\begin{align}
&{M}\geq 0, \quad\quad \label{eq:non_negative_M_each_entry_paper_1}\\ 
&{\bf M}_{v,:} \neq {\bf 0} \;, \forall v,\label{eq:non_zero_M_paper_1}\\
& M_{v,t} = 0 \text{ if } M_{v,t-1} = 0 \land \sum^{t-1}_{k=1} M_{v,k} > 0  \;, \forall v, \; \forall t \geq 2, \label{eq:right_form_M_condition_paper_1}\\
& M_{v,t} + M_{v,t-1} > 0   \text{ if }  M_{w,t-1} = 0 \land { M}_{w,t} >0 , \label{eq:right_birth relative_to_father_birth_paper_1}\\
& \hspace{3.5cm}\forall v,w: \mathcal{T}_{v,w}=1,\; \forall t = 1,\dots,n, \nonumber
\end{align}
where, for $t = 1$, the last condition should be interpreted as $M_{v,1} > 0$ if $M_{w,1} > 0$.
\end{lem}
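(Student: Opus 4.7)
The plan is to establish the equivalence by reducing both the ELC (Definition~\ref{defi:new_definition_of_LC_paper_1}) and the four algebraic conditions of the lemma to a common structural description of the rows of $M$. Because Definition~\ref{defi:new_definition_of_LC_paper_1} already expresses the ELC via the refined birth and death times of Definition~\ref{defi:new_definition_of_birth_and_death_time_paper_1}, and those times are explicit functions of $M$, the argument reduces to comparing four algebraic conditions against two.

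First I would prove a structural lemma: any non-negative $M$ satisfying the no-resurrection condition \eqref{eq:right_form_M_condition_paper_1} has each row ${\bf M}_{v,:}$ in \emph{interval form}, in the sense that either ${\bf M}_{v,:} = {\bf 0}$, or there exist $1 \leq a_v \leq b_v \leq n$ with $M_{v,t} > 0$ for $t \in [a_v, b_v]$ and $M_{v,t} = 0$ otherwise. This is immediate from \eqref{eq:right_form_M_condition_paper_1}, which forbids a positive entry following a zero preceded by any positive entry, thereby ruling out more than one maximal positive block. The non-vanishing condition \eqref{eq:non_zero_M_paper_1} is then precisely the statement that such $a_v, b_v$ exist.

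Next I would evaluate the birth and death times on such $M$. Within $[n]$, the only zero-to-positive transition occurs at $t = a_v$ when $a_v \geq 2$, and the only positive-to-zero transition occurs at $t = b_v + 1$ when $b_v < n$. After folding in the extension rules of Definition~\ref{defi:new_definition_of_birth_and_death_time_paper_1}, one obtains in every case $t'^{\min}_v = a_v$, together with $t'^{\max}_v = b_v + 1$ when $b_v < n$ and $t'^{\max}_v = M_1 > n$ when $b_v = n$. Condition \eqref{eq:longitudional_condition_bento_1_paper_1}, namely $t'^{\min}_v < t'^{\max}_v$, then reduces to $a_v \leq b_v$, which is exactly non-emptiness of the positive block. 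For the reverse implication, a row ${\bf M}_{v,:} = {\bf 0}$ forces $t'^{\min}_v = t'^{\max}_v = 1$, while a resurrection pattern forces $t'^{\min}_v \geq t'^{\max}_v$, since the latest zero-to-positive transition necessarily lies strictly after the earliest positive-to-zero transition. Together this shows \eqref{eq:longitudional_condition_bento_1_paper_1} together with $M \geq 0$ is equivalent to the first three algebraic conditions.

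Finally, for the parent--child clause I would argue that under those three conditions the hypothesis $M_{w,t-1} = 0 \land M_{w,t} > 0$ isolates $t = a_w = t'^{\min}_w$, with $t = 1$ giving the stated boundary interpretation. The ELC requirement $t'^{\min}_w \in [t'^{\min}_v, t'^{\max}_v]$ in \eqref{eq:longitudional_condition_bento_2_paper_1} thus becomes $a_v \leq a_w \leq t'^{\max}_v$, which, read against the interval form of ${\bf M}_{v,:}$, holds if and only if at least one of $M_{v,a_w - 1}$ and $M_{v,a_w}$ is positive; this is exactly the fourth algebraic condition at $t = a_w$, and the reverse direction follows by the same identification together with the boundary reading at $t = 1$. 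The main obstacle I anticipate is the careful bookkeeping of the extension rules in Definition~\ref{defi:new_definition_of_birth_and_death_time_paper_1}, which redefine $t'^{\min}_v$ or $t'^{\max}_v$ by fiat when $a_v = 1$, $b_v = n$, or the row is identically zero; each boundary case must be checked separately in both directions, and in particular the $t = 1$ reading of the fourth algebraic condition must be shown to coincide with the natural boundary reading of the ELC.
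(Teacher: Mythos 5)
Your proposal is correct, and it reaches the equivalence by a different decomposition than the paper. The paper's proof in Appendix~\ref{appd:proof_of_quasi_convex_long_cond_M_no_eps_paper_1} works implication-by-implication: it first derives an auxiliary consequence (condition \eqref{eq:right_birth relative_to_father_death_paper_1}, that a child is zero whenever its parent has never yet been alive) from \eqref{eq:non_negative_M_each_entry_paper_1}--\eqref{eq:right_birth relative_to_father_birth_paper_1}, and then runs separate scenario enumerations (always alive; born and surviving; born and dying; dying without observed birth) for each of the implications among \eqref{eq:non_zero_M_paper_1}--\eqref{eq:right_birth relative_to_father_birth_paper_1} and \eqref{eq:longitudional_condition_bento_1_paper_1}--\eqref{eq:longitudional_condition_bento_2_paper_1}, treating the lower and upper bounds of \eqref{eq:longitudional_condition_bento_2_paper_1} in separate steps. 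You instead extract a single normal form (each row is either zero or supported on one interval $[a_v,b_v]$), compute $t'^{\min}_v$ and $t'^{\max}_v$ explicitly on that normal form including the exception cases, and reduce everything to inequalities among $a_v$, $b_v$, $a_w$; your four scenario cases are implicitly the same as the paper's, but packaged once rather than re-enumerated per implication. A concrete payoff of your route is that the auxiliary condition \eqref{eq:right_birth relative_to_father_death_paper_1} becomes unnecessary: the single observation that ``$M_{v,a_w-1}>0$ or $M_{v,a_w}>0$'' is equivalent to $a_v \leq a_w \leq t'^{\max}_v$ delivers the lower bound $t'^{\min}_w \geq t'^{\min}_v$ and the upper bound $t'^{\min}_w \leq t'^{\max}_v$ of \eqref{eq:longitudional_condition_bento_2_paper_1} simultaneously, whereas the paper needs the auxiliary condition to obtain the lower bound. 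The one place that genuinely requires the care you flag is the evaluation of the extension rules of Definition~\ref{defi:new_definition_of_birth_and_death_time_paper_1} at $a_v=1$, $b_v=n$, and the all-zero row, and your stated values ($t'^{\min}_v=a_v$ in all cases; $t'^{\max}_v=b_v+1$ or $M_1$; both equal to $1$ for the zero row) are the correct ones.
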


For a proof, see Appendix \ref{appd:proof_of_quasi_convex_long_cond_M_no_eps_paper_1}.

\subsection{Degeneracy under the ELC} \label{sec:results_about_calder_inefficiency_paper_1}

Our inference problem in this section is to recover $U^*$ and $M^*$ from the observation $F^* = U^* M^*$; that is, we assume noiseless observations and no masking.
We study whether solving this problem while imposing that any recovered $M$ must satisfy the ELC reduces the number of alternative solutions, compared to the case where these conditions are not imposed. 
Our results are not stated for a single, albeit arbitrary, instance, but rather for an ensemble of problems described by a distribution of possible $M^*$ and $U^*$ pairs. 
This contrasts with the procedure described in \cite{pradhan2018non}, which bounds the number of solutions of the vanilla PPM model for one specific problem.

Our ensemble, described in Assumption \ref{ass:continuity_of_M_for_calder_conditions_degeneracy_computations_paper_1} below, was chosen because it is a natural and analytically tractable choice for which the theoretical study of degeneracy is feasible.

We use the notation that, for any multiset $\mathcal{S}$, $M^*_{\mathcal{S},t} \equiv \sum_{j\in \mathcal{S}} M^*_{j,t}$, where repeated elements are included multiple times in the sum \footnote{This notation is different from ${\bf M}^*_{\mathcal{S},t}$, with bold ${\bf M^*}$, which would represent a vector with entries ${M}^*_{v,t}, v\in \mathcal{S}$, where repeated elements in $\mathcal{S}$ create repeated entries in ${\bf M}^*_{\mathcal{S},t}$.}.

\begin{ass}\label{ass:continuity_of_M_for_calder_conditions_degeneracy_computations_paper_1}
The sequence of observation times $t$ is fixed, finite, and independent of $U^*$ and $M^*$. Both birth and death times take values exclusively from this set of observation times.
Conditioned on $U^*$, $M^*$ is random, non-negative, and satisfies the ELC with probability one. Furthermore, conditioned on $U^*$, the distribution of $M^*$ is such that the probability that the birth time of a child mutant exactly coincides with the death time of its parent is zero. 
In addition,  conditioned on $U^*$,
given any two disjoint multisets  of mutants $\mathcal{S}_1$ and $\mathcal{S}_2$ (multisets allow duplicates), 
where each multiset contains at least one mutant alive at time $t$,
the probability that the sum of abundances (frequencies or counts) of the mutants in these two multisets (with repeated elements counted repeatedly) is exactly equal is zero. 
Mathematically, if $\mathbb{P}(  M^*_{\mathcal{S}_1,t}, M^*_{\mathcal{S}_2,t}>0\mid U^*) > 0$, 
then 
$$\mathbb{P}( M^*_{\mathcal{S}_1,t} =  M^*_{\mathcal{S}_2,t}\mid   M^*_{\mathcal{S}_1,t}, M^*_{\mathcal{S}_2,t}>0,U^*) = 0.$$ 
Note that $M^*_{\mathcal{S}_1,t}>0$ means that there exists at least one $j\in\mathcal{S}_1$ such that $M^*_{j,t}>0$, and the same interpretation applies when $\mathcal{S}_1$ is replaced by $\mathcal{S}_2$.
\end{ass}

\subsubsection{Discussion of Assumption \ref{ass:continuity_of_M_for_calder_conditions_degeneracy_computations_paper_1}} \label{sec:discussion_of_assumptions_for_longitudional_conditions_paper_1}

The requirement that $M^*$ satisfies the ELC arises from the fact that 
(a) in this paper we assume that the real systems of interest are those in which the LC hold by nature, e.g., systems in which a mutant cannot be born after its parent has died, and 
(b) we aim to determine whether imposing the LC reduces the number of non-true solutions found; 
therefore, we assume that the true solution is among those found and is not excluded by imposing the LC,
and (c) we avoid the LC because their are almost equal to the ELC but the later simplify our proofs.

For example, if we assume that the temporal sampling is sufficiently fine, and that $M^*$ is derived from discrete samples of a continuous process, where birth and death times are discrete approximations of continuous random variables, the requirement that the birth and death times of a child and its corresponding parent do not exactly coincide is very mild, hence choosing LC or ELC is indifferent.

Since $M^*$ satisfies the ELC, and since the constraints imposed on $M^*$ depend on phylogenetic relationships among mutants, which are encoded in $U^*$, the distribution of $M^*$ 
depends on $U^*$. 
This is why all distributional statements are formulated conditioned on $U^*$.
When the number of reads covering a position in the genome is high, because of instrument random error, and if $M^*$  is a rescaled version of this high number of counts, e.g. a frequencies, it follows that when a mutant is alive, i.e., at any time after its birth and before its death, its abundance is well modeled by an absolutely continuous random variable.
Hence, we assume that the probability of $M_{j,t}$ taking any specific value, conditioned on being non-zero, is zero. 
The reason for requiring absolute continuity only when conditioned on $M_{j,t}$ being non-zero is that
at many time points there exists a non-zero probability that a mutant is not yet born or is already extinct, in which case $M_{j,t}$ is exactly zero.
The last assumption in Assumption \ref{ass:continuity_of_M_for_calder_conditions_degeneracy_computations_paper_1} extends this rationale from the distribution of a single mutant's abundance to the joint distribution of abundances of multiple mutants.
This assumption requires conditioning on the presence of at least one mutant in $\mathcal{S}_1$ and at least one mutant in $\mathcal{S}_2$ being alive.
However, as Lemma \ref{th:ass_1_with_all_non_zero_and_ass_1_with_a_few_non_zero_paper_1} in Appendix \ref{app:last_condition_forces_all_mutants_to_be_alive_is_the_same_as_not_doing_it_paper_1} demonstrates, the assumption could equivalently be stated  by conditioning on all mutants in both $\mathcal{S}_1$ and $\mathcal{S}_2$ being alive.

To be as minimally restrictive as possible,  Assumption \ref{ass:continuity_of_M_for_calder_conditions_degeneracy_computations_paper_1} requires only that the two sums do not coincide. Alternatively, we could have adopted the more restrictive assumption of absolute continuity of the joint density of all living mutants.
%


\begin{thm}\label{th:expected_number_of_solutions_without_long_cond_and_with_is_the_same_paper_1}
Let $n$ and $q$ be fixed. Consider any joint probability distribution over $U^*$ and $M^*$  that   satisfies Assumption \ref{ass:continuity_of_M_for_calder_conditions_degeneracy_computations_paper_1}. 
Define $F^* = U^* M^*$. The expected number of solutions that explain $F^*$ according to the PPM model is the same whether or not the ELC are imposed.
\end{thm}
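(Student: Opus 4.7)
The plan is to prove $\mathbb{E}[N_{\mathrm{PPM}}]=\mathbb{E}[N_{\mathrm{ELC}}]$ through the sharper pointwise statement that $N_{\mathrm{PPM}}=N_{\mathrm{ELC}}$ almost surely. Since imposing the ELC only shrinks the feasible set, $N_{\mathrm{ELC}}\leq N_{\mathrm{PPM}}$ deterministically, so it suffices to show that with probability one every valid PPM solution automatically satisfies the ELC. The family of candidate ancestry matrices $U$ on $[q]$ is finite, so a union bound reduces this to showing, for each fixed $U$ (with root $r^*$),
\[
\mathbb{P}\bigl[(U,U^{-1}F^*)\text{ is a PPM solution and } U^{-1}F^*\notin\mathcal{L}\bigr]=0.
\]

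Fix such a $U$ and write $M=U^{-1}U^*M^*$. Expanding componentwise gives $M_{v,t}=M^*_{\mathcal{A}_v,t}-M^*_{\mathcal{B}_v,t}$, where $\mathcal{A}_v$ and $\mathcal{B}_v$ are the (mutually disjoint) multiset differences of $\Delta^* v$ and $\cup_{j\in\partial v}\Delta^* j$, with $\partial v$ the children of $v$ in $U$. The continuity clause of Assumption~\ref{ass:continuity_of_M_for_calder_conditions_degeneracy_computations_paper_1} then yields a key dictionary: on the event $\{M\geq 0\}$, almost surely $M_{v,t}=0$ if and only if neither $\mathcal{A}_v$ nor $\mathcal{B}_v$ contains a mutant alive at $t$. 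Indeed, any other case forces either a zero-probability coincidence between the two disjoint-multiset sums, or $M^*_{\mathcal{B}_v,t}>M^*_{\mathcal{A}_v,t}$, which contradicts $M\geq 0$.

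With this dictionary, I would verify each clause of Lemma~\ref{th:quasi_convex_long_cond_M_no_eps_paper_1} separately. For ${\bf M}_{v,:}\neq{\bf 0}$, the key observation is that $\mathcal{A}_v\cup\mathcal{B}_v$ is always non-empty: $v$ lies in $\Delta^* v$ but would require some $j\in\partial v$ to be an ancestor of $v$ in $U^*$ in order to appear in $\cup_{j}\Delta^* j$, and such a $j$ would then contribute $j\in\Delta^* j\setminus\Delta^* v$ to the other side, a contradiction; combined with the ELC of $M^*$ (which forces every mutant to be alive at some $t$), this produces $M_{v,t}\neq 0$ for some $t$ almost surely. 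For the ``once-dead-stays-dead'' clause, a violation would require $\mathcal{A}_v$ to have some mutant alive at a past time $s$ and some (necessarily distinct) mutant alive at a future time $t$, with no alive mutant in $\mathcal{A}_v\cup\mathcal{B}_v$ at the intermediate $t-1$; chasing this through the ELC of $M^*$ (in particular the parent-alive-near-birth clause) forces the true parent of the second mutant to be alive at $t-1$ or $t$ and to sit in $\mathcal{A}_v\cup\mathcal{B}_v$, contradicting the assumption. For the parent/child-birth clause, I carry out the same analysis jointly for $\mathcal{A}_v,\mathcal{B}_v$ and $\mathcal{A}_w,\mathcal{B}_w$ (for a putative child $w\in\partial v$ in $U$) and trace the clause back to the corresponding clause for $M^*$.

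The main obstacle is the combinatorial bookkeeping in the last two clauses: one must carefully track which mutants in $\mathcal{A}_v,\mathcal{B}_v,\mathcal{A}_w,\mathcal{B}_w$ account for the putative ELC violation and show that the required alive/dead configuration contradicts the ELC that $M^*$ already satisfies, while navigating the tension between the parenthood relations given by $U$ (which define the clauses being checked) and by $U^*$ (which governs the support structure of the $\mathcal{A},\mathcal{B}$ multisets). The probabilistic content is light---essentially a single application of Assumption~\ref{ass:continuity_of_M_for_calder_conditions_degeneracy_computations_paper_1} to establish the dictionary, followed by finite union bounds over $U$, $v$, $w$, and $t$; the real work lies in the case analysis.
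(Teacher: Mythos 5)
Your proposal is correct and follows essentially the same route as the paper: the reduction to a per-$U$ statement is the content of Theorem~\ref{th:probability_of_solution_without_long_cond_and_with_is_the_same_paper_1}, your multiset decomposition $M_{v,t}=M^*_{\mathcal{A}_v,t}-M^*_{\mathcal{B}_v,t}$ is exactly Lemma~\ref{th:representation_of_M_as_difference_paper_1}, your ``dictionary'' is the paper's $\mathbb{P}(M_{+,v,t}=M_{-,v,t}>0)=0$ combined with Lemma~\ref{th:i_is_in_S_plus_paper_1}, and the clause-by-clause verification against Lemma~\ref{th:quasi_convex_long_cond_M_no_eps_paper_1} is carried out in Lemmas~\ref{th:M_infered_no_zero_when_Mstart_in_calder_is_non_negative_paper_1}--\ref{th:proof_that_M_star_satisfies_t_min_child_t_max_father_relationship_implies_M_also_does_paper_1} of Appendix~\ref{app:auxiliary_results_for_proving_prob_equiv_calder_paper_1}. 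The combinatorial case analysis you defer is indeed where the bulk of the paper's appendix effort lies (in particular the comparison of birth/death times of $M$ and $M^*$ in Lemma~\ref{th:th_tmin_tmax_relation_theorem_paper_1}), but your plan identifies the right structure and the right probabilistic inputs.
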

\begin{proof}[Proof of Theorem \ref{th:expected_number_of_solutions_without_long_cond_and_with_is_the_same_paper_1}]
A solution under the PPM model is any valid ancestry matrix $U$\footnote{Recall that by Assumption \ref{ass:single_root_trees_assumption_paper_1}, which we assume throughout the paper, $U^*$ has a unique root and hence that any  alternative explanation $U$ must also have a unique root.} for which $M = U^{-1} F^*$ is a valid abundance matrix.
Let $E_1(U,F^*)$ denote the event that $U$ explains $F^*$ according to the PPM model, and let $P_1(U,U^*) \equiv \mathbb{P}  ( E_1(U, U^* M^*) \mid U^*,U )$ be the probability that $U$ explains the data $F^* = U^* M^*$, where $U^*$ is a fixed ancestry matrix and $M^*$ is random conditioned on $U^*$. 
The expected number of solutions for a random problem is
\begin{align*}
&\mathbb{E}_{U^*,M^*} \left(\sum_{U} \mathbb{I}( E_1(U,F^*) ) \right)\\
&= \sum_{U} \mathbb{E}_{U^*} (\mathbb{E}_{M^*\mid U^*} ( \mathbb{I}( E_1(U,U^*M^*) ) ))\\
&=  \sum_{U} \mathbb{E}_{U^*} ( \mathbb{P}( E_1(U,U^*M^*)\mid U^*,U))\\
&= \sum_{U} \mathbb{E}_{U^*} ( P_1(U,U^*) ).
\end{align*}

Let $E_2(U,F^*)$ be the event that $U$ explains $F^*$ according to the PPM model and that the ELC hold, which by Lemma \ref{th:quasi_convex_long_cond_M_no_eps_paper_1} is the same as  $M = U^{-1} F^*$  satisfying conditions\eqref{eq:non_negative_M_each_entry_paper_1}-\eqref{eq:right_birth relative_to_father_birth_paper_1}. 
Let $P_2(U,U^*) \equiv \mathbb{P}  ( E_2(U, U^* M^*) \mid U^*,U )$ be the probability that $U$ explains the data $F^* = U^* M^*$ and the ELC hold conditioned on $U^*$, where $U^*$ is a fixed ancestry matrix and $M^*$ is random. 
The expected number of solutions in this case is
\begin{align*}
&\mathbb{E}_{U^*,M^*} \left( \sum_{U} \mathbb{I}( E_2(U,F^*) ) \right)
\\
&= \sum_{U} \mathbb{E}_{U^*} (\mathbb{E}_{M^*\mid U^*}  (\mathbb{I}( E_2(U,U^*M^*) ) ))\\
&=  \sum_{U} \mathbb{E}_{U^*} ( \mathbb{P}( E_2(U,U^*M^*)\mid U^*,U)) \\
&= \sum_{U} \mathbb{E}_{U^*} ( P_2(U,U^*) ).
\end{align*}

By Theorem \ref{th:probability_of_solution_without_long_cond_and_with_is_the_same_paper_1} below, we have $P_1(U,U^*) = P_2(U,U^*)$. 
Hence, the expected number of solutions is identical in both cases.
\end{proof}

\begin{thm} \label{th:probability_of_solution_without_long_cond_and_with_is_the_same_paper_1}
Let $n$ and $q$ be fixed. Let $U^*$ and $U$ be fixed ancestry matrices with a unique root. 
Let $M^*$ satisfy Assumption \ref{ass:continuity_of_M_for_calder_conditions_degeneracy_computations_paper_1}, and let $F^* = U^* M^*$.
The probability (over $M^*$, for fixed $n,q,U^*$ and $U$) that $U$ explains $F^*$ according to the PPM model is the same whether or not the ELC are enforced.
\end{thm}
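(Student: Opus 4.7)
The plan is to prove $P_1(U,U^*)=P_2(U,U^*)$ by establishing $P(E_1\setminus E_2)=0$, where $E_1=\{M\geq 0\}$ and $E_2=E_1\cap\{M\text{ satisfies the ELC}\}$ with $M=U^{-1}F^*=(I-T)U^*M^*$. By Lemma \ref{th:quasi_convex_long_cond_M_no_eps_paper_1}, $M$ satisfies the ELC if and only if $M$ satisfies \eqref{eq:non_negative_M_each_entry_paper_1} together with \eqref{eq:non_zero_M_paper_1}, \eqref{eq:right_form_M_condition_paper_1}, and \eqref{eq:right_birth relative_to_father_birth_paper_1}. Since \eqref{eq:non_negative_M_each_entry_paper_1} is already part of $E_1$, the set $E_1\setminus E_2$ is a finite union of ``bad events'' $B$, indexed by which of the three conditions fails and at which mutant, time, or parent--child pair. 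A union bound reduces the problem to proving $P(B\cap E_1)=0$ for each $B$.

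The key structural identity, used throughout, is that for any non-root $v$ in $U$ with parent $p(v)$ in $U$, $M_{v,t}=F^*_{v,t}-F^*_{p(v),t}=M^*_{\mathcal{A}_v,t}-M^*_{\mathcal{B}_v,t}$, where $\mathcal{A}_v=\Delta^*v\setminus\Delta^*p(v)$ and $\mathcal{B}_v=\Delta^*p(v)\setminus\Delta^*v$ are disjoint subsets of $[q]$; and for the root $r$ of $U$, $M_{r,t}=M^*_{\Delta^*r,t}$. Every equation $M_{v,t}=0$ appearing in a bad event therefore translates into an equality $M^*_{\mathcal{A}_v,t}=M^*_{\mathcal{B}_v,t}$ between two sums of $M^*$-entries over disjoint multisets, and every strict inequality $M_{v,t}>0$ into a strict inequality between the same two sums. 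I would enumerate, for each of the three ELC conditions, the finite list of ways it can fail and rewrite each such bad event as a system of equalities and strict inequalities in $M^*$.

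The closing mechanism uses the multiset-continuity clause of Assumption \ref{ass:continuity_of_M_for_calder_conditions_degeneracy_computations_paper_1}, combined with the a.s.\ ELC-on-$M^*$ and the no-coincidence clause (child's birth $\neq$ parent's death a.s.). Concretely, for fixed disjoint $\mathcal{S}_1,\mathcal{S}_2$ and time $t$, the total probability decomposition gives $P(M^*_{\mathcal{S}_1,t}=M^*_{\mathcal{S}_2,t})=P(\text{every mutant in }\mathcal{S}_1\cup\mathcal{S}_2\text{ is dead at }t)$, since the ``both positive and equal'' slice has probability zero by assumption. For each bad event I would split into two regimes at each of its defining equalities: (i) the ``non-degenerate'' regime, where at least one mutant in each side is alive at the relevant $t$, which is killed outright by continuity; and (ii) the ``degenerate'' regime, where every mutant in $\mathcal{S}_1\cup\mathcal{S}_2$ is dead at that $t$. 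The goal in regime (ii) is to use the strict-positivity witnesses in the same bad event, together with the ELC on $M^*$, to either reach a deterministic contradiction (on the event of probability one where ELC holds for $M^*$) or extract a further equality between two nontrivial disjoint-multiset sums that is then killed by a second application of continuity.

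The hardest part will be the case analysis for condition \eqref{eq:right_form_M_condition_paper_1} (no resurrection of $M_{v,\cdot}$). There, the bad event couples $M_{v,t-1}=0$ with two strict inequalities $M_{v,t}>0$ and $M_{v,k_0}>0$ for some $k_0\leq t-2$, over possibly large $\mathcal{A}_v,\mathcal{B}_v$. In regime (ii) every mutant in $\mathcal{A}_v\cup\mathcal{B}_v$ is dead at $t-1$ while a witness on one side is alive at $k_0$ and a witness on the same side is alive at $t$. Applying the no-resurrection clause of the ELC to each individual $M^*_{w,\cdot}$ forces these two witnesses to be distinct mutants, with the second one born precisely at $t$; then walking up the $U^*$-ancestry of that newly-born mutant, using the child-before-parent-death clause and the no-coincidence fact, one reaches either a contradiction with ${\bf M}^*_{w,:}\neq{\bf 0}$ for some $w$ or a new equality of two positive disjoint-multiset sums, which vanishes by continuity. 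Condition \eqref{eq:right_birth relative_to_father_birth_paper_1} follows a similar but shorter version of this argument, and \eqref{eq:non_zero_M_paper_1} is easiest because $M_{v,\cdot}\equiv{\bf 0}$ forces the equality at every $t$, so at any time where any mutant on either side is alive (which exists by ${\bf M}^*_{w,:}\neq{\bf 0}$) we are instantly in the non-degenerate regime.
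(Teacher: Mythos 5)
Your overall architecture matches the paper's: reduce to showing that, conditioned on $M=U^{-1}U^*M^*\ge 0$, each of the three ELC conditions from Lemma \ref{th:quasi_convex_long_cond_M_no_eps_paper_1} holds almost surely, by rewriting each failure event as a coincidence between sums of $M^*$-entries over disjoint multisets and killing it with the continuity clause of Assumption \ref{ass:continuity_of_M_for_calder_conditions_degeneracy_computations_paper_1} together with the a.s.\ ELC on $M^*$. That is exactly the paper's route (Lemmas \ref{th:M_infered_no_zero_when_Mstart_in_calder_is_non_negative_paper_1}, \ref{th:new_M_statistifes_death_longitudional_condition_paper_1} and \ref{th:proof_that_M_star_satisfies_t_min_child_t_max_father_relationship_implies_M_also_does_paper_1}, all resting on the representation of Lemma \ref{th:representation_of_M_as_difference_paper_1}).

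However, your ``key structural identity'' is wrong, and the whole case analysis is built on it. Since $U^{-1}=I-T$ and $T$ maps children to parents, $(TF^*)_{v,:}=\sum_{w\in\partial v}F^*_{w,:}$ sums over the \emph{children} of $v$ in $U$, so the correct identity is $M_{v,t}=F^*_{v,t}-\sum_{w\in\partial v}F^*_{w,t}=M^*_{\Delta^* v,t}-\sum_{w\in\partial v}M^*_{\Delta^* w,t}$, not $F^*_{v,t}-F^*_{p(v),t}$ (your formula is typically negative, since the sum condition forces $F^*_{p(v),:}\ge F^*_{v,:}$; it is also wrong at the root, where $M_{r,t}=F^*_{r,t}-\sum_{w\in\partial r}F^*_{w,t}\neq M^*_{\Delta^* r,t}$ whenever $r$ has children). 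This matters in two ways. First, the index collections are genuinely multisets: distinct children $w_1,w_2$ of $v$ in $U$ may satisfy $\Delta^* w_2\subseteq\Delta^* w_1$ in $U^*$, so $\cup_{w\in\partial v}\Delta^* w$ can contain repeated elements, and the cancellation against $\Delta^* v$ must be done with multiset arithmetic — which is precisely why Lemma \ref{th:representation_of_M_as_difference_paper_1} and Assumption \ref{ass:continuity_of_M_for_calder_conditions_degeneracy_computations_paper_1} are phrased for multisets; your $\mathcal{A}_v,\mathcal{B}_v$, defined as a single set difference of two subtrees, cannot capture this. Second, your ``regime (ii)'' argument needs an anchor: you must know that $v$ itself survives the cancellation into the positive multiset (the paper's Lemma \ref{th:i_is_in_S_plus_paper_1}, which holds only a.s.\ on $\{{\bf M}_{v,:}\ge 0\}$), and this is not automatic for your $\mathcal{A}_v$. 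Separately, ``$U$ explains $F^*$'' also requires ${\bf 1}^{\top}M={\bf 1}^{\top}$; your $E_1=\{M\ge 0\}$ silently drops it, whereas the paper first shows Assumption \ref{ass:assumption_no_root_node_is_zero_paper_1} holds a.s.\ (Lemma \ref{th:M_not_zero_paper_1}) and invokes Lemma \ref{th:assumption_2_lemma_paper_1} to reduce that constraint to root-matching. With the identity corrected and these two pieces added, your plan would essentially reproduce the paper's proof.
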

\begin{rmk} \label{rmk:proof_of_equal_prob_holds_for_other_constaints}
In Section \ref{sec:background_ppm_paper_1}, we explained that during inference one might enforce the conditions ${\bf 1}^{\top} M = {\bf 1}^{\top}$, or ${\bf 1}^{\top} M \leq {\bf 1}^{\top}$.
More generally, if $M$ encodes for example absolute counts, we might choose to enforce the conditions ${\bf 1}^{\top} M = {\bf C}^{\top}$, or ${\bf 1}^{\top} M \leq {\bf C}^{\top}$.
The proof of Theorem \ref{th:probability_of_solution_without_long_cond_and_with_is_the_same_paper_1} is provided for the case ${\bf 1}^{\top} M = {\bf 1}^{\top}$, but it also holds if ${\bf 1}^{\top} M = {\bf C}^{\top}$, or ${\bf 1}^{\top} M \leq {\bf C}^{\top}$ are enforced.
Indeed, the proof goes as follows. First we demonstrate that these conditions can be disregarded when proving a key sufficient equivalence relationship, and then we complete a calculation that does not involve these conditions.
Regardless of whether $M^*$ and $M$ satisfy 
${\bf 1}^{\top} M \leq {C}^{\top}$ or ${\bf 1}^{\top} M = {\bf C}^{\top}$, Lemma \ref{th:assumption_2_lemma_paper_1} continues to hold (cf. Remark \ref{rmk:comment_on_removing_sum_condition_in_other_cases_paper_1}), and hence the first step of the proof remains valid.
The second step (proving \eqref{eq:equivalent_step_to_prob_equality_without_sum_condition_paper_1}) also remains valid because it does not involve any of these conditions.
Furthermore, the main step of the proof of Theorem \ref{th:expected_number_of_solutions_without_long_cond_and_with_is_the_same_paper_1} relies on Theorem \ref{th:probability_of_solution_without_long_cond_and_with_is_the_same_paper_1}, and hence it holds regardless of whether $M^*$ and $M$ satisfy 
${\bf 1}^{\top} M \leq {C}^{\top}$ or ${\bf 1}^{\top} M = {\bf C}^{\top}$.
\end{rmk}

The proof of Theorem \ref{th:probability_of_solution_without_long_cond_and_with_is_the_same_paper_1} is concise but relies on several auxiliary lemmas, whose statements and proofs appear in Appendix \ref{app:auxiliary_results_for_proving_prob_equiv_calder_paper_1}. 
Below, we briefly summarize their results.
The proof is given for inference with the constraint ${\bf 1}^{\top} M = {\bf 1}^{\top}$, but as noted in Remark \ref{rmk:proof_of_equal_prob_holds_for_other_constaints}, it readily extends to other constraints.
\begin{itemize}
\item Lemma \ref{th:M_not_zero_paper_1} shows that Assumption \ref{ass:assumption_no_root_node_is_zero_paper_1} holds with probability one, and hence in the proof we can avoid enforcing that solutions must satisfy ${\bf 1}^{\top} M = {\bf 1}^{\top}$;
\item Lemma \ref{th:M_infered_no_zero_when_Mstart_in_calder_is_non_negative_paper_1} proves that enforcing $M\geq0$ implies \eqref{eq:non_zero_M_paper_1};
\item Lemma \ref{th:new_M_statistifes_death_longitudional_condition_paper_1} proves that enforcing $M\geq0$ implies \eqref{eq:right_form_M_condition_paper_1};
\item Lemma \ref{th:proof_that_M_star_satisfies_t_min_child_t_max_father_relationship_implies_M_also_does_paper_1} proves that enforcing $M\geq0$  implies \eqref{eq:right_birth relative_to_father_birth_paper_1}.
\end{itemize}
\begin{proof}[Proof of Theorem \ref{th:probability_of_solution_without_long_cond_and_with_is_the_same_paper_1}]

An ancestry matrix $U$ explains $F^* = U^* M^*$ according to the PPM model if $M = U^{-1} F^*$ satisfies $M \geq 0$ and ${\bf 1}^{\top} M = {\bf 1}^{\top}$.
To enforce the ELC, and by Lemma \ref{th:quasi_convex_long_cond_M_no_eps_paper_1}, we must further require that \eqref{eq:non_zero_M_paper_1}-\eqref{eq:right_birth relative_to_father_birth_paper_1} hold.
Therefore, to complete the proof, we need to show that 
\begin{align}
\mathbb{P}&(M \geq 0 \land {\bf 1}^{\top} M = {\bf 1}^{\top}\mid U^*,U)\nonumber\\
&= \mathbb{P}( M \geq 0 \land {\bf 1}^{\top} M = {\bf 1}^{\top} \land \text{ \eqref{eq:non_zero_M_paper_1}-\eqref{eq:right_birth relative_to_father_birth_paper_1} hold}\mid  U^*,U),\label{th:full_eq_that_long_cond_are_irrelevant_paper_1}
\end{align}   
where $M$ is a random variable defined as $M = U^{-1} F^* = U^{-1} U^* M^*$.

Let $r$ and $r^*$ denote the roots of $U$ and $U^*$, respectively.
By Lemma \ref{th:M_not_zero_paper_1}, which follows almost directly from Assumption \ref{ass:continuity_of_M_for_calder_conditions_degeneracy_computations_paper_1}, we have that Assumption \ref{ass:assumption_no_root_node_is_zero_paper_1} holds with probability $1$. 
Therefore, 
\begin{align*} 
\mathbb{P}&( M \geq 0 \land {\bf 1}^{\top} M = {\bf 1}^{\top}\mid U,U^*) = \\ &= \mathbb{P}(  M \geq 0 \land {\bf 1}^{\top} M = {\bf 1}^{\top} \land \text{ Assumption \ref{ass:assumption_no_root_node_is_zero_paper_1} holds}\mid U,U^*) \\
&= \mathbb{I}(r = r^*)\ \mathbb{P}(  M \geq 0  \land \text{ Assumption \ref{ass:assumption_no_root_node_is_zero_paper_1} holds}\mid U,U^*)\\
&=\mathbb{I}(r = r^*)\ \mathbb{P}( M \geq 0\mid U,U^*),
\end{align*}
where we used Lemma \ref{th:assumption_2_lemma_paper_1} in the second equality.

Similarly, 
\begin{align*}
    \mathbb{P}( M \geq 0& \land {\bf 1}^{\top} M = {\bf 1}^{\top} \land \text{ \eqref{eq:non_zero_M_paper_1}-\eqref{eq:right_birth relative_to_father_birth_paper_1} hold}\mid U,U^*)\\
    &= \mathbb{I}(r = r^*)\mathbb{P}( M \geq 0 \land \text{ \eqref{eq:non_zero_M_paper_1}-\eqref{eq:right_birth relative_to_father_birth_paper_1} hold}\mid U,U^*).
\end{align*}

Therefore, to show \eqref{th:full_eq_that_long_cond_are_irrelevant_paper_1}, it is sufficient to prove that for $U$ and $U^*$ such that $r = r^*$, we have  
\begin{equation}\label{eq:equivalent_step_to_prob_equality_without_sum_condition_paper_1}
\mathbb{P}(M \geq 0\mid U^*,U) = \mathbb{P}( M \geq 0 \land \text{ \eqref{eq:non_zero_M_paper_1}-\eqref{eq:right_birth relative_to_father_birth_paper_1} hold}\mid U^*,U).
\end{equation}
By Lemmas \ref{th:M_infered_no_zero_when_Mstart_in_calder_is_non_negative_paper_1}, \ref{th:new_M_statistifes_death_longitudional_condition_paper_1} and 
\ref{th:proof_that_M_star_satisfies_t_min_child_t_max_father_relationship_implies_M_also_does_paper_1}, we obtain
\begin{align}
\mathbb{P}&( M \geq 0 \land  \text{ \eqref{eq:non_zero_M_paper_1}-\eqref{eq:right_birth relative_to_father_birth_paper_1} are not all satisfied} \mid U^*,U)\nonumber\\
&\leq \mathbb{P}( M \geq 0 \land  \text{ \eqref{eq:non_zero_M_paper_1} is not satisfied} \mid U^*,U) \nonumber\\
&\quad+ \mathbb{P}( M \geq 0 \land  \text{ \eqref{eq:right_form_M_condition_paper_1} is not satisfied} \mid U^*,U) \nonumber\\
&\quad+ \mathbb{P}( M \geq 0 \land  \text{ \eqref{eq:right_birth relative_to_father_birth_paper_1} is not satisfied} \mid U^*,U) \nonumber\\
&= 0.
\end{align}
Hence, the proof is complete.
\end{proof}

\subsection{Degeneracy under the DC}\label{sec:effect_of_dynamic_restrictions_on_degeneracy_paper_1}

We study the effect of imposing DC has on the degeneracy of the PPM model. 
Our setting remains the same: we observe $F=F^*=U^* M^*$,
and from this, we seek to infer $U^*$ and $M^*$.

Our DC differ from the LC in Definition \ref{def:longitudional_conditions_calder_paper_1}, introduced by \cite{myers2019calder}.
Unlike their LC, which involve explicit birth and death times, our constraints impose limits on the temporal rate of change of mutant abundance in a population.

In particular,
we require that any solution $M$ satisfies a constraint on the quantity 
$$r(M) \equiv d(1,{\bf M}_{:,1})+\sum^n_{t=2} d(t,{\bf M}_{:,t},{\bf M}_{:,t-1}),$$
where $d$ is a given function\footnote{This dynamic constraint is a discrete analog of the action integral $\int^n_0 L(t,{\bf M}_{:,t},\frac{{\rm d}{\bf M}_{:,t}}{{\rm d} t}) {\rm d} t$, where $L$ is the Lagrangian.}. In particular,
impose the natural upper bound $r(M) \leq r(M^*)$.
Our results focus on the choice  
$$d(t,{\bf M}_{:,t},{\bf M}_{:,t-1}) = \| {\bf M}_{:,t} - {\bf M}_{:,t-1}  \|^2,$$
which implies that $r(M) = \|MD\|^2$, where $D$ is the linear operator that computes differences between consecutive columns of $M$. 
For brevity, we define $\dot{M} \equiv M D$ and $\dot{{\bf M}}_{i,:} \equiv {{\bf M}}_{i,:} D$. Similarly,  $\dot{M}^* \equiv M^* D$ and $\dot{{\bf M}}^*_{i,:} \equiv {{\bf M}}^*_{i,:}$.

To analyze degeneracy under this DC, we count the number of valid mutant abundance matrices $M$ and ancestry matrices $U$, where $U$ has the same unique root as $U^*$, that satisfy
\begin{align}\label{eq:degeneracy_count_with_dynamics_paper_1}
&r(M) \leq r(M^*) \land  M \geq 0 \\
\Leftrightarrow\;&\| \dot{M}\| \leq \| \dot{M}^*\| \land  M \geq 0 \\
\Leftrightarrow\;& \| U^{-1} U^* \dot{M}^*\| \leq \| \dot{M}^*\| \land U^{-1} U^* M^* \geq 0. \label{eq:degeneracy_count_with_dynamics_last_line_paper_1}
\end{align}

If Assumption \ref{ass:assumption_no_root_node_is_zero_paper_1} holds, then by Lemma \ref{th:assumption_2_lemma_paper_1} and Remark \ref{rmk:comment_on_removing_sum_condition_in_other_cases_paper_1}, restricting attention to ancestry matrices $U$ with the same unique root as $U^*$ ensures that our results hold regardless of any additional constraints imposed during inference. 
In particular, the results hold whether we enforce ${\bf 1}^{\top} M = {\bf 1}^{\top} M^*$ or ${\bf 1}^{\top} M \leq {\bf 1}^{\top} M^*$, and whether or not the model uses relative abundances, i.e., whether ${\bf 1}^{\top} M^* = {\bf 1}^{\top}$.

The number of solutions to \eqref{eq:degeneracy_count_with_dynamics_paper_1} is compared with the number of solutions to 
\begin{equation} \label{eq:degeneracy_count_with_NO_dynamics_paper_1}
M \geq 0\; \Leftrightarrow\;  U^{-1} U^* M^* \geq 0,
\end{equation}
which represents the degeneracy of the PPM model in the absence of DC (cf. equation \eqref{eq:first_cond_for_solution_paper_1} in Section \ref{sec:degeneracy_paper_1}). 
In this paper, this comparison is performed while restricting $U$ to differ from $U^*$ by the displacement of a single leaf, as formalized in Definition \ref{def:one_left_deviation_paper_1}. 
We then study degeneracy for the ensemble of problems described in Assumption \ref{ass:U_star_and_M_star_indep_and_partially_uniform_paper_1}.

\begin{defi}\label{def:one_left_deviation_paper_1}
    Let $\mathcal{D}(U^*)$ denote the set of all ancestry matrices $U$ that differ from $U^*$ by reassigning one leaf of $U^*$ to a different parent in $U$. Define $\mathcal{D}^+(U^*) \equiv \{U^*\} \cup \mathcal{D}(U^*)$ .
\end{defi}
\begin{rmk}
All trees in $\mathcal{D}(U^*)$ and $\mathcal{D}^+(U^*)$ are directed rooted trees with the same root as $U^*$, since the leaf reassignment described in Definition \ref{def:one_left_deviation_paper_1} does not alter the root.
\end{rmk}

\begin{ass}\label{ass:U_star_and_M_star_indep_and_partially_uniform_paper_1}
Fix the number of samples $n$ and the number of mutants $q$.
The ancestry matrix $U^*$ is uniformly sampled from the set of all directed labeled rooted trees with $q > 2$ nodes. 
The abundance matrix $M^*$ is non-negative, independent of $U^*$, and satisfies the following properties: 
(a) the probability distribution of each mutant's abundance vector ${\bf M}^*_{i,:}$ is identical across all mutants $i$, and 
(b) no mutant is absent with positive probability, i.e. $\mathbb{P}({\bf M}^*_{i,:} = {\bf 0}) = 0$.
\end{ass}
\begin{rmk}
If this assumption holds, we can assume that Assumption \ref{ass:assumption_no_root_node_is_zero_paper_1} holds and that, without loss of generality, the root of $U^*$ is $r^* = 1$.
By Lemma \ref{th:assumption_2_lemma_paper_1} and Remark \ref{rmk:comment_on_removing_sum_condition_in_other_cases_paper_1}, this also implies that we do not need to enforce 
${\bf 1}^{\top} M =   {\bf 1}^{\top} M^*$ or ${\bf 1}^{\top} M \leq   {\bf 1}^{\top} M^*$ during inference.
\end{rmk}

Before we state and prove our main result for this section, we first present a series of intermediate results, whose proofs are provided in Appendix \ref{app:definition_of_increments_paper_1}.
\begin{lem}\label{th:analytical_bound_on_degeneracy_middle_paper_1}
    Let $U^*$ and $M^*$ satisfy Assumption \ref{ass:U_star_and_M_star_indep_and_partially_uniform_paper_1}.
    Let $E$ and $E'$ denote the expected number of trees in $\mathcal{D}^+(U^*)$ that satisfy \eqref{eq:degeneracy_count_with_NO_dynamics_paper_1} and  \eqref{eq:degeneracy_count_with_dynamics_last_line_paper_1}, respectively. 
    We have that 
    \begin{align}
    &E = 1 + (q-1)(q-2)(1-1/q)^{q-2} \mathbb{P}({{\bf M}^*}_{1,:} \leq {{\bf M}^*}_{2,:} ),\label{eq:degeneracy_without_regu_for_one_leaf_perturb_paper_1}\\
    &E'\leq 1 + (q-1)(q-2)(1-1/q)^{q-2}\nonumber\\
    & \qquad \qquad \qquad\times \mathbb{P}(  {({\dot{{\bf M}}^*}_{1,:})}^{\top}(\dot{{\bf M}}^*_{1,:} - 2\dot{{\bf M}}^*_{2,:}) \leq 0).\label{eq:degeneracy_with_regu_for_one_leaf_perturb_paper_1}
    \end{align}

    When DC are enforced, the expected number of degenerate solutions in $\mathcal{D}(U^*)$ is reduced by at least the factor
\begin{equation}\label{eq:ration_of_degeneracy_decrease_while_dynamic_constraints_general}
    \frac{\mathbb{P}({{\bf M}^*}_{1,:} \leq {{\bf M}^*}_{2,:} )}{\mathbb{P}(  {({\dot{{\bf M}}^*}_{1,:})}^{\top}(\dot{{\bf M}}^*_{1,:} - 2\dot{{\bf M}}^*_{2,:}) \leq 0)}.
    \end{equation}
\end{lem}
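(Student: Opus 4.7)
The plan is to use linearity of expectation over $\mathcal{D}^+(U^*) = \{U^*\} \cup \mathcal{D}(U^*)$. Since $U^*$ trivially satisfies both \eqref{eq:degeneracy_count_with_NO_dynamics_paper_1} and \eqref{eq:degeneracy_count_with_dynamics_last_line_paper_1}, it contributes $1$ to each of $E$ and $E'$. For $U \in \mathcal{D}(U^*)$, parametrized by a leaf $\ell$ of $U^*$ (with $U^*$-parent $p^*$) and a new parent $p \in [q]\setminus\{\ell,p^*\}$, I would first compute $M = U^{-1}U^* M^*$ explicitly. Because moving one leaf only perturbs two parent-child relations, $M$ agrees with $M^*$ outside rows $p^*$ and $p$, and a direct computation using ${\bf F}^*_{\ell,:} = {\bf M}^*_{\ell,:}$ (valid because $\ell$ is a leaf of $U^*$) gives ${\bf M}_{p^*,:} = {\bf M}^*_{p^*,:} + {\bf M}^*_{\ell,:}$ and ${\bf M}_{p,:} = {\bf M}^*_{p,:} - {\bf M}^*_{\ell,:}$. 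Consequently, \eqref{eq:degeneracy_count_with_NO_dynamics_paper_1} collapses to ${\bf M}^*_{p,:} \geq {\bf M}^*_{\ell,:}$ componentwise, and a brief expansion yields $\|\dot{M}\|^2 - \|\dot{M}^*\|^2 = 2\,\dot{{\bf M}}^*_{\ell,:}\cdot(\dot{{\bf M}}^*_{\ell,:} + \dot{{\bf M}}^*_{p^*,:} - \dot{{\bf M}}^*_{p,:})$, so \eqref{eq:degeneracy_count_with_dynamics_last_line_paper_1} reduces to $\dot{{\bf M}}^*_{\ell,:}\cdot(\dot{{\bf M}}^*_{\ell,:} + \dot{{\bf M}}^*_{p^*,:} - \dot{{\bf M}}^*_{p,:}) \leq 0$ jointly with the same non-negativity condition.

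Next I would compute the combinatorial prefactor $\mathbb{E}|\mathcal{D}(U^*)|$. Each leaf of $U^*$ contributes $q-2$ valid reassignments, so $|\mathcal{D}(U^*)| = (q-2)L(U^*)$, where $L(U^*)$ is the number of leaves of $U^*$. Under Assumption \ref{ass:U_star_and_M_star_indep_and_partially_uniform_paper_1}, $U^*$ is uniform on labeled rooted trees on $[q]$, and a Prüfer-sequence argument gives $\mathbb{P}(v\text{ is a leaf}) = \tfrac{q-1}{q}(1-1/q)^{q-2}$ for any fixed $v$: the $(1-1/q)^{q-2}$ factor is the probability that $v$ is absent from the length-$(q-2)$ Prüfer sequence (i.e., $v$ has unrooted degree $1$), while the $(q-1)/q$ factor accounts for $v$ not being the root. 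Summing over $v \in [q]$ gives $\mathbb{E}[L(U^*)] = (q-1)(1-1/q)^{q-2}$, producing the common prefactor $(q-1)(q-2)(1-1/q)^{q-2}$ appearing in both \eqref{eq:degeneracy_without_regu_for_one_leaf_perturb_paper_1} and \eqref{eq:degeneracy_with_regu_for_one_leaf_perturb_paper_1}.

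For \eqref{eq:degeneracy_without_regu_for_one_leaf_perturb_paper_1}, I would combine independence of $M^*$ and $U^*$ with the identical-marginals property: for every triple of distinct indices $(\ell, p^*, p)$ we have $\mathbb{P}({\bf M}^*_{p,:} \geq {\bf M}^*_{\ell,:}) = \mathbb{P}({\bf M}^*_{1,:} \leq {\bf M}^*_{2,:})$, so summing and taking expectation yields the exact formula. For the upper bound \eqref{eq:degeneracy_with_regu_for_one_leaf_perturb_paper_1}, I would discard the non-negativity condition (a legitimate upper bound) and retain only the DC inner-product inequality, then invoke the identity
\[
\dot{{\bf M}}^*_{\ell,:}\cdot(\dot{{\bf M}}^*_{\ell,:} + \dot{{\bf M}}^*_{p^*,:} - \dot{{\bf M}}^*_{p,:}) = \tfrac{1}{2}\bigl[\dot{{\bf M}}^*_{\ell,:}\cdot(\dot{{\bf M}}^*_{\ell,:} - 2\dot{{\bf M}}^*_{p,:}) + \dot{{\bf M}}^*_{\ell,:}\cdot(\dot{{\bf M}}^*_{\ell,:} + 2\dot{{\bf M}}^*_{p^*,:})\bigr].
\]
Whenever the left-hand side is non-positive, at least one summand on the right must be; combined with identical marginals (and a sign-symmetry observation on pairs of rows of $\dot{M}^*$), this reduces both resulting probabilities to the canonical quantity $\mathbb{P}\bigl({\dot{{\bf M}}^*_{1,:}}^{\top}(\dot{{\bf M}}^*_{1,:} - 2\dot{{\bf M}}^*_{2,:}) \leq 0\bigr)$. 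The factor-reduction claim \eqref{eq:ration_of_degeneracy_decrease_while_dynamic_constraints_general} then follows by dividing the leading coefficient of $E-1$ by that of the upper bound on $E'-1$.

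The main obstacle I anticipate is precisely this final probabilistic reduction: the DC condition naturally involves three distinct rows of $\dot{M}^*$ (indexed by $\ell$, $p^*$, $p$), whereas the stated bound involves only two. Collapsing the three-row probability into the displayed two-row form with exactly the factor $-2$ inside the inner product, and without picking up an extra additive or multiplicative constant, will likely require either a tighter algebraic identity than the one written above, or a carefully justified symmetry assumption on the joint distribution of row pairs beyond the bare identical-marginals hypothesis in Assumption \ref{ass:U_star_and_M_star_indep_and_partially_uniform_paper_1}. This is where I expect the bulk of the technical care to be needed.
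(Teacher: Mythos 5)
Your derivation of \eqref{eq:degeneracy_without_regu_for_one_leaf_perturb_paper_1} is correct and is essentially the paper's argument: the same explicit computation of $M=U^{-1}U^*M^*$ for a one-leaf reassignment, the same observation that the only row that can go negative is that of the new parent $p$ (the old parent's row ${\bf M}^*_{p^*,:}+{\bf M}^*_{\ell,:}$ is automatically non-negative), the same count $\mathbb{E}|\mathcal{D}(U^*)|=(q-2)\,\mathbb{E}[L(U^*)]=(q-1)(q-2)(1-1/q)^{q-2}$ via the Pr\"ufer-sequence leaf probability, and the same appeal to label-exchangeability of the rows of $M^*$ (which, as in the paper, reads Assumption \ref{ass:U_star_and_M_star_indep_and_partially_uniform_paper_1}(a) slightly more strongly than identical marginals; this is harmless under Assumption \ref{ass:brownian_motion_paper_1}, where the rows are i.i.d.).

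The gap is exactly where you anticipate it, in \eqref{eq:degeneracy_with_regu_for_one_leaf_perturb_paper_1}, and your proposal does not close it. Your identity $\|\dot M\|^2-\|\dot M^*\|^2=2\,{(\dot{{\bf M}}^*_{\ell,:})}^{\top}(\dot{{\bf M}}^*_{\ell,:}+\dot{{\bf M}}^*_{p^*,:}-\dot{{\bf M}}^*_{p,:})$ is right, but splitting it into the two summands and arguing that one of them must be non-positive only yields a union bound, i.e.\ at best \emph{twice} the probability appearing in \eqref{eq:degeneracy_with_regu_for_one_leaf_perturb_paper_1}, and that only after a sign-symmetry step for the $+2\dot{{\bf M}}^*_{p^*,:}$ term that Assumption \ref{ass:U_star_and_M_star_indep_and_partially_uniform_paper_1} does not grant. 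So as written you obtain $E'\leq 1+2(q-1)(q-2)(1-1/q)^{q-2}\,\mathbb{P}({(\dot{{\bf M}}^*_{1,:})}^{\top}(\dot{{\bf M}}^*_{1,:}-2\dot{{\bf M}}^*_{2,:})\leq 0)$ and a reduction factor half of \eqref{eq:ration_of_degeneracy_decrease_while_dynamic_constraints_general}. For context, the paper's own proof reaches the constant $1$ by asserting that a one-leaf reassignment changes the children of exactly one node, so that only the new parent's row of $\dot M$ differs from $\dot M^*$ and $\|\dot M\|^2-\|\dot M^*\|^2={(\dot{{\bf M}}^*_{\ell,:})}^{\top}(\dot{{\bf M}}^*_{\ell,:}-2\dot{{\bf M}}^*_{p,:})$ exactly; your computation shows that the old parent's row also changes and contributes the term ${(\dot{{\bf M}}^*_{\ell,:})}^{\top}(\dot{{\bf M}}^*_{\ell,:}+2\dot{{\bf M}}^*_{p^*,:})$ that the paper drops. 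The honest options are to carry the full three-row event through to the downstream lemmas (its Gaussian-increment version still has mean $1$ and variance $O(1/N)$, so Lemma \ref{th:ergodic_expression_goes_to_one_paper_1} and Theorem \ref{th:main_theorem_for_our_dynamic_constraints_paper_1} survive with modified constants) or to accept the extra factor of $2$; the bound with constant $1$ does not follow from your route.
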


Even if our bound \eqref{eq:ration_of_degeneracy_decrease_while_dynamic_constraints_general} is greater than $1$,  enforcing \eqref{eq:degeneracy_count_with_dynamics_last_line_paper_1} instead of \eqref{eq:degeneracy_count_with_NO_dynamics_paper_1} during inference always beings the number of solutions down. Our intuition is that, under Assumption \ref{ass:U_star_and_M_star_indep_and_partially_uniform_paper_1},the ratio \eqref{eq:ration_of_degeneracy_decrease_while_dynamic_constraints_general} will be small.
In particular, we expect that the instantaneous rates of change in abundance for any two mutants are not strongly correlated, in the sense that $({\dot{{\bf M}}^*}_{1,:})^{\top} \dot{{\bf M}}^*_{2,:}$ is typically much smaller than $\|{\dot{{\bf M}}^*}_{1,:}\|^2$.
Hence, the denominator of \eqref{eq:ration_of_degeneracy_decrease_while_dynamic_constraints_general} is significantly smaller than its numerator, implying a substantial reduction in degeneracy.

To obtain a concrete result from Lemma \ref{th:analytical_bound_on_degeneracy_middle_paper_1}, we must specify a model for ${\bf M}^*$. 
For analytical tractability, we consider a continuous-time formulation in this section. 
Although time is often treated as discrete -- since experimental measurements are typically taken at discrete time points and discrete formulations simplify algorithmic development and inference -- biological processes evolve continuously in time. 
The PPM model remains valid under continuous-time.

With this in mind, we introduce Assumption \ref{ass:brownian_motion_paper_1} where mutants evolve in continuous-time. 
We use ${\bf M}^*_{1,:}$ and ${\bf M}^*_{2,:}$ to represent the trajectories of the processes for mutant types $1$ and $2$ over the time interval $[0,n]$, where $n$ now is a real positive number that measures the amount of sample data we have. 
We denote by ${M}_{1,t=0}$ and ${M}_{2,t=0}$ the initial samples (we sometimes omit ``$t=$'' for brevity), whereas before, when operating in discrete time, we used $t=1$ for the first sample, $t=2$ for the second sample, and so on.
Assumption \ref{ass:brownian_motion_paper_1} is not intended to describe the only or most realistic biological model,
but a reasonable and analytically tractable choice that allows us to study degeneracy theoretically.

\begin{ass}\label{ass:brownian_motion_paper_1}
Fix the number of samples $n$ and mutants $q$.
The ancestry matrix $U^*$ is uniformly sampled from the set of all directed labeled rooted trees with $q > 2$ nodes.
Define the stochastic processes ${\bf \tilde{M}}_{i,:}$, $i \in [q]$, as independent and identically distributed Brownian motions, independent of $U^*$. 
The starting point of each $\tilde{M}_{i,:}$ is $\tilde{M}_{i,t=0}$. The set of all starting points are i.i.d. uniformly distributed on $[0,1]$. 
Conditioned on the starting point, each $\tilde{M}_{i,t}$ has variance $t$.
Define each process ${\bf M}^*_{i,:}$ as the distribution of ${\bf \tilde{M}}_{i,:}$, conditioned on the event: $\tilde{M}_{i,t} \geq 0$ for all $t\in[0,n]$. 
Note that under this assumption $M^*_{i,t=0} = \tilde{M}_{i,t=0}$ (in distribution), and Assumption \ref{ass:U_star_and_M_star_indep_and_partially_uniform_paper_1} holds.
\end{ass}

Let us now discuss how Lemma \ref{th:analytical_bound_on_degeneracy_middle_paper_1} applies under the continuity assumption in Assumption \ref{ass:brownian_motion_paper_1}.
The probability of the event ${\bf M}^*_{1,:} \leq {\bf M}^*_{2,:}$ (cf. \eqref{eq:degeneracy_without_regu_for_one_leaf_perturb_paper_1}) is now  
$\mathbb{P}({ M}^*_{1,t} \leq { M}^*_{2,t} \forall t \in [0,n])$.
The probability of the event ${({\dot{{\bf M}}^*}_{1,:})}^{\top}(\dot{{\bf M}}^*_{1,:} - 2\dot{{\bf M}}^*_{2,:}) \leq 0$ is now   %
\begin{align*}
&\lim_{\substack{ N \to \infty \\ \delta =n/N }} \mathbb{P}\Big( \sum^N_{i = 1}  (M^*_{1,\delta(i+1)} -M^*_{1,\delta i})\\
&\qquad\quad\times( ({M^*_{1,\delta(i+1)} -M^*_{1,\delta i}}) - 2({M^*_{2,\delta(i+1)} -M^*_{2,\delta i}}))\leq 0\Big)\\
&=\lim_{\substack{ N \to \infty \\ \delta =n/N }} \mathbb{P}\bigg(n \frac{1}{N}\sum^N_{i = 1}  \frac{M^*_{1,\delta(i+1)} -M^*_{1,\delta i}}{\sqrt{\delta}}\\
&\qquad\quad\times\bigg( \frac{M^*_{1,\delta(i+1)} -M^*_{1,\delta i}}{\sqrt{\delta}} - 2\frac{M^*_{2,\delta(i+1)} -M^*_{2,\delta i}}{\sqrt{\delta}}\bigg)\leq 0\bigg)\\
& = \lim_{\substack{ N \to \infty \\ \delta =n/N }} \mathbb{P}\Big(\frac{1}{N}\sum^N_{i = 1}  Z_{1,\delta i}\left( Z_{1,\delta i} - 2Z_{2,\delta i}\right)\leq 0\Big),\numberthis\label{eq:equiv_form_for_dyn_constraints_cont_time_paper_1}
\end{align*}
where $Z_{1,t} =  \frac{M^*_{1,t + \delta} -M^*_{1,t}}{\sqrt{\delta}}$, and $Z_{2,t} =  \frac{M^*_{2,t + \delta} -M^*_{2,t}}{\sqrt{\delta}}$.

The following results holds, whose proof is in Appendix \ref{app:definition_of_increments_paper_1}.
\begin{lem} \label{th:bound_on_prob_M1_less_M2_under_brownian_motion_paper_1}
If Assumption \ref{ass:brownian_motion_paper_1} 
holds and $n>0$ then 
\begin{align}
   \mathbb{P}({\bf M}^*_{1,:} \leq {\bf M}^*_{2,:} ) = 
\frac{1}{36 n} +O \left(\frac{1}{n^2}
\right).
\end{align}
\end{lem}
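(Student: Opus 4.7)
The plan is to exploit the independence of ${\bf M}^*_{1,:}$ and ${\bf M}^*_{2,:}$, each of which is, by Assumption~\ref{ass:brownian_motion_paper_1}, an independent Brownian motion with starting point uniform on $[0,1]$ conditioned on staying non-negative on $[0,n]$. Let $A_i = \{\tilde M_{i,t} \geq 0 \text{ for all } t \in [0,n]\}$. Since $A_1$ and $A_2$ depend on disjoint Brownian motions they are independent, so
\[
\mathbb{P}({\bf M}^*_{1,:} \leq {\bf M}^*_{2,:}) = \frac{\mathbb{P}(\tilde M_{1,t} \leq \tilde M_{2,t}\ \forall t,\, A_1 \cap A_2)}{\mathbb{P}(A_1)\,\mathbb{P}(A_2)}.
\]
I would compute the numerator and the denominator to the appropriate order in $1/n$ and then take the ratio.

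For the denominator, the reflection principle gives $\mathbb{P}(A_i \mid \tilde M_{i,0}=u) = 2\Phi(u/\sqrt{n}) - 1$ for $u \geq 0$. Averaging against the uniform $[0,1]$ density of $u$ and Taylor-expanding $\Phi$ at $0$ yields $\mathbb{P}(A_i) = (2\pi n)^{-1/2}\bigl(1 + O(1/n)\bigr)$, and hence $\mathbb{P}(A_1)\mathbb{P}(A_2) = (2\pi n)^{-1}\bigl(1 + O(1/n)\bigr)$.

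For the numerator, condition on $\tilde M_{1,0}=x$, $\tilde M_{2,0}=y$. The event is impossible unless $x \leq y$, in which case it is a non-colliding probability for two Brownian motions absorbed at the origin. The Karlin--McGregor formula expresses it as $\int_0^\infty \int_0^{v_2} q_n(x,y;v_1,v_2)\,dv_1\,dv_2$, with $q_n(x,y;v_1,v_2) = p^{(0)}_n(x,v_1)\,p^{(0)}_n(y,v_2) - p^{(0)}_n(x,v_2)\,p^{(0)}_n(y,v_1)$ and
\[
p^{(0)}_n(u,v) = \tfrac{2}{\sqrt{2\pi n}}\, e^{-v^2/(2n)}\, e^{-u^2/(2n)}\, \sinh(uv/n)
\]
the transition kernel of Brownian motion absorbed at $0$. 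After rescaling $v_i = \sqrt{n}\,V_i$, the leading small-$u$ approximation $p^{(0)}_n(u,v)\approx (2uv/n)(2\pi n)^{-1/2}e^{-v^2/(2n)}$ makes $q_n$ vanish identically; this cancellation is the main technical obstacle. Expanding $\sinh(uv/n)$ and $e^{-u^2/(2n)}$ one further order in $1/n$ and tracking which monomials in $(x,y,V_1,V_2)$ survive the antisymmetrization in the determinant gives
\[
q_n(x,y;\sqrt{n}\,V_1, \sqrt{n}\,V_2) = \frac{xy\,(x^2-y^2)\,V_1 V_2\,(V_1^2-V_2^2)}{3\pi n^3}\, e^{-(V_1^2+V_2^2)/2} + O(n^{-4}).
\]

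The concluding step is a direct computation. The elementary integral $\int_0^\infty\!\int_0^{V_2} V_1 V_2 (V_2^2-V_1^2)\, e^{-(V_1^2+V_2^2)/2}\,dV_1\,dV_2$ evaluates to $1$ via the substitution $a=V_1^2/2$, $b=V_2^2/2$, so for $x \leq y$ the conditional numerator is $xy(y^2-x^2)/(3\pi n^2) + O(n^{-3})$. Averaging against $(x,y) \sim U[0,1]^2$ restricted to $x\leq y$ gives $\int_0^1\!\int_0^y xy(y^2-x^2)\,dx\,dy = 1/24$, so the full numerator equals $(72\pi n^2)^{-1}\bigl(1+O(1/n)\bigr)$. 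Dividing by the denominator $(2\pi n)^{-1}\bigl(1+O(1/n)\bigr)$ yields the claimed $\mathbb{P}({\bf M}^*_{1,:}\leq{\bf M}^*_{2,:}) = 1/(36 n) + O(1/n^2)$.
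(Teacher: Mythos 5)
Your proof is correct and arrives at the right constant, but it takes a genuinely different route from the paper's. The paper conditions on the starting points, rewrites the ordered-and-nonnegative event as the non-exit of a \emph{correlated} two-dimensional Brownian motion $(X_t,Y_t)$ with $\rho=-1/\sqrt{2}$ from a quadrant, quotes a classical Bessel-function series for that exit problem, expands it for small $z=r_0/(2\sqrt{n})$ to get the conditional probability $\big((M^*_{2,0})^2-(M^*_{1,0})^2\big)/(6n)+O(1/n^2)$, and only then integrates over the uniform starting points to get $1/(36n)$. You instead keep the two motions independent and apply the Karlin--McGregor determinant with the absorbed-at-zero heat kernel; the leading term of the determinant cancels by antisymmetry, and the first surviving term yields the conditional numerator $xy(y^2-x^2)/(3\pi n^2)$, after which everything is elementary Gaussian calculus (your integral identities and the final $\tfrac{1}{72\pi n^2}\big/\tfrac{1}{2\pi n}=\tfrac{1}{36n}$ all check out). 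Your argument is more self-contained --- it avoids the cited Bessel series and makes the origin of the extra $1/n$ decay transparent as a determinantal cancellation --- while the paper's wedge-exit formulation would adapt more directly to other correlation structures. One point of interpretation worth noting: you condition globally on $A_1\cap A_2$, which reweights the law of the starting points, whereas the paper (consistent with the remark in Assumption \ref{ass:brownian_motion_paper_1} that $M^*_{i,t=0}\stackrel{d}{=}\tilde M_{i,t=0}$) conditions pathwise given the starting points and keeps them uniform; the two normalizations happen to give the same leading term $1/(36n)$ here, since $\int\!\!\int_{x\le y}(y^2-x^2)\,dx\,dy\cdot\tfrac{1}{6}$ and $\big(\int\!\!\int_{x\le y}xy(y^2-x^2)\,dx\,dy\big)\big/\big(3\cdot 2\cdot\tfrac14\big)$ both equal $1/36$, so no correction is needed, but it is the one place where your derivation and the paper's could have diverged.
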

\begin{rmk} 
Lemma \ref{th:analytical_bound_on_degeneracy_middle_paper_1} and Lemma \ref{th:bound_on_prob_M1_less_M2_under_brownian_motion_paper_1} taken together 
imply that the PPM model degeneracy within the solution set $\mathcal{D}(U^*)$ is bounded by $O(q^2/n)$. 
\end{rmk}

\begin{lem}\label{th:ergodic_expression_goes_to_one_paper_1}
If Assumption \ref{ass:brownian_motion_paper_1} 
holds and $n>0$ then
%
\begin{align*}
&\mathbb{P}\left(
{\left({\dot{{\bf M}}^*}_{1,:}\right)}^{\top}\left(\dot{{\bf M}}^*_{1,:} - 2\dot{{\bf M}}^*_{2,:}\right)  \leq 0 
\right) \\
&\quad\equiv \lim_{\substack{ N \to \infty \\ \delta =n/N }} \mathbb{P}\left(\frac{1}{N}\sum^N_{i = 1}  Z_{1,\delta i}\left( Z_{1,\delta i} - 2Z_{2,\delta i}\right)\leq 0\right)\\
&\quad= 0.
\end{align*}
\end{lem}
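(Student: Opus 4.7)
The plan is to reduce the claim to two classical quadratic-variation statements for Brownian paths. With $\delta = n/N$, the sum inside the probability equals $\frac{1}{n}(Q_N - 2 C_N)$, where
$$Q_N \equiv \sum^N_{i=1}\bigl(M^*_{1,\delta(i+1)}-M^*_{1,\delta i}\bigr)^2, \quad C_N \equiv \sum^N_{i=1}\bigl(M^*_{1,\delta(i+1)}-M^*_{1,\delta i}\bigr)\bigl(M^*_{2,\delta(i+1)}-M^*_{2,\delta i}\bigr).$$
If $Q_N \to n$ and $C_N \to 0$ in probability, then the normalized sum converges in probability to $(n-0)/n = 1 > 0$, so its probability of being $\leq 0$ tends to zero, which is what we need.

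Both limits are standard for unconditioned, independent Brownian motions: $Q_N \to n$ in $L^2$ (the quadratic variation of a Brownian path along a mesh-vanishing partition), and $C_N \to 0$ in $L^2$ (the cross variation of two independent Brownian motions vanishes). For example, by independent increments, $\mathbb{E}[Q_N] = n$ and $\mathrm{Var}(Q_N) = 2n\delta \to 0$, while $\mathbb{E}[C_N] = 0$ and $\mathbb{E}[C_N^2] = n\delta \to 0$, so Chebyshev yields in-probability convergence. The subtlety is that ${\bf M}^*_1, {\bf M}^*_2$ are defined as the unconditioned processes $\tilde{M}_1, \tilde{M}_2$ conditioned on the event $A \equiv \{\tilde{M}_{1,t},\tilde{M}_{2,t}\geq 0 \text{ for all } t\in[0,n]\}$. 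Since $Q_N$ and $C_N$ are the same measurable functionals of the sample paths under both laws, and since $\mathbb{P}(A) > 0$ (each initial value is uniform on $[0,1]$ and hence strictly positive almost surely, and a Brownian motion started from a strictly positive point stays non-negative on a finite interval with positive probability), the elementary ratio bound
$$\mathbb{P}(|Q_N - n| > \epsilon \mid A) \leq \mathbb{P}(|Q_N - n| > \epsilon)/\mathbb{P}(A),$$
and its analog for $C_N$, transfer in-probability convergence from the unconditional to the conditional law.

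The main obstacle is precisely this transfer from unconditional to conditional measure; the quadratic-variation facts themselves are classical. Once the transfer is justified as above, combining the two in-probability convergences shows that the normalized sum converges in probability to $1$, and the statement of the lemma follows.
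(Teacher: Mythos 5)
Your overall strategy is the same as the paper's: compute the mean and variance of the unconditioned increment sum (your $Q_N$ and $C_N$ bookkeeping is equivalent to the paper's direct computation $\mathbb{E}(\tilde X)=1$, $\mathbb{V}(\tilde X)=6/N$), apply Chebyshev, and then transfer from the unconditioned Brownian motions to the conditioned processes. The computations are correct. The gap is in the transfer step, which you yourself identify as the crux. Under Assumption~\ref{ass:brownian_motion_paper_1} the conditioning on non-negativity is performed \emph{for each fixed starting point} (this is why the paper can assert $M^*_{i,t=0}=\tilde M_{i,t=0}$ in distribution, and why the proof of Lemma~\ref{th:bound_on_prob_M1_less_M2_under_brownian_motion_paper_1} integrates against the uniform density of the initial values). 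Consequently the correct identity is $\mathbb{P}(X\leq 0)=\mathbb{E}_{m_0}\bigl[\mathbb{P}(\tilde X\leq 0\land A\mid m_0)/\mathbb{P}(A\mid m_0)\bigr]$ with $m_0$ uniform, not $\mathbb{P}(\tilde X\leq 0\land A)/\mathbb{P}(A)$ with the marginal $\mathbb{P}(A)$. Your ratio bound uses the latter, i.e.\ it conditions the \emph{entire} law (initial point included) on $A$; that is a different measure, one whose initial distribution is tilted by $\mathbb{P}(A\mid m_0)\propto \text{Erf}(m_0/\sqrt{2n})$ and is no longer uniform.

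This distinction is not cosmetic: the per-starting-point survival probability $\mathbb{P}(A\mid m_0)$ tends to $0$ as $m_0\to 0$, and in fact $\int_0^1 \mathrm{d}m/\text{Erf}(m/\sqrt{2n})=+\infty$, so the naive bound $\mathbb{P}(E\mid A,m_0)\leq \mathbb{P}(E)/\mathbb{P}(A\mid m_0)$ integrated over the uniform initial law gives $\infty\cdot o(1)$ and proves nothing. This degeneracy near the absorbing boundary is exactly what the paper's proof is built to handle: it introduces a cutoff $C$, pays a price $\mathbb{P}({\bf M}^*_{:,0}\leq C)$ for small initial values, lower-bounds the denominator by $\text{Erf}(C/\sqrt{n})$ on the complement, and then optimizes $C=\sqrt{\delta}$. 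Your argument can be repaired the same way (split $[0,1]$ into $[0,\eta]$ and $[\eta,1]$, use that $\mathbb{P}(A\mid m_0)$ is bounded below on the second region, and let $\eta\to 0$ after $N\to\infty$), but as written the one-line transfer skips the only genuinely delicate point of the proof.
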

\begin{rmk} \label{rmk:dynamic_constraints_remove_degeneracy_completely_paper_1}
Lemma \ref{th:ergodic_expression_goes_to_one_paper_1} and Lemma \ref{th:analytical_bound_on_degeneracy_middle_paper_1} taken together 
imply that adding DC to the PPM model eliminates degeneracy within the solution set $\mathcal{D}(U^*)$. 
\end{rmk}
\begin{proof}[Proof of Lemma \ref{th:ergodic_expression_goes_to_one_paper_1}]
Let $X = \frac{1}{N}\sum^N_{i = 1}  Z_{1,\delta i}\left( Z_{1,\delta i} - 2Z_{2,\delta i}\right)$ and $\tilde{X} = \frac{1}{N}\sum^N_{i = 1}  \tilde{Z}_{1,\delta i}\left( \tilde{Z}_{1,\delta i} - 2\tilde{Z}_{2,\delta i}\right)$, where $Z_{i,t}$ is defined previously and $\tilde{Z}_{i,t} =  \frac{\tilde{M}_{i,t + \delta} -\tilde{M}_{i,t}}{\sqrt{\delta}}$ for $i\in\{1,2\}$.

We have that 
\begin{align*}
&\mathbb{P}(X \leq 0) = \mathbb{E}(\mathbb{P}(X \leq 0 \mid {\bf M}^*_{:,0})) \\
&=  \mathbb{E}(\mathbb{P}(\tilde{X} \leq 0 \land {\tilde{M} \geq 0} \mid {\bf M}^*_{:,0}) / \mathbb{P}(\tilde{M} \geq 0\mid {\bf M}^*_{:,0})) \\
&\leq \mathbb{E}(\mathbb{P}(\tilde{X} \leq 0  \mid {\bf M}^*_{:,0}) / \mathbb{P}(\tilde{M} \geq 0\mid {\bf M}^*_{:,0}) \mathbb{I}_{{\bf M}^*_{:,0} > C}) + \mathbb{P}({\bf M}^*_{:,0} \leq C) \\
&= \mathbb{E}(\mathbb{P}(\tilde{X} \leq 0  \mid {\bf M}^*_{:,0}) / \mathbb{P}(\tilde{M} \geq 0\mid {\bf M}^*_{:,0}) \mathbb{I}_{{\bf M}^*_{:,0} > C}) + C,
\end{align*} 
for any $0\leq C\leq 1$, and where ${\bf M}^*_{:,0} > C$ (resp. $\leq C$) means ${M}^*_{i,0} > C \, \forall i\in[q]$ (resp. $\leq C$).

Conditioned on fixed initial abundances ${\bf M}^*_{:,t=0}$, and for any finite $\delta>0$, the variables $\tilde{Z}_{1,\delta i}$ and $\tilde{Z}_{2,\delta i}$ are Brownian motion increments, are independent and identically distributed across $i$, and follow $\mathcal{N}(0,1)$. 
Hence,
\[\mathbb{E}(\tilde{X}) = \mathbb{E}\left[\tilde{Z}_{1,n}\left(\tilde{Z}_{1,n} - 2\tilde{Z}_{2,n}\right)\right] = 1\]
and 
\[\mathbb{V}(\tilde{X}) = (1/N^2) \sum^N_{i=1}\mathbb{V}\left[\tilde{Z}_{1,\delta i}\left(\tilde{Z}_{1,\delta i} - 2\tilde{Z}_{2,\delta i}\right)\right] = 6/N.\]
A simple application of Chebyshev's inequality implies that 
\[\mathbb{P}(\tilde{X} \leq 0 \mid {\bf M}^*_{:,0}) \leq 6/N = 6\delta /n.\] 
We also have that $\mathbb{P}(\tilde{M} \geq 0\mid {\bf M}^*_{:,0}) = \text{Erf}(m_0/\sqrt{n})$, which is increasing in $m_0$. This is well known result in the literature from the distribution of the hitting time of a standard Brownian on a boundary, see e.g. \cite{borodin2012handbook}.

Hence $\mathbb{P}(X \leq 0) \leq 6\delta /(n\text{Erf}(C/\sqrt{n})) + C$. 
Choosing $C = \sqrt{\delta}$ and letting $\delta \to 0$ yields the desired result.
\end{proof}

Using Lemmas \ref{th:analytical_bound_on_degeneracy_middle_paper_1}, \ref{th:ergodic_expression_goes_to_one_paper_1}, and \ref{th:bound_on_prob_M1_less_M2_under_brownian_motion_paper_1},
we now prove the following main result.
\begin{thm}\label{th:main_theorem_for_our_dynamic_constraints_paper_1}
 Let $U^*$ and $M^*$ satisfy Assumption  \ref{ass:brownian_motion_paper_1}.
Let $E$ and $E'$ denote the expected number of trees in $\mathcal{D}^+(U^*)$ that satisfy \eqref{eq:degeneracy_count_with_NO_dynamics_paper_1} and  \eqref{eq:degeneracy_count_with_dynamics_last_line_paper_1}, respectively.
For large $n$, the ratio of $E$ and $E'$ satisfies
\begin{equation}\label{eq:ratio_of_degeneracy_paper_1}
\frac{E}{E'}\geq\frac{1}{36 e}\frac{(q-1)(q-2)}{n}.
\end{equation}
\end{thm}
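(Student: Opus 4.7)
The plan is to combine Lemmas \ref{th:analytical_bound_on_degeneracy_middle_paper_1}, \ref{th:bound_on_prob_M1_less_M2_under_brownian_motion_paper_1}, and \ref{th:ergodic_expression_goes_to_one_paper_1}, and close the argument with an elementary inequality on $(1-1/q)^{q-2}$. First, I apply Lemma \ref{th:analytical_bound_on_degeneracy_middle_paper_1} with the probabilities interpreted in their continuous-time forms, as already established in the paragraph containing \eqref{eq:equiv_form_for_dyn_constraints_cont_time_paper_1}. Lemma \ref{th:ergodic_expression_goes_to_one_paper_1} then shows that the probability appearing in the upper bound \eqref{eq:degeneracy_with_regu_for_one_leaf_perturb_paper_1} for $E'$ is zero, hence $E'\leq 1$. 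Because $U^*\in \mathcal{D}^+(U^*)$ always satisfies \eqref{eq:degeneracy_count_with_dynamics_last_line_paper_1} (the choice $U = U^*$ gives $M=M^*\geq 0$ and $\|\dot M\|=\|\dot M^*\|$), we also have $E'\geq 1$, and therefore $E'=1$.

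Second, Lemma \ref{th:bound_on_prob_M1_less_M2_under_brownian_motion_paper_1} combined with \eqref{eq:degeneracy_without_regu_for_one_leaf_perturb_paper_1} yields
\begin{equation*}
E \;=\; 1 + (q-1)(q-2)\,(1-1/q)^{q-2}\left(\frac{1}{36n}+O(1/n^{2})\right).
\end{equation*}
The only remaining ingredient is the elementary fact that $(1-1/q)^{q-2}\geq 1/e$ for every integer $q\geq 2$. Taking logarithms, this is equivalent to $(q-2)\ln\!\bigl(q/(q-1)\bigr)\leq 1$, which follows from the standard bound $\ln(1+x)<x$ applied with $x=1/(q-1)$: we get $(q-2)\ln\!\bigl(q/(q-1)\bigr) < (q-2)/(q-1)\leq 1$, i.e.\ $(q-2)\ln(1-1/q)\geq -1$, as required.

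Finally, since $E/E'=E$ and $q$ is fixed while $n$ grows, the $O(1/n^{2})$ term is absorbed for $n$ sufficiently large, and the previous two ingredients give
\begin{equation*}
\frac{E}{E'} \;\geq\; 1 + \frac{(q-1)(q-2)}{36\,e\,n}\bigl(1+o(1)\bigr) \;\geq\; \frac{1}{36e}\,\frac{(q-1)(q-2)}{n},
\end{equation*}
which is exactly \eqref{eq:ratio_of_degeneracy_paper_1}. Here I would split the argument into two easy regimes: if $(q-1)(q-2)/(36en)\leq 1$ the trivial bound $E\geq 1$ already suffices, and otherwise the leading $1/n$ term dominates the $O(1/n^{2})$ correction for large enough $n$. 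I do not anticipate a substantive obstacle: all the probabilistic content is already isolated in the three lemmas and the remaining work is algebraic assembly plus the simple inequality $(1-1/q)^{q-2}\geq 1/e$. The mildest technical point is tracking the ``for large $n$'' language so that the hidden constants in the $O(1/n^{2})$ remainder are dominated uniformly by the explicit leading $1/n$ term for the fixed $q$ of Assumption \ref{ass:brownian_motion_paper_1}.
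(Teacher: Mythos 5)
Your proposal is correct and follows essentially the same route as the paper: bound $E'$ by $1$ via Lemma \ref{th:analytical_bound_on_degeneracy_middle_paper_1} together with Lemma \ref{th:ergodic_expression_goes_to_one_paper_1}, evaluate $E$ via \eqref{eq:degeneracy_without_regu_for_one_leaf_perturb_paper_1} and Lemma \ref{th:bound_on_prob_M1_less_M2_under_brownian_motion_paper_1}, and finish with $(1-1/q)^{q-2}\geq 1/e$. The only (harmless) additions are your explicit verification that $E'\geq 1$ and the two-regime handling of the $O(1/n^2)$ remainder, which the paper leaves implicit.
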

\begin{proof}[Proof of Theorem \ref{th:main_theorem_for_our_dynamic_constraints_paper_1}]
The expected number of trees in  $\mathcal{D}^+(U^*)$ satisfying \eqref{eq:degeneracy_count_with_dynamics_last_line_paper_1} is bounded by the expected number of trees in  $\mathcal{D}^+(U^*)$ that satisfy only the first term in \eqref{eq:degeneracy_count_with_dynamics_last_line_paper_1}. 
By equation \eqref{eq:degeneracy_with_regu_for_one_leaf_perturb_paper_1} in Lemma \ref{th:analytical_bound_on_degeneracy_middle_paper_1}  
and by Lemma 
\ref{th:ergodic_expression_goes_to_one_paper_1}
, this expectation is upper bounded by 
$$1 + (q-2)(q-1)(1-1/q)^{q-2} \times 0 = 1.$$

Meanwhile, using \eqref{eq:degeneracy_without_regu_for_one_leaf_perturb_paper_1} in Lemma \ref{th:analytical_bound_on_degeneracy_middle_paper_1}, and Lemma \ref{th:bound_on_prob_M1_less_M2_under_brownian_motion_paper_1},
the expected number of trees in  $\mathcal{D}^+(U^*)$ satisfying \eqref{eq:degeneracy_count_with_NO_dynamics_paper_1} is
$$1 + (q-2)(q-1)(1-1/q)^{q-2} \times (1/(36n) + O(1/n^2)),$$
which for large $n$ is lower bounded by \eqref{eq:ratio_of_degeneracy_paper_1}.
\end{proof}

\section{Numerical results} \label{sec:numerical_paper_1}

In Theorem \ref{th:main_theorem_for_our_dynamic_constraints_paper_1} we analyze the effect of DC on degeneracy under the condition of small deviations from the true ancestry. 
This contrasts with Theorem \ref{th:expected_number_of_solutions_without_long_cond_and_with_is_the_same_paper_1} which does not impose such restrictions.
In this section, we numerically investigate degeneracy both with and without these additional constraints, allowing all possible trees to be considered, and compare the resulting counts with our theoretical predictions.
\begin{figure*}[]
    \centering
    \begin{picture}(500,190)
    \setlength{\unitlength}{1pt}
    %
    \includegraphics[width=0.33\textwidth]{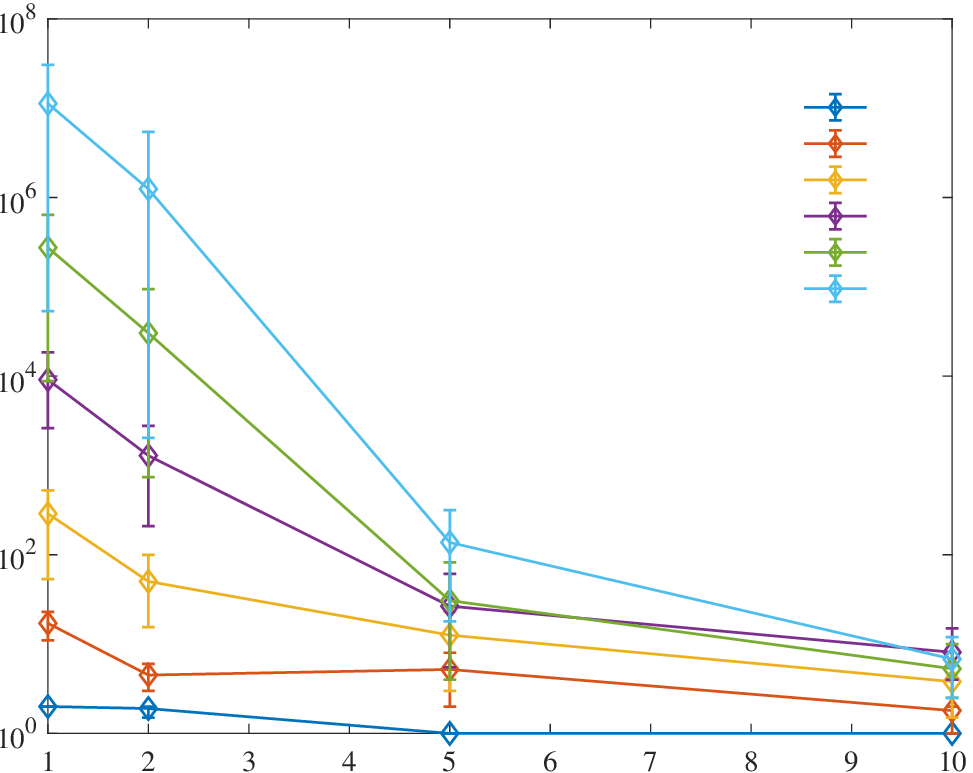}

    \put(-85, -5){\makebox(0,0)[c]{\fontsize{9}{11}\selectfont $n$ samples}}
    \put(-178,65){\rotatebox{90}{\makebox(0,0)[c]{\fontsize{9}{11}\selectfont Average \# of solutions }}}
    \put(-45, 117){\makebox(0,0)[c]{\fontsize{7}{9}\selectfont $q=3$}}
        \put(-45, 110){\makebox(0,0)[c]{\fontsize{7}{9}\selectfont $q=5$}}
    \put(-45, 104){\makebox(0,0)[c]{\fontsize{7}{9}\selectfont $q=7$}}
    \put(-45, 97){\makebox(0,0)[c]{\fontsize{7}{9}\selectfont $q=9$}}
    \put(-43, 91){\makebox(0,0)[c]{\fontsize{7}{9}\selectfont $q=11$}}
    \put(-43, 84){\makebox(0,0)[c]{\fontsize{7}{9}\selectfont $q=13$}}

    %
    \includegraphics[width=0.33\textwidth]{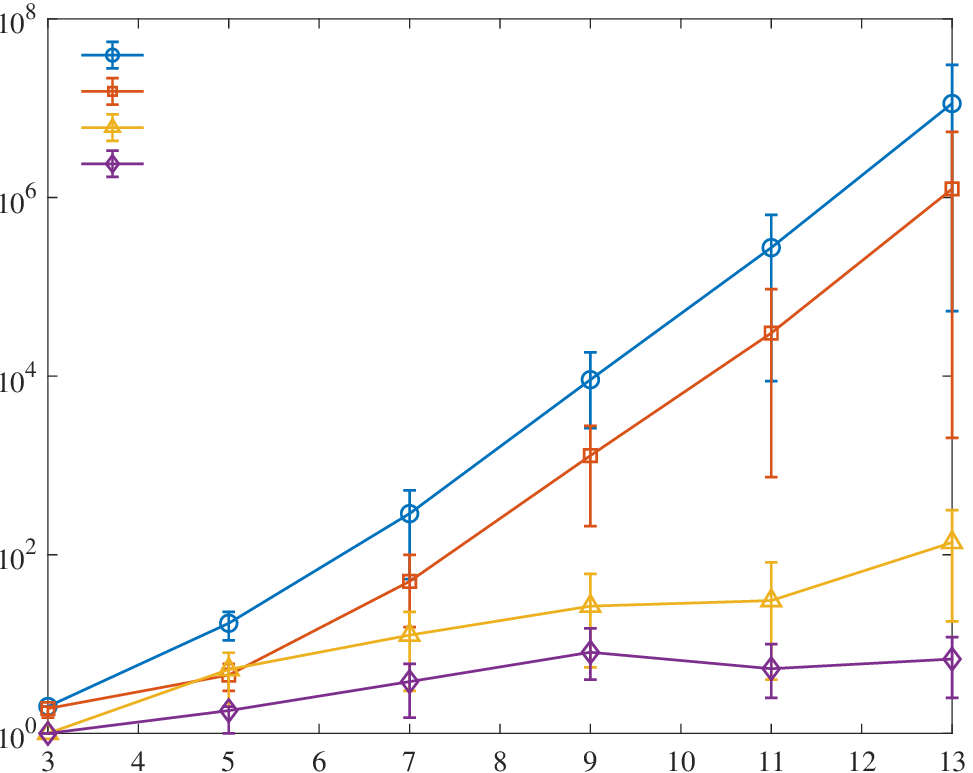}
    \put(-95, -5){\makebox(0,0)[c]{\fontsize{9}{11}\selectfont $q$ mutations}}
    \put(-133, 127){\makebox(0,0)[c]{\fontsize{7}{9}\selectfont $n=1$}}
        \put(-133, 120){\makebox(0,0)[c]{\fontsize{7}{9}\selectfont $n=2$}}
    \put(-133, 113){\makebox(0,0)[c]{\fontsize{7}{9}\selectfont $n=5$}}
    \put(-131, 107){\makebox(0,0)[c]{\fontsize{7}{9}\selectfont $n=10$}}
    %

    \includegraphics[width=0.33\textwidth]{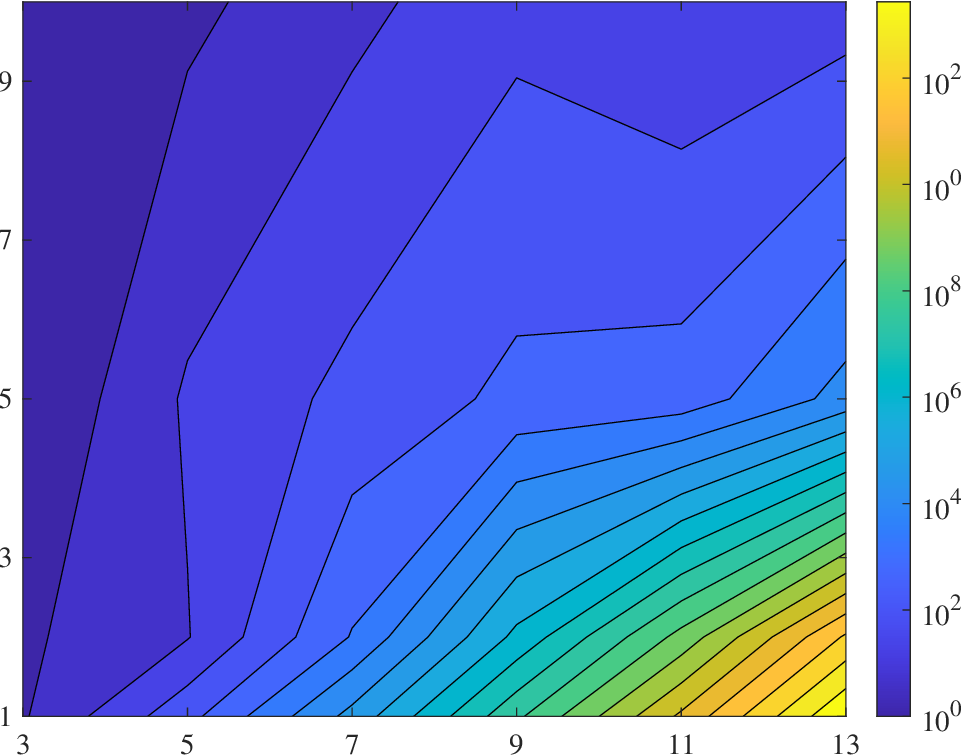}
    \put(-95, -5){\makebox(0,0)[c]{\fontsize{9}{11}\selectfont $q$ mutations}}
    \put(-160,65){\rotatebox{90}{\makebox(0,0)[c]{\fontsize{9}{11}\selectfont \color{white} $n$ samples}}}

    \end{picture}
    \vspace{0.3cm}

    \caption{Empirical results obtained from~\cite{pradhan2018non}, reorganized for presentation clarity and convenience. All plots track the (approximate) average number of solutions under uniform sampling from the solution space of random problems as we vary either the number of samples $n$ (left plot), the number of mutations $q$ (center plot), or both (right plot). 
    Each point represents the mean of $10$ different random problems, i.e. random $F^*$. Error bars on the first two plots cover the $10$-th and $90$-th percentiles. On the right-most plot, a color map is used to indicate the (approximate) average number of solutions changes with both $q$ and $n$.}
    \label{fig:empirical_results}
\end{figure*}

\subsection{Prior empirical results on the degeneracy of the PPM model}

Prior work~\cite{pradhan2018non} employed a tumor evolution simulator~\cite{el2018inferring} to generate synthetic data 
$F^*$, observed without noise or masking, with the goal of counting the number of possible explanations. 
In this simulator, $U^*$ and $M^*$ are jointly generated, and $F^*$ is derived through a process that models the sequencing procedure.
To count solutions, an MCMC rejection sampling approach is used, which provides uniform sampling but is exponentially inefficient.

We include these results in Figure~\ref{fig:empirical_results} for convenience and completeness. 
The average number of solutions appears to decrease sub-exponentially with the number of samples $n$ (Figure~\ref{fig:empirical_results}, left) and to increase approximately exponentially with the number of mutant types $q$ (Figure~\ref{fig:empirical_results}, center). However, due to the heuristic nature of the analysis and the limited range of parameters explored, the precise rates of decrease and growth remain uncertain from this analysis. 

Prior work \cite{myers2019calder} examines the effect of the LC on degeneracy. It employs the same simulator as ~\cite{pradhan2018non} but generates a distinct set of synthetic data.
Specifically, the study begins with $10$ distinct $U^*$ matrices with sizes ranging from $q=4$ to $q=13$. 
For each $U^*$, they vary the number of samples $n$ to be generated from $2$ to $9$, and for every fixed combination of $n$, $q$ and $U^*$, the authors generate $10^6$ matrices $M^*$ and $F^*$. They assume that $F^*$ is observed perfectly, without  corruption or masking. 
This setup yields $8\times10^7$ random instances. For each of these instances,  the number of solutions, both with and without
the LC, is counted exhaustively using a variation of \cite{gabow1978finding}.
The results show that for all but $423{,}328$ instances, the LC do not reduce degeneracy; that is, for approximately $99.5\%$ of the problems, imposing the LC makes no difference. 
This observation is consistent with our Theorem \ref{th:expected_number_of_solutions_without_long_cond_and_with_is_the_same_paper_1}. 

The small fraction ($0.5\%$) of instances for which the LC do affect degeneracy does not contradict our theory,
but rather reflects the fact that the problem generator used in \cite{myers2019calder} does not perfectly satisfy Assumption \ref{ass:continuity_of_M_for_calder_conditions_degeneracy_computations_paper_1}.
In particular, the small number of samples ($n$ ranging from $2$ to $9$), and the limited sequencing depth -- with the maximum number of counts appearing to be $200$, based on the publicly available repository cited in \cite{myers2019calder} -- are inconsistent with the rationale behind Assumption \ref{ass:continuity_of_M_for_calder_conditions_degeneracy_computations_paper_1} (Section \ref{sec:discussion_of_assumptions_for_longitudional_conditions_paper_1}),  which invokes dense time sampling and deep sequencing. 

With few samples and limited counts, it becomes possible -- even if rare -- to find, among the $10^6$ mutant abundance trajectories generated for each tree, cases where either a parent death coincides exactly with a child birth, or where nonzero mutant-group abundances align exactly. 
Finally, note that our ELC additionally require that no mutant remain always dead, a constraint not enforced in  \cite{myers2019calder} and potentially violated in their simulations.

\subsection{New numerical results on the degeneracy of the PPM model with(out) extra constraints}\label{sec:our_new_numerical_results_paper_1}

We conduct an independent brute-force exhaustive enumeration of the number of solutions for different random problems. 
This independent computation allows us check the reproducibility of previous work and to compensate for some of its limitations. Namely, the heuristic nature of the counts in \cite{pradhan2018non} and the fact that \cite{myers2019calder} only explored $10$ different $U^*$'s. More importantly however, it allows us to test the effect of the DC on degeneracy beyond the assumptions required for our theory to hold.

We begin by fixing the number of samples to $n=10$ and letting the number of mutant types be $q \in \{4,\dots,8$\}.
We do not use the simulator from \cite{pradhan2018non} but instead
generate $100$ random $M^*$ matrices using a process designed to be consistent with Assumptions \ref{ass:U_star_and_M_star_indep_and_partially_uniform_paper_1} and \ref{ass:brownian_motion_paper_1}, under which our theoretical results hold. The details of this process are provided in Appendix \ref{appendix:details_to_generate_M_paper_1}. 

Next, we iterate over all possible $U^*$, i.e., over the set of all directed labeled rooted trees on $q$ nodes. 
For each $U^*$ we consider all possible $100$ trajectories, $M^*$. 
For every problem instance $(U^*,M^*)$,
we go over all possible $U$ and count how often \eqref{eq:degeneracy_count_with_NO_dynamics_paper_1} holds, as well as how often the different variants of \eqref{eq:degeneracy_count_with_dynamics_last_line_paper_1} are satisfied.
We then compute, for each $U^*$, the average number of solutions found, and report the average of these quantities across all $U^*$ in Figure \ref{fig:relax_constraint_degeneracy_paper_1}. 

We perform an exact count up to $q=8$ nodes. This computation required about $42$ hours on a single Intel(R) Core(TM) i9-14900KS processor, which provides $32$ (virtual) cores. The process is not memory-intensive. 
For $q=8$, there are $8^6 = 262{,}144$ rooted trees \cite{cayley1878theorem,aigner1999proofs}, where we take node $1$ as the root without loss of generality. The total number of conditions evaluated to compute this average is 
$$100\times(8^6)^2 \approx 6.8\times 10^{12}.$$
For  $q=9$, this number would increase to approximately $2.2\times 10^{15}$, and the same computation would require
roughly $566$ days on the same hardware, which is computationally prohibitive.

The variants of \eqref{eq:degeneracy_count_with_dynamics_last_line_paper_1} that we consider are 
(a) using either the L1 or L2 norm for $\|\cdot\|$, and 
(b) using $\dot{M}$  in \eqref{eq:degeneracy_count_with_dynamics_last_line_paper_1} or replacing it with $M$, which results in the constraint $\|M\| \leq \|M^*\| \land M\geq 0$. 

Solution counts without any DC, i.e.,  the degeneracy for the classical PPM model, are labeled ``Unconst.''. 
These counts are analogous to  those reported in  \cite{pradhan2018non}, 
but here are obtained using a different random problem generator and an exact exhaustive count.
We include the results from \cite{pradhan2018non} in Figure \ref{fig:relax_constraint_degeneracy_paper_1} (left plot, green dotted line) to illustrate the difference between an exact count and an approximate count (albeit for different random problems). 
\begin{figure*}[h!]
\begin{center}
\includegraphics[trim={1.8cm 0.0cm 1.6cm 0.0cm},clip,width=0.3\linewidth]{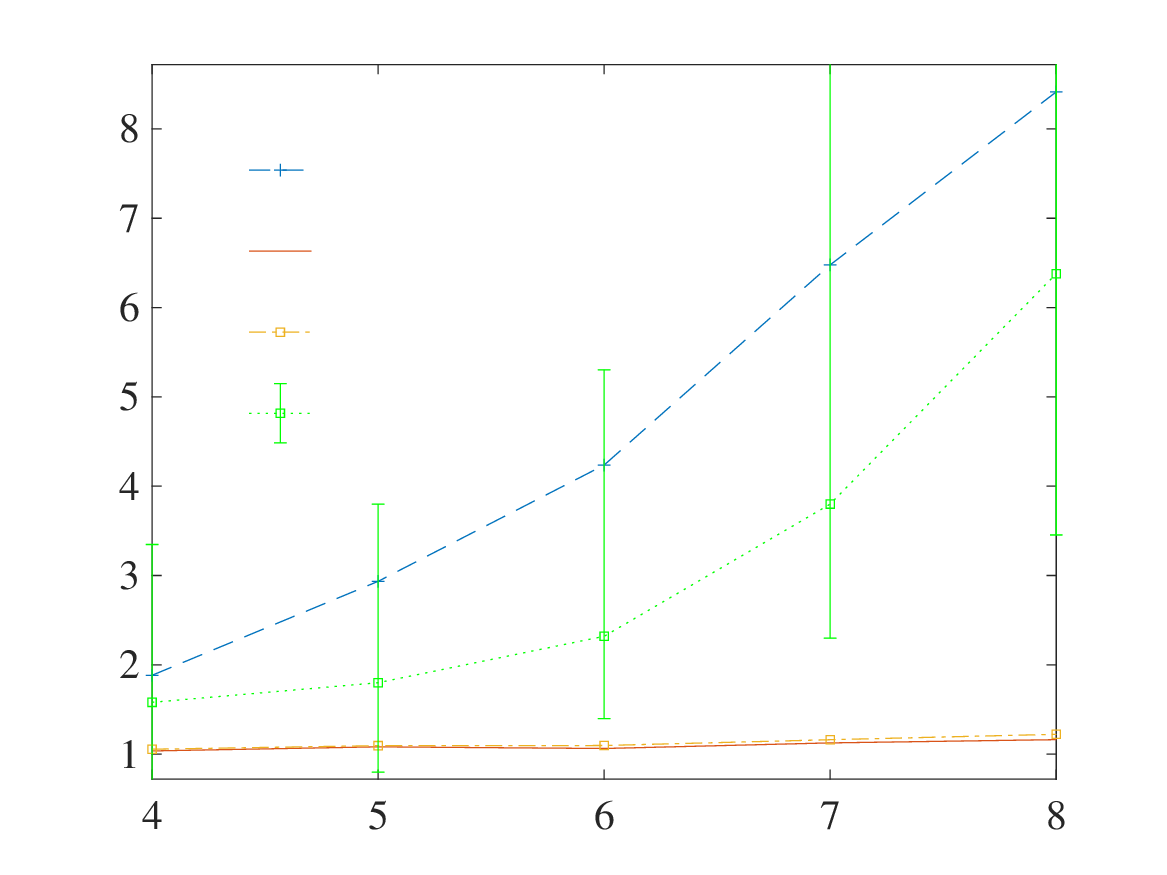}
\put(-165,65){\rotatebox[]{90}{\parbox{5cm}{\centering \fontsize{9}{11}\selectfont {Average \# of solutions per problem}}}}
\put(-118,111){\parbox{3cm}{\fontsize{7}{9}\selectfont Unconst.}}
\put(-118,98){\parbox{3cm}{\fontsize{7}{9}\selectfont L2 \& $\dot{M}$ in \eqref{eq:degeneracy_count_with_dynamics_last_line_paper_1}}}
\put(-118,85){\parbox{3cm}{\fontsize{7}{9}\selectfont L1 \& $\dot{M}$ in \eqref{eq:degeneracy_count_with_dynamics_last_line_paper_1}}}
\put(-118,72){\parbox{3cm}{\fontsize{7}{9}\selectfont \cite{pradhan2018non}}}
\put(-93,-5){\parbox{3cm}{\fontsize{9}{11}\selectfont {$q$ mutations}}}
\quad\quad\;
\includegraphics[trim={1.8cm 0.0cm 1.6cm 0.0cm},clip,width=0.3\linewidth]{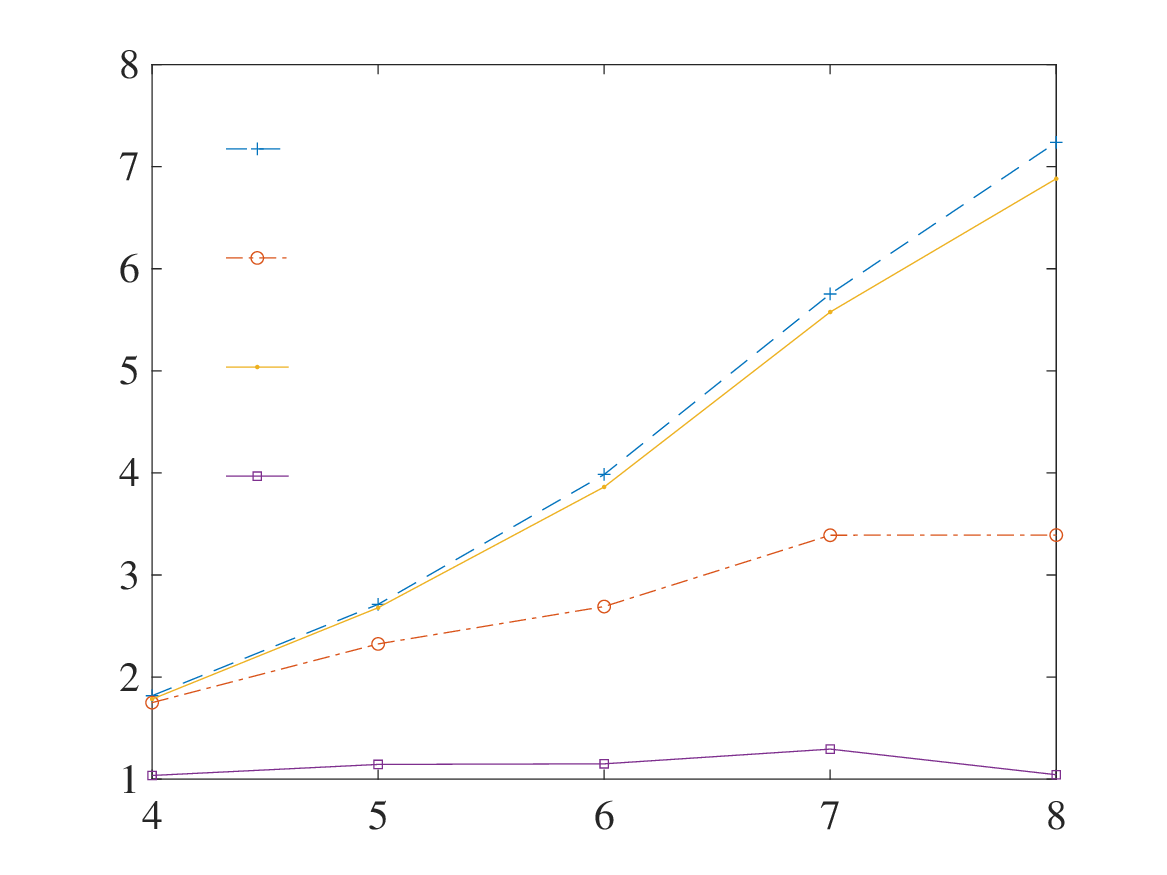}
\put(-173,65){\rotatebox[]{90}{\parbox{3cm}{\centering \fontsize{9}{11}\selectfont {Fold reduction in average \# of solutions}}}}
\put(-120,114){\parbox{3cm}{\fontsize{7}{9}\selectfont L2 \& $\dot{M}$ in \eqref{eq:degeneracy_count_with_dynamics_last_line_paper_1}}}
\put(-120,96){\parbox{3cm}{\fontsize{7}{9}\selectfont L2 \& $M$ in \eqref{eq:degeneracy_count_with_dynamics_last_line_paper_1}}}
\put(-120,79){\parbox{3cm}{\fontsize{7}{9}\selectfont L1 \& $\dot{M}$ in \eqref{eq:degeneracy_count_with_dynamics_last_line_paper_1}}}
\put(-120,62){\parbox{3cm}{\fontsize{7}{9}\selectfont L1 \& $M$ in \eqref{eq:degeneracy_count_with_dynamics_last_line_paper_1}}}
\put(-93,-5){\parbox{3cm}{\fontsize{9}{11}\selectfont {$q$ mutations}}}
\quad\;
\includegraphics[trim={1.8cm 0.0cm 1.6cm 0.0cm},clip,width=0.3\linewidth]{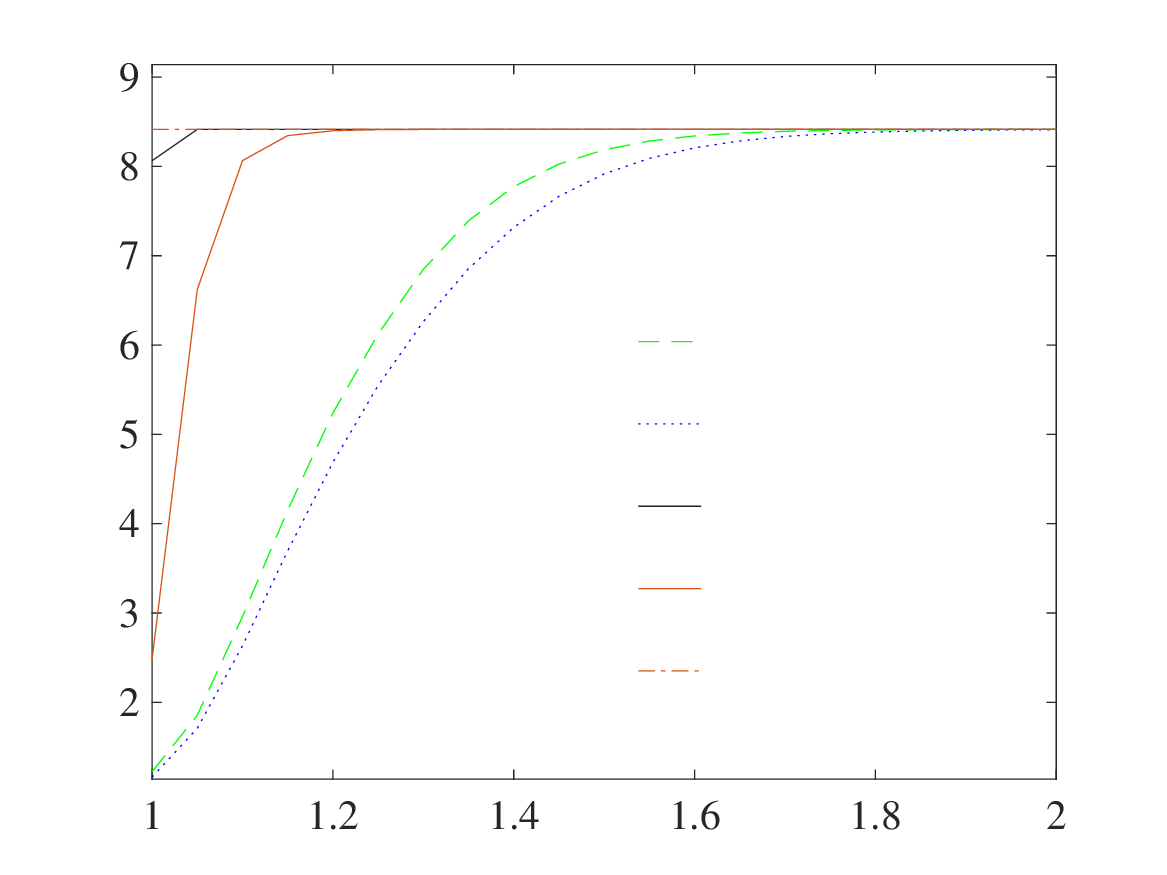}
\put(-165,65){\rotatebox[]{90}{\parbox{5cm}{\centering \fontsize{9}{11}\selectfont {Average \# of solutions per problem}}}}
\put(-110,-5){\parbox{4cm}{\fontsize{9}{11}\selectfont {relaxation parameter $\gamma$}}}
\put(-53,84){\parbox{3cm}{\fontsize{7}{9}\selectfont L1 \& $\dot{M}$ in \eqref{eq:degeneracy_count_with_dynamics_last_line_paper_1}}}
\put(-53,70){\parbox{3cm}{\fontsize{7}{9}\selectfont L2 \& $\dot{M}$ in \eqref{eq:degeneracy_count_with_dynamics_last_line_paper_1}}}
\put(-53,56){\parbox{3cm}{\fontsize{7}{9}\selectfont L1 \& $M$ in \eqref{eq:degeneracy_count_with_dynamics_last_line_paper_1}}}
\put(-53,43){\parbox{3cm}{\fontsize{7}{9}\selectfont L2 \& $M$ in \eqref{eq:degeneracy_count_with_dynamics_last_line_paper_1}}}
\put(-53,31){\parbox{3cm}{\fontsize{7}{9}\selectfont Unconst.}}
\end{center}
%
%
\caption{
Average number of solutions versus the number of mutations $q$. The average is taken over all possible true $U^*$'s, over $100$ different random ground-truth $M^*$'s, and over all possible alternative explanations $U$ (including $U=U^*$).
(Left plot) Comparison of the average number of solutions without extra constraints (``Unconst.'') with different variants of the DC \eqref{eq:degeneracy_count_with_dynamics_last_line_paper_1}.
(Center plot) Ratio of the average number of solutions with and without DC the different variants of the DC \eqref{eq:degeneracy_count_with_dynamics_last_line_paper_1}.
(Right plot) Effect of relaxing the DC in \eqref{eq:degeneracy_count_with_dynamics_last_line_paper_1} on the average number of solutions. 
Specifically, the condition $r(M) \leq r(M^*)$ is replaced by $r(M) \leq C = \gamma r(M^*)$ for  different values of $\gamma$. 
}
\label{fig:relax_constraint_degeneracy_paper_1}
\end{figure*}

Let us focus on Figure \ref{fig:relax_constraint_degeneracy_paper_1} (left).
Comparing the blue dashed line (``Unconst.'') and the green dotted line from \cite{pradhan2018non}, we see that heuristic counts either underestimate values or can give ambiguous results when confidence intervals are taken into account. 
Comparing the blue dashed line (``Unconst.'') with either the solid or dashed-dotted orange line (L1 or L2 with $\dot{M}$), we see that our DC almost eliminate any ambiguity (the average number of solutions is $\approx 1$). 
This observation is consistent with Remark \ref{rmk:dynamic_constraints_remove_degeneracy_completely_paper_1}.
The specific norm used is more or less irrelevant.
The average number of solutions without DC is $8.41$,  and it drops to $1.16$ with an L1 constraint on $\dot{M}$, corresponding to a  $7.24\times$ reduction. 
For an L2 constraint on $\dot{M}$ the reduction is $6.88 \times$.

Now we focus on Figure \ref{fig:relax_constraint_degeneracy_paper_1} (center). We observe that DC that penalize the rate of change reduce degeneracy more than those that penalize the abundances themselves, i.e.
constraining $\|\dot{M}\|$ works better than constraining $\|{M}\|$.

Finally, we focus on Figure \ref{fig:relax_constraint_degeneracy_paper_1} (right). 
Here we modify the DC from $r(M) \leq r(M^*)$ to $r(M) \leq C = \gamma r(M^*)$ and see the effect that choosing $\gamma \geq 1$ has on degeneracy. We see that for a DC that penalizes the rate of change, the reduction in degeneracy is less sensitive to the exact choice of upper bound $C$ than for a DC that penalizes the abundances themselves.

In Figure \ref{fig:degeneracy_with_and_without_time_constraints_paper_1} we show the distribution of the number of solutions across all problems for the largest setting we tested, i.e., $100\times8^6$ problems for $q=8$, both with and without DC. 
DC reduce degeneracy not only on average, as already observed in Figure \ref{fig:relax_constraint_degeneracy_paper_1}, but also substantially shorten the tail of the distribution of the number of solutions compared to the unconstrained case.
\begin{figure}[h!]
\begin{center}
\includegraphics[trim={0.17cm 0.7cm 1.5cm 0.8cm},clip,width=0.95\linewidth]{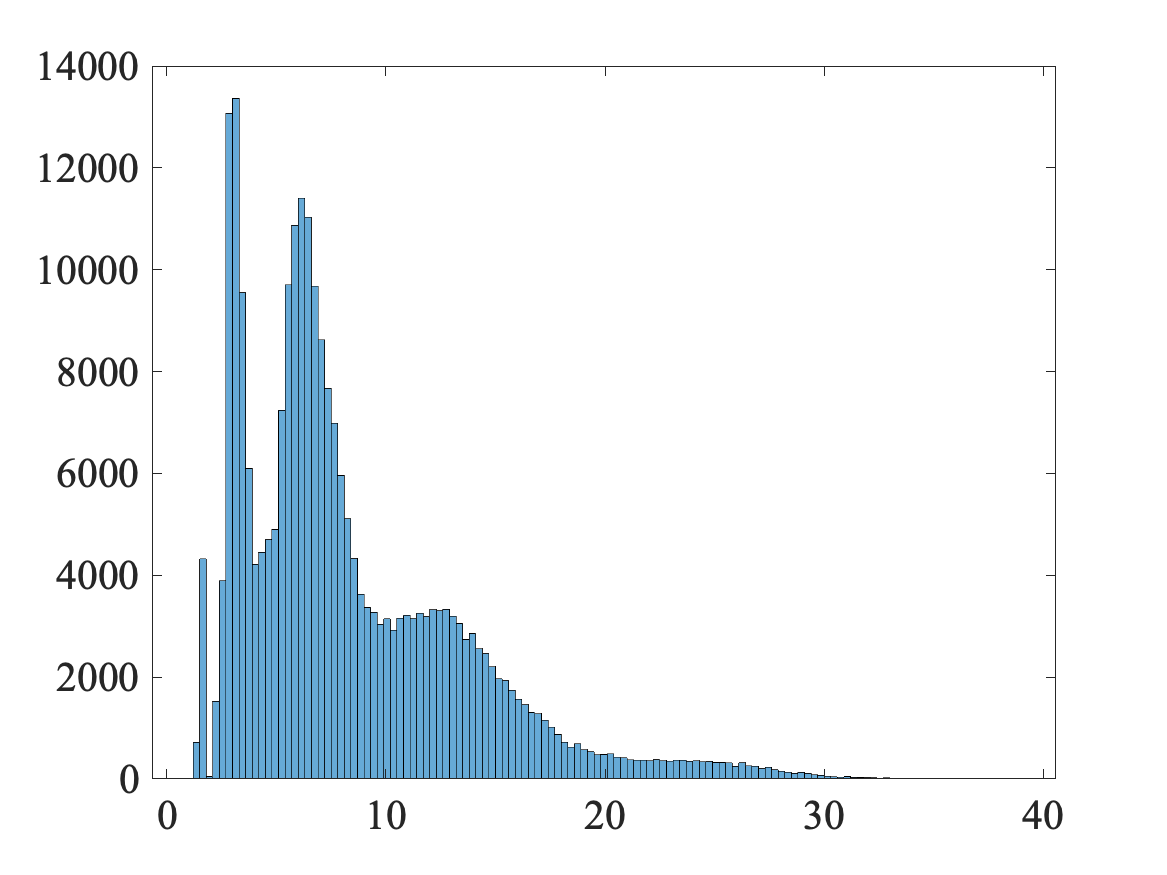}
\put(-245,90){\rotatebox[]{90}{\parbox{3.5cm}{\centering \fontsize{9}{11}\selectfont {\# of problems}}}}
%
\put(-155,125){\parbox{4.8cm}{\fontsize{9}{11}\selectfont {\centering Distribution of number of solutions without DC}}}
\put(-132,-8){\parbox{3cm}{\fontsize{9}{11}\selectfont {\# of solutions}}}
%
%
\vspace{0.5cm}
\includegraphics[trim={0.17cm 0.7cm 1.5cm 0.8cm},clip,width=0.95\linewidth]{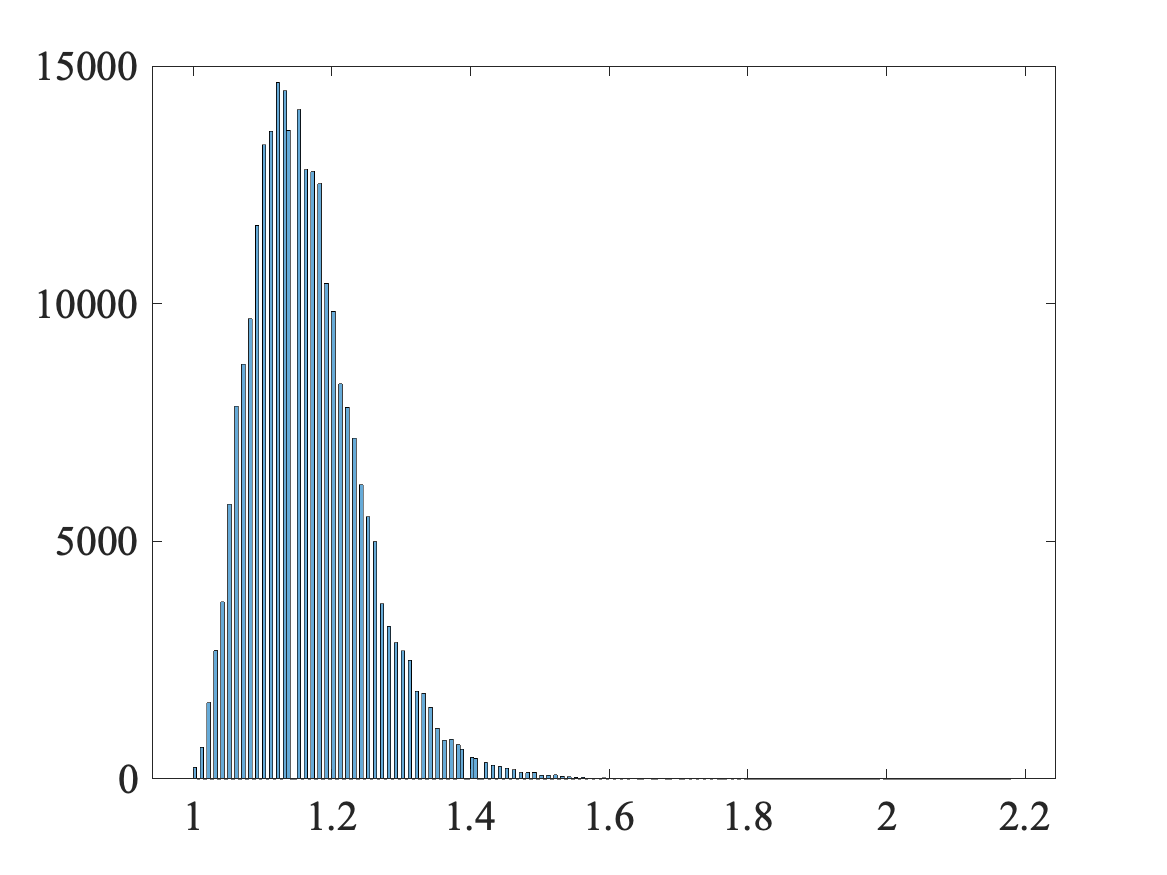}
\put(-245,90){\rotatebox[]{90}{\parbox{3.5cm}{\centering \fontsize{9}{11}\selectfont {\# of problems}}}}
\put(-155,125){\parbox{4.8cm}{\fontsize{9}{11}\selectfont {\centering Distribution of number of solutions with DC}}}
\put(-132,-8){\parbox{3cm}{\fontsize{9}{11}\selectfont {\# of solutions}}}

\end{center}

\caption{Distribution of the number of solutions per problem with and without a DC on the L2 norm of $\dot{M}$. The total number of problems is $100$ different $M^*$ times all possible $8^6$ different ancestries $U^*$ for $q=8$ and $n=10$.}
\label{fig:degeneracy_with_and_without_time_constraints_paper_1}
\end{figure}

\section{Conclusion and future work} \label{sec:conclusion_paper_1}

We have proved that the longitudinal conditions (LC) introduced in \cite{myers2019calder} are ineffective at reducing degeneracy in the PPM model, in a perfect-observation setting (POS) and under some mild technical assumptions. We have also proved that in a POS, and under more stringent assumptions than for our first result, a different type of dynamic constraint (DC) is very effective at reducing degeneracy. We did an exhaustive numerical enumeration of solutions to extend our claim about the efficacy of the DC beyond these assumptions, but still in a POS.
Although we are not the first to study the degeneracy of the PPM, or its degeneracy under the LC, to the best of our knowledge we are the first to do a theoretical study for probabilistic ensembles of problems.

We recommend a few directions in which to extend our results in the future.
First, in Assumption \ref{ass:continuity_of_M_for_calder_conditions_degeneracy_computations_paper_1} the probability that a parent and child are born at the time could become some probability $p_1 \in [0,1]$ instead of $0$, and the probability that two sets of disjoints mutants have exactly the same total abundance could become some probability $p_2 \in [0,1]$ instead of $0$.
Afterwards, one could adapt the proof of 
Theorem \ref{th:expected_number_of_solutions_without_long_cond_and_with_is_the_same_paper_1} such the difference between the expected number of solutions with and without the LC would be a function of $p_1$ and $p_2$.

Second, and following a similar reasoning, the probabilistic assumptions in Assumption \ref{ass:U_star_and_M_star_indep_and_partially_uniform_paper_1} and \ref{ass:brownian_motion_paper_1} could be relaxed such that there would be a controlled correlation between the evolution of different mutants, and a controlled correlation between the evolution of the same mutant over different points in time.
Afterwards, one could adapt the proof of 
Lemma \ref{th:analytical_bound_on_degeneracy_middle_paper_1} and Theorem \ref{th:main_theorem_for_our_dynamic_constraints_paper_1} such that the bound on the expected number of solutions with and without DC would reflect these changes.

Third, our results regarding the effectiveness of the DC are restricted to alternative ancestry trees that are very similar to the true tree. Although these trees are the ones that most likely will confuse existing tools, and hence their study is very relevant, one could use heavier combinatorial machinery to investigate what would happen if degeneracy was studied across all possible trees, both similar and dissimilar to the ground truth tree.

Finally, one could study how DC and LC compare with the classical PPM model in a setting where the observations are corrupted or masked. One could try to study this theoretically, and/or numerically. For this later goal, one would have to develop a practical tool that could infer phylogenetic trees for mutation frequency data under the DC.

\bibliographystyle{IEEEtran}
\bibliography{bibliography.bib}

\newpage

\begin{appendices}

\section{Inference under the PPM model when not observing all the mutants. Discussed in Section \ref{sec:background_ppm_paper_1}.} \label{appendix:PPM_partially_observed_paper_1}

We start with the PPM model where all the mutants are observed and assume, without loss of generality, that there is only one sample, i.e., $n =1$.
We have $F^* = U^* M^*$, ${\bf 1}^{\top} M^* = {\bf 1}^{\top}$, and $M^* \geq {\bf 0}$, where $U^*$ encodes ancestral relationships in the rooted directed tree ${T^*}$.

Let $\mathcal{O}$ be the subset of observed mutants, 
indexed in an arbitrary but fixed order, and $|\mathcal{O}| = q' < q$. 
We observe $F = F^*_{\mathcal{O}}$ and want to infer $U^*_{ \mathcal{O} , \mathcal{O}}$. 
Note that the submatrix $U^*_{ \mathcal{O}, \mathcal{O}}$ represents ancestral relationships that define a forest, i.e., a collection of directed rooted trees, where the roots of its trees correspond to distinct nodes in ${T^*}$.

Let $\mathcal{N} = [q] - \mathcal{O}$ be the set of non-observed mutants, again indexed in an arbitrary but fixed order. We can write 
$$F^*_\mathcal{O} = U^*_{ \mathcal{O} , \mathcal{O}} M^*_{\mathcal{O}} + U^*_{ \mathcal{O} , \mathcal{N}} M^*_{\mathcal{N}} = U^*_{ \mathcal{O} , \mathcal{O}} (M^*_{\mathcal{O}} + (U^*_{ \mathcal{O} , \mathcal{O}})^{-1} U^*_{ \mathcal{O} , \mathcal{N}} M^*_{\mathcal{N}}),$$ 
where, since $n=1$, quantities like $M^*_{\mathcal{O}}$ actually denote $M^*_{\mathcal{O},t=1}$.
These quantities may therefore be represented either using matrix notation (for a single-column matrix, in non-bold font) or using vector notation (in bold font).

Define $\tilde{M}^* \equiv M^*_{\mathcal{O}} + (U^*_{ \mathcal{O} , \mathcal{O}})^{-1}U^*_{ \mathcal{O} , \mathcal{N}} M^*_{\mathcal{N}}$. 
We will show that:
\begin{enumerate}
    \item For any $U^*$, we have $\tilde{M}^* \geq { 0}$ and ${\bf 1}^{\top} \tilde{M}^* \leq {\bf 1}^{\top}$;
    \item If $\mathcal{O}$ contains the root of ${T^*}$, then
${\bf 1}^{\top} \tilde{M}^* = {\bf 1}^{\top}$;
\item  When $\mathcal{O}$ is not empty and 
does not contain the root of ${T}^*$, and if we view $\tilde{M}^*$ as a function of $M^*$ (for any fixed $U^*$), then the image of 
$$\{M \in \mathbb{R}^{q\times n}: {\bf 1}^{\top} M = {\bf 1}^{\top} \land M \geq { 0} \}$$
under the mapping $\tilde{M}^*(\cdot)$ contains the set 
$$\{ \tilde{M} \in \mathbb{R}^{q'\times n}: \tilde{M} \geq { 0} \land {\bf 1}^{\top} \tilde{M}\leq {\bf 1}^{\top} \}.$$
\end{enumerate}
Because of these facts, given $F\in \mathbb{R}^{q'\times n}$,  we should find
$\tilde{U}\in \mathbb{R}^{q'\times q'}$ and $\tilde{M}\in \mathbb{R}^{q'\times n}$ such that
$F= \tilde{U} \tilde{M}$, $\tilde{M} \geq { 0}$,  ${\bf 1}^{\top} \tilde{M}\leq {\bf 1}^{\top}$, and $\tilde{U}$ is a set of rooted directed trees.

Indeed, by the first point we should not enforce anything less restrictive than what $\tilde{M}^*$ itself satisfies, which implies
$\tilde{M} \geq { 0}$ and ${\bf 1}^{\top} \tilde{M}\leq {\bf 1}^{\top}$.
By the second point, we should not enforce ${\bf 1}^{\top} \tilde{M} =  {\bf 1}^{\top}$, since \emph{a priori} we do not know whether $\mathcal{O}$ contains the root of ${T^*}$. 
By the third point, we should not enforce any constraint more restrictive than $\tilde{M} \geq {\bf 0} \land {\bf 1}^{\top} \tilde{M}\leq {\bf 1}^{\top}$, because any such $\tilde{M}$
can be realized by a valid pair $(U^*,M^*)$ under the PPM model with partially observed mutants.

\subsection{Proofs}

With a proper re-indexing $U^*_{ \mathcal{O}, \mathcal{O}}$ has block diagonal form, and each block represents the ancestral relationships of a tree in this forest \footnote{This block diagonal form justifies the existence of $(U^*_{ \mathcal{O} , \mathcal{O}})^{-1}$.}. 
Let us call the $i$-th tree in the forest ${T^*}^{(i)}$, and denote its set of nodes by $\mathcal{O}^{(i)}$.
With a slight abuse of notation, let ${T^*}^{(i)}$ also denote the matrix representation of the operator that maps each node in the tree ${T^*}^{(i)}$ to its parent\footnote{Note that ${T^*}^{(i)} \neq {T^*}_{\mathcal{O}^{(i)},\mathcal{O}^{(i)}}$. The former matrix always represents a directed rooted tree, while the latter matrix could be, e.g., an all zero matrix.}. 
We define ${U^*}^{(i)} \equiv U^*_{ \mathcal{O}^{(i)} , \mathcal{O}^{(i)}}$ and, by point~3 in Section \ref{sec:background_ppm_paper_1}, we have 
$$({U^*}^{(i)})^{-1} = I - {T^*}^{(i)}.$$
Note that $(U^*_{ \mathcal{O}, \mathcal{O}})^{-1}$  has block diagonal form, with its $i$-th block equal to $({U^*}^{(i)})^{-1}$.

\indent \textbf{1)}
First, we show that ${\bf 1}^{\top} \tilde{M}^*\leq {\bf 1}^{\top}$. 
Let $\tilde{U}^* \equiv U^*_{ \mathcal{O}, \mathcal{O}}$.
Working with one block at a time, we see that the vector ${\bf v}^{\top} \equiv {\bf 1}^{\top} (\tilde{U}^*)^{-1}$ is a binary vector with $1$'s at the indices corresponding to the roots of each tree ${T^*}^{(i)}$, and zeros everywhere else. 
By the definition of ${T^*}^{(i)}$ as the diagonal blocks of $U^*_{\mathcal{O},\mathcal{O}}$, the root of each tree lies on a distinct lineage of ${T^*}$. 
Hence, each node in $\mathcal{N}$ has at most one of these roots as an ancestor in $T^*$. 
It follows that the vector 
${\bf w}^{\top} \equiv {\bf v}^{\top} U^*_{ \mathcal{O} , \mathcal{N}}$ 
cannot have entries larger than $1$, and therefore is also a binary vector.

If the root of ${T^*}$ belongs to $\mathcal{N}$, then the vector ${\bf w}^{\top}$ 
must contain at least one zero entry.  
In this case, 
\begin{align*}
{\bf 1}^{\top}\tilde{M}^* &= {\bf 1}^{\top} M^*_{\mathcal{O}} + {\bf 1}^{\top}(\tilde{U}^*)^{-1}U^*_{ \mathcal{O} , \mathcal{N}} M^*_{\mathcal{N}} \\
&= {\bf 1}^{\top} M^*_{\mathcal{O}} + {\bf w}^{\top} M^*_{\mathcal{N}}\\
&
\leq {\bf 1}^{\top} M^*_{\mathcal{O}} + {\bf 1}^{\top} M^*_{\mathcal{N}} \ =\ {\bf 1}^{\top}.
\end{align*}
If the root of ${T^*}$ belongs to $\mathcal{O}$, then the forest consists of a single tree whose root coincides with the root of ${T^*}$;
denote this root by $r^*$.
In this case, ${\bf v}^{\top}$ has exactly one nonzero entry, equal to $1$, at the index corresponding to $r^*$. 
Since every mutant in $\mathcal{N}$ is a descendant of $r^*$, we have 
${\bf w} = {\bf 1}$
and therefore
\begin{align*}
{\bf 1}^{\top}\tilde{M}^* &= {\bf 1}^{\top} M^*_{\mathcal{O}} + {\bf 1}^{\top}(\tilde{U}^*)^{-1} U^*_{ \mathcal{O} , \mathcal{N}} M^*_{\mathcal{N}}\\
&= {\bf 1}^{\top} M^*_{\mathcal{O}} + {\bf w}^{\top} M^*_{\mathcal{N}}\\
& =
{\bf 1}^{\top} M^*_{\mathcal{O}} + {\bf 1}^{\top} M^*_{\mathcal{N}} \ =\ {\bf 1}^{\top}.
\end{align*}

\indent \textbf{2)}
Next, we show that $\tilde{M}^* \geq 0$. Since $M^* \geq 0$, it suffices to show that $(\tilde{U}^*)^{-1}U^*_{ \mathcal{O} , \mathcal{N}} $ has no negative entries. In fact, it suffices to  show that non-negativity holds for each subset of nodes $\mathcal{O}^{(i)}$. Mathematically, this means showing that
$$({U^*}^{(i)})^{-1} U^*_{ \mathcal{O}^{(i)} , v}  \geq  {\bf 0}$$ 
 for any $v \in \mathcal{N}$.
 
 To prove this, first observe that 
$$({{U}^*}^{(i)})^{-1} U^*_{ \mathcal{O}^{(i)} ,v}  = U^*_{ \mathcal{O}^{(i)} ,v}  - {{T^*}^{(i)}} U^*_{ \mathcal{O}^{(i)} , v}.$$ 
Second, recall the overloaded meaning of ${{T^*}^{(i)}}$: it denotes both a tree on the nodes $\mathcal{O}^{(i)}$, namely, the $i$-th tree in the forest, and the corresponding matrix operator mapping nodes to their parents. 
These two representations uniquely determine each other. 
Third, recall the meaning of $U^*_{ \mathcal{O}^{(i)} , v}$: this column contains a $1$ in the rows associated with the set of all the ancestors of $v$ (ancestors in the sense of the hierarchy defined by ${T^*}$)  that are nodes in the tree ${{T^*}^{(i)}}$. Overloading notation, let us identify this set with $U^*_{ \mathcal{O}^{(i)} , v}$.
Among the ancestors in 
$U^*_{ \mathcal{O}^{(i)} , v}$, all but one of them, namely the root of the tree ${{T^*}^{(i)}}$, have their parent also in $U^*_{ \mathcal{O}^{(i)} , v}$. 
Therefore, ${{T^*}^{(i)}} \;U^*_{ \mathcal{O}^{(i)} , v}$ equals $U^*_{ \mathcal{O}^{(i)} , v}$ except in the entry corresponding to the root of ${{T^*}^{(i)}}$, 
which is $0$ in the former but $1$ in the latter. 
Hence, $U^*_{ \mathcal{O}^{(i)} ,v}  - {{T^*}^{(i)}} U^*_{ \mathcal{O}^{(i)},v}$ is a non-negative column.

\indent \textbf{3)} Finally, we prove the third statement on our list. 
First, notice that if $\mathcal{O}$ is not empty, then we can set $M^*_\mathcal{N} = {\bf 0}$, and simply by varying $M^*_\mathcal{O}$ we can cover 
$$\{ \tilde{M}\in\mathbb{R}^{q'\times n}: \tilde{M} \geq { 0} \land {\bf 1}^{\top} \tilde{M} =  {\bf 1}^{\top} \}.$$
Second, notice that if the root $r^*$ of ${T^*}$ is in $\mathcal{N}$ and we choose $M^*$ such that $M^*_{r^*,:} = {\bf 1}$ and all the other components are zero, then 
$U^*_{\mathcal{O},\mathcal{N}} M^*_\mathcal{N} = {\bf 0}$, since no node in $\mathcal{O}$ is an ancestor of the root $r^* \in  \mathcal{N}$. 
Hence $\tilde{M}^* = {\bf 0}$. Since $\tilde{M}^*(\cdot)$ is a linear function, the image of the convex set $\{ {M}\in\mathbb{R}^{q' \times n}: {M} \geq {\bf 0} \land {\bf 1}^{\top} {M} =  {\bf 1}^{\top} \}$ must also be convex. 
Since this image covers $\{ \tilde{M}\in\mathbb{R}^{q' \times n}: \tilde{M} \geq {\bf 0} \land {\bf 1}^{\top} \tilde{M} =  {\bf 1}^{\top} \}$ and also covers the point ${\bf 0}$, it must cover the set $\{ \tilde{M}\in\mathbb{R}^{q \times n}: \tilde{M} \geq {\bf 0} \land {\bf 1}^{\top} \tilde{M} \leq  {\bf 1}^{\top} \}$.

\section{Alternative to the redefinition of birth and death times in Section \ref{sec:redefine_birth_death_time_original_paper_1}} \label{appendix:alternative_definition_paper_1}

In Section \ref{sec:redefine_birth_death_time_original_paper_1}, we extend the definition of birth and death time from \cite{myers2019calder} in order to cover edge cases not addressed in their original definition. 
We also add an extra condition to their longitudinal conditions (LC), requiring that the birth time be strictly smaller than the death time. These edge cases are discussed in Section \ref{sec:discussion_of_definitions_and_their_problems_paper_1}. 

Furthermore, in Section \ref{sec:redefinition_of_longitudional_conditions_paper_1} we define new birth and death times, as well as new LC, and we prove in Lemma \ref{th:equivalent_CALDER_conditions_paper_1} that both our definitions and those of \cite{myers2019calder} are equivalent.

In this section, we present another possible redefinition for birth and death times.
In particular, we can define $t^{\min}_v$ and  $t^{\max}_v$ as in \eqref{eq:CALDER_t_min_def_paper_1} and \eqref{eq:CALDER_t_max_def_paper_1}, and add that
\begin{itemize}
\item if $t^{\min}_v$ from \eqref{eq:CALDER_t_min_def_paper_1} is undefined, then set $t^{\min}_v = M_1$,
\item once $t^{\min}_v$ is defined, if $t^{\max}_v$ as in \eqref{eq:CALDER_t_max_def_paper_1} (but using this newly defined $t^{\min}_v$) is undefined, then set $t^{\max}_v = M_2$,
\end{itemize}
where $M_1$ and $M_2$ are constants satisfying $M_2 > M_1 > n$.
This is different from Definition \ref{def:extended_birth_and_death_time_calder_def_paper_1} in the main paper, where we set $M_1 = 1$ and $M_2 > n$.

We define $t'^{\min}_v$ and  $t'^{\max}_v$ as in \eqref{eq:our_t_min_def_paper_1} and \eqref{eq:our_t_max_def_paper_1} and add that
\begin{itemize}
\item if $t'^{\min}_v$ as in \eqref{eq:our_t_min_def_paper_1} is not defined, then: if $M_{v,1} >0$ then $t'^{\min}_v =1$, and if $M_{v,1} =0$ then $t'^{\min}_v =M_1$;
\item if $t'^{\max}_v$ as in \eqref{eq:our_t_max_def_paper_1} is not defined, then $t'^{\max}_v = M_2$;
\end{itemize}
where $M_2 > M_1 > n$ are the same constants as in the above paragraph.

Lemma \ref{th:equivalent_CALDER_conditions_paper_1} also holds if it uses these new definitions, which we restate here as Lemma \ref{th:equivalent_CALDER_conditions_for_alternative_definitions_paper_1}.
\begin{lem} \label{th:equivalent_CALDER_conditions_for_alternative_definitions_paper_1}
    Define $t'^{\min}_v$ and $t'^{\max}_v$ as in \eqref{eq:our_t_min_def_paper_1} and \eqref{eq:our_t_max_def_paper_1} but extended according to Appendix \ref{appendix:alternative_definition_paper_1}. 
    Define $t^{\min}_v$ and $t^{\max}_v$ as in \eqref{eq:CALDER_t_min_def_paper_1} and \eqref{eq:CALDER_t_max_def_paper_1} but extended according to Appendix \ref{appendix:alternative_definition_paper_1}.
The ELC in Definition \ref{defi:extended_longitudional_conditions_paper_1} are equivalent to the ELC in Definition \ref{defi:new_definition_of_LC_paper_1}
and if they hold, then $t'^{\min}_v=t^{\min}_v$ and $t'^{\max}_v=t^{\max}_v$.
\end{lem}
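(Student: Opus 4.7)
The plan is to mirror the proof of Lemma \ref{th:equivalent_CALDER_conditions_paper_1} and confine the new work to the cases where the alternative extension (with constants $M_2>M_1>n$) differs from the extension used in the main text. Because the base formulas \eqref{eq:CALDER_t_min_def_paper_1}--\eqref{eq:CALDER_t_max_def_paper_1} and \eqref{eq:our_t_min_def_paper_1}--\eqref{eq:our_t_max_def_paper_1} are unchanged, the existing proof applies verbatim on every mutant $v$ whose birth and death times are defined inside $[n]$ by the base formulas; only the default-assignment schemes require separate treatment, and I need to show that they remain mutually consistent and that the two ELC formulations are satisfied or violated together.

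To organize the argument, I would partition the mutants into three exhaustive cases based on the observed trajectory. In Case A, $M_{v,:}\neq{\bf 0}$ and $v$ is observed alive at some $t\in[n]$, and the proof of Lemma \ref{th:equivalent_CALDER_conditions_paper_1} applies essentially verbatim, giving the per-mutant equivalence of the two ELC formulations and the equalities $t^{\min}_v=t'^{\min}_v$ and $t^{\max}_v=t'^{\max}_v$. In Case C, $F_{v,:}={\bf 0}$, so every descendant of $v$ (and $v$ itself) has identically zero abundance; here both alternative extensions assign $t^{\min}_v=t'^{\min}_v=M_1$ and $t^{\max}_v=t'^{\max}_v=M_2$, and both ELC formulations hold vacuously on the whole subtree rooted at $v$.

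Case B is the interesting one: $M_{v,:}={\bf 0}$ but $F_{v,:}\neq{\bf 0}$. Under the unprimed alternative, $t^{\min}_v\in[n]$ is the smallest $t$ with $F_{v,t}>0$ while $t^{\max}_v=t^{\min}_v$, since $M_{v,t}=0$ for all $t$; this violates the extra condition $t^{\min}_v<t^{\max}_v$ of Definition \ref{defi:extended_longitudional_conditions_paper_1}. Under the primed alternative, $t'^{\min}_v=M_1$ and $t'^{\max}_v=M_2$, so condition \eqref{eq:longitudional_condition_bento_1_paper_1} is satisfied; however, because $F_{v,:}\neq{\bf 0}$ there exists a descendant $w$ of $v$ in $U^*$ with $M_{w,:}\neq{\bf 0}$, so $t'^{\min}_w\leq n$. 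A short induction along the tree path from $v$ down to $w$ then shows that condition \eqref{eq:longitudional_condition_bento_2_paper_1} propagates the default interval $[M_1,M_2]$ forward until it collides with the finite value of $t'^{\min}_w$, producing a contradiction. Hence both ELC formulations also fail in Case B.

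The main obstacle I anticipate is precisely this chain argument: the intermediate nodes on the path from $v$ to $w$ may themselves be Case-B or Case-C mutants, so I need to verify that, at each step, \eqref{eq:longitudional_condition_bento_2_paper_1} either propagates $[M_1,M_2]$ forward or already yields a local contradiction, so that the final step delivers the desired inequality $t'^{\min}_w\in [M_1,M_2]$ against $t'^{\min}_w\leq n<M_1$. Once this is established, assembling the three cases yields both the global equivalence of the two ELC formulations and the equality of the corresponding birth and death times whenever the ELC hold, which is exactly the statement of Lemma \ref{th:equivalent_CALDER_conditions_for_alternative_definitions_paper_1}.
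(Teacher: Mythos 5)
Your proposal is correct and follows the route the paper itself prescribes: the paper omits this proof, stating only that it is ``very similar to the proof of Lemma~\ref{th:equivalent_CALDER_conditions_paper_1}'', and your plan---reuse that proof for mutants whose birth and death times are produced by the base formulas, then separately verify the case $F_{v,:}={\bf 0}$ (both formulations hold, with matching defaults $M_1$ and $M_2$) and the case ${\bf M}_{v,:}={\bf 0}$ with $F_{v,:}\neq{\bf 0}$ (both formulations fail, via \eqref{eq:extra_condition_to_add_to_longitudional_conditions_paper_1} on the unprimed side and the forced propagation of $t'^{\min}=M_1$ down to an observed descendant on the primed side)---is exactly the adaptation required. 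The remark following the lemma in Appendix~\ref{appendix:alternative_definition_paper_1} walks through precisely these two-mutant edge cases, confirming that your case split is the intended one.
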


We omit the proof of Lemma \ref{th:equivalent_CALDER_conditions_for_alternative_definitions_paper_1}, as it is very similar to the proof of Lemma \ref{th:equivalent_CALDER_conditions_paper_1}.

\begin{rmk}
The equivalence between $t^{\min}_v < t^{\max}_v$ in  \eqref{eq:extra_condition_to_add_to_longitudional_conditions_paper_1}
and ${\bf M}_v \neq {\bf 0}$ in   \eqref{eq:extra_long_constraint_no_zero_M_paper_1}, which is proved in Lemma \ref{th:equivalent_of_extending_calder_long_conditions_in_one_or_in_another_way_paper_1} in Appendix \ref{app:equivalent_of_extending_calder_long_conditions_in_one_or_in_another_way_paper_1}, does not hold if we use the definitions in Appendix \ref{appendix:alternative_definition_paper_1}, and hence the ELC no longer exclude scenarios where a mutant is observed dead throughout the observation window, i.e. ${\bf M}_v = {\bf 0}$.
For example, if we have two mutants $v$ and $w$, and $w$ is a child of $v$, then the ELC in Definition \ref{defi:extended_longitudional_conditions_paper_1} using the redefined $t^{\min}_v$ and $t^{\max}_v$ in Appendix \ref{appendix:alternative_definition_paper_1} exclude the situation where ${\bf M}_{v,:}= {\bf 0}$ and ${\bf M}_{w,:}= {\bf 1}$, but they do not exclude the situation where ${\bf M}_{v,:}= {\bf 1}$ and ${\bf M}_{w,:}= {\bf 0}$. 

In the first situation, $t'^{\min}_w=t^{\min}_w=t^{\min}_v=t^{\max}_v=1$, $t'^{\max}_w=t^{\max}_w=t'^{\max}_v=M_2$, and $t'^{\min}_v=M_1$. 
The ELC in Definition \ref{defi:extended_longitudional_conditions_paper_1} do not hold because \eqref{eq:extra_condition_to_add_to_longitudional_conditions_paper_1} does not hold (i.e., $1=t^{\min}_v \nless t^{\max}_v=1$) and the ELC in Definition \ref{defi:new_definition_of_LC_paper_1} do not hold because \eqref{eq:longitudional_condition_bento_2_paper_1} does not hold (i.e., $1=t'^{\min}_w \notin [t'^{\min}_v,t'^{\max}_v] = [M_1,M_2]$).
In the second situation, $t'^{\min}_v=t^{\min}_v=1$, $t'^{\min}_w=t^{\min}_w=M_1$, 
and $t'^{\max}_v=t^{\max}_v=t'^{\max}_w=t^{\max}_w=M_2$. Both the ELC in Definitions \ref{defi:extended_longitudional_conditions_paper_1} and \ref{defi:new_definition_of_LC_paper_1} hold. In this case, the interpretation of ${\bf M}_{v,:}= {\bf 1}$ and ${\bf M}_{w,:}= {\bf 0}$ is that mutant $v$ was born before the observation window started, while mutant $w$ is born only after the observation window.

Contrarily, we exclude both situations by either using Definition \ref{def:extended_birth_and_death_time_calder_def_paper_1} and the ELC in Definition \ref{defi:extended_longitudional_conditions_paper_1} , or Definition \ref{defi:new_definition_of_birth_and_death_time_paper_1} and the ELC in Definition \ref{defi:new_definition_of_LC_paper_1}.
In this case, for the first situation  $t^{\min}_v=t'^{\min}_v=t^{\max}_v=t'^{\max}_v = t^{\min}_w=t'^{\min}_w=1$ and $t^{\max}_w=t'^{\max}_w=M_1$. This case is excluded because \eqref{eq:extra_condition_to_add_to_longitudional_conditions_paper_1} and \eqref{eq:longitudional_condition_bento_1_paper_1} are violated by mutant $v$. For the second situation $t^{\min}_w=t'^{\min}_w=t^{\max}_w=t'^{\max}_w = t^{\min}_v=t'^{\min}_v=1$ and $t^{\max}_v=t'^{\max}_v=M_1$. This case is excluded because \eqref{eq:extra_condition_to_add_to_longitudional_conditions_paper_1} and \eqref{eq:longitudional_condition_bento_1_paper_1} are violated by mutant $w$.
\end{rmk}

\section{Equivalence between extended longitudinal conditions using either \eqref{eq:extra_long_constraint_no_zero_M_paper_1} or \eqref{eq:extra_condition_to_add_to_longitudional_conditions_paper_1}}\label{app:equivalent_of_extending_calder_long_conditions_in_one_or_in_another_way_paper_1}

\begin{lem}\label{th:equivalent_of_extending_calder_long_conditions_in_one_or_in_another_way_paper_1}
If we use Definition \ref{def:extended_birth_and_death_time_calder_def_paper_1} for birth and death times, then condition \eqref{eq:extra_long_constraint_no_zero_M_paper_1} holds if and only if condition \eqref{eq:extra_condition_to_add_to_longitudional_conditions_paper_1} holds.
\end{lem}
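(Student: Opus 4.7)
The plan is to establish the equivalence by exploiting the extinction condition from the first inequality of Definition \ref{defi:extended_longitudional_conditions_paper_1}, which is in force whenever one is adding either \eqref{eq:extra_long_constraint_no_zero_M_paper_1} or \eqref{eq:extra_condition_to_add_to_longitudional_conditions_paper_1} as the ``extra'' LC, together with the explicit rules for $t^{\min}_v$ and $t^{\max}_v$ given in Definition \ref{def:extended_birth_and_death_time_calder_def_paper_1}.

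For the direction $\eqref{eq:extra_long_constraint_no_zero_M_paper_1} \Rightarrow \eqref{eq:extra_condition_to_add_to_longitudional_conditions_paper_1}$, I would pick any $t^{\star}\in[n]$ with $M_{v,t^{\star}}>0$, whose existence is guaranteed by \eqref{eq:extra_long_constraint_no_zero_M_paper_1}. Since $F=UM$ with $M\geq 0$ and $U_{v,v}=1$, we have $F_{v,t^{\star}}\geq M_{v,t^{\star}}>0$, so $t^{\min}_v$ is defined through equation \eqref{eq:CALDER_t_min_def_paper_1} (and not through the fallback $t^{\min}_v=1$), and satisfies $t^{\min}_v\leq t^{\star}$. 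The contrapositive of the extinction condition immediately gives $t^{\star}<t^{\max}_v$. Combining, $t^{\min}_v\leq t^{\star}<t^{\max}_v$, which is precisely \eqref{eq:extra_condition_to_add_to_longitudional_conditions_paper_1}.

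For the reverse direction I would argue by contrapositive, assuming ${\bf M}_{v,:}={\bf 0}$ and showing $t^{\min}_v=t^{\max}_v$. Two subcases arise. If $F_{v,t}=0$ for all $t\in[n]$ (no descendant of $v$ is ever alive either), then $t^{\min}_v$ is undefined and, by Definition \ref{def:extended_birth_and_death_time_calder_def_paper_1}, set to $1$; since $M_{v,1}=0$, equation \eqref{eq:CALDER_t_max_def_paper_1} yields $t^{\max}_v=1=t^{\min}_v$. Otherwise $F_{v,t^{\circ}}>0$ for some $t^{\circ}$, so $t^{\min}_v$ is defined via \eqref{eq:CALDER_t_min_def_paper_1}; but $M_{v,t^{\min}_v}=0$ (because ${\bf M}_{v,:}={\bf 0}$), and so \eqref{eq:CALDER_t_max_def_paper_1} gives $t^{\max}_v=t^{\min}_v$. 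In either subcase \eqref{eq:extra_condition_to_add_to_longitudional_conditions_paper_1} fails, completing the contrapositive.

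I do not expect any significant obstacle: the argument is elementary once one notices the asymmetry that the reverse direction follows purely from Definition \ref{def:extended_birth_and_death_time_calder_def_paper_1}, while the forward direction genuinely needs the extinction condition. Without the extinction condition, one can construct examples (e.g., a descendant $w$ of $v$ alive at $t=1$ while $v$ itself appears only at $t=3$) in which ${\bf M}_{v,:}\neq{\bf 0}$ yet $t^{\min}_v=t^{\max}_v$; so a small point to emphasize in the write-up is precisely that the first inequality of Definition \ref{defi:extended_longitudional_conditions_paper_1} is what rules out such ``temporary disappearance'' configurations and makes the two formulations interchangeable.
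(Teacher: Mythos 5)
Your proof is correct, but it takes a genuinely different route from the paper's, and the difference matters. For the direction that \eqref{eq:extra_long_constraint_no_zero_M_paper_1} implies \eqref{eq:extra_condition_to_add_to_longitudional_conditions_paper_1} (equivalently, that $t^{\min}_v=t^{\max}_v$ forces ${\bf M}_{v,:}={\bf 0}$), the paper runs a four-way case analysis on which of $t^{\min}_v,t^{\max}_v$ require the fallback clauses of Definition \ref{def:extended_birth_and_death_time_calder_def_paper_1}; in the case where neither requires an exception it asserts that $t^{\min}_v=t^{\max}_v$ would force $M_{v,t^{\min}_v}>0$ and $M_{v,t^{\min}_v}=0$ simultaneously. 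That step conflates $F_{v,t^{\min}_v}>0$, which \eqref{eq:CALDER_t_min_def_paper_1} does guarantee, with $M_{v,t^{\min}_v}>0$, which it does not: the configuration you describe (a descendant of $v$ alive at $t=1$ while $v$ itself appears only later) yields $t^{\min}_v=t^{\max}_v=1$ with ${\bf M}_{v,:}\neq{\bf 0}$, so the bare equivalence the paper's proof aims at does not follow from the definitions alone. You instead establish this direction by invoking the extinction condition \eqref{eq:CALDER_extinction_def_paper_1} (the first condition of Definition \ref{defi:extended_longitudional_conditions_paper_1}), which is exactly the hypothesis that rules out such "temporary disappearance" configurations and which matches the intended reading of the lemma as an equivalence between the two ways of \emph{extending} the longitudinal conditions. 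The other direction, ${\bf M}_{v,:}={\bf 0}\Rightarrow t^{\min}_v=t^{\max}_v$, is handled essentially identically in both arguments and needs only Definition \ref{def:extended_birth_and_death_time_calder_def_paper_1}; conveniently, this is also the only direction the paper uses downstream (in Lemmas \ref{th:M_not_zero_paper_1} and \ref{th:i_is_in_S_plus_paper_1}), so nothing later is affected. The one point to make explicit in your write-up is that your forward direction additionally uses $F=UM$ with $M\geq 0$ to obtain $F_{v,t^{\star}}\geq M_{v,t^{\star}}$, an inequality the paper itself relies on elsewhere (cf.\ the proof of Lemma \ref{th:equivalent_CALDER_conditions_paper_1}), and to state up front that the extinction condition is assumed in force.
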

\begin{proof}
If \eqref{eq:extra_long_constraint_no_zero_M_paper_1} does not hold, then ${\bf M}_{v,:}  = {\bf 0}$ and hence, by their definitions $t^{\min}_v = t^{\max}_v =1$, and so  \eqref{eq:extra_condition_to_add_to_longitudional_conditions_paper_1} does not hold.

If \eqref{eq:extra_condition_to_add_to_longitudional_conditions_paper_1} does not hold, then we consider all possible scenarios in which $t^{\min}_v = t^{\max}_v$. 
If $t^{\min}_v$ and $t^{\max}_v$ are both well defined without needing to use their extended definitions, 
then $t^{\min}_v = t^{\max}_v$ simultaneously requires $M_{v,t^{\min}_v} > 0$ and $M_{v,t^{\min}_v} = 0$, which is not possible. 
If $t^{\max}_v$ is well defined without needing to use its extended definitions but $t^{\min}_v$ requires its extended definition,
then it must be that $t^{\min}_v = 1$ and ${\bf M}_{v,:}  = 0$. 
If $t^{\min}_v$ is well defined without needing to use its extended definition but $t^{\max}_v$ requires its extended definition,
then $t^{\max}_v = M_1 > n \geq t^{\min}_v$, which contradicts the assumption that $t^{\min}_v = t^{\max}_v$.

Finally, where defining both $t^{\min}_v$ and $t^{\max}_v$ requires their extended definitions, then we have $t^{\min}_v = 1 < t^{\max}_v = M_1$. This contradicts the assumption that $t^{\min}_v = t^{\max}_v$, and therefore this case cannot occur.  
This completes the proof.
\end{proof}

\section{Proof of Lemma \ref{th:equivalent_CALDER_conditions_paper_1}} \label{appendix:proof_of_calder_equivalence_lemma_paper_1}

Throughout this proof, we say that $t'^{\min}_v$ (resp. $t'^{\max}_v, t^{\min}_v$ or $t^{\max}_v$) \emph{follows an exception} if, in order to define it, we need to resort to the special cases in its definition, i.e., we cannot directly apply \eqref{eq:CALDER_t_min_def_paper_1} (resp. \eqref{eq:CALDER_t_max_def_paper_1}, \eqref{eq:our_t_min_def_paper_1} or \eqref{eq:our_t_max_def_paper_1}).

We prove Lemma \ref{th:equivalent_CALDER_conditions_paper_1} by proving Lemmas \ref{th:logitudional_equivalence_lemma_1_paper_1} and \ref{th:logitudional_equivalence_lemma_2_paper_1} below.

\begin{lem}\label{th:logitudional_equivalence_lemma_1_paper_1}
If conditions   \eqref{eq:CALDER_extinction_def_paper_1} and \eqref{eq:longitudional_condition_bento_1_paper_1} hold, then for any mutant $v$, $t'^{\min}_v = t^{\min}_v$ and $t'^{\max}_v = t^{\max}_v$.
\end{lem}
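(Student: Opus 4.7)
The plan is to first observe that condition \eqref{eq:longitudional_condition_bento_1_paper_1} already forces ${\bf M}_{v,:}\neq{\bf 0}$, which then guarantees that $t^{\min}_v$ is well-defined by \eqref{eq:CALDER_t_min_def_paper_1} without invoking the exception in Definition \ref{def:extended_birth_and_death_time_calder_def_paper_1}. Indeed, if ${\bf M}_{v,:}={\bf 0}$ then $M_{v,:}$ has no $0\to{+}$ and no ${+}\to 0$ transitions and $M_{v,n}=0$, so both $t'^{\min}_v$ and $t'^{\max}_v$ take their exceptional value $1$, contradicting $t'^{\min}_v<t'^{\max}_v$.

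Next, I would exploit \eqref{eq:CALDER_extinction_def_paper_1} together with the minimality built into Definition \ref{def:extended_birth_and_death_time_calder_def_paper_1} to show that ${\bf M}_{v,:}$ has a unimodal zero-positive-zero profile. Concretely, for $t<t^{\min}_v$ we have $F_{v,t}=0$ and hence $M_{v,t}=0$ (using $F\geq M\geq 0$ together with $F=UM$); for $t^{\min}_v\leq t<t^{\max}_v$ we have $M_{v,t}>0$ by the minimality of $t^{\max}_v$; and for $t\geq t^{\max}_v$ with $t\leq n$ we have $M_{v,t}=0$ directly from \eqref{eq:CALDER_extinction_def_paper_1}. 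A short subcase argument also rules out $t^{\min}_v=t^{\max}_v$: were this the case, combining the three ranges above would force ${\bf M}_{v,:}={\bf 0}$, contradicting the previous paragraph. Hence the positive block is non-empty, i.e.\ $t^{\min}_v<t^{\max}_v$.

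With this pattern in hand, computing $t'^{\min}_v$ and $t'^{\max}_v$ from Definition \ref{defi:new_definition_of_birth_and_death_time_paper_1} reduces to a small case split. For $t'^{\min}_v$: if $t^{\min}_v>1$ there is exactly one $0\to{+}$ transition in $M_{v,:}$, occurring precisely at $t=t^{\min}_v$, so the max in \eqref{eq:our_t_min_def_paper_1} yields $t'^{\min}_v=t^{\min}_v$; if $t^{\min}_v=1$ then $M_{v,1}>0$ and no such transition exists, so the exception in Definition \ref{defi:new_definition_of_birth_and_death_time_paper_1} sets $t'^{\min}_v=1=t^{\min}_v$. For $t'^{\max}_v$: if $t^{\max}_v\leq n$ there is exactly one ${+}\to 0$ transition in $M_{v,:}$, occurring at $t=t^{\max}_v$, so \eqref{eq:our_t_max_def_paper_1} gives $t'^{\max}_v=t^{\max}_v$; otherwise $t^{\max}_v=M_1$ via the exception of Definition \ref{def:extended_birth_and_death_time_calder_def_paper_1}, which happens precisely when $M_{v,t}>0$ on all of $[t^{\min}_v,n]$, so in particular $M_{v,n}>0$, no ${+}\to 0$ transition exists, and the matching exception in Definition \ref{defi:new_definition_of_birth_and_death_time_paper_1} returns $t'^{\max}_v=M_1=t^{\max}_v$.

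The main obstacle I anticipate is keeping the exceptional branches of the two definitions aligned—specifically, verifying that the exceptional values chosen in Definition \ref{def:extended_birth_and_death_time_calder_def_paper_1} and Definition \ref{defi:new_definition_of_birth_and_death_time_paper_1} are triggered in exactly the same configurations of $M_{v,:}$ under the assumed conditions, so that no case yields a numerical mismatch (e.g.\ the $M_{v,n}>0$ branch of the new definition versus the ``$t^{\max}_v$ undefined'' branch of the Calder definition both produce $M_1$). Once the unimodal structure of $M_{v,:}$ is established, the remaining work is essentially mechanical bookkeeping across the four combinations of whether $t^{\min}_v=1$ and whether $t^{\max}_v\leq n$.
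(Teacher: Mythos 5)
Your proof is correct, but it is organized quite differently from the paper's. The paper establishes the two equalities by proving four separate inequalities ($t^{\min}_v \leq t'^{\min}_v$, $t^{\min}_v \geq t'^{\min}_v$, $t^{\max}_v \leq t'^{\max}_v$, $t^{\max}_v \geq t'^{\max}_v$), each handled by a case analysis on which of the two quantities invokes its exceptional branch, with the $\geq$ directions argued by contradiction. You instead first extract from \eqref{eq:CALDER_extinction_def_paper_1} and \eqref{eq:longitudional_condition_bento_1_paper_1} a canonical \emph{shape} for ${\bf M}_{v,:}$ (zero on $[1,t^{\min}_v)$, positive on $[t^{\min}_v,\min(t^{\max}_v,n+1))$, zero on $[t^{\max}_v,n]$, with a non-empty positive block), and then evaluate both Definitions \ref{def:extended_birth_and_death_time_calder_def_paper_1} and \ref{defi:new_definition_of_birth_and_death_time_paper_1} directly on that shape. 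This buys a shorter and arguably more transparent argument: once unimodality is in hand, each definition has at most one transition to locate, and the exceptional branches of the two definitions are seen to fire in exactly the same configurations. The paper's inequality-by-inequality route is more mechanical but avoids having to name the structural lemma. One shared (and acceptable) reliance in both arguments: the step ``$F_{v,t}=0\Rightarrow M_{v,t}=0$'' (equivalently the paper's ``$M_{v,t}>0\Rightarrow F_{v,t}>0$'') uses $M\geq 0$ for all mutants via \eqref{eq:simple_PPM_model_expanded_paper_1}; this is implicit in the lemma's setting (candidate solutions of the PPM model), but it is worth stating explicitly since it is the only place where $F$ and $M$ are coupled. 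With that caveat noted, your case bookkeeping across the four combinations of $t^{\min}_v=1$ versus $t^{\min}_v>1$ and $t^{\max}_v\leq n$ versus $t^{\max}_v=M_1$ is complete and matches the exceptional values correctly.
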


\begin{proof}\hspace{1 mm}\\

{\it Proof that} $t^{\min}_v \leq t'^{\min}_v$: We consider three cases. 
If neither quantity follows an exception, and since for any $v,t$ we have $M_{v,t} > 0 \implies F_{v,t} > 0$ (cf. equation \eqref{eq:simple_PPM_model_expanded_paper_1}), it follows directly from their definitions that $t^{\min}_v \leq t'^{\min}_v$. 
If $t'^{\min}_v$ follows an exception but $t^{\min}_v$ does not, then the exception cannot be because $M_{v,t} = 0\ \forall t$ (otherwise $t^{\min}_v$ would also follow an exception), 
and therefore it must be that $M_{v,1} > 0$ and $t'^{\min}_v = t^{\min}_v=1$. 
Finally, if  $t^{\min}_v$ follows an exception, then $M_{v,t} = 0 \ \forall t$, and hence $t'^{\min}_v$ also follows an exception. This implies that $t'^{\min}_v = t^{\min}_v=1$.

{\it Proof that} $t^{\min}_v \geq t'^{\min}_v$: Assume, for the sake of contradiction, that there exists a mutant $v$ such that $t^{\min}_v < t'^{\min}_v$. We consider the following two  possible cases. 
\newline
If  $t^{\min}_v$ follows an exception, then we must have $F_{v,t}=0\ \forall t$, which implies $M_{v,t}=0\ \forall t$, and consequently $t^{\min}_v=t'^{\min}_v=1$, a contradiction. 
\newline
If $t^{\min}_v$ does not follow an exception,
then $1 \leq t^{\min}_v $, and
together with our assumption we obtain $1 \leq t^{\min}_v < t'^{\min}_v $, which implies that $t'^{\min}_v$ cannot follow an exception (otherwise it would be equal to $1$). 
Furthermore, if $M_{v,t^{\min}_v} = 0$, then $t^{\max}_v = t^{\min}_v$, which by \eqref{eq:CALDER_extinction_def_paper_1}, implies that $M_{v,t'^{\min}_v} = 0$, contradicting the definition of $t'^{\min}_v$. 
Therefore, 
$M_{v,t^{\min}_v} > 0$, and
 there exists $\tilde{t} \in (t^{\min}_v , t'^{\min}_v) $ such that $M_{v,\tilde{t}} = 0$. 
 Since $M_{v,t}$ transitions from a positive value to zero somewhere between $t^{\min}_v$ and $\tilde{t}$, and since $t'^{\max}_v$ is the earliest time when such a transition occurs, 
 we have $t'^{\max}_v\leq \tilde{t} < t'^{\min}_v$, which contradicts \eqref{eq:longitudional_condition_bento_1_paper_1}.

{\it Proof that} $t^{\max}_v \leq t'^{\max}_v$: 
If $t'^{\max}_v$ follows an exception, then either $t'^{\max}_v=M_1$, in which case $t^{\max}_v \leq t'^{\max}_v$, 
or $M_{v,t} = 0 \ \forall t$, which implies  $t'^{\min}_v=t'^{\max}_v= 1$, contradicting \eqref{eq:longitudional_condition_bento_1_paper_1}.
If $t'^{\max}_v$  does not follow an exception, then, since we already proved that $t^{\min}_v \leq t'^{\min}_v$, it follows from \eqref{eq:longitudional_condition_bento_1_paper_1} that $t^{\min}_v\leq t'^{\min}_v<t'^{\max}_v$. 
Since $M_{v,t'^{\max}_v} = 0$, the definition of $t^{\max}_v$ implies that $t^{\max}_v \leq t'^{\max}_v$.

{\it Proof that} $t^{\max}_v \geq t'^{\max}_v$: 
Assume, for the sake of contradiction, that there exists a mutant $v$ such that $t^{\max}_v < t'^{\max}_v$. %
It cannot be that $t^{\max}_v$ follows an exception; otherwise, $t^{\max}_v = M_1 < t'^{\max}_v$, which is impossible. 
It also cannot be that $t'^{\max}_v$ follows an exception. 
Suppose otherwise; we show that each of the following cases leads to a contradiction:\newline
(a) If $M_{v,t} = 0$ for all $t$, then $t'^{\min}_v =t'^{\max}_v= 1$, contradicting \eqref{eq:longitudional_condition_bento_1_paper_1}. \newline
(b) If
$M_{v,t} > 0$ for all $t$, then  
$t^{\max}_v = t'^{\max}_v= M_1$, contradicting the assumption that $t^{\max}_v < t'^{\max}_v$.\newline
(c) Otherwise, $M_{v,t}$ is neither identically zero nor strictly positive for all $t$. 
If $t'^{\max}_v$ follows an exception, then $M_{v,1}$ cannot be zero;  because if $t'^{\max}_v$ follows an exception, then $M_{v,t}$ can never return to zero within the observation window,
implying that  $M_{v,t}$ is not always zero, which contradicts the definition of the exception.
Thus, $M_{v,t}$ must start at zero and at some point become non-zero and remain non-zero for the remainder of the observation window. 
But this implies that  $t^{\max}_v$ follows an exception, which we have already ruled out.
Since we now know that $t'^{\max}_v$ does not follow an exception, it follows that $M_{v,t'^{\max}_v - 1} > 0$.
At the same time, since by assumption
$t'^{\max}_v - 1 \geq t^{\max}_v$,
 it follows from \eqref{eq:CALDER_extinction_def_paper_1}  that $M_{v,t'^{\max}_v - 1} =0$, which is a contradiction.
 \end{proof}

\begin{lem} \label{th:logitudional_equivalence_lemma_2_paper_1}
Conditions  \eqref{eq:CALDER_cont_def_paper_1}, \eqref{eq:CALDER_extinction_def_paper_1} and \eqref{eq:extra_condition_to_add_to_longitudional_conditions_paper_1} imply \eqref{eq:longitudional_condition_bento_1_paper_1} and \eqref{eq:longitudional_condition_bento_2_paper_1}, and vice versa.
\end{lem}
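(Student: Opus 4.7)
The plan is to prove the two implications separately, with a common thread: in each direction, the hypotheses force the support $\{t \in [n] : M_{v,t} > 0\}$ to be a single contiguous interval (the \emph{single-burst property}), and once that is established, a short case analysis on whether this interval touches the boundary $t=1$ or $t=n$ makes the equalities $t'^{\min}_v = t^{\min}_v$ and $t'^{\max}_v = t^{\max}_v$ transparent. The lower bound $t'^{\min}_v \leq t'^{\min}_w$ on the parent-child birth times, which does not appear explicitly in the CALDER conditions, is supplied by the PPM sum condition \eqref{eq:sum_condition_paper_1}: $F_{v,t} \geq F_{w,t}$ and hence $t^{\min}_v \leq t^{\min}_w$ whenever $T_{v,w} = 1$.

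For the forward implication, assume \eqref{eq:CALDER_extinction_def_paper_1}, \eqref{eq:CALDER_cont_def_paper_1}, and \eqref{eq:extra_condition_to_add_to_longitudional_conditions_paper_1}. The definition of $t^{\max}_v$ combined with \eqref{eq:CALDER_extinction_def_paper_1} immediately gives that $M_{v,t}>0$ precisely on the interval $[t^{\min}_v, t^{\max}_v - 1]$ in the non-exception branch, or on $[t^{\min}_v, n]$ when $t^{\max}_v$ invokes its exception; in either case $M_{v,\cdot}$ has single-burst support, which is non-empty thanks to Lemma \ref{th:equivalent_of_extending_calder_long_conditions_in_one_or_in_another_way_paper_1} applied to \eqref{eq:extra_condition_to_add_to_longitudional_conditions_paper_1}. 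A direct inspection matching the four exception branches of Definitions \ref{def:extended_birth_and_death_time_calder_def_paper_1} and \ref{defi:new_definition_of_birth_and_death_time_paper_1} then yields $t'^{\min}_v = t^{\min}_v$ and $t'^{\max}_v = t^{\max}_v$. Substituting into \eqref{eq:extra_condition_to_add_to_longitudional_conditions_paper_1} gives \eqref{eq:longitudional_condition_bento_1_paper_1}, and substituting into \eqref{eq:CALDER_cont_def_paper_1} together with the PPM inequality $t^{\min}_v \leq t^{\min}_w$ gives both halves of \eqref{eq:longitudional_condition_bento_2_paper_1}.

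For the converse, assume \eqref{eq:longitudional_condition_bento_1_paper_1} and \eqref{eq:longitudional_condition_bento_2_paper_1}. The key structural step is to derive single-burst from \eqref{eq:longitudional_condition_bento_1_paper_1} alone: because $t'^{\min}_v$ is the \emph{maximum} $0 \to {>}0$ transition time while $t'^{\max}_v$ is the \emph{minimum} ${>}0 \to 0$ transition time, two disjoint positive blocks in $M_{v,\cdot}$ would force $t'^{\min}_v > t'^{\max}_v$; and the all-zero case would force $t'^{\min}_v = t'^{\max}_v = 1$. The same case analysis as before then yields $t'^{\min}_v = t^{\min}_v$ and $t'^{\max}_v = t^{\max}_v$, where the identity for $t^{\min}_v$ uses that no descendant of $v$ can be born before $v$ itself (iterate \eqref{eq:longitudional_condition_bento_2_paper_1} along the tree), so $F_{v,t} = 0$ for $t < t'^{\min}_v$. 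With the equalities in hand, \eqref{eq:CALDER_extinction_def_paper_1} is immediate from single-burst, \eqref{eq:extra_condition_to_add_to_longitudional_conditions_paper_1} is \eqref{eq:longitudional_condition_bento_1_paper_1} after substitution, and \eqref{eq:CALDER_cont_def_paper_1} is the upper-bound half of \eqref{eq:longitudional_condition_bento_2_paper_1} after substitution.

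The main obstacle is the careful bookkeeping of the exception branches, which differ in whether they return $1$ or $M_1$ and whether they depend on $M_{v,n}$. The single-burst observation reduces the verification to checking a handful of shapes, namely support $[a,b]$ with $a \in \{1, >1\}$ and $b \in \{n, <n\}$, in each of which the primed and unprimed exceptions match by inspection; the case $a=1$ triggers the $t'^{\min}_v = 1$ exception consistent with $t^{\min}_v = 1$, and the case $b=n$ triggers the $t'^{\max}_v = M_1$ exception consistent with the analogous exception for $t^{\max}_v$.
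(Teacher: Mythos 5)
Your proof is correct and follows essentially the same route as the paper: both reduce the equivalence to showing that the two pairs of birth/death-time definitions coincide under the respective hypotheses (your single-burst-support observation plus the exception-branch matching plays exactly the role of the paper's Lemma~\ref{th:logitudional_equivalence_lemma_1_paper_1}), and both supply the missing lower bound $t^{\min}_v \le t^{\min}_w$ in \eqref{eq:longitudional_condition_bento_2_paper_1} from the PPM relation $F_{v,t}\ge F_{w,t}$. The only organizational difference is that in the converse direction you use \eqref{eq:longitudional_condition_bento_2_paper_1} (iterated down the tree, together with single-burst for the descendants) to pin down $t^{\min}_v=t'^{\min}_v$ before deriving the three CALDER-side conditions, whereas the paper first argues \eqref{eq:CALDER_extinction_def_paper_1} and \eqref{eq:extra_condition_to_add_to_longitudional_conditions_paper_1} from \eqref{eq:longitudional_condition_bento_1_paper_1} via a scenario analysis and only then invokes its time-equality lemma; this reordering does not change the substance, and your version is if anything more explicit about which hypothesis rules out a descendant being observed before its ancestor.
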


\begin{proof}\hspace{1 mm}\\

    \emph{Proof that \eqref{eq:CALDER_extinction_def_paper_1} and \eqref{eq:extra_condition_to_add_to_longitudional_conditions_paper_1} imply \eqref{eq:longitudional_condition_bento_1_paper_1}}: 
    We proceed by contradiction. 
    Assume that \eqref{eq:CALDER_extinction_def_paper_1} and \eqref{eq:extra_condition_to_add_to_longitudional_conditions_paper_1} hold, but that condition \eqref{eq:longitudional_condition_bento_1_paper_1} does not hold; that is,  for some mutant $v$, $t'^{\min}_v \geq t'^{\max}_v$. 
    We consider the following two possible situations. 
    
In the first situation, we assume  that neither $t'^{\min}_v$ nor $t'^{\max}_v$ follows an exception. 
By definition of $t'^{\max}_v$, $t'^{\max}_v$ is the earliest time at which $M_{v,t}$ drops from non-zero to zero; hence $M_{v,t} > 0$ for all $t < t'^{\max}_v$. This implies $t^{\min}_v < t'^{\max}_v \leq t^{\max}_v < t'^{\min}_v$. 
Note that the inequality $t^{\max}_v < t'^{\min}_v$ holds because $M_{v,t'^{\min}_v-1} = 0$, and $t^{\max}_v$ is defined as the earliest time at or after $t^{\min}_v$ at which $M_{v,t}=0$.
Condition \eqref{eq:CALDER_extinction_def_paper_1} then implies that $M_{v,{t'}^{\min}_v} = 0$, which contradicts the fact that, since ${t'}^{\min}_v$ does not follow an exception, $M_{v,{t'}^{\min}_v} > 0$.

Now assume that either $t'^{\min}_v$ or $t'^{\max}_v$ follows an exception. 
The following four cases exhaust all possibilities. \newline
Case 1: Both $t'^{\min}_v$ and $t'^{\max}_v$ follow an exception and $M_{v,t}=0\ \forall t$. Then $t^{\min}_v=t^{\max}_v=1$, which contradicts \eqref{eq:extra_condition_to_add_to_longitudional_conditions_paper_1}. \newline
Case 2: Both $t'^{\min}_v$ and $t'^{\max}_v$ follow an exception and $M_{v,t} > 0\ \forall t$. 
Then $t'^{\max}_v = M_1$ and $t'^{\min}_v = 1 < t'^{\max}_v$,  contradicting the assumption that $t'^{\min}_v \geq t'^{\max}_v$.\newline
Case 3: Only $t'^{\min}_v$ follows an exception. Then, $M_{v,1} > 0$ and $M_{v,t}$ drops to zero for the first time at some time $t'$ with $1<t'<n$.
Since $t'^{\min}_v$ follows an exception,  $M_{v,t}$ is never observed to transition from zero to non-zero and hence remains zero for all $t\leq t'$. 
This implies that $t'^{\max}_v=t'$, and since $t'^{\min}_v$ follows an exception,  $t'^{\min}_v =1$. Hence $t'^{\min}_v < t'^{\max}_v$, a contradiction. \newline
Case 4: Only $t'^{\max}_v$ follows an exception. Then $M_{v,n} > 0$, and $M_{v,t}$ transitions from zero to a positive value latest at some time $t'$ with $1\leq t' < n$. 
The fact that $t'$ is the latest such time implies that $M_{v,t} > 0$ for all $t \geq t'$. 
Furthermore, since
$t'^{\max}_v$ follows an exception, $M_{v,t}$ is never observed to transition from non-zero to zero and hence 
remains non-zero for all $t<t'$. 
This implies that $t'^{\max}_v = M_1$ and $t'^{\min}_v = t' < n < M_1 = t'^{\max}_v$, a contradiction.

\emph{Proof that \eqref{eq:CALDER_extinction_def_paper_1}, \eqref{eq:CALDER_cont_def_paper_1}, and \eqref{eq:extra_condition_to_add_to_longitudional_conditions_paper_1} 
imply \eqref{eq:longitudional_condition_bento_2_paper_1}}: 
Assume that \eqref{eq:CALDER_extinction_def_paper_1},
\eqref{eq:CALDER_cont_def_paper_1}, and
\eqref{eq:extra_condition_to_add_to_longitudional_conditions_paper_1} hold.
By definition of $t^{\min}_v$, if $\mathcal{T}_{v,w}=1$ then $t^{\min}_w \geq t^{\min}_v$. 
Therefore, \eqref{eq:CALDER_cont_def_paper_1} implies that $t^{\min}_w \in [t^{\min}_v,t^{\max}_v]$.
We have already proved that \eqref{eq:CALDER_extinction_def_paper_1} and \eqref{eq:extra_condition_to_add_to_longitudional_conditions_paper_1} imply \eqref{eq:longitudional_condition_bento_1_paper_1}, and hence \eqref{eq:longitudional_condition_bento_1_paper_1} holds.
Lemma \ref{th:logitudional_equivalence_lemma_1_paper_1} now implies that $t'^{\max}_v=t^{\max}_v$ and $t'^{\min}_v=t^{\min}_v$. 
It therefore follows that 
$t'^{\min}_w \in [t'^{\min}_v,t'^{\max}_v]$, that is, \eqref{eq:longitudional_condition_bento_2_paper_1} holds.

\emph{Proof that \eqref{eq:longitudional_condition_bento_1_paper_1} implies \eqref{eq:CALDER_extinction_def_paper_1}} and \eqref{eq:extra_condition_to_add_to_longitudional_conditions_paper_1}:\\
Assume that either $t'^{\min}_v$ or $t'^{\max}_v$ follows an exception. 
In this case, there are only three possible scenarios for the values that ${\bf M}_{v,:}$ can take:  
being always non-zero ($t^{\min}_v = 1;\ t^{\max}_v = M_1$), 
going from zero to non-zero during the observation window and remaining non-zero until $n$ ($t^{\min}_v>1; t^{\max}_v=M_1$), 
and going from non-zero to zero during the observation window and remaining zero until $n$ ($t^{\min}_v=1; t^{\max}_v\leq n$).
The scenario where ${\bf M}_{v,:}$ is always zero is excluded by the assumption that \eqref{eq:longitudional_condition_bento_1_paper_1} holds.
In each of these scenarios, \eqref{eq:CALDER_extinction_def_paper_1} and \eqref{eq:extra_condition_to_add_to_longitudional_conditions_paper_1} hold. 
Now assume that neither $t'^{\min}_v$ nor $t'^{\max}_v$ follows an exception. 
If \eqref{eq:longitudional_condition_bento_1_paper_1} holds, then $M_{v,t}$ starts at zero, becomes positive and, before $n$, goes back permanently to zero. 
This implies that $t^{\min}_v\leq t'^{\min}_v< t'^{\max}_v\leq t^{\max}_v$, which in turn implies that \eqref{eq:CALDER_extinction_def_paper_1} and \eqref{eq:extra_condition_to_add_to_longitudional_conditions_paper_1} hold.

\emph{Proof that \eqref{eq:longitudional_condition_bento_1_paper_1} and \eqref{eq:longitudional_condition_bento_2_paper_1} imply \eqref{eq:CALDER_cont_def_paper_1}}: 
If \eqref{eq:longitudional_condition_bento_1_paper_1} holds, then \eqref{eq:CALDER_extinction_def_paper_1} holds. 
Lemma \ref{th:logitudional_equivalence_lemma_1_paper_1} then implies that $t'^{\max}_v=t^{\max}_v$ and $t'^{\min}_v=t^{\min}_v$. 
If \eqref{eq:longitudional_condition_bento_2_paper_1} also holds, then $t'^{\max}_v=t^{\max}_v$ and $t'^{\min}_v=t^{\min}_v$, and hence \eqref{eq:CALDER_cont_def_paper_1} holds.
\end{proof}
\section{Proof of Lemma \ref{th:quasi_convex_long_cond_M_no_eps_paper_1}}\label{appd:proof_of_quasi_convex_long_cond_M_no_eps_paper_1}
\begin{proof}
Below we use the terms ``a mutant is born'' and ``a mutant dies" to mean that $M_{v,t}$ goes from zero to non-zero, or from non-zero to zero, respectively. 
These terms do not imply or assume that a mutant can only be  born or die once; this is only true if the extended longitudinal conditions hold. 
The purpose of the proof is precisely  to show that \eqref{eq:non_negative_M_each_entry_paper_1}--\eqref{eq:right_birth relative_to_father_birth_paper_1} are equivalent to the extended longitudinal conditions holding.
Below, all death times and birth times are defined according to 
Definition \ref{defi:new_definition_of_birth_and_death_time_paper_1}.
%
Below we assume that all observations and statements about $M$ are made for $t$ between $1$ and $n$. To make this point more clear, sometimes we will, for example, write ``${\bf M}_{v,:}$ is \emph{observed} to be zero at some point'' instead of just ``${\bf M}_{v,:}$ is zero at some point'', to emphasize that ``observed'' refers to $t$ between $1$ and $n$.
We write ``mutant (type) $v$ is zero (resp. non-zero) at $t$'' to mean that $M_{v,t} = 0$ (resp. $>0$).

Throughout the proof of Lemma \ref{th:quasi_convex_long_cond_M_no_eps_paper_1}, when we assume that  \eqref{eq:non_negative_M_each_entry_paper_1} holds, i.e. $M \geq 0$, this implies that if $M_{i,t}$ is not positive, then it must be zero, and vice versa. 

If $M \in \mathcal{L}$, then by definition of $\mathcal{L}$, $M \geq 0$, and \eqref{eq:non_negative_M_each_entry_paper_1} holds.
Hence, all that remains to prove is that \eqref{eq:non_negative_M_each_entry_paper_1}--\eqref{eq:right_birth relative_to_father_birth_paper_1} imply \eqref{eq:longitudional_condition_bento_1_paper_1}--\eqref{eq:longitudional_condition_bento_2_paper_1}, and that \eqref{eq:longitudional_condition_bento_1_paper_1}--\eqref{eq:longitudional_condition_bento_2_paper_1} imply \eqref{eq:non_zero_M_paper_1}--\eqref{eq:right_birth relative_to_father_birth_paper_1}. 
We prove this in steps.

We will use Lemma \ref{th:aux_result_for_equiv_between_only_M_ELC_and_regular_LC_paper_1}, which we prove first, and which states that, when proving that \eqref{eq:non_negative_M_each_entry_paper_1}--\eqref{eq:right_birth relative_to_father_birth_paper_1} imply \eqref{eq:longitudional_condition_bento_1_paper_1}--\eqref{eq:longitudional_condition_bento_2_paper_1}, we can assume that \eqref{eq:right_birth relative_to_father_death_paper_1} is true.
\begin{lem}\label{th:aux_result_for_equiv_between_only_M_ELC_and_regular_LC_paper_1}
Equations \eqref{eq:non_negative_M_each_entry_paper_1}--\eqref{eq:right_birth relative_to_father_birth_paper_1} imply that
\begin{align}
M_{w,t} = 0 \text{ if } \sum^{t}_{k=1} M_{v,k} = 0, \quad  \forall v,w: \mathcal{T}_{v,w}=1,\; \forall t = 1,\dots,n ,\label{eq:right_birth relative_to_father_death_paper_1}
\end{align}
\end{lem}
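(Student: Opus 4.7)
The plan is to proceed by contradiction, relying almost exclusively on \eqref{eq:non_negative_M_each_entry_paper_1} and \eqref{eq:right_birth relative_to_father_birth_paper_1} and isolating the earliest time at which the child $w$ becomes non-zero. Suppose the conclusion fails, so there exist $v,w$ with $\mathcal{T}_{v,w}=1$ and $t\in[n]$ such that $\sum_{k=1}^{t} M_{v,k}=0$ yet $M_{w,t}>0$. Because $M\geq 0$ by \eqref{eq:non_negative_M_each_entry_paper_1}, the vanishing of the sum forces $M_{v,k}=0$ for every $k\in\{1,\dots,t\}$; this is the only place non-negativity is used.

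Next I would introduce the birth index $s^{*}\equiv \min\{k\in[n]:M_{w,k}>0\}$, which is well-defined since $M_{w,t}>0$, and observe that $s^{*}\leq t$. The idea is to apply \eqref{eq:right_birth relative_to_father_birth_paper_1} to the pair $(v,w)$ at time $s^{*}$, since this is precisely a time at which $w$ transitions from $0$ to a positive value (or is positive at the first sample). The rest is case analysis on whether $s^{*}=1$ or $s^{*}\geq 2$.

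If $s^{*}=1$, the $t=1$ clause of \eqref{eq:right_birth relative_to_father_birth_paper_1} reads $M_{w,1}>0\Rightarrow M_{v,1}>0$, which contradicts $M_{v,1}=0$. If $s^{*}\geq 2$, then by minimality $M_{w,s^{*}-1}=0$ and $M_{w,s^{*}}>0$, so \eqref{eq:right_birth relative_to_father_birth_paper_1} gives $M_{v,s^{*}}+M_{v,s^{*}-1}>0$; but $1\leq s^{*}-1<s^{*}\leq t$, so both indices fall in the range on which $M_{v,\cdot}$ is zero, again a contradiction. In either case we reach the desired contradiction, completing the proof.

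The argument is short, and I do not expect a real obstacle; the only mild subtlety is handling the $t=1$ boundary clause of \eqref{eq:right_birth relative_to_father_birth_paper_1} correctly and verifying that the indices $s^{*}-1$ and $s^{*}$ both lie in $[1,t]$ so that the vanishing of $M_{v,\cdot}$ can be invoked. It is worth noting explicitly that neither \eqref{eq:non_zero_M_paper_1} nor \eqref{eq:right_form_M_condition_paper_1} is needed for this implication, so the conclusion \eqref{eq:right_birth relative_to_father_death_paper_1} is really a direct consequence of the birth-ordering constraint \eqref{eq:right_birth relative_to_father_birth_paper_1} together with non-negativity.
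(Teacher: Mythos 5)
Your proof is correct and follows essentially the same route as the paper's: argue by contradiction, use non-negativity to conclude $M_{v,k}=0$ for all $k\le t$, locate the first time $s^*\le t$ at which $M_{w,\cdot}$ becomes positive, and apply \eqref{eq:right_birth relative_to_father_birth_paper_1} at $s^*$ (with the $t=1$ boundary clause handled separately). The only difference is cosmetic: the paper justifies the existence of this first positive instant by invoking \eqref{eq:non_zero_M_paper_1}, whereas you correctly observe that $M_{w,t}>0$ already guarantees it, so \eqref{eq:non_zero_M_paper_1} and \eqref{eq:right_form_M_condition_paper_1} are not actually needed.
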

\begin{proof}
Assume by contradiction that equations \eqref{eq:non_negative_M_each_entry_paper_1}--\eqref{eq:right_birth relative_to_father_birth_paper_1} hold and that there exists $t\geq 1$ and a mutant $v$ with child $w$ such that $M_{v,k} = 0$ for all $ 1\leq k \leq t$ and $M_{w,t} > 0$. 

By equation \eqref{eq:non_zero_M_paper_1}, we know that ${\bf M}_{w,:} \neq {\bf 0}$ and hence, using \eqref{eq:non_negative_M_each_entry_paper_1}, there exists a first instant $t'$, where $1\leq t'<t\leq n$, such that ${\bf M}_{w,t'} > 0$. 
If $t' > 1$, then $M_{w,t'-1} = 0$ and since $M_{w,t'} > 0$, equation \eqref{eq:right_birth relative_to_father_birth_paper_1} implies that either $M_{v,t'-1}$ or $M_{v,t'}$ is non-zero, which contradicts the fact that $M_{v,k} = 0$ for all $ 1\leq k \leq t$, including at $t'$ and $t'-1$. 
If $t' = 1$, then $M_{w,1} > 0$, and \eqref{eq:right_birth relative_to_father_birth_paper_1} implies that $M_{v,1} > 0$, which again contradicts the fact that $M_{v,k} = 0$ for all $ 1\leq k \leq t$, including at $t' = 1$.
\end{proof}

{\em Proof that \eqref{eq:non_negative_M_each_entry_paper_1}, \eqref{eq:non_zero_M_paper_1}, and \eqref{eq:right_form_M_condition_paper_1} imply \eqref{eq:longitudional_condition_bento_1_paper_1}:}
\\
Expression \eqref{eq:right_form_M_condition_paper_1} represents multiple conditions, one for each $t$ and $v$.
Each condition from \eqref{eq:right_form_M_condition_paper_1}, for a fixed $v$ and fixed $t\geq 2$, means that if $v$ is alive sometime before and including $t-1$ but is dead at time $t-1$, then it must be dead at time $t$. 
By induction on time for a fixed $v$, \eqref{eq:right_form_M_condition_paper_1} means that either (a) $v$ is never observed being born, i.e. $v$ is always dead or always alive, or (b) if $v$ is observed being born sometime between $t=1$ and $t = n$, then after its birth, if it dies, it dies forever. In other words, if it is observed being born, it is only born once. 

The case where $v$ is always dead is excluded by the assumption that \eqref{eq:non_zero_M_paper_1} holds, so either (b) holds or (c) $v$ is always alive.
Note that (b) is an \emph{if statement}. It is possible that a mutant is never observed being born and is never always zero, if it is always non-zero during the observation window, in which case \eqref{eq:right_form_M_condition_paper_1} also holds. We will prove that (b) and (c) imply \eqref{eq:longitudional_condition_bento_1_paper_1}.
We consider all possible scenarios for ${\bf M}_{v,:}$ compatible with (b) or (c) and show that \eqref{eq:longitudional_condition_bento_1_paper_1} holds in all of them. 
These scenarios are the following: 
either ${\bf M}_{v,:}$ is observed always positive; or it starts positive and is observed going to zero and remains zero; or it starts at zero and is observed becoming positive at some point and remains positive; or it starts at zero and is observed becoming positive at some point and then goes back to zero at some point. 

In scenario 1, $t'^{\min}_v=1 < M_1= t'^{\max}_v$.
In scenario 2, $t'^{\min}_v=1$ and $1<t'^{\max}_v\leq n$.
In scenario 3,  $1<t'^{\min}_v < t'^{\max}_v=M_1$. 
In scenario 4,  $1<t'^{\min}_v < t'^{\max}_v\leq n$. 

{\em Proof that \eqref{eq:longitudional_condition_bento_1_paper_1} implies  \eqref{eq:non_zero_M_paper_1} and \eqref{eq:right_form_M_condition_paper_1}:}
\\
If ${\bf M}_{v,:}$ is always zero then ${t'}^{\min}_v = {t'}^{\max}_v = 1$, which violates 
\eqref{eq:longitudional_condition_bento_1_paper_1}. Hence \eqref{eq:longitudional_condition_bento_1_paper_1} implies \eqref{eq:non_zero_M_paper_1}.

If ${\bf M}_{v,:}$ is not always zero, we cannot have a zero between two non-zero instants during the observation window; otherwise, since the definition of ${t'}^{\min}_v$ involves a $\max$, this would lead to ${t'}^{\min}_v > {t'}^{\max}_v$, violating \eqref{eq:longitudional_condition_bento_1_paper_1}. Therefore, if ${\bf M}_{v,:}$ ever becomes zero after being non-zero, it must remain zero, which implies that \eqref{eq:right_form_M_condition_paper_1} holds.

{\em Proof that \eqref{eq:non_negative_M_each_entry_paper_1}, \eqref{eq:non_zero_M_paper_1}, \eqref{eq:right_form_M_condition_paper_1}, and \eqref{eq:right_birth relative_to_father_death_paper_1} imply $t'^{\min}_w \geq t'^{\min}_v$, i.e. the lower bound in  \eqref{eq:longitudional_condition_bento_2_paper_1}:}
\\
We have already proved above that \eqref{eq:non_zero_M_paper_1} and \eqref{eq:right_form_M_condition_paper_1} imply that if a mutant is observed to be born, then it is born only once.
Expression \eqref{eq:right_birth relative_to_father_death_paper_1} represents multiple conditions, one for each $t$ and $v$.
Each condition from \eqref{eq:right_birth relative_to_father_death_paper_1} for a fixed $v$, $w$ and $t$, states that if mutant $v$ is not observed to exist before or at time $t$, then its child $w$ must not exist at time $t$. 
By induction on time, this implies that $w$ cannot exist before $v$ is observed to exist.

We now consider all possible scenarios for ${\bf M}_{v,:}$ and ${\bf M}_{w,:}$ that are compatible with $v$ and $w$ being observed to be born at most once and with neither being always zero, and show that in these scenarios either \eqref{eq:right_birth relative_to_father_death_paper_1} does not hold, or, if \eqref{eq:right_birth relative_to_father_death_paper_1} holds, then $t'^{\min}_w \geq t'^{\min}_v$. 
These scenarios are:  both $v$ and $w$ are always non-zero; mutant $v$ is always non-zero and $w$ is observed being born; mutant $w$ is always non-zero and $v$ is observed being born; both $v$ and $w$ are observed being born. 
Note that without assuming \eqref{eq:non_zero_M_paper_1} and \eqref{eq:right_form_M_condition_paper_1}, it could be that $v$ is observed being born multiple times, and the fact that $t'^{\min}_v$ is defined using a $\max$ would allow both \eqref{eq:right_birth relative_to_father_death_paper_1} to hold and $t'^{\min}_w < t'^{\min}_v$.

In scenario 1, $t'^{\min}_v = t'^{\min}_w = 1$. In scenario 2, $t'^{\min}_v = 1 < t'^{\min}_w$.
In scenario 3, $w$ is observed to exist at a time when $v$ does not exist, which violates \eqref{eq:right_birth relative_to_father_death_paper_1}.
In scenario 4, if \eqref{eq:right_birth relative_to_father_death_paper_1} holds, then $w$ cannot be born before $v$, otherwise, since both $v$ and $w$ are born only once, this would mean $v$ is still zero when $w$ is non-zero, contradicting \eqref{eq:right_birth relative_to_father_death_paper_1}. Therefore, $w$ becomes non-zero no sooner than $v$ becomes non-zero, and thus $t'^{\min}_w \geq t'^{\min}_v$.

{\em Proof that \eqref{eq:non_negative_M_each_entry_paper_1}, \eqref{eq:non_zero_M_paper_1}, \eqref{eq:right_form_M_condition_paper_1}, 
 \eqref{eq:right_birth relative_to_father_birth_paper_1}
and \eqref{eq:right_birth relative_to_father_death_paper_1}  imply $t'^{\min}_w \leq t'^{\max}_v$ (i.e., the upper bound in  \eqref{eq:longitudional_condition_bento_2_paper_1}):}
\\
We have already proved above that \eqref{eq:non_zero_M_paper_1} and \eqref{eq:right_form_M_condition_paper_1} imply that no mutant is observed to be always zero and that if a mutant is observed to be born, it does so only once. Conditions \eqref{eq:non_zero_M_paper_1} and \eqref{eq:right_form_M_condition_paper_1} also imply that, if a mutant is observed to die (i.e., going from non-zero to zero during the observation window), it does so only once, and remains zero afterwards.

Expression \eqref{eq:right_birth relative_to_father_birth_paper_1} represents multiple conditions, one for each $t$ and $v$.
Each condition from \eqref{eq:right_birth relative_to_father_birth_paper_1} for a fixed $v$, $w$ and $t$, states that if mutant $v$ has a child $w$ and if this child is observed to be born at time $t$ between $2$ and $n$, then its parent must exist at time $t-1$ or time $t$.

We now consider all possible scenarios for ${\bf M}_{v,:}$ and ${\bf M}_{w,:}$ that are compatible with no mutant being always zero and mutants being observed to be born or to die at most once. 
These are: 
$v$ is never observed to die and $w$ is never observed to be born; 
$v$ is never observed to die and $w$ is observed to be born;  
$v$ is observed to die and $w$ is never observed to be born; 
$v$ is observed to die and $w$ is observed to be born; 
We show that in each of these scenarios either 
\eqref{eq:right_birth relative_to_father_birth_paper_1} does not hold, or if it does, then $t'^{\min}_w \leq t'^{\max}_v$.

In scenario 1, child $w$ is always non-zero so $t'^{\min}_w=1$. 
At the same time, \eqref{eq:right_birth relative_to_father_death_paper_1} implies that $t'^{\min}_v \leq  t'^{\min}_w$, so $t'^{\min}_v =1$. Therefore, $v$ is always non-zero so 
\eqref{eq:right_birth relative_to_father_birth_paper_1} holds and $t'^{\max}_v=M_1$. 
Therefore, $t'^{\min}_w\leq t'^{\max}_v$.
In scenario 3, $w$ is always non-zero so \eqref{eq:right_birth relative_to_father_birth_paper_1} holds and $t'^{\min}_w=1$. Since $t'^{\max}_v \geq 1$ always, we have $t'^{\min}_w\leq t'^{\max}_v$.
In scenarios 2 and 4, $w$ is observed to be born, so 
$2 \leq {t'}^{\min}_w \leq n$. 
If \eqref{eq:right_birth relative_to_father_birth_paper_1} holds, then either $M_{v,{t'}^{\min}_w} > 0$ or $M_{v,{t'}^{\min}_w-1} > 0$. Therefore, since \eqref{eq:non_zero_M_paper_1} and \eqref{eq:right_form_M_condition_paper_1} imply that $v$ can die at most once and remains zero thereafter, it must be that at $t={t'}^{\min}_w-1$ mutant $v$ has not died yet. 
Hence ${t'}^{\max}_v \geq {t'}^{\min}_w$.

{\em Proof that \eqref{eq:longitudional_condition_bento_1_paper_1} and $t'^{\min}_w \in [t'^{\min}_v, t'^{\max}_v]$, i.e. \eqref{eq:longitudional_condition_bento_2_paper_1} implies \eqref{eq:right_birth relative_to_father_birth_paper_1}:}
\\
We already proved that \eqref{eq:longitudional_condition_bento_1_paper_1} implies \eqref{eq:non_zero_M_paper_1} and \eqref{eq:right_form_M_condition_paper_1}, which imply that no mutant is observed to be always zero,  and that if a mutant is observed to be born, it does so at most once, and if it dies, it remains zero thereafter.

We now consider all possible scenarios for ${\bf M}_{v,:}$ and ${\bf M}_{w,:}$ that are compatible with $v$ and $w$ being observed to be born or to die at most once and not being always zero, and show that in these scenarios either \eqref{eq:longitudional_condition_bento_2_paper_1} does not hold, or, if \eqref{eq:longitudional_condition_bento_2_paper_1} holds, then \eqref{eq:right_birth relative_to_father_birth_paper_1} holds. 
These scenarios are:  $v$ is never observed to die
and $w$ is never observed to be born; 
$v$ is never observed to die and $w$ is observed to be born;  
$v$ is observed to die and $w$ is never observed to be born; 
$v$ is  observed to die and $w$ is observed to be born;

In scenario 1, 
$v$ is always non-zero, hence \eqref{eq:right_birth relative_to_father_birth_paper_1} holds.

In scenario 3, since $w$ is never observed to be born, we only need to prove that \eqref{eq:right_birth relative_to_father_birth_paper_1} holds for $t = 1$. 
If \eqref{eq:longitudional_condition_bento_2_paper_1} holds, then $t'^{\min}_v \leq t'^{\min}_w = 1$, so $t'^{\min}_v=1$. 
This means that $v$ is alive at $t = 1$, and hence
\eqref{eq:right_birth relative_to_father_birth_paper_1} holds for $t = 1$.

In scenarios 2 and 4, we have that ${t'}^{\min}_w > 1$, and thus if ${t'}^{\min}_w \in [{t'}^{\min}_v,{t'}^{\max}_v]$, then at $t = {t'}^{\min}_w-1 \geq 1$ mutant $v$ has not died yet. 
Thus, it is either alive or has not yet been born. 
In the first case, $M_{v,t'^{\min}_w-1} > 0$, so \eqref{eq:right_birth relative_to_father_birth_paper_1} holds. In the second case, since $t'^{\min}_w \geq t'^{\min}_v$, it must be that $v$ is born at $t = t'^{\min}_w$, so $M_{v,{t'}^{\min}_w} > 0$, and \eqref{eq:right_birth relative_to_father_birth_paper_1} holds.

\end{proof}

\section{Alternative statement for last assumption in Assumption \ref{ass:continuity_of_M_for_calder_conditions_degeneracy_computations_paper_1}} \label{app:last_condition_forces_all_mutants_to_be_alive_is_the_same_as_not_doing_it_paper_1}

Lemma \ref{th:ass_1_with_all_non_zero_and_ass_1_with_a_few_non_zero_paper_1} shows that in Assumption \ref{ass:continuity_of_M_for_calder_conditions_degeneracy_computations_paper_1}, we could have stated the last assumption by conditioning on all the mutants in both $\mathcal{S}_1$ and $\mathcal{S}_2$ being alive, not just on at least one mutant being alive in each of $\mathcal{S}_1$ and in $\mathcal{S}_2$.

\begin{lem}\label{th:ass_1_with_all_non_zero_and_ass_1_with_a_few_non_zero_paper_1}
Let $\mathcal{S}_1$ and $\mathcal{S}_2$ be two disjoint multisets of mutants.
If the joint distribution of abundances of  $\mathcal{S}_1$ and $\mathcal{S}_2$ at time $t$ conditioned on $U^*$ and on all mutants being alive at time $t$ is absolutely continuous, then, 
if $\mathbb{P}(  M^*_{\mathcal{S}_1,t},M^*_{\mathcal{S}_2,t}>0\mid U^*) > 0$, we have
$$\mathbb{P}( M^*_{\mathcal{S}_1,t} = M^*_{\mathcal{S}_2,t} \mid  M^*_{\mathcal{S}_1,t}, M^*_{\mathcal{S}_2,t}>0,U^*) = 0.$$
\end{lem}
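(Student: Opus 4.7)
The plan is to decompose the conditioning event $\{M^*_{\mathcal{S}_1,t}>0,\, M^*_{\mathcal{S}_2,t}>0\}$ into finitely many disjoint subevents, each specifying exactly which mutants in $\mathcal{S}_1\cup\mathcal{S}_2$ are alive at time $t$, and then apply absolute continuity on each subevent separately. Concretely, for every pair of non-empty sub-multisets $\mathcal{A}_1\subseteq \mathcal{S}_1$ and $\mathcal{A}_2\subseteq \mathcal{S}_2$ I would introduce the event $E_{\mathcal{A}_1,\mathcal{A}_2}$ that the alive mutants in $\mathcal{S}_1$ are exactly $\mathcal{A}_1$ (with matching multiplicities) and that the alive mutants in $\mathcal{S}_2$ are exactly $\mathcal{A}_2$. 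These events are pairwise disjoint, and their union equals $\{M^*_{\mathcal{S}_1,t}>0,\, M^*_{\mathcal{S}_2,t}>0\}$ (up to a null set), since requiring both sums to be positive is equivalent to requiring at least one alive mutant in each multiset.

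Next, I would observe that on $E_{\mathcal{A}_1,\mathcal{A}_2}$ the dead mutants contribute zero, so $M^*_{\mathcal{S}_1,t}=M^*_{\mathcal{A}_1,t}$ and $M^*_{\mathcal{S}_2,t}=M^*_{\mathcal{A}_2,t}$. The equality of interest therefore reduces, on this subevent, to the linear equation $M^*_{\mathcal{A}_1,t}=M^*_{\mathcal{A}_2,t}$. Because $\mathcal{S}_1$ and $\mathcal{S}_2$ are disjoint (hence so are $\mathcal{A}_1$ and $\mathcal{A}_2$) and both $\mathcal{A}_i$ are non-empty, this is a non-trivial affine equation in the $|\mathcal{A}_1|+|\mathcal{A}_2|$ coordinates $\{M^*_{v,t}:v\in\mathcal{A}_1\cup\mathcal{A}_2\}$; it therefore cuts out a hyperplane of Lebesgue measure zero in $\mathbb{R}^{|\mathcal{A}_1|+|\mathcal{A}_2|}$.

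I would then invoke the hypothesis of absolute continuity, applied to the submultisets $\mathcal{A}_1,\mathcal{A}_2$ in place of $\mathcal{S}_1,\mathcal{S}_2$: conditioned on $U^*$ and on the event $E_{\mathcal{A}_1,\mathcal{A}_2}$ (which forces all mutants in $\mathcal{A}_1\cup\mathcal{A}_2$ to be alive and pins down the status of the remaining mutants in $\mathcal{S}_1\cup\mathcal{S}_2$), the joint distribution of the alive abundances is absolutely continuous with respect to Lebesgue measure. Consequently, the measure-zero hyperplane just identified carries probability zero, so $\mathbb{P}(M^*_{\mathcal{A}_1,t}=M^*_{\mathcal{A}_2,t}\mid E_{\mathcal{A}_1,\mathcal{A}_2},U^*)=0$. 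A final application of the law of total probability over the finite collection of pairs $(\mathcal{A}_1,\mathcal{A}_2)$ yields the claimed result.

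The main obstacle is the formal bridge from the hypothesis, which literally concerns the joint distribution conditioned on all mutants in $\mathcal{S}_1\cup\mathcal{S}_2$ being alive, to the conditional distribution that arises on each $E_{\mathcal{A}_1,\mathcal{A}_2}$, where only a sub-multiset is required to be alive and the complementary mutants are required to be dead. The cleanest fix is to read the hypothesis as asserting absolute continuity for every non-empty pair of disjoint sub-multisets (a natural and mild strengthening), or to argue that the joint law of the alive coordinates, given any pattern of alive/dead mutants with positive probability, inherits absolute continuity from the ambient continuous model used in Assumption~\ref{ass:continuity_of_M_for_calder_conditions_degeneracy_computations_paper_1}. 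A secondary, purely bookkeeping obstacle is handling multisets carefully so that repeated indices in $\mathcal{A}_1$ or $\mathcal{A}_2$ are summed with multiplicity, without accidentally making the linear equation $M^*_{\mathcal{A}_1,t}=M^*_{\mathcal{A}_2,t}$ trivial; disjointness of $\mathcal{S}_1$ and $\mathcal{S}_2$ (and hence of $\mathcal{A}_1$ and $\mathcal{A}_2$) ensures this does not happen.
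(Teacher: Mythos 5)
Your proposal is correct and follows essentially the same route as the paper: decompose over the alive/dead patterns of the mutants in $\mathcal{S}_1\cup\mathcal{S}_2$, reduce the equality on each piece to a linear relation among the alive coordinates, and kill that relation with absolute continuity. The one ``obstacle'' you flag is resolved in the paper more simply than by strengthening the hypothesis: since only an upper bound of zero is needed, the paper drops the requirement that the complementary mutants be dead (via $\mathbb{P}(A\cap B)\leq\mathbb{P}(B)$) before conditioning, so it only ever conditions on the mutants of $\mathcal{A}_1\cup\mathcal{A}_2$ being alive, which is exactly the form of conditioning the absolute-continuity hypothesis supplies.
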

\begin{proof}
We consider partitions of $\mathcal{S}_1$ of the form $\mathcal{S}_1 = \mathcal{A} \cup \mathcal{A}^c$, where $\mathcal{A}^c$ is the complement of $\mathcal{A}$ in $\mathcal{S}_1$, and likewise we partition $\mathcal{S}_2 = \mathcal{B} \cup \mathcal{B}^c$.
In the sums over $\mathcal{A}$ and $\mathcal{B}$ below, we consider all different ways of partitioning $\mathcal{S}_1$ such that at time $t$ the mutants in $\mathcal{A}$ are alive, those in  $\mathcal{A}^c$ are dead, and $\mathcal{A}$ is never empty, and similarly for $\mathcal{S}_2$. Note that if there are multiple copies of a given mutant in e.g. $\mathcal{S}_1$, it is not possible that one copy is in $\mathcal{A}$ and another one is in $\mathcal{A}^c$ .
We  define ${\bf M}^*_{\mathcal{S},t} \equiv \{M^*_{j,t}: j\in \mathcal{S} \}$; note the boldface to distinguish this from the previously defined scalar $M^*_{\mathcal{S},t}$.
Write 
\begin{align*}
&\mathbb{P}( M^*_{\mathcal{S}_1,t} = M^*_{\mathcal{S}_2,t} \land  M^*_{\mathcal{S}_1,t} > 0 \land  M^*_{\mathcal{S}_2,t}>0\mid U^*) \\
&= \sum_{\mathcal{A},\mathcal{B}} \mathbb{P}(  M^*_{\mathcal{A},t} = M^*_{\mathcal{B},t} \land  {\bf M}^*_{\mathcal{A},t} > 0 \land  M^*_{\mathcal{A}^c,t} = 0 \land {\bf M}^*_{\mathcal{B},t} > 0\\
&\qquad\qquad\land  M^*_{\mathcal{B}^c,t}=0\mid U^*)\\
&\leq \sum_{\mathcal{A},\mathcal{B}} 
\mathbb{P}(  M^*_{\mathcal{A},t} = M^*_{\mathcal{B},t} \land  {\bf M}^*_{\mathcal{A},t} > 0  \land {\bf M}^*_{\mathcal{B},t} > 0\mid U^*) \\
&=\sum_{\substack{\mathcal{A},\mathcal{B}:\\ \mathbb{P}({\bf M}^*_{\mathcal{A},t} > 0  \land {\bf M}^*_{\mathcal{B},t} > 0 \mid U^* ) > 0}} \hspace{-1cm}
\mathbb{P}( M^*_{\mathcal{A},t} -  M^*_{\mathcal{B},t} = 0\mid   {\bf M}^*_{\mathcal{A},t} ,  {\bf M}^*_{\mathcal{B},t} > 0,U^*)\\
&\qquad \qquad \times \mathbb{P}({\bf M}^*_{\mathcal{A},t} , {\bf M}^*_{\mathcal{B},t} > 0\mid U^*).
\end{align*}

Observe that, conditioned on ${\bf M}^*_{\mathcal{A},t} > 0  \land  {\bf M}^*_{\mathcal{B},t} > 0$, the random variable $M^*_{\mathcal{A},t} - M^*_{\mathcal{B},t}$ is a linear function of random variables with an absolutely continuous joint distribution and hence is also absolutely continuous. 
Therefore, the probability that it takes any particular value is zero, 
so
$$\mathbb{P}( M^*_{\mathcal{A},t} -  M^*_{\mathcal{B},t} = 0\mid   {\bf M}^*_{\mathcal{A},t} ,  {\bf M}^*_{\mathcal{B},t} > 0,U^*) = 0,$$ 
and thus the entire sum equals zero.
\end{proof}

\section{Main lemmas for proving Theorem \ref{th:probability_of_solution_without_long_cond_and_with_is_the_same_paper_1}} \label{app:auxiliary_results_for_proving_prob_equiv_calder_paper_1}

Theorem \ref{th:expected_number_of_solutions_without_long_cond_and_with_is_the_same_paper_1} follows from Theorem \ref{th:probability_of_solution_without_long_cond_and_with_is_the_same_paper_1}. To prove Theorem \ref{th:probability_of_solution_without_long_cond_and_with_is_the_same_paper_1} we make use of Lemmas \ref{th:M_not_zero_paper_1}, \ref{th:new_M_statistifes_death_longitudional_condition_paper_1}, \ref{th:M_infered_no_zero_when_Mstart_in_calder_is_non_negative_paper_1}, and \ref{th:proof_that_M_star_satisfies_t_min_child_t_max_father_relationship_implies_M_also_does_paper_1}.

\begin{lem}\label{th:M_not_zero_paper_1}
Let $M^*$ satisfy Assumption \ref{ass:continuity_of_M_for_calder_conditions_degeneracy_computations_paper_1}. 
We have that 
\begin{equation}
\mathbb{P}(\text{Assumption \ref{ass:assumption_no_root_node_is_zero_paper_1} holds} 
) = 1.
\end{equation}
\end{lem}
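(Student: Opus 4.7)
The plan is to chain together Assumption \ref{ass:continuity_of_M_for_calder_conditions_degeneracy_computations_paper_1}, which forces $M^*$ to satisfy the ELC almost surely, with the characterization of the ELC given by Lemma \ref{th:quasi_convex_long_cond_M_no_eps_paper_1}. First I would invoke the part of Assumption \ref{ass:continuity_of_M_for_calder_conditions_degeneracy_computations_paper_1} that states, conditioned on any $U^*$, $M^*$ is non-negative and satisfies the ELC with probability one; integrating over $U^*$ then gives $\mathbb{P}(M^*\in\mathcal{L}) = 1$.

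Next, I would apply Lemma \ref{th:quasi_convex_long_cond_M_no_eps_paper_1}, which asserts that every $M\in\mathcal{L}$ satisfies condition \eqref{eq:non_zero_M_paper_1}, i.e.\ ${\bf M}_{v,:}\neq {\bf 0}$ for every mutant $v$. Specializing the universally quantified index $v$ to $v = r^*$ (the root of $U^*$) yields ${\bf M}^*_{r^*,:}\neq {\bf 0}$ almost surely, which is equivalent to the existence of some $t\in[n]$ with $M^*_{r^*,t} > 0$. This is exactly the content of Assumption \ref{ass:assumption_no_root_node_is_zero_paper_1}, so $\mathbb{P}(\text{Assumption \ref{ass:assumption_no_root_node_is_zero_paper_1} holds}) = 1$.

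There is no genuine obstacle: the lemma is essentially a one-line corollary of \eqref{eq:non_zero_M_paper_1}. The only mild subtlety is that $r^*$ is itself a function of the random $U^*$, but since \eqref{eq:non_zero_M_paper_1} is universally quantified over all mutants, the conclusion holds for whichever node happens to be the root in any realization of $U^*$, so no conditioning argument is needed beyond the one already built into Assumption \ref{ass:continuity_of_M_for_calder_conditions_degeneracy_computations_paper_1}.
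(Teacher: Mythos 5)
Your proposal is correct and follows essentially the same route as the paper: both arguments reduce to the fact that the ELC exclude ${\bf M}^*_{v,:}={\bf 0}$ for every mutant $v$, and then specialize to $v=r^*$. The only cosmetic difference is that you extract this fact from condition \eqref{eq:non_zero_M_paper_1} via Lemma \ref{th:quasi_convex_long_cond_M_no_eps_paper_1}, whereas the paper cites the equivalence of \eqref{eq:extra_condition_to_add_to_longitudional_conditions_paper_1} with \eqref{eq:extra_long_constraint_no_zero_M_paper_1} directly; both are valid one-line consequences of Assumption \ref{ass:continuity_of_M_for_calder_conditions_degeneracy_computations_paper_1}.
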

\begin{proof}
Condition \eqref{eq:extra_condition_to_add_to_longitudional_conditions_paper_1} in Assumption \ref{ass:continuity_of_M_for_calder_conditions_degeneracy_computations_paper_1} implies \eqref{eq:extra_long_constraint_no_zero_M_paper_1} (cf. Section \ref{sec:redefine_birth_death_time_original_paper_1}). Hence, the probability that ${\bf M}^*_{r^*,:} = {\bf 0}$ is zero, and thus Assumption \ref{ass:assumption_no_root_node_is_zero_paper_1} holds with probability $1$.
\end{proof}

\begin{lem}\label{th:new_M_statistifes_death_longitudional_condition_paper_1}
Let $M^*$ satisfy Assumption \ref{ass:continuity_of_M_for_calder_conditions_degeneracy_computations_paper_1}. %
 Let $U^*$ and $U$ be fixed ancestry matrices, and $M = U^{-1} U^* M^*$.
 Then, 
\begin{equation}
\mathbb{P}(M \text{does not satisfy \eqref{eq:right_form_M_condition_paper_1}} 
\;\land\;M \geq 0\mid U,U^*) = 0.
\end{equation}
\end{lem}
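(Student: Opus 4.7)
The plan is to fix $v \in [q]$ and $t \in \{2,\dots,n\}$ and bound $\mathbb{P}(A_{v,t} \cap \{M \geq 0\} \mid U, U^*)$, where $A_{v,t} = \{M_{v,t-1} = 0,\ M_{v,t} > 0,\ \sum_{k<t} M_{v,k} > 0\}$; the lemma then follows by a union bound over the finitely many pairs $(v,t)$. The first step is to expand $M = (I - T) U^* M^*$ row-wise as $M_{v,s} = \sum_j c_{v,j} M^*_{j,s}$, where $c_{v,j}$ is the integer counting how many times $j$ appears in $\Delta^* v$ minus how many times it appears in $\bigcup_{w \in \partial v} \Delta^* w$; in particular $c_{v,j} > 0$ iff $j \in \Delta^* v$ and $j \notin \Delta^* w$ for any $w \in \partial v$.

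Next I condition on the ``alive pattern'' $\mathcal{A}_s = \{j : M^*_{j,s} > 0\}$ for all $s$. Collecting the positive and negative coefficients $c_{v,j}$ with $j \in \mathcal{A}_{t-1}$ into two disjoint multisets and invoking Assumption \ref{ass:continuity_of_M_for_calder_conditions_degeneracy_computations_paper_1} (two positive sums of alive mutants over disjoint multisets coincide only with probability zero), I conclude that, conditional on the pattern, $M_{v,t-1} = 0$ has positive probability only when $c_{v,j} = 0$ for every $j \in \mathcal{A}_{t-1}$. I restrict attention to such patterns, and the goal becomes to show that the remaining events $M_{v,t} > 0$, $\sum_{k<t} M_{v,k} > 0$, and $M \geq 0$ are then jointly impossible.

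From $M_{v,t} > 0$ I extract some $j^* \in \mathcal{A}_t \setminus \mathcal{A}_{t-1}$ with $c_{v,j^*} > 0$; by Lemma \ref{th:quasi_convex_long_cond_M_no_eps_paper_1} applied to $M^*$, $j^*$ is newborn at $t$, so $M^*_{j^*,s} = 0$ for all $s < t$. From $\sum_{k<t} M_{v,k} > 0$ I similarly extract $j^{**} \in \Delta^* v$ alive at some $t_1 < t-1$ and dead at $t-1$. I then walk up the ancestor chain of $j^*$ in $U^*$: at each node $u$, either $u \in \mathcal{A}_{t-1}$ and the walk stops, or $u \in \mathcal{A}_t \setminus \mathcal{A}_{t-1}$, in which case the parent $p^*(u)$ lies in $\mathcal{A}_{t-1} \cup \mathcal{A}_t$ by \eqref{eq:right_birth relative_to_father_birth_paper_1} applied to $M^*$, and the walk continues. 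Since $v$ is an ancestor of $j^*$ in $U^*$, the walk terminates at some $u \in \Delta^* v$ falling into one of: (a) $u$ a strict descendant of $v$ alive at $t-1$; (b) $u = v$ alive at $t-1$; or (c) $u = v$ newborn at $t$.

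The main obstacle is closing out these three sub-cases uniformly. In (a) and (b), $c_{v,u} = 0$ together with $u \in \Delta^* v$ forces $u \in \Delta^* w_0$ for exactly one child $w_0 \in \partial v$ in $U$; descending from $u$ to $j^*$ in $U^*$ then gives $j^* \in \Delta^* w_0 \subseteq \bigcup_{w \in \partial v}\Delta^* w$, contradicting $c_{v,j^*} > 0$. In (c), $v$ newborn at $t$ forces $M^*_{v,s} = 0$ for all $s < t$, so iterating \eqref{eq:right_birth relative_to_father_death_paper_1} down the $U^*$ subtree rooted at $v$ gives $M^*_{j,t_1} = 0$ for every $j \in \Delta^* v$, contradicting that $j^{**}$ is alive at $t_1$. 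A few corner cases --- $j^* = v$, $j^{**} = v$, $v = r^*$, and $\partial v = \emptyset$ --- need brief separate checks to confirm that the ancestor walk and the multiplicity calculation still yield the same contradictions.
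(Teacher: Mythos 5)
Your proposal is correct, but it reaches the conclusion by a genuinely different mechanism than the paper. Both arguments start from the same row-wise expansion $M_{v,:}=\sum_j c_{v,j}{\bf M}^*_{j,:}$ (the paper packages the positive and negative coefficients into the disjoint multisets $\mathcal{S}^+_v,\mathcal{S}^-_v$ of Lemma \ref{th:representation_of_M_as_difference_paper_1}) and both use the disjoint-multiset clause of Assumption \ref{ass:continuity_of_M_for_calder_conditions_degeneracy_computations_paper_1} to rule out cancellations where positive and negative parts coincide while nonzero. From there the routes diverge: the paper reduces a violation of \eqref{eq:right_form_M_condition_paper_1} to an alive--dead--alive pattern of ${\bf M}_{v,:}$, invokes $v\in\mathcal{S}^+_v$ (Lemma \ref{th:i_is_in_S_plus_paper_1}) to force some $U^*$-descendant $j_1\in\mathcal{S}^+_v$ to satisfy $t^{\min}_{j_1}=t^{\max}_v$, and kills that event with the birth/death-coincidence clause of the assumption. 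You instead condition on the alive pattern, use the disjoint-sums clause once to force $c_{v,j}=0$ for every mutant alive at $t-1$ whenever $M_{v,t-1}=0$, and then close with a purely deterministic contradiction: the multiplicity count $c_{v,j^*}\le 1-\#\{w\in\partial v:j^*\in\Delta^* w\}\le 0$ in cases (a)/(b), and downward propagation of \eqref{eq:right_birth relative_to_father_death_paper_1} in case (c), relying only on the almost-sure validity of the ELC for $M^*$ via Lemma \ref{th:quasi_convex_long_cond_M_no_eps_paper_1}. What your route buys is that it never needs the child-birth-never-coincides-with-parent-death clause of Assumption \ref{ass:continuity_of_M_for_calder_conditions_degeneracy_computations_paper_1}, nor Lemma \ref{th:i_is_in_S_plus_paper_1}; what it costs is the case analysis of the ancestor walk and the corner cases you flag. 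Two small points to make explicit in a write-up: the conditioning on the alive pattern should be carried out as a finite sum over patterns (as in Lemma \ref{th:ass_1_with_all_non_zero_and_ass_1_with_a_few_non_zero_paper_1}), since the assumption is stated for fixed deterministic multisets; and the case $t=2$ is vacuous because $\sum_{k<t}M_{v,k}=M_{v,t-1}=0$ there.
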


\begin{lem}\label{th:M_infered_no_zero_when_Mstart_in_calder_is_non_negative_paper_1}
Let $M^*$ satisfy Assumption \ref{ass:continuity_of_M_for_calder_conditions_degeneracy_computations_paper_1}. %
 Let $U^*$ and $U$ be fixed ancestry matrices, and $M = U^{-1} U^* M^*$.
 %
 Then,
\begin{equation}
\mathbb{P}(M \text{violates \eqref{eq:non_zero_M_paper_1}} 
\;\land\;M \geq 0\mid U, U^*) = 0.
\end{equation}
\end{lem}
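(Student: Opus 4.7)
The plan is to prove the stronger statement that $\mathbb{P}({\bf M}_{v,:}={\bf 0}\mid U,U^*)=0$ for every $v\in[q]$ and then conclude by a union bound; the $M\geq 0$ clause will play only a minor role. First I would expand $M=U^{-1}U^*M^*=(I-T)U^*M^*$ row by row to obtain
$$M_{v,t}=M^*_{\mathcal{A}_v,t}-M^*_{\mathcal{B}_v,t},$$
where $\mathcal{A}_v=\Delta^* v$ is an ordinary set and $\mathcal{B}_v=\bigcup_{w\in\partial v}\Delta^* w$ is a multiset indexed by the children of $v$ in $U$ (repeated elements retained). Canceling common elements, I would set $\mathcal{A}'_v=\mathcal{A}_v-\mathcal{B}_v$ and $\mathcal{B}'_v=\mathcal{B}_v-\mathcal{A}_v$, so that $M_{v,t}=M^*_{\mathcal{A}'_v,t}-M^*_{\mathcal{B}'_v,t}$ with $\mathcal{A}'_v$ and $\mathcal{B}'_v$ disjoint as multisets.

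The hard part, which I expect to be the main obstacle, is showing that $\mathcal{A}_v\neq\mathcal{B}_v$ as multisets for every $v$ and every pair $U,U^*$, so that $\mathcal{A}'_v\cup\mathcal{B}'_v\neq\emptyset$. If $\partial v=\emptyset$ this is immediate because $v\in\mathcal{A}_v$ while $\mathcal{B}_v=\emptyset$. Otherwise, suppose for contradiction $\mathcal{A}_v=\mathcal{B}_v$. Since $\mathcal{A}_v$ is a set, the multiplicity of $v$ in $\mathcal{B}_v$ must be exactly one, so there exists a unique $w_0\in\partial v$ with $v\in\Delta^* w_0$; in other words, $w_0$ is a strict ancestor of $v$ in $U^*$. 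But then $w_0\in\Delta^* w_0\subseteq\mathcal{B}_v=\mathcal{A}_v=\Delta^* v$, forcing $w_0$ to also be a strict descendant of $v$ in $U^*$, which is impossible in a tree.

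With $\mathcal{A}'_v\cup\mathcal{B}'_v\neq\emptyset$ in hand, I would finish with a case split. If exactly one of $\mathcal{A}'_v,\mathcal{B}'_v$ is empty---say $\mathcal{A}'_v=\emptyset$---then $M_{v,t}=-M^*_{\mathcal{B}'_v,t}$, so ${\bf M}_{v,:}={\bf 0}$ together with $M^*\geq 0$ forces ${\bf M}^*_{j,:}={\bf 0}$ for some $j\in\mathcal{B}'_v$, which contradicts condition~\eqref{eq:non_zero_M_paper_1} for $M^*$, an event of probability zero under Assumption~\ref{ass:continuity_of_M_for_calder_conditions_degeneracy_computations_paper_1} via Lemma~\ref{th:quasi_convex_long_cond_M_no_eps_paper_1}; the symmetric case is identical. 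If both $\mathcal{A}'_v$ and $\mathcal{B}'_v$ are nonempty, I would use the inclusion
$$\bigcap_{t=1}^{n}\{M^*_{\mathcal{A}'_v,t}=M^*_{\mathcal{B}'_v,t}\}\subseteq\bigcap_{t=1}^{n}\{M^*_{\mathcal{A}'_v,t}=M^*_{\mathcal{B}'_v,t}=0\}\cup\bigcup_{t=1}^{n}\{M^*_{\mathcal{A}'_v,t}=M^*_{\mathcal{B}'_v,t}>0\}$$
together with the absolute-continuity clause of Assumption~\ref{ass:continuity_of_M_for_calder_conditions_degeneracy_computations_paper_1} applied to the disjoint multisets $\mathcal{A}'_v,\mathcal{B}'_v$ (making each of the $n$ events in the finite union on the right vanish), and the ELC (making the simultaneous-zero event on the right vanish by the same always-zero-mutant argument as in the previous case). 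A union bound over $v\in[q]$ then yields $\mathbb{P}(\text{some row of }M\text{ is zero}\mid U,U^*)=0$, which is strictly stronger than the stated claim.
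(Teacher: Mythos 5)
Your proof is correct and follows essentially the same route as the paper's: the same row decomposition $M_{v,:}=M^*_{\mathcal{S}^+_v,:}-M^*_{\mathcal{S}^-_v,:}$ into disjoint multisets (Lemma~\ref{th:representation_of_M_as_difference_paper_1}), the same combinatorial fact that the two multisets cannot both be empty (first half of Lemma~\ref{th:i_is_in_S_plus_paper_1}), and the same two probabilistic inputs from Assumption~\ref{ass:continuity_of_M_for_calder_conditions_degeneracy_computations_paper_1} (no mutant is identically zero in $M^*$, and two disjoint nonempty multiset sums are almost surely not equal when positive). The only difference is organizational: the paper routes through the almost-sure event $i\in\mathcal{S}^+_i$ to land directly on the ``equal and positive'' case, whereas you handle the empty/nonempty and all-zero cases explicitly, which is an equally valid packaging of the same argument.
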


\begin{lem}\label{th:proof_that_M_star_satisfies_t_min_child_t_max_father_relationship_implies_M_also_does_paper_1}
Let $M^*$ satisfy Assumption \ref{ass:continuity_of_M_for_calder_conditions_degeneracy_computations_paper_1}. 
Let $U^*$ and $U$ be fixed ancestry matrices, and $M = U^{-1} U^* M^*$. Then,
\begin{equation}
\mathbb{P}(M \text{does not satisfy \eqref{eq:right_birth relative_to_father_birth_paper_1} }
\;\land\; M \geq 0 \mid U,U^*) = 0.
\end{equation}
\end{lem}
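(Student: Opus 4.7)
The plan is to fix a parent--child pair $(v,w)$ in $U$ (i.e.\ $\mathcal{T}_{v,w}=1$) and a time $t \in [n]$, and show that the event $E_{v,w,t}$ consisting of $\{M\geq 0\}$ together with a violation of \eqref{eq:right_birth relative_to_father_birth_paper_1} at $(v,w,t)$ has conditional probability zero; a union bound over the finitely many such triples then gives the lemma. Concretely, for $t\geq 2$ the event $E_{v,w,t}$ requires $M_{w,t}>0$, $M_{w,t-1}=0$, and $M_{v,t-1}=M_{v,t}=0$, while for $t=1$ only $M_{v,1}=0$ and $M_{w,1}>0$ are needed.

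To apply Assumption \ref{ass:continuity_of_M_for_calder_conditions_degeneracy_computations_paper_1} I first rewrite $M$ in terms of $M^*$. Using $U^{-1}=I-\mathcal{T}$ (see Section \ref{sec:background_ppm_paper_1}), for every node $j$ and sample $s$,
\begin{equation*}
M_{j,s} \;=\; F^*_{j,s}-\sum_{i\in\partial^U j}F^*_{i,s} \;=\; M^*_{\mathcal{A}_j,s}-M^*_{\mathcal{B}_j,s},
\end{equation*}
where $\partial^U j$ denotes the children of $j$ in $U$, $\mathcal{A}_j=\Delta^* j$, and $\mathcal{B}_j=\bigcup_{i\in\partial^U j}\Delta^* i$ is a multiset union with repetitions preserved. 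On $E_{v,w,t}$, $w\in\partial^U v$ together with $M\geq 0$ gives $F^*_{v,t}\geq F^*_{w,t}\geq M_{w,t}>0$, so $M_{v,t}=0$ becomes the key identity
\begin{equation*}
M^*_{\mathcal{A}_v,t} \;=\; M^*_{\mathcal{B}_v,t} \;>\; 0.
\end{equation*}

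A short tree argument shows $\mathcal{A}_v\neq\mathcal{B}_v$ as multisets: equality would simultaneously force some $i\in\partial^U v$ to be a strict $U^*$-ancestor of $v$ (to cover $v\in\mathcal{A}_v$) and every $i\in\partial^U v$ to be a $U^*$-descendant of $v$ (so that $\mathcal{B}_v\subseteq\mathcal{A}_v$), which is impossible in a rooted tree when $i\neq v$. Setting $\mathcal{S}_1=\mathcal{A}_v-\mathcal{B}_v$ and $\mathcal{S}_2=\mathcal{B}_v-\mathcal{A}_v$, the identity reduces to $M^*_{\mathcal{S}_1,t}=M^*_{\mathcal{S}_2,t}$. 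Assumption \ref{ass:continuity_of_M_for_calder_conditions_degeneracy_computations_paper_1}, together with Lemma \ref{th:ass_1_with_all_non_zero_and_ass_1_with_a_few_non_zero_paper_1}, kills with probability one the generic case in which $\mathcal{S}_1,\mathcal{S}_2$ are both nonempty and both $M^*_{\mathcal{S}_i,t}>0$.

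The main obstacle is the residual degenerate sub-cases, in which either one of $\mathcal{S}_1,\mathcal{S}_2$ is empty, or the overlap $\mathcal{A}_v\cap\mathcal{B}_v$ carries the entire positive mass with $M^*_{\mathcal{S}_1,t}=M^*_{\mathcal{S}_2,t}=0$. In each such configuration the identity forces $M^*_{j,t}=0$ for every $j$ lying in the nontrivial multiset difference, and running the same analysis on $M_{v,t-1}=0$ and $M_{w,t-1}=0$ produces analogous forced zeros at time $t-1$ and on the subtree rooted at $w$. The degenerate structural configurations force $w$ to be a $U^*$-descendant of $v$ (or $v$ a $U^*$-descendant of $w$). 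Invoking that $M^*$ satisfies the ELC with respect to $U^*$ (Assumption \ref{ass:continuity_of_M_for_calder_conditions_degeneracy_computations_paper_1} via Lemma \ref{th:quasi_convex_long_cond_M_no_eps_paper_1}), so that each mutant has a single contiguous $U^*$-alive interval and each $U^*$-child is born within its parent's lifetime, the enforced zeros propagate along $U^*$ from the relevant ancestor down through every element of $\Delta^* w$, yielding $F^*_{w,t}=0$ and contradicting $M_{w,t}>0$. The $t=1$ boundary reduces to the single-time specialization of the same argument. The delicate part will be exactly this sub-case bookkeeping: tracking which $M^*_{j,s}$ must vanish in each degenerate structural configuration, and verifying that ELC-driven propagation along $U^*$ reaches every element of $\Delta^* w$.
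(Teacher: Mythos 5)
Your setup coincides with the paper's: a union bound over triples $(v,w,t)$ with $w$ a child of $v$ in $U$, the representation ${\bf M}_{v,:}={\bf M}_{+,v,:}-{\bf M}_{-,v,:}$ over the reduced multisets $\mathcal{S}^+_v=\mathcal{A}_v-\mathcal{B}_v$ and $\mathcal{S}^-_v=\mathcal{B}_v-\mathcal{A}_v$ (Lemma \ref{th:representation_of_M_as_difference_paper_1}), the observation that $\mathcal{A}_v\neq\mathcal{B}_v$ (essentially Lemma \ref{th:i_is_in_S_plus_paper_1}), and Assumption \ref{ass:continuity_of_M_for_calder_conditions_degeneracy_computations_paper_1} to dispose of the case where both reduced sums are positive. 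The configurations in which $w$ is a $U^*$-ancestor of $v$ or is $U^*$-unrelated to $v$ also go through much as in the paper.

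The gap is in the remaining configuration, $w$ a $U^*$-descendant of $v$, which is where the paper spends nearly all of its effort. Your proposed mechanism --- that the forced zeros ``propagate along $U^*$ \ldots down through every element of $\Delta^* w$, yielding $F^*_{w,t}=0$'' --- is not a valid consequence of the ELC. The ELC force a child to be born within its parent's lifetime and to die permanently, but they do not force descendants to die when an ancestor dies: a descendant born during $v$'s lifetime can be alive at time $t$ even though $v$ and $w$ are dead at $t-1$ and $t$. Indeed, given $M\geq 0$ the event guarantees $F^*_{w,t}=M^*_{\Delta^* w,t}\geq M_{w,t}>0$, so the conclusion $F^*_{w,t}=0$ cannot be reached; the positive mass sits on some $j_0\in\mathcal{S}^+_w\subseteq\Delta^* w$, while the zeros you have collected live on $\mathcal{S}^-_v$ at time $t$ and on $\mathcal{S}^\pm_v,\mathcal{S}^\pm_w$ at time $t-1$, and after the multiset cancellation $\mathcal{S}^-_v$ need not meet $\Delta^* w$ at all. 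The paper closes this case with two ingredients absent from your sketch: (i) a combinatorial claim that $w$ or its $U^*$-parent $u$ necessarily survives the cancellation, i.e.\ lies in $\mathcal{S}^+_v\cup\mathcal{S}^-_v$, so that $M_{v,s}=0$ forces, almost surely, $M^*_{w,s}=0$ or $M^*_{u,s}=0$ for $s\in\{t-1,t\}$; and (ii) the birth/death-time comparison of Lemma \ref{th:th_tmin_tmax_relation_theorem_paper_1}, which under $M\geq 0$ gives $t'^{\min}_i={t'^*}^{\min}_i$ and $t'^{\max}_i\geq{t'^*}^{\max}_i$ almost surely and yields the contradiction ${t'^*}^{\max}_w\leq t'^{\max}_v<{t'^*}^{\min}_w$ (or its analogue with $u$) against the ELC satisfied by $M^*$. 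Without something playing the role of (i), the ``relevant ancestor'' from which you propagate is never identified, and without (ii) the propagation has no valid direction. The sub-case bookkeeping you defer is not bookkeeping; it is the proof.
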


To prove Lemmas \ref{th:new_M_statistifes_death_longitudional_condition_paper_1}, \ref{th:M_infered_no_zero_when_Mstart_in_calder_is_non_negative_paper_1}, and \ref{th:proof_that_M_star_satisfies_t_min_child_t_max_father_relationship_implies_M_also_does_paper_1}, we need a series of intermediary results.

\subsection{Intermediary results}

\begin{lem}\label{th:representation_of_M_as_difference_paper_1}
Let $U^*$ and $U$ be fixed ancestry matrices, and let $M = U^{-1} U^* M^*$.
For any mutant $i$, we have 
\begin{equation}
{\bf M}_{i,:} = 
{\bf M}_{+,i,:} - {\bf M}_{-,i,:},
\end{equation}
where ${\bf M}_{+,i,:} \equiv \sum_{j \in \mathcal{S}^+_i} {\bf M}^*_{j,:}$ and ${\bf M}_{-,i,:} \equiv \sum_{j \in \mathcal{S}^-_i} {\bf M}^*_{j,:}$, $\mathcal{S}^+_i \equiv \mathcal{S}'^+_i - \mathcal{S}'^-_i$ and $\mathcal{S}^-_i \equiv \mathcal{S}'^-_i - \mathcal{S}'^+_i$, and 
$\mathcal{S}'^+_i \equiv \Delta^* i$ and $\mathcal{S}'^-_i \equiv \cup_{k \in \partial i} \Delta^* k$. In particular, both the union of the different sets $\Delta^* k$ and the set differences respect the multiplicities of multisets.  
We are using the convention that if $\mathcal{S}^+_i = \emptyset$ (resp. $\mathcal{S}^-_i = \emptyset$) then ${\bf M}_{+,i,:} = 0$ (resp. ${\bf M}_{-,i,:} = 0$).

Furthermore, if $M^*$ satisfies Assumption \ref{ass:continuity_of_M_for_calder_conditions_degeneracy_computations_paper_1}, then for any mutant $i$ and time $t$, we have that ${M}_{+,i,t},{M}_{-,i,t} \geq 0$ and that $\mathbb{P}(M_{+,i,t} = M_{-,i,t} > 0\mid U,U^*) = 0$.
\end{lem}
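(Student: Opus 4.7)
The plan is to unfold the product $M = U^{-1}U^{*}M^{*} = (I-T)U^{*}M^{*}$ row by row, interpret each resulting sum as a multiset sum of ground-truth abundance rows, and then cancel common terms to obtain the claimed disjoint representation. First I would use the identity $U^{-1} = I - T$ (recalled in Section~\ref{sec:background_ppm_paper_1}) to write
\begin{equation*}
{\bf M}_{i,:} \;=\; [U^{*}M^{*}]_{i,:} \;-\; [T\,U^{*}M^{*}]_{i,:}.
\end{equation*}
For the first term, the $i$-th row of $U^{*}$ has a $1$ in column $j$ exactly when $j\in\Delta^{*} i$ (the definition of the ancestry matrix), so $[U^{*}M^{*}]_{i,:}=\sum_{j\in\Delta^{*} i}{\bf M}^{*}_{j,:}=\sum_{j\in\mathcal{S}'^{+}_{i}}{\bf M}^{*}_{j,:}$. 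For the second term, since $T$ is the parent operator for $U$, its $(i,k)$ entry equals $1$ exactly when $k\in\partial i$, so $[T\,U^{*}M^{*}]_{i,:}=\sum_{k\in\partial i}[U^{*}M^{*}]_{k,:}=\sum_{k\in\partial i}\sum_{j\in\Delta^{*} k}{\bf M}^{*}_{j,:}=\sum_{j\in\mathcal{S}'^{-}_{i}}{\bf M}^{*}_{j,:}$, where the inner sums are read as multiset sums (a mutant $j$ may lie in $\Delta^{*} k_{1}\cap\Delta^{*} k_{2}$ for two distinct siblings $k_{1},k_{2}\in\partial i$ in $U$, in which case it contributes twice to $\mathcal{S}'^{-}_{i}$).

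Next I would carry out the multiset cancellation. If $j$ has multiplicity $x$ in $\mathcal{S}'^{+}_{i}$ and $y$ in $\mathcal{S}'^{-}_{i}$, then ${\bf M}^{*}_{j,:}$ contributes with net coefficient $x-y$ to ${\bf M}_{i,:}$. Regrouping these contributions according to the sign of $x-y$ produces exactly the disjoint multisets $\mathcal{S}^{+}_{i}=\mathcal{S}'^{+}_{i}-\mathcal{S}'^{-}_{i}$ and $\mathcal{S}^{-}_{i}=\mathcal{S}'^{-}_{i}-\mathcal{S}'^{+}_{i}$ defined in the statement, and yields
\begin{equation*}
{\bf M}_{i,:} \;=\; \sum_{j\in\mathcal{S}^{+}_{i}}{\bf M}^{*}_{j,:} \;-\; \sum_{j\in\mathcal{S}^{-}_{i}}{\bf M}^{*}_{j,:} \;=\; {\bf M}_{+,i,:} - {\bf M}_{-,i,:},
\end{equation*}
with the convention that an empty multiset yields the zero vector, as stipulated.

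Finally, I would verify the two additional claims. Non-negativity of $M_{+,i,t}$ and $M_{-,i,t}$ is immediate: each is a finite sum (with multiplicity) of the non-negative entries of $M^{*}$, and Assumption~\ref{ass:continuity_of_M_for_calder_conditions_degeneracy_computations_paper_1} guarantees $M^{*}\geq 0$. For the probability statement, I would split into cases. If $\mathcal{S}^{+}_{i}=\emptyset$ or $\mathcal{S}^{-}_{i}=\emptyset$, then the corresponding side equals zero identically, so the event $\{M_{+,i,t}=M_{-,i,t}>0\}$ is empty and has probability zero. Otherwise, $\mathcal{S}^{+}_{i}$ and $\mathcal{S}^{-}_{i}$ are disjoint non-empty multisets, and the last clause of Assumption~\ref{ass:continuity_of_M_for_calder_conditions_degeneracy_computations_paper_1} (applied with $\mathcal{S}_{1}=\mathcal{S}^{+}_{i}$, $\mathcal{S}_{2}=\mathcal{S}^{-}_{i}$) gives $\mathbb{P}(M^{*}_{\mathcal{S}^{+}_{i},t}=M^{*}_{\mathcal{S}^{-}_{i},t}\mid M^{*}_{\mathcal{S}^{+}_{i},t},M^{*}_{\mathcal{S}^{-}_{i},t}>0,U^{*})=0$ whenever the conditioning event has positive probability; a short conditional-probability calculation (or equivalently the argument in Lemma~\ref{th:ass_1_with_all_non_zero_and_ass_1_with_a_few_non_zero_paper_1}) upgrades this to the unconditional statement.

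The bookkeeping around multisets is the only subtle part: the fact that the same mutant can appear in several of the sets $\Delta^{*} k$ for distinct $k\in\partial i$ (because $\partial i$ is defined relative to $U$, while $\Delta^{*}$ refers to $U^{*}$) is precisely why the multiset conventions are needed, and I expect this to be the most error-prone step. Everything else is linear algebra and a direct appeal to the continuity clause of Assumption~\ref{ass:continuity_of_M_for_calder_conditions_degeneracy_computations_paper_1}.
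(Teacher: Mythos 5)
Your proposal is correct and follows essentially the same route as the paper's proof: expand $(I-T)U^*M^*$ row by row, cancel common multiset terms to get the disjoint $\mathcal{S}^+_i,\mathcal{S}^-_i$ representation, read off non-negativity from $M^*\geq 0$, and handle the final claim by splitting on whether both sums can be simultaneously positive before invoking the last clause of Assumption~\ref{ass:continuity_of_M_for_calder_conditions_degeneracy_computations_paper_1} on the disjoint multisets. Your case split on emptiness of $\mathcal{S}^{\pm}_i$ is a harmless rephrasing of the paper's split on whether $\mathbb{P}(M_{+,i,t}>0\land M_{-,i,t}>0\mid U,U^*)$ vanishes.
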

\begin{rmk}\label{rmk:S_do_not_depend_on_time_paper_1}
Note that, by definition, the (multi)sets $\mathcal{S}^+_i,\mathcal{S}'^+_i$ and $\mathcal{S}^-_i,\mathcal{S}'^-_i$ are independent of $t$. They depend only on $i$, $U$ and $U^*$.
\end{rmk}
\begin{rmk}
The operators $\cup$ and $-$ are such that, for example, 
$\{1,2,3\}\cup\{1,2,2,4\} = \{1,1,2,2,2,3,4\}$, $\{1,2,2,4\} - \{1,2,3\} = \{1,2,4\}$ and $\{1,2,3\} - \{1,2,2,4\} = \{3\}$.
\end{rmk}
\begin{proof}[Proof of Lemma~\ref{th:representation_of_M_as_difference_paper_1}]
By direct calculation, 
\begin{align*} 
{\bf M}_{i,:} 
&= (U^{-1}U^* {M}^*)_{i,:} 
= ((I - \mathcal{T})U^* {M}^*)_{i,:} \\
&= \sum_{j \in \Delta^* i} {\bf M}^*_{j,:} - \sum_{k \in \partial i} \sum_{j \in \Delta^* k}{\bf M}^*_{j,:} \\
&= \sum_{j \in \mathcal{S}'^+_i} {\bf M}^*_{j,:} - \sum_{j \in \mathcal{S}'^-_i }{\bf M}^*_{j,:} ,\label{eq:proof_of_M_calder_satisfies_exctinction_eq1_paper_1}
\end{align*}
 where $\mathcal{T}$ is the matrix representation of the operator that maps children to parents in $U$, $\partial i$ denotes the children of $i$ in $U$, and $\Delta^* i$ (resp. $\Delta^* k$) are the descendants of $i$ (resp. $k$) in $U^*$ including $i$ (resp. $k$) themselves. 
 Some of the ${\bf M}^*_{j,:}$ terms  in $\sum_{j \in \mathcal{S}'^+_i} {\bf M}^*_{j,:}$ also appear in $\sum_{j \in \mathcal{S}'^-_i }{\bf M}^*_{j,:}$. 
 After cancellations, the remaining terms being added sum to ${\bf M}_{+,i,:}$ and the remaining terms being subtracted sum to ${\bf M}_{-,i,:}$.

The fact that ${\bf M}_{-,i,:},{\bf M}_{+,i,:} \geq {\bf 0}$ follows from the fact that both are sums of entries of $M^*$, all of which are non-negative by Assumption \ref{ass:continuity_of_M_for_calder_conditions_degeneracy_computations_paper_1}.

To prove the last statement, observe that either $\mathbb{P}(M_{+,i,t} > 0 \land  M_{-,i,t} > 0\mid U,U^*) = 0$ or $\mathbb{P}(M_{+,i,t} > 0 \land  M_{-,i,t} > 0\mid U,U^*) > 0$. 
In the first case, it immediately follows that $\mathbb{P}(M_{+,i,t} =  M_{-,i,t} > 0\mid U,U^*) = 0$. 
In the second case, we can write 
\begin{align*}
    \mathbb{P}&(M_{+,i,t} =  M_{-,i,t} > 0\mid U,U^*) \\
    &= \mathbb{P}(M_{+,i,t} =  M_{-,i,t} \land M_{+,i,t} ,  M_{-,i,t}> 0\mid U,U^*)  \\
    &= \frac{\mathbb{P}(M_{+,i,t}  =  M_{-,i,t} \mid M_{+,i,t} ,  M_{-,i,t}> 0, U,U^*)}{\mathbb{P}(M_{+,i,t} ,  M_{-,i,t}> 0\mid U,U^*)}.
\end{align*} 
Since by their definitions, $\mathcal{S}^+$ and $\mathcal{S}^-$ involve disjoint sets of mutants, it follows from Assumption \ref{ass:continuity_of_M_for_calder_conditions_degeneracy_computations_paper_1} that 
$$\mathbb{P}(M_{+,i,t} =  M_{-,i,t} \mid M_{+,i,t} ,  M_{-,i,t}> 0,U,U^*) = 0.$$
\end{proof}

\begin{lem}\label{th:i_is_in_S_plus_paper_1}
Let $M^*$ satisfy Assumption \ref{ass:continuity_of_M_for_calder_conditions_degeneracy_computations_paper_1}. Let $U^*$ and $U$ be fixed ancestry matrices, and $M = U^{-1} U^* M^*$. Consider any mutant $i$. Let $\mathcal{S}^+_i$ and $\mathcal{S}^-_i$ be defined as in Lemma \ref{th:representation_of_M_as_difference_paper_1}. 
The sets $\mathcal{S}^+_i$ and $\mathcal{S}^-_i$ cannot both be empty. 
Furthermore,
$\mathbb{P}({\bf M}_{i,:} \geq 0 \land i \notin \mathcal{S}^+_i\mid U,U^*) = 0$.
\end{lem}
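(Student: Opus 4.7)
The plan is to reduce both claims to a single combinatorial observation: whenever $i \notin \mathcal{S}^+_i$, one has the much stronger conclusion that $\mathcal{S}^+_i$ is entirely empty. To see this, I would start from $i \in \Delta^* i = \mathcal{S}'^+_i$, so $i \notin \mathcal{S}^+_i$ forces $i \in \mathcal{S}'^-_i$ (with multiplicity at least one), meaning there exists some $k \in \partial i$ (child of $i$ in $U$, so $k \ne i$) with $i \in \Delta^* k$. This says $k$ is a strict ancestor of $i$ in $U^*$, and from this one gets the subtree inclusion $\Delta^* i \subseteq \Delta^* k \subseteq \mathcal{S}'^-_i$. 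Hence $\mathcal{S}^+_i = \mathcal{S}'^+_i - \mathcal{S}'^-_i = \emptyset$.

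For the first (purely combinatorial) claim, I would argue by contradiction. Suppose both $\mathcal{S}^+_i$ and $\mathcal{S}^-_i$ are empty; then $\Delta^* i = \mathcal{S}'^-_i$ as multisets, and in particular $i \in \mathcal{S}'^-_i$. The observation above produces a $k \in \partial i$ with $\Delta^* i \subseteq \Delta^* k \subseteq \mathcal{S}'^-_i$. Moreover, $k \in \Delta^* k$ but $k \notin \Delta^* i$ (since $k$ is a strict $U^*$-ancestor of $i$, not a descendant), so the inclusion $\Delta^* i \subsetneq \Delta^* k$ is strict. Consequently $|\Delta^* i| < |\Delta^* k| \le |\mathcal{S}'^-_i|$, contradicting the equality $|\Delta^* i| = |\mathcal{S}'^-_i|$ that would follow from both sets being empty.

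For the probabilistic claim, using the same $k$, I would show that $\{{\bf M}_{i,:} \ge 0\}$ forces the zero-probability event $\{{\bf M}^*_{k,:} = {\bf 0}\}$. By the observation, $\mathcal{S}^+_i = \emptyset$, so Lemma~\ref{th:representation_of_M_as_difference_paper_1} gives ${\bf M}_{+,i,:} = {\bf 0}$ and therefore ${\bf M}_{i,:} = -{\bf M}_{-,i,:}$. Since every $M^*_{j,t} \ge 0$ under Assumption~\ref{ass:continuity_of_M_for_calder_conditions_degeneracy_computations_paper_1}, the event $\{{\bf M}_{i,:} \ge {\bf 0}\}$ coincides with $\{M^*_{j,t} = 0 \ \forall j \in \mathcal{S}^-_i,\ \forall t \in [n]\}$. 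Now $k$ sits in $\mathcal{S}'^-_i$ with multiplicity at least one (because $k \in \Delta^* k$ and $k \in \partial i$) while $k \notin \Delta^* i = \mathcal{S}'^+_i$, so in the multiset difference $\mathcal{S}^-_i = \mathcal{S}'^-_i - \mathcal{S}'^+_i$ the element $k$ survives with positive multiplicity. Hence $\{{\bf M}_{i,:} \ge {\bf 0}\} \subseteq \{{\bf M}^*_{k,:} = {\bf 0}\}$. Finally, Assumption~\ref{ass:continuity_of_M_for_calder_conditions_degeneracy_computations_paper_1} guarantees $M^*$ satisfies the ELC almost surely, which through condition~\eqref{eq:non_zero_M_paper_1} yields $\mathbb{P}({\bf M}^*_{k,:} = {\bf 0} \mid U^*) = 0$; because $U$ is a deterministic parameter, conditioning additionally on $U$ is harmless, giving $\mathbb{P}({\bf M}_{i,:} \ge {\bf 0} \land i \notin \mathcal{S}^+_i \mid U, U^*) = 0$.

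The step that requires the most care is the multiset bookkeeping in $\mathcal{S}^-_i = \mathcal{S}'^-_i - \mathcal{S}'^+_i$: one must verify that $k$ really survives the subtraction with positive multiplicity, rather than being cancelled by an occurrence in $\mathcal{S}'^+_i = \Delta^* i$. The crucial point, which drives both parts of the lemma, is the antisymmetry of the ancestor relation in $U^*$: a strict ancestor of $i$ cannot simultaneously be a descendant of $i$, so $k \notin \Delta^* i$, and the argument goes through.
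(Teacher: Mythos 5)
Your proof is correct and follows essentially the same route as the paper's: the same key observation that $i \notin \mathcal{S}^+_i$ forces a child $k$ of $i$ in $U$ to be a strict $U^*$-ancestor of $i$, hence $\Delta^* i \subseteq \Delta^* k \subseteq \mathcal{S}'^-_i$ and $\mathcal{S}^+_i = \emptyset$, after which non-negativity of ${\bf M}_{i,:} = -{\bf M}_{-,i,:}$ collapses to the zero-probability event that some mutant in $\mathcal{S}^-_i$ is identically zero. The only cosmetic differences are that you derive the first claim by a cardinality count where the paper directly exhibits $k \in \mathcal{S}^-_i$, and that you name the surviving element $k$ explicitly in the probabilistic step; both variants are sound.
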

\begin{proof}
First we prove by contradiction that it is not possible that $\mathcal{S}^+_i=\mathcal{S}^-_i = \emptyset$. The set $\mathcal{S}'^+_i$ is never empty (always contains at least $i$), so for
$\mathcal{S}^+_i=\mathcal{S}^-_i = \emptyset$ it must be that $i \in \mathcal{S}'^-_i$. 
For this to be the case, there must exist a child $j$ of $i$ in $U$ that is an ancestor of $i$ in $U^*$. 
However, since $j$ is not in $\mathcal{S}'^+_i$ but is in $\mathcal{S}'^-_i$, we must have that $j \in \mathcal{S}^-_i$, and hence $\mathcal{S}^-_i$ is not empty, which is a contradiction.

Now we prove the second part of the statement. 
If $i \notin \mathcal{S}^+_i$, then since 
$i \in \mathcal{S}'^+_i$, it must be that $\Delta^* i \subseteq \mathcal{S}'^-_i$, and hence
$\mathcal{S}^+_i=\emptyset$. This implies that $\mathcal{S}^-_i \neq \emptyset$, since we have already shown that the two multisets cannot both be empty. 
It follows that
 ${\bf M}_{i,:} = - {\bf M}_{-,i,:} \leq 0$. Since the 
 event under consideration requires ${\bf M}_{i,:} \geq 0$, this event implies that ${\bf M}_{-,i,:} = {\bf 0}$, which in turn implies that 
 ${\bf M}^*_{j,:} = 0$ for some $j \in \mathcal{S}^- \neq \emptyset$. The probability of this last event is zero, since by assumption $M^*$ satisfies the longitudinal conditions, one of which is \eqref{eq:extra_condition_to_add_to_longitudional_conditions_paper_1}, which is equivalent to \eqref{eq:extra_long_constraint_no_zero_M_paper_1} by Lemma \ref{th:equivalent_of_extending_calder_long_conditions_in_one_or_in_another_way_paper_1} (see Appendix \ref{app:equivalent_of_extending_calder_long_conditions_in_one_or_in_another_way_paper_1}), and which states that no mutant described by $M^*$ is always dead.
\end{proof}

\begin{lem}\label{th:M_non_zero_M_zero_Mstar_non_zero_M_star_zero_paper_1}
Let $M^*$ satisfy Assumption \ref{ass:continuity_of_M_for_calder_conditions_degeneracy_computations_paper_1}. %
 Let $U^*$ and $U$ be fixed ancestry matrices, and $M = U^{-1} U^* M^*$.
 For any mutant $i$ and time $t$, we have 
\begin{equation}
\mathbb{P}( {\bf M}_{i,:} \geq {\bf 0} \land M_{i,t}  = 0 \land M^*_{i,t}  > 0\mid U,U^*) = 0.
\end{equation}
\end{lem}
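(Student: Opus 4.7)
The plan is to decompose the target event according to whether $i$ belongs to the multiset $\mathcal{S}^+_i$ defined in Lemma \ref{th:representation_of_M_as_difference_paper_1}, and then handle each case using the two auxiliary lemmas (Lemmas \ref{th:representation_of_M_as_difference_paper_1} and \ref{th:i_is_in_S_plus_paper_1}) already proved. Recall that, by Lemma \ref{th:representation_of_M_as_difference_paper_1}, for any mutant $i$ and any time $t$ we can write $M_{i,t} = M_{+,i,t} - M_{-,i,t}$ with $M_{+,i,t},M_{-,i,t}\geq 0$, and $\mathbb{P}(M_{+,i,t} = M_{-,i,t} > 0\mid U,U^*) = 0$.

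First, by a union bound, the probability of the target event is upper bounded by the sum of the probability of the event intersected with $\{i\notin \mathcal{S}^+_i\}$ and the probability of the event intersected with $\{i\in \mathcal{S}^+_i\}$. For the first piece, Lemma \ref{th:i_is_in_S_plus_paper_1} yields $\mathbb{P}({\bf M}_{i,:}\geq {\bf 0}\land i\notin \mathcal{S}^+_i\mid U,U^*)=0$, which dominates this term and makes it vanish. Note that, as emphasized in Remark \ref{rmk:S_do_not_depend_on_time_paper_1}, the (multi)sets $\mathcal{S}^+_i$ and $\mathcal{S}^-_i$ depend only on $U$ and $U^*$, so conditioning on $U$ and $U^*$ makes the statement $i\in\mathcal{S}^+_i$ deterministic.

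For the second piece, assume $i\in \mathcal{S}^+_i$. Then $M_{+,i,t}\geq M^*_{i,t}$, so on the sub-event $\{M^*_{i,t}>0\}$ we have $M_{+,i,t}>0$. Combined with $M_{i,t}=0$, the decomposition of Lemma \ref{th:representation_of_M_as_difference_paper_1} gives $M_{+,i,t} = M_{-,i,t} > 0$. Thus
\begin{align*}
&\mathbb{P}({\bf M}_{i,:}\geq {\bf 0}\land M_{i,t}=0\land M^*_{i,t}>0\land i\in \mathcal{S}^+_i\mid U,U^*)\\
&\qquad\leq \mathbb{P}(M_{+,i,t} = M_{-,i,t} > 0\mid U,U^*) = 0,
\end{align*}
where the last equality is precisely the final claim of Lemma \ref{th:representation_of_M_as_difference_paper_1}.

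Combining both cases yields $\mathbb{P}({\bf M}_{i,:}\geq {\bf 0}\land M_{i,t}=0\land M^*_{i,t}>0\mid U,U^*) = 0$, as desired. There is no real obstacle here; the lemma is essentially a corollary of Lemmas \ref{th:representation_of_M_as_difference_paper_1} and \ref{th:i_is_in_S_plus_paper_1}. The only care needed is the case split on $i\in\mathcal{S}^+_i$, which is what allows one to invoke the absolute-continuity conclusion of Lemma \ref{th:representation_of_M_as_difference_paper_1} with $M_{+,i,t}$ guaranteed to be strictly positive.
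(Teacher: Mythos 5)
Your proof is correct and follows essentially the same route as the paper's: condition the analysis on $i\in\mathcal{S}^+_i$ via Lemma \ref{th:i_is_in_S_plus_paper_1}, then use the decomposition and the zero-probability claim $\mathbb{P}(M_{+,i,t}=M_{-,i,t}>0\mid U,U^*)=0$ from Lemma \ref{th:representation_of_M_as_difference_paper_1}. The only cosmetic difference is that you work with joint probabilities and a union bound rather than conditioning on $\{{\bf M}_{i,:}\geq {\bf 0}\}$, which slightly streamlines the case where that event has probability zero.
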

\begin{proof}

If $\mathbb{P}( {\bf M}_{i,:} \geq {\bf 0} \mid U^*,U) = 0$, we are done. If $\mathbb{P}( {\bf M}_{i,:} \geq {\bf 0}\mid U^*,U) > 0$, then 
\begin{align}
&\mathbb{P}( {\bf M}_{i,:} \geq {\bf 0}\land M_{i,t}  = 0 \land M^*_{i,t}  > 0  \mid U,U^*)  \nonumber\\
&=\mathbb{P}( M_{i,t}  = 0 \land M^*_{i,t}  > 0   \mid U,U^*,{\bf M}_{i,:} \geq {\bf 0})  \nonumber\\
&\qquad\times\mathbb{P}({\bf M}_{i,:} \geq {\bf 0}\mid U,U^*) \label{eq:before_i_in_Splus_in_lemma_paper_1}\\
&=\mathbb{P}( M_{i,t}  = 0 \land M^*_{i,t}  > 0  \land i \in \mathcal{S}^+ \mid U,U^*, {\bf M}_{i,:} \geq {\bf 0}) \nonumber\\
&\qquad\times \mathbb{P}({\bf M}_{i,:} \geq {\bf 0} \mid U,U^*)\label{eq:at_i_in_Splus_in_lemma_paper_1}\\
&=\mathbb{P}( (M_{+,i,t}=M_{-,i,t} = 0 \lor M_{+,i,t}=M_{-,i,t} > 0)\nonumber\\
&\qquad \land M^*_{i,t}  > 0  \land i \in \mathcal{S}^+ \mid U,U^*, {\bf M}_{i,:} \geq {\bf 0})\nonumber\\
&\qquad\times\mathbb{P}({\bf M}_{i,:} \geq {\bf 0} \mid U,U^*)\label{eq:union_bound_in_lemma_paper_1}\\ 
&=\mathbb{P}( M_{+,i,t}=M_{-,i,t} = 0 \land M^*_{i,t}  > 0  \land i \in \mathcal{S}^+ \mid U,U^*, {\bf M}_{i,:} \geq {\bf 0})\nonumber\\
&\qquad\times
\mathbb{P}({\bf M}_{i,:} \geq {\bf 0}\mid U,U^*)\nonumber\\
&+\mathbb{P}( M_{+,i,t}=M_{-,i,t} > 0 \land M^*_{i,t}  > 0  \land i \in \mathcal{S}^+ \mid U,U^*, {\bf M}_{i,:} \geq {\bf 0})\nonumber\\
&\qquad\times\mathbb{P}({\bf M}_{i,:} \geq {\bf 0} \mid U,U^*)\label{eq:union_bound_broken_in_lemma_paper_1}\\
&\leq \mathbb{P}( M^*_{i,t}= 0 \land M^*_{i,t}  > 0  \mid U,U^*) 
\nonumber\\
&\qquad+ \mathbb{P}( M_{+,i,t}=M_{-,i,t} > 0 \mid U,U^*) = 0 \label{eq:last_line_in_lemma_paper_1}
\end{align}
where 
\begin{itemize}
\item from \eqref{eq:before_i_in_Splus_in_lemma_paper_1} to \eqref{eq:at_i_in_Splus_in_lemma_paper_1}, we apply Lemma \ref{th:i_is_in_S_plus_paper_1}, which implies that $\mathbb{P}( i \in \mathcal{S}^+ \mid U,U^*, {\bf M}_{i,:} \geq {\bf 0}) = 1$;
\item from \eqref{eq:at_i_in_Splus_in_lemma_paper_1} to \eqref{eq:union_bound_in_lemma_paper_1}, we use the fact that, by Assumption \ref{ass:continuity_of_M_for_calder_conditions_degeneracy_computations_paper_1}, and conditioned on $U^*$,  we have $M^* \geq 0$ with probability $1$.
This implies that $M_{+,i,t} , M_{-,i,t} \geq 0$, and hence  $M_{i,t}=0$ implies that either $M_{+,i,t} = M_{-,i,t} = 0$ or $M_{+,i,t} = M_{-,i,t} > 0$;
\item from \eqref{eq:union_bound_in_lemma_paper_1} to \eqref{eq:union_bound_broken_in_lemma_paper_1}, we use the fact that if an event $A \subseteq B$, then $\mathbb{P}(A)\leq\mathbb{P}(B)$;
\item and in \eqref{eq:last_line_in_lemma_paper_1}, we apply Lemma \ref{th:representation_of_M_as_difference_paper_1}.
\end{itemize}

\end{proof}

\begin{lem}\label{th:th_tmin_tmax_relation_theorem_paper_1}
Let $M^*$ satisfy Assumption \ref{ass:continuity_of_M_for_calder_conditions_degeneracy_computations_paper_1}. %
 Let $U^*$ and $U$ be fixed ancestry matrices, and $M = U^{-1} U^* M^*$.
 For any mutant $i$, let ${t'^*}^{\min}_i$ and ${t'^*}^{\max}_i$ be the birth and death time computed from $M^*$ as in Section \ref{sec:redefinition_of_longitudional_conditions_paper_1},  and let ${t'}^{\min}_i$ and ${t'}^{\max}_i$ be the birth and death time computed from $M$ as in Section \ref{sec:redefinition_of_longitudional_conditions_paper_1}.
 We have that 
 \begin{align}
 &\mathbb{P}({\bf M}_{i,:} \geq {\bf 0} \land {t'^*}^{\max}_i > {t'}^{\max}_i\mid U,U^*) = 0\label{eq:th_tmin_tmax_relation_theorem_eq_1_paper_1},\\
 &\mathbb{P}({\bf M}_{i,:} \geq {\bf 0} \land {t'}^{\min}_i > {{t'}^*}^{\min}_i\mid U,U^*) = 0, \label{eq:th_tmin_tmax_relation_theorem_eq_2_paper_1}\\
 &\mathbb{P}({\bf M}_{i,:} \geq {\bf 0} \land {t'}^{\min}_i < {{t'}^*}^{\min}_i\mid U,U^*) = 0. \label{eq:th_tmin_tmax_relation_theorem_eq_3_paper_1}
 \end{align}
\end{lem}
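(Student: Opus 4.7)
My strategy is to combine two base facts and then handle the three inequalities case-by-case. The first base fact is deterministic: using the representation $M_{i,:} = M_{+,i,:} - M_{-,i,:}$ from Lemma \ref{th:representation_of_M_as_difference_paper_1}, the ELC for $M^*$ imply (via Lemma \ref{th:equivalent_CALDER_conditions_paper_1} together with the sum condition $F^*_{v,t} \geq F^*_{w,t}$ whenever $v$ is an ancestor of $w$ in $U^*$) that every element of $\Delta^* i$ is born no earlier than $i$. Since $\mathcal{S}^+_i \subseteq \Delta^* i$, this forces $M_{+,i,t} = 0$ for every $t < {t'^*}^{\min}_i$, and then $M \geq 0$ yields $M_{i,t} = 0$ deterministically for all such $t$. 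The second base fact is the almost-sure contrapositive of Lemma \ref{th:M_non_zero_M_zero_Mstar_non_zero_M_star_zero_paper_1}, union-bounded over $t$: almost surely under $M \geq 0$, $M^*_{i,t} > 0$ implies $M_{i,t} > 0$ for every $t$.

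From these two observations, \eqref{eq:th_tmin_tmax_relation_theorem_eq_3_paper_1} is almost immediate: if ${t'^*}^{\min}_i > 1$, the two facts exhibit a birth transition in $M$ exactly at $t = {t'^*}^{\min}_i$, so ${t'}^{\min}_i \geq {t'^*}^{\min}_i$; the edge case ${t'^*}^{\min}_i = 1$ is trivial. For \eqref{eq:th_tmin_tmax_relation_theorem_eq_1_paper_1}, any hypothetical death transition at $t^* < {t'^*}^{\max}_i$ must lie either in $[{t'^*}^{\min}_i, {t'^*}^{\max}_i - 1]$, where $M_{i,t^*} > 0$ almost surely contradicts $M_{i,t^*} = 0$, or strictly below ${t'^*}^{\min}_i$, where $M_{i,t^*-1} = 0$ deterministically contradicts $M_{i,t^*-1} > 0$; the degenerate exception ${t'}^{\max}_i = 1$ is ruled out because $M^*_i$ being non-zero somewhere forces $M_i$ to be non-zero there, and combined with the required $M_{i,n} = 0$ this would produce an actual death transition.

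The main obstacle is \eqref{eq:th_tmin_tmax_relation_theorem_eq_2_paper_1}, specifically the subcase in which the hypothesized birth transition $t^* = {t'}^{\min}_i$ satisfies $t^* > {t'^*}^{\max}_i$ (the companion subcase $t^* \in ({t'^*}^{\min}_i, {t'^*}^{\max}_i]$ reduces to Lemma \ref{th:M_non_zero_M_zero_Mstar_non_zero_M_star_zero_paper_1} applied at $t^*-1$). Here I plan to exploit the continuity part of Assumption \ref{ass:continuity_of_M_for_calder_conditions_degeneracy_computations_paper_1}: since $M_{i,t^*-1} = 0$, the alternative $M_{+,i,t^*-1} = M_{-,i,t^*-1} > 0$ has probability zero (two sums over disjoint multisets coinciding while both being positive), so almost surely both vanish, whence $M^*_{w,t^*-1} = 0$ for every $w \in \mathcal{S}^+_i \cup \mathcal{S}^-_i$. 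Because $M_{i,t^*} > 0$, some $j \in \mathcal{S}^+_i$ has $M^*_{j,t^*} > 0$, forcing ${t'^*}^{\min}_j = t^*$. Defining $n_w \equiv |\{k \in \partial i : w \in \Delta^* k\}|$, a short calculation yields the deterministic identity $n_w = n_{w'} + \mathbb{I}\{w \in \partial i\}$, where $w'$ is the parent of $w$ in $U^*$; propagating $n_j = 0$ along the ancestry chain $j = j_0, j_1, \ldots, j_m = i$ in $U^*$ gives $n_{j_l} = 0$ for every $l$. Consequently each $j_l \in \mathcal{S}^+_i$, so $M^*_{j_l,t^*-1} = 0$, and combining this with the ELC inequalities (parent's death $\geq$ child's birth, ancestor's birth $\leq$ descendant's birth) recursively forces ${t'^*}^{\min}_{j_l} = t^*$ up to ${t'^*}^{\min}_i = t^*$, contradicting the ELC requirement ${t'^*}^{\min}_i < {t'^*}^{\max}_i < t^*$. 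Coordinating the deterministic multiplicity bookkeeping with the almost-sure probabilistic conclusions in this chain is the main technical difficulty.
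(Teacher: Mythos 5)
Your proof is correct, and for two of the three claims it follows essentially the paper's route: \eqref{eq:th_tmin_tmax_relation_theorem_eq_1_paper_1} and \eqref{eq:th_tmin_tmax_relation_theorem_eq_3_paper_1} are reduced, exactly as in the paper, to the almost-sure implication $M^*_{i,t}>0 \Rightarrow M_{i,t}>0$ from Lemma~\ref{th:M_non_zero_M_zero_Mstar_non_zero_M_star_zero_paper_1} together with the deterministic fact that $M_{+,i,t}=0$ (hence $M_{i,t}=0$ under $M\geq 0$) before ${t'^*}^{\min}_i$, which is the paper's $\mathcal{S}^+_i\subseteq\Delta^* i$ argument in contrapositive form. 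Where you genuinely diverge is \eqref{eq:th_tmin_tmax_relation_theorem_eq_2_paper_1}. The paper disposes of it in two lines by claiming that ${t'}^{\min}_i>{t'^*}^{\min}_i$ forces some $t$ with $M_{i,t}=0\land M^*_{i,t}>0$; that reduction silently skips the subcase you isolate, namely a birth transition of ${\bf M}_{i,:}$ at some $t^*>{t'^*}^{\max}_i$ (a ``second life'' of $M_i$ after $M^*_i$ has died, during which $M_i$ may have been positive throughout the entire support of $M^*_i$). Your treatment of that subcase --- showing $M_{+,i,t^*-1}=0$ almost surely, extracting $j\in\mathcal{S}^+_i$ born at $t^*$, and using the multiplicity identity $n_w=n_{w'}+\mathbb{I}\{w\in\partial i\}$ to force the whole $U^*$-ancestry chain from $j$ to $i$ into $\mathcal{S}^+_i$ and hence to share birth time $t^*$, contradicting ${t'^*}^{\min}_i<{t'^*}^{\max}_i<t^*$ --- is new relative to the paper and, as far as I can check, correct (the propagation $n_{j_l}=0$ and the ELC recursion both go through). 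What your route buys is completeness on a case the paper's argument does not visibly cover; what it costs is the multiset bookkeeping. Note that you could shortcut that entire subcase by invoking Lemma~\ref{th:new_M_statistifes_death_longitudional_condition_paper_1} (whose proof does not depend on the present lemma, so there is no circularity): given $M\geq 0$, almost surely $M_{i,:}$ has at most one birth, so ${t'}^{\min}_i>{t'^*}^{\min}_i$ forces $M_{i,{t'^*}^{\min}_i}=0$ while $M^*_{i,{t'^*}^{\min}_i}>0$, and Lemma~\ref{th:M_non_zero_M_zero_Mstar_non_zero_M_star_zero_paper_1} finishes. Also make the union bound over the (finitely many) possible values of the random index $t^*={t'}^{\min}_i$ explicit when you pass from the fixed-$t$ probability-zero statements to the event as stated.
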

\begin{proof}
To prove \eqref{eq:th_tmin_tmax_relation_theorem_eq_1_paper_1}, consider the following two possible scenarios:
1) If ${t'^*}^{\max}_i = M > {t'}^{\max}_i$, then ${\bf M}^*_{i,:}$ is never zero. However, for some $1\leq t \leq n$,  $\ {\bf M}_{i,t} = 0$;
2) If $M > {t'^*}^{\max}_i > {t'}^{\max}_i$, then for some $1\leq t \leq n$, ${\bf M}_{i,t} = 0$ but ${\bf M}^*_{i,t} > 0$. Therefore,
\begin{align*}
&\mathbb{P}({\bf M}_{i,:} \geq {\bf 0} \land {t'^*}^{\max}_i > {t'}^{\max}_i\mid U,U^*) \\
&=\mathbb{P}({\bf M}_{i,:} \geq {\bf 0} \land M={t'^*}^{\max}_i > {t'}^{\max}_i\mid U,U^*)
\\
&\qquad
+ 
\mathbb{P}({\bf M}_{i,:} \geq {\bf 0} \land M>{t'^*}^{\max}_i > {t'}^{\max}_i\mid U,U^*)\\
&\leq 2\ \mathbb{P}( {\bf M}_{i,:} \geq {\bf 0} \land (\exists \ 1\leq t\leq n: M_{i,t}  = 0 \land M^*_{i,t}  > 0) \mid U,U^*)\\
&\leq \sum^n_{t = 1} 2\ \mathbb{P}( {\bf M}_{i,:} \geq {\bf 0} \land M_{i,t}  = 0 \land M^*_{i,t}  > 0 \mid U,U^*) = 0,
\end{align*}
where in the last step we invoke Lemma \ref{th:M_non_zero_M_zero_Mstar_non_zero_M_star_zero_paper_1}.

To prove \eqref{eq:th_tmin_tmax_relation_theorem_eq_2_paper_1}, notice that if ${t'}^{\min}_i > {{t'}^*}^{\min}_i \geq 1$, then ${M}_{i,1} = 0$, and hence there exists some $1\leq t \leq n$ for which ${M}_{i,t} = 0$ but ${M}^*_{i,t} > 0$. Therefore,
\begin{align*}
\mathbb{P}&({\bf M}_{i,:} \geq {\bf 0} \land {t'}^{\min}_i > {{t'}^*}^{\min}_i\mid U,U^*) \nonumber\\
&\leq  \mathbb{P}({\bf M}_{i,:} \geq {\bf 0} \land (\exists \ 1\leq t \leq n: {M}_{i,t} = 0 \land {M}^*_{i,t} > 0 )  \mid U,U^*)  \\
&\leq \sum^n_{t=1} \mathbb{P}({\bf M}_{i,:} \geq {\bf 0} \land {M}_{i,t} = 0 \land {M}^*_{i,t} > 0   \mid U,U^*) =0,
\end{align*}
where in the last step we invoke Lemma \ref{th:M_non_zero_M_zero_Mstar_non_zero_M_star_zero_paper_1}.

To prove \eqref{eq:th_tmin_tmax_relation_theorem_eq_3_paper_1}, note that if ${t'}^{\min}_i < {{t'}^*}^{\min}_i$, then for $t = {t'}^{\min}_i$
we have ${M}_{i,t}>0$ but ${M^*}_{i,t}=0$. Using Lemma \ref{th:representation_of_M_as_difference_paper_1}, ${M}_{i,t}>0$ implies that ${M}_{+,i,t} > 0$, which implies that for some $j\in \mathcal{S}^+_i \subseteq \Delta^* i$ we have ${M}^*_{j,t}>0$. 
By Assumption \ref{ass:continuity_of_M_for_calder_conditions_degeneracy_computations_paper_1}, $M^*$ satisfies \eqref{eq:longitudional_condition_bento_2_paper_1} with probability $1$. 
In particular, when birth times are computed from $M^*$,
the birth time of a child is never before the birth time of its parent. 
Hence, by induction on $U^*$, and when birth times are computed from $M^*$, the birth time of $j$ (a descendant of $i$ in $U^*$) is not before the birth time of $i$. 
Since ${M}^*_{j,t}>0$ implies that $j$ is alive at time $t$, we have ${{t'}^*}^{\min}_i \leq t = {t'}^{\min}_i$ with probability $1$. Therefore, 
\begin{align*}
&\mathbb{P}({\bf M}_{i,:} \geq {\bf 0} \land {t'}^{\min}_i < {{t'}^*}^{\min}_i\mid U,U^*) \\
&\leq \mathbb{P}({\bf M}_{i,:} \geq {\bf 0} \land {t'}^{\min}_i \geq {{t'}^*}^{\min}_i \land {t'}^{\min}_i < {{t'}^*}^{\min}_i\mid U,U^*) \\
&= 0.
\end{align*}
\end{proof}

\subsection{Proofs of Lemmas \ref{th:new_M_statistifes_death_longitudional_condition_paper_1},\ref{th:M_infered_no_zero_when_Mstart_in_calder_is_non_negative_paper_1} and \ref{th:proof_that_M_star_satisfies_t_min_child_t_max_father_relationship_implies_M_also_does_paper_1}}

\begin{lem*}[\ref{th:new_M_statistifes_death_longitudional_condition_paper_1}, restated]
Let $M^*$ satisfy Assumption \ref{ass:continuity_of_M_for_calder_conditions_degeneracy_computations_paper_1}. %
 Let $U^*$ and $U$ be fixed ancestry matrices, and $M = U^{-1} U^* M^*$.
 Then, 
\begin{equation}
\mathbb{P}(M \text{does not satisfy \eqref{eq:right_form_M_condition_paper_1}} 
\;\land\;M \geq 0\mid U,U^*) = 0.
\end{equation}
\end{lem*}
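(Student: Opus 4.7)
The plan is to show that any realization violating \eqref{eq:right_form_M_condition_paper_1} together with $M\geq 0$ forces a contradiction with the ELC satisfied by $M^*$, via the comparison of birth/death times in Lemma \ref{th:th_tmin_tmax_relation_theorem_paper_1}. The key observation is that a violation of \eqref{eq:right_form_M_condition_paper_1} means that some mutant $v$ is observed to die and then be reborn in $M$, and this ``double birth'' pattern is incompatible with the birth/death times computed from $M^*$.

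First I would unpack the meaning of a violation. Suppose that for some mutant $v$ and some $t\geq 2$, $M$ does not satisfy \eqref{eq:right_form_M_condition_paper_1}. Then $M_{v,t-1}=0$, $M_{v,t}>0$, and there exists $t_1<t-1$ with $M_{v,t_1}>0$. On the event $\{M\geq 0\}$, the existence of $t_1$ and the fact that $M_{v,t-1}=0$ imply that ${\bf M}_{v,:}$ transitions from a positive value to zero at some time in $\{t_1+1,\dots,t-1\}$, so the death time ${t'}^{\max}_v$ of Definition \ref{defi:new_definition_of_birth_and_death_time_paper_1}, which is the \emph{minimum} such transition time, satisfies ${t'}^{\max}_v\leq t-1$. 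Meanwhile $M_{v,t-1}=0$ together with $M_{v,t}>0$ is itself a zero-to-positive transition, so the birth time ${t'}^{\min}_v$, which is the \emph{maximum} such transition time, satisfies ${t'}^{\min}_v\geq t$. Combining, ${t'}^{\max}_v<{t'}^{\min}_v$.

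Next I would invoke Lemma \ref{th:th_tmin_tmax_relation_theorem_paper_1}. On the event $\{M\geq 0\}$, equations \eqref{eq:th_tmin_tmax_relation_theorem_eq_2_paper_1} and \eqref{eq:th_tmin_tmax_relation_theorem_eq_3_paper_1} give ${t'}^{\min}_v={t^*}'^{\min}_v$ with probability one, and \eqref{eq:th_tmin_tmax_relation_theorem_eq_1_paper_1} gives ${t^*}'^{\max}_v\leq {t'}^{\max}_v$ with probability one. Chaining these with ${t'}^{\max}_v<{t'}^{\min}_v$ yields ${t^*}'^{\max}_v<{t^*}'^{\min}_v$ with probability one on the joint event. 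However, Assumption \ref{ass:continuity_of_M_for_calder_conditions_degeneracy_computations_paper_1} requires $M^*$ to satisfy the ELC, in particular the redefined form \eqref{eq:longitudional_condition_bento_1_paper_1}, which forces ${t^*}'^{\min}_v<{t^*}'^{\max}_v$ deterministically. This is a contradiction, so the joint event has conditional probability zero.

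Finally, I would conclude by a union bound. The set of violations of \eqref{eq:right_form_M_condition_paper_1} is a finite union over $v\in[q]$ and $t\in\{2,\dots,n\}$ of the events analyzed above; each has conditional probability zero given $U,U^*$, so the whole event has probability zero. The main technical step is the careful case analysis showing ${t'}^{\max}_v\leq t-1<t\leq {t'}^{\min}_v$ from the violation, together with the correct invocation of Lemma \ref{th:th_tmin_tmax_relation_theorem_paper_1} for all three directions of the comparison; the rest is a direct appeal to the ELC satisfied by $M^*$ under Assumption \ref{ass:continuity_of_M_for_calder_conditions_degeneracy_computations_paper_1}.
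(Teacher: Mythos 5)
Your proof is correct, but it follows a genuinely different route from the paper's. You reduce a violation of \eqref{eq:right_form_M_condition_paper_1} to the deterministic statement $t'^{\max}_v \leq t-1 < t \leq t'^{\min}_v$ for the inferred $M$ (your case analysis here is sound: the set defining $t'^{\max}_v$ contains the first positive-to-zero transition after $t_1$, and the set defining $t'^{\min}_v$ contains the violation time $t$, so neither exception is triggered), and then you import Lemma~\ref{th:th_tmin_tmax_relation_theorem_paper_1} to transfer this to ${t'^*}^{\max}_v < {t'^*}^{\min}_v$, contradicting \eqref{eq:longitudional_condition_bento_1_paper_1} for $M^*$. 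The paper instead argues directly from the decomposition ${\bf M}_{v,:}={\bf M}_{+,v,:}-{\bf M}_{-,v,:}$ of Lemma~\ref{th:representation_of_M_as_difference_paper_1}: the ``dip to zero'' at the intermediate time $t_2$ forces $M_{+,v,t_2}=M_{-,v,t_2}$, and the two subcases (tie at a positive value; both parts vanish) are killed respectively by the no-exact-ties clause of Assumption~\ref{ass:continuity_of_M_for_calder_conditions_degeneracy_computations_paper_1} and by exhibiting a descendant $j_1\in\mathcal{S}^+_v$ whose birth would have to coincide exactly with the death of its strict ancestor $v$ in $U^*$. Your argument is shorter and makes the structural point transparent---a violation of \eqref{eq:right_form_M_condition_paper_1} is precisely the negation of \eqref{eq:longitudional_condition_bento_1_paper_1} for the inferred $M$, so the lemma is an inheritance statement---but it outsources all the probabilistic work to Lemma~\ref{th:th_tmin_tmax_relation_theorem_paper_1} (whose own proof uses the same two clauses of Assumption~\ref{ass:continuity_of_M_for_calder_conditions_degeneracy_computations_paper_1}); there is no circularity, since that lemma is established from Lemmas~\ref{th:representation_of_M_as_difference_paper_1}--\ref{th:M_non_zero_M_zero_Mstar_non_zero_M_star_zero_paper_1} without reference to the present statement. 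The paper's version is self-contained modulo the representation lemma and exposes exactly which distributional assumptions are consumed where, which is useful for the relaxations suggested in the conclusion.
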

\begin{proof}
Using a union bound over the mutants, it suffices to show that for any mutant $i$, $\mathbb{P}(E_i\mid U,U^*) = 0$, where $E_i = ({\bf M}_{i,:} \text{ does not satisfy \eqref{eq:right_form_M_condition_paper_1}} 
\;\land\;{\bf M}_{i,:} \geq {\bf 0})$.
If $\mathbb{P}({\bf M}_{i,:} \geq {\bf 0}\mid U,U^*) = 0$, we are done. Hence, from now on we assume that $\mathbb{P}({\bf M}_{i,:} \geq {\bf 0}\mid U,U^*) > 0$.

We start by using the representation in Lemma \ref{th:representation_of_M_as_difference_paper_1} and write
\begin{equation}\label{eq:simple_compact_form_for_M_i_from_U_star_and_M_star_paper_1}
{\bf M}_{i,:} = 
{\bf M}_{+,i,:} - {\bf M}_{-,i,:}.
\end{equation}
Define the event 
\begin{align*}
\tilde{E}_i &= (\exists \ 1 \leq t_1 < t_2 < t_3\leq n :\\
&\qquad 
{M}_{+,i,t_1} > {M}_{-,i,t_1} \geq 0, {M}_{+,i,t_3} > {M}_{-,i,t_3} \geq 0 \\
&\qquad \land {M}_{+,i,t_2} = {M}_{-,i,t_2}),
\end{align*}
and the disjoint events
\begin{align*}
\tilde{E}_{i,0} &= (\exists 1 \leq t_1 < t_2 < t_3\leq n :\\
&\qquad {M}_{+,i,t_1} > {M}_{-,i,t_1} \geq 0, {M}_{+,i,t_3} > {M}_{-,i,t_3} \geq 0 \\
&\qquad \land {M}_{+,i,t_2} = {M}_{-,i,t_2} = 0),
\end{align*}
and 
\begin{align*}
\tilde{E}_{i,1} 
&= (\exists 1 \leq t_1 < t_2 < t_3\leq n : \\
&\qquad {M}_{+,i,t_1} > {M}_{-,i,t_1} \geq 0, {M}_{+,i,t_3} > {M}_{-,i,t_3} \geq 0 \\
&\qquad \land {M}_{+,i,t_2} = {M}_{-,i,t_2} > 0).
\end{align*}
The event that ${\bf M}_{i,:}$ does not satisfy \eqref{eq:right_form_M_condition_paper_1} implies $\tilde{E}_i$.
Therefore, we can write 
\begin{align}
&\mathbb{P}(E_i\mid U,U^*) \nonumber\\
&\leq \mathbb{P}(\tilde{E}_i\mid{\bf M}_{i,:} \geq {\bf 0},U,U^*)\mathbb{P}({\bf M}_{i,:} \geq {\bf 0}\mid U,U^*) \label{eq:dead_for_ever_proof_eq_1_paper_1}\\
& = \mathbb{P}(\tilde{E}_i \land i \in \mathcal{S}^+_i\mid{\bf M}_{i,:} \geq {\bf 0},U,U^*)\mathbb{P}({\bf M}_{i,:} \geq {\bf 0}\mid U,U^*) \label{eq:dead_for_ever_proof_eq_2_paper_1}\\
& = \mathbb{P}(\tilde{E}_{i,0} \land i \in \mathcal{S}^+_i\land{\bf M}_{i,:} \geq {\bf 0}\mid U,U^*) \label{eq:dead_for_ever_proof_eq_3_paper_1}\\
&\quad + \mathbb{P}(\tilde{E}_{i,1} \land i \in \mathcal{S}^+_i\mid {\bf M}_{i,:} \geq {\bf 0},U,U^*)\ \mathbb{P}({\bf M}_{i,:} \geq {\bf 0}\mid U,U^*) \nonumber \\  
& = \mathbb{P}(\tilde{E}_{i,1} \land i \in \mathcal{S}^+_i \land {\bf M}_{i,:} \geq {\bf 0}\mid U,U^*)\label{eq:dead_for_ever_proof_eq_4_paper_1}\\  
&\leq \mathbb{P}( \exists \ 1 < t_2 <  n : {M}_{+,i,t_2} = {M}_{-,i,t_2} > 0 \mid U,U^*) \label{eq:dead_for_ever_proof_eq_5_paper_1}\\
& \leq \sum^{n-1}_{t_2=2} \mathbb{P}( {M}_{+,i,t_2} = {M}_{-,i,t_2} > 0  \mid U,U^*) = 0,\label{eq:dead_for_ever_proof_eq_6_paper_1}
\end{align}
where each step is justified as follows:
\begin{itemize}
\item from \eqref{eq:dead_for_ever_proof_eq_1_paper_1} to \eqref{eq:dead_for_ever_proof_eq_2_paper_1}, we apply Lemma \ref{th:i_is_in_S_plus_paper_1}, which implies that $\mathbb{P}(i \in \mathcal{S}^+_i\mid {\bf M}_{i,:} \geq {\bf 0}) = 1$;
\item from \eqref{eq:dead_for_ever_proof_eq_2_paper_1} to \eqref{eq:dead_for_ever_proof_eq_3_paper_1}, we apply Lemma \ref{th:representation_of_M_as_difference_paper_1}, which states that ${\bf M}_{-,i,:},{\bf M}_{+,i,:} \geq {\bf 0}$.
This implies that these values are either zero or positive, never negative, and hence we can partition $\tilde{E}_i$ as 
$\tilde{E}_i = \tilde{E}_{i,0} \cup \tilde{E}_{i,1}$ with probability $1$;
\item from \eqref{eq:dead_for_ever_proof_eq_3_paper_1} to \eqref{eq:dead_for_ever_proof_eq_4_paper_1}, we use the fact that the event $\tilde{\tilde{E}}_i =(\exists\, 1 \leq t_1<t_2<t_3\leq n: {M}_{+,i,t_1}>0,\  {M}_{+,i,t_3} > 0 \land {M}_{+,i,t_2} = 0)$ satisfies $\tilde{E}_{i,0} \subseteq \tilde{\tilde{E}}_i$ 
and that 
$$\mathbb{P}(\tilde{\tilde{E}}_i \land i \in \mathcal{S}^+_i\land{\bf M}_{i,:} \geq {\bf 0}\mid U,U^*) = 0.$$ 
This fact is proved at the end of the proof. 
This implies that 
\begin{align*}
\mathbb{P}&({\tilde{E}}_{i,0} \land i \in \mathcal{S}^+_i\land{\bf M}_{i,:} \geq {\bf 0}\mid U,U^*) \\
&\leq  \mathbb{P}(\tilde{{\tilde{E}}}_{i} \land i \in \mathcal{S}^+_i\land{\bf M}_{i,:} \geq {\bf 0}\mid U,U^*)\ =\ 0;
\end{align*}
\item from \eqref{eq:dead_for_ever_proof_eq_4_paper_1} to \eqref{eq:dead_for_ever_proof_eq_5_paper_1}, we remove overlapping events to obtain an upper bound;
\item from \eqref{eq:dead_for_ever_proof_eq_5_paper_1} to \eqref{eq:dead_for_ever_proof_eq_6_paper_1}, we apply a union bound and invoke Lemma \ref{th:representation_of_M_as_difference_paper_1} to conclude that the expression equals zero.
\end{itemize}

It remains to prove that 
$$\mathbb{P}(\tilde{\tilde{E}}_i \land i \in \mathcal{S}^+_i\land{\bf M}_{i,:} \geq {\bf 0}\mid U,U^*) = 0.$$
If there exists $1 \leq t_1 < t_2 < t_3\leq n$ such that ${M}_{+,i,t_1} ,{M}_{+,i,t_3} > 0$,
${M}_{+,i,t_2} = 0$,  ${\bf M}_{i,:} \geq 0$,
and $i \in \mathcal{S}^+$, 
then  
${M^*}_{j_1,t_3}  > 0$ for some mutant $j_1 \in \mathcal{S}^+_i$,
and ${M}^*_{j,t_2} = 0$ for all $j \in  \mathcal{S}^+_i$. Since $i,j_1 \in \mathcal{S}^+_i$, it follows that ${M}^*_{i,t_2} = {M}^*_{j_1,t_2} = 0$.
\newline
Since $M^*$ satisfies the longitudinal conditions with probability $1$, the fact that $i$ is dead at $t = t_2$ and $j_1$ is alive after $t_2$ but dead at $t_2$, implies that $t^{\max}_i \leq t_2 \leq t^{\min}_{j_1}$. 
At the same time, since $i$ is an ancestor of $j_1$ (or possibly $j_1 = i$), the longitudinal conditions (cf. \eqref{eq:CALDER_cont_def_paper_1}) imply that $t^{\min}_{j_1} \leq t^{\max}_i$. 
Therefore, we must have $t^{\min}_{j_1} = t^{\max}_i$, and  because of \eqref{eq:extra_condition_to_add_to_longitudional_conditions_paper_1}, this requires that $i \neq j_1$. Hence, $i$ is a strict ancestor of $j_1$.
Furthermore, by Assumption \ref{ass:continuity_of_M_for_calder_conditions_degeneracy_computations_paper_1}, the probability that a child is born exactly when its parent dies is zero. 
By finite induction, the probability that $t^{\min}_{j_1} = t^{\max}_i$, i.e., that a mutant is born exactly when one of its ancestors dies, is zero. 
Thus, 
$$\mathbb{P}(\tilde{\tilde{E}}_i \land i \in \mathcal{S}^+_i\land{\bf M}_{i,:} \geq {\bf 0}\mid U,U^*),$$
which is bounded by the probability of the event that $t^{\min}_{j_1} = t^{\max}_i$,
is equal to zero. 
\end{proof}

\begin{lem*}[\ref{th:M_infered_no_zero_when_Mstart_in_calder_is_non_negative_paper_1}, restated]
Let $M^*$ satisfy Assumption \ref{ass:continuity_of_M_for_calder_conditions_degeneracy_computations_paper_1}. %
 Let $U^*$ and $U$ be fixed ancestry matrices, and let $M = U^{-1} U^* M^*$.
 Then,
\begin{equation}
\mathbb{P}(M \text{ violates \eqref{eq:non_zero_M_paper_1}} 
\;\land\;M \geq 0\mid U, U^*) = 0.
\end{equation}
\end{lem*}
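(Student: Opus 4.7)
\vspace{0.3cm}
\noindent\textbf{Proof proposal for Lemma \ref{th:M_infered_no_zero_when_Mstart_in_calder_is_non_negative_paper_1}.}

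The plan is to reduce the statement to an application of the tools already developed for proving the companion Lemmas \ref{th:new_M_statistifes_death_longitudional_condition_paper_1} and \ref{th:representation_of_M_as_difference_paper_1}. By a union bound over $i \in [q]$, it suffices to show that for each mutant $i$,
\begin{equation*}
\mathbb{P}\bigl({\bf M}_{i,:} = {\bf 0} \land {\bf M}_{i,:} \geq {\bf 0} \mid U, U^*\bigr) = 0.
\end{equation*}
If $\mathbb{P}({\bf M}_{i,:} \geq {\bf 0}\mid U,U^*)=0$ we are done, so we may assume this probability is positive and invoke Lemma \ref{th:i_is_in_S_plus_paper_1} to conclude that $i\in\mathcal{S}^+_i$ with conditional probability one. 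Using Lemma \ref{th:representation_of_M_as_difference_paper_1}, I rewrite ${\bf M}_{i,:}={\bf M}_{+,i,:}-{\bf M}_{-,i,:}$, and observe that under ${\bf M}_{i,:}\geq{\bf 0}$ (which forces both ${\bf M}_{+,i,:},{\bf M}_{-,i,:}\geq{\bf 0}$ entrywise), the condition ${\bf M}_{i,:}={\bf 0}$ is equivalent to ${M}_{+,i,t}={M}_{-,i,t}$ for every $t\in[n]$.

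Next I split into two cases handled by two completely different mechanisms. \textbf{Case A:} ${M}_{+,i,t}={M}_{-,i,t}=0$ for all $t$. Since $i\in\mathcal{S}^+_i$ with probability one and $M^*\geq 0$, each term in the sum $M_{+,i,t}=\sum_{j\in \mathcal{S}^+_i}M^*_{j,t}$ vanishes, and in particular $M^*_{i,t}=0$ for all $t$, i.e., ${\bf M}^*_{i,:}={\bf 0}$. By Assumption \ref{ass:continuity_of_M_for_calder_conditions_degeneracy_computations_paper_1}, $M^*$ satisfies the ELC almost surely, and the ELC (via condition \eqref{eq:extra_condition_to_add_to_longitudional_conditions_paper_1}, equivalent to \eqref{eq:extra_long_constraint_no_zero_M_paper_1} by Lemma \ref{th:equivalent_of_extending_calder_long_conditions_in_one_or_in_another_way_paper_1}) forbid ${\bf M}^*_{i,:}={\bf 0}$. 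Hence Case A has probability zero. \textbf{Case B:} there exists some $t$ with ${M}_{+,i,t}={M}_{-,i,t}>0$. By a union bound over $t\in[n]$ and the second part of Lemma \ref{th:representation_of_M_as_difference_paper_1}, which asserts $\mathbb{P}(M_{+,i,t}=M_{-,i,t}>0\mid U,U^*)=0$, Case B also has probability zero. Summing the two contributions completes the proof.

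The only delicate step is Case A, where we need $\mathcal{S}^+_i$ to be well-defined and to contain $i$ so that the vanishing of ${\bf M}_{+,i,:}$ forces the vanishing of ${\bf M}^*_{i,:}$. This is exactly the content of Lemma \ref{th:i_is_in_S_plus_paper_1}, whose hypothesis ${\bf M}_{i,:}\geq{\bf 0}$ is precisely the event we are conditioning on. Case B is a direct appeal to the absolute continuity part of Assumption \ref{ass:continuity_of_M_for_calder_conditions_degeneracy_computations_paper_1} as packaged by Lemma \ref{th:representation_of_M_as_difference_paper_1}, so no further work is needed there. The main conceptual obstacle is recognizing that ``${\bf M}_{i,:}={\bf 0}$ under ${\bf M}_{i,:}\geq{\bf 0}$'' cleanly splits into (i) a ``global extinction of $i$ in $M^*$'' event, ruled out by the no-always-dead-mutant clause of the ELC, and (ii) a ``coincidence of two disjoint sums'' event, ruled out by the continuity clause of Assumption \ref{ass:continuity_of_M_for_calder_conditions_degeneracy_computations_paper_1}.
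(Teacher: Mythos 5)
Your proposal is correct and follows essentially the same route as the paper's proof: both reduce to a single mutant via a union bound, invoke Lemma \ref{th:i_is_in_S_plus_paper_1} to place $i$ in $\mathcal{S}^+_i$, and then rule out ${\bf M}_{i,:}={\bf 0}$ by combining the almost-sure non-extinction of ${\bf M}^*_{i,:}$ under the ELC with the zero-probability coincidence $M_{+,i,t}=M_{-,i,t}>0$ from Lemma \ref{th:representation_of_M_as_difference_paper_1}. Your explicit Case A/Case B split is just a presentational variant of the paper's move of conditioning on ${\bf M}^*_{i,:}\neq{\bf 0}$ and selecting a $t$ with $M^*_{i,t}>0$; the only nit is that ${\bf M}_{+,i,:},{\bf M}_{-,i,:}\geq{\bf 0}$ follows from $M^*\geq 0$ rather than from ${\bf M}_{i,:}\geq{\bf 0}$, which does not affect the argument.
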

\begin{proof}

Using a union bound, it is sufficient to prove that 
$\mathbb{P}({\bf M}_{i,:} = {\bf 0}\mid U, U^*) = 0$ for any mutant $i$. 
If $\mathbb{P}({\bf M}_{i,:} \geq {\bf 0}\mid U, U^*) = 0$, then we are done. From now on, we assume that $\mathbb{P}({\bf M}_{i,:} \geq {\bf 0}\mid U, U^*) > 0$.

Use Lemma \ref{th:representation_of_M_as_difference_paper_1} to write 
\begin{equation}
{\bf M}_{i,:} = 
{\bf M}_{+,i,:} - {\bf M}_{-,i,:},
\end{equation}
where the definition of ${\bf M}_{+,i,:}$ depends on the set $\mathcal{S}^+_i$ defined in Lemma \ref{th:representation_of_M_as_difference_paper_1}. 
Now write 
\begin{align*}
    \mathbb{P}&({\bf M}_{i,:} = {\bf 0}\mid U, U^*) 
    = \mathbb{P}({\bf M}_{i,:} = {\bf 0} \land {\bf M}_{i,:} \geq {\bf 0}\mid U, U^*)\\
    &= \mathbb{P}({\bf M}_{i,:} = {\bf 0} \mid {\bf M}_{i,:} \geq {\bf 0} , U, U^*) \mathbb{P}( {\bf M}_{i,:} \geq {\bf 0} \mid U, U^*) \\
    &= \mathbb{P}({\bf M}_{i,:} = {\bf 0} \land i \in \mathcal{S}^+_i\mid {\bf M}_{i,:} \geq {\bf 0} , U, U^*) \mathbb{P}( {\bf M}_{i,:} \geq {\bf 0} \mid U, U^*),
\end{align*} 
where in the last step we apply Lemma \ref{th:i_is_in_S_plus_paper_1}, which implies that $\mathbb{P}(i \in \mathcal{S}^+_i\mid {\bf M}_{i,:} \geq {\bf 0} , U, U^*) = 1$.
Since by Assumption \ref{ass:continuity_of_M_for_calder_conditions_degeneracy_computations_paper_1}, $M^*$ satisfies the longitudinal conditions with probability $1$, we have that ${\bf M}^*_{i,:}\neq {\bf 0}$ with probability $1$. 
Also, by Assumption \ref{ass:continuity_of_M_for_calder_conditions_degeneracy_computations_paper_1} we have that ${\bf M}^*_{i,:} \geq {\bf 0}$ with probability $1$. Hence, we can  write 
\begin{align*} 
\mathbb{P}&({\bf M}_{i,:} = {\bf 0}\mid U, U^*)  \\
& = \mathbb{P}({\bf M}_{i,:} = {\bf 0} \land i \in \mathcal{S}^+_i \land  {\bf M}^*_{i,:}\neq {\bf 0}\land  {\bf M}^*_{i,:}\geq {\bf 0}\mid U, U^*).
\end{align*} 

Consider the event inside the last probability expression above.
Let $t$ be a time point such that $M^*_{i,t} > 0$. 
For $M_{i,t}$ to be zero, it must be that ${M_{-,i,t}} = {M_{+,i,t}}$. Since $i \in \mathcal{S}^+_i$, we have ${M_{+,i,t}} \geq M^*_{i,t} > 0$. Hence,  
$$\mathbb{P}({\bf M}_{i,:} = {\bf 0}\mid U, U^*) \leq \sum^n_{t=1} \mathbb{P}({M_{-,i,t}} = {M_{+,i,t}} > 0\mid U,U^*),$$
which, by Lemma \ref{th:representation_of_M_as_difference_paper_1} is equal to $0$.
\end{proof}

\begin{lem*}[\ref{th:proof_that_M_star_satisfies_t_min_child_t_max_father_relationship_implies_M_also_does_paper_1}, restated]
Let $M^*$ satisfy Assumption \ref{ass:continuity_of_M_for_calder_conditions_degeneracy_computations_paper_1}. 
Let $U^*$ and $U$ be fixed ancestry matrices, and $M = U^{-1} U^* M^*$. Then,
\begin{equation}
\mathbb{P}(M \text{ does not satisfy \eqref{eq:right_birth relative_to_father_birth_paper_1} }
\;\land\; M \geq 0 \mid U,U^*) = 0.
\end{equation}
\end{lem*}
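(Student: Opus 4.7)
The plan is to follow the template of the proofs of Lemmas~\ref{th:new_M_statistifes_death_longitudional_condition_paper_1} and~\ref{th:M_infered_no_zero_when_Mstart_in_calder_is_non_negative_paper_1}: rewrite the bad event in terms of $M^*$ via the decomposition from Lemma~\ref{th:representation_of_M_as_difference_paper_1}, use Assumption~\ref{ass:continuity_of_M_for_calder_conditions_degeneracy_computations_paper_1} to force a large collection of $M^*$ entries to vanish with probability one, and then exhibit an entry that must be positive by the ELC on $M^*$ to produce a contradiction. By a union bound over pairs $(v,w)$ with $\mathcal{T}_{v,w}=1$ and over times $t\in[n]$, it suffices to show conditional probability zero for a fixed triple. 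For $t\geq 2$ the bad event $B_{v,w,t}$ is $\{M_{w,t-1}=0,\ M_{w,t}>0,\ M_{v,t-1}=0,\ M_{v,t}=0\}\cap\{M\geq 0\}$; the case $t=1$ is analogous but easier.

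Using $M_{i,s}=M_{+,i,s}-M_{-,i,s}$ from Lemma~\ref{th:representation_of_M_as_difference_paper_1}, the three vanishing conditions translate into three equalities $M_{+,i,s}=M_{-,i,s}$ for $(i,s)\in\{(v,t-1),(v,t),(w,t-1)\}$. Since $\mathcal{S}^+_i$ and $\mathcal{S}^-_i$ are disjoint multisets, Assumption~\ref{ass:continuity_of_M_for_calder_conditions_degeneracy_computations_paper_1} forces each equality to hold with probability zero unless both sides vanish. Union-bounding, with probability one on $B_{v,w,t}$ we have $M^*_{j,s}=0$ for all $j\in\mathcal{A}_v:=\mathcal{S}^+_v\cup\mathcal{S}^-_v$ and $s\in\{t-1,t\}$, and $M^*_{j,t-1}=0$ for all $j\in\mathcal{A}_w:=\mathcal{S}^+_w\cup\mathcal{S}^-_w$. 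The remaining strict inequality $M_{w,t}>0$ exhibits some $j^*\in\mathcal{S}^+_w$ with $M^*_{j^*,t}>0$; together with $M^*_{j^*,t-1}=0$ (because $j^*\in\mathcal{A}_w$) this gives ${t'^*}^{\min}_{j^*}=t$.

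Let $p_0=j^*,p_1,\dots,p_K=r^*$ denote the ancestor chain of $j^*$ in $U^*$; since $j^*\in\Delta^*w$, we have $w=p_\ell$ for some $\ell\geq 0$. I would then split on the position of $v$ relative to this chain. If $v\notin\{p_0,\dots,p_K\}$, a direct multiplicity calculation (using $w\in\partial v$ and $j^*\in\Delta^*w$) gives $j^*\in\mathcal{S}^-_v\subseteq\mathcal{A}_v$, contradicting $M^*_{j^*,t}>0$. If $v=p_m$ with $m\leq\ell$, then $w\in\partial v$ is an ancestor of $v$ in $U^*$, forcing $v\notin\mathcal{S}^+_v$; by Lemma~\ref{th:i_is_in_S_plus_paper_1} this subcase has conditional probability zero given $M_{v,:}\geq 0$. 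In the remaining case $v=p_m$ with $m>\ell$, I would propagate the ELC on $M^*$ upward along the chain: a multiplicity computation shows $p_i\in\mathcal{A}_w$ for $1\leq i\leq\ell$ and $p_i\in\mathcal{A}_v$ for $\ell<i\leq m$; the ELC forces each $p_i$ (inductively, since $p_{i-1}$ is born at $t$) to be alive at $t-1$ or $t$, and the constraint $M^*_{p_i,t-1}=0$ for $i\leq\ell$ then forces $p_i$ to be born exactly at $t$; in particular $w=p_\ell$ is born at $t$, which by ELC forces $p_{\ell+1}$ to be alive at $t-1$ or $t$, contradicting $p_{\ell+1}\in\mathcal{A}_v$. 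Non-simple subcases in which additional children of $v$ or $w$ in $U$ lie on the chain only enlarge $\mathcal{A}_v$ and $\mathcal{A}_w$ and are handled identically. The main obstacle is the combinatorial bookkeeping: verifying the multiplicity computations for the various positional configurations of $v$, $w$, and the children of $v,w$ in $U$ relative to the $U^*$-chain of $j^*$, and confirming that the chain-propagation terminates in a contradiction in every configuration.
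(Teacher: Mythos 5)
Your route is genuinely different from the paper's in the hard case. The paper handles the configuration where $w$ is a $U^*$-descendant of $v$ by passing through the birth/death-time machinery (Lemma \ref{th:th_tmin_tmax_relation_theorem_paper_1}, which equates $t'^{\min}_w$ with ${t'^*}^{\min}_w$ and lower-bounds $t'^{\max}_v$ by ${t'^*}^{\max}_v$ on $\{M\geq 0\}$), and then proves a two-node dichotomy: at least one of $w$ or its $U^*$-parent $u$ must survive the cancellation and land in $\mathcal{S}^+_v\cup\mathcal{S}^-_v$, after which the ELC on $M^*$ (child born no later than parent's death) gives the contradiction. Your chain-propagation via \eqref{eq:right_birth relative_to_father_birth_paper_1} applied directly to $M^*$ along the $U^*$-ancestor chain of $j^*$ bypasses those time-comparison lemmas entirely, and your Cases A and $m<\ell$ are correct (the latter correctly reduces to Lemma \ref{th:i_is_in_S_plus_paper_1}). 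Your claim that $p_i\in\mathcal{A}_w$ for $1\leq i\leq\ell$ also holds, though it deserves a one-line justification: $j^*\in\mathcal{S}^+_w$ forces $j^*$ to have multiplicity zero in $\mathcal{S}'^-_w$, hence no child of $w$ in $U$ lies on the chain at all, hence every $p_i$ with $i\leq\ell$ has multiplicity zero in $\mathcal{S}'^-_w$ and sits in $\mathcal{S}^+_w$.

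The genuine gap is the claim that $p_i\in\mathcal{A}_v$ for $\ell<i\leq m$, together with the assertion that additional children of $v$ on the chain ``only enlarge'' $\mathcal{A}_v$. This is false: if $v$ has exactly one child $k\neq w$ in $U$ that is a $U^*$-ancestor-or-self of $p_{\ell+1}$, then each $p_i$ with $\ell<i\leq m$ appears once in $\mathcal{S}'^+_v=\Delta^* v$ and once in $\mathcal{S}'^-_v$ (via $\Delta^* k$), cancels, and belongs to neither $\mathcal{S}^+_v$ nor $\mathcal{S}^-_v$ --- so your terminal contradiction at $p_{\ell+1}$ evaporates. The repair is the same dichotomy the paper establishes, just phrased on your chain: let $c$ be the number of children of $v$ in $U$ lying in $\{p_{\ell+1},\dots,p_K\}$. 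If $c=0$ then $p_{\ell+1}$ has multiplicity zero in $\mathcal{S}'^-_v$, so $p_{\ell+1}\in\mathcal{S}^+_v\subseteq\mathcal{A}_v$ and your ELC propagation closes the argument. If $c\geq 1$ then $j^*$ appears at least twice in $\mathcal{S}'^-_v$ (once via $\Delta^* w$ and once via $\Delta^* k$, and $k\neq w$ since $k$ is an ancestor-or-self of $w$'s $U^*$-parent) but only once in $\mathcal{S}'^+_v$, so $j^*\in\mathcal{S}^-_v\subseteq\mathcal{A}_v$ and $M^*_{j^*,t}=0$, contradicting $M^*_{j^*,t}>0$ directly. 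With that case split in place of the blanket multiplicity claim, your proof goes through and is arguably more self-contained than the paper's, at the cost of the multiset bookkeeping you flagged.
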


\begin{proof}
If $M$ does not satisfy \eqref{eq:right_birth relative_to_father_birth_paper_1}, then there exists a time $t$, and a mutant $v$ with child $w$ in $U$, such that $M_{w,t} > 0 \land M_{w,t-1} = 0 \land M_{v,t}+M_{v,t-1} = 0$. 
It should be understood that if $t =  1$, then these expressions, as well as the expressions below, should be read with the convention $M_{w,t-1} = M_{v,t-1} = 0$.

We will first prove that for any $v$ and child $w$ in $U$ we have 
\begin{align*}
\mathbb{P}&(M \geq 0 \land M_{w,t} > 0 \land M_{w,t-1} = 0 \land M_{v,t}+M_{v,t-1} = 0\\
&\mid U,U^*) = \ 0.
\end{align*}
Using a union bound over $v$ and $t$ then finishes the proof.

To prove $\mathbb{P}(M \geq 0 \land M_{w,t} > 0 \land M_{w,t-1} = 0 \land M_{v,t}+M_{v,t-1} = 0\mid U,U^*) = 0$, we write 
\begin{align}
&\mathbb{P}(M \geq 0 \land M_{w,t} > 0 \land M_{w,t-1} = 0 \land M_{v,t}+M_{v,t-1} = 0\mid U,U^*) \label{th:proof_that_M_star_satisfies_t_min_child_t_max_father_relationship_implies_M_also_does_eq1_paper_1}\\
&\leq \mathbb{P}(M \geq 0 \land M_{+,w,t} > 0
\land M_{+,v,t}=M_{-,v,t}\mid U,U^*)\label{th:proof_that_M_star_satisfies_t_min_child_t_max_father_relationship_implies_M_also_does_eq2_paper_1}\\
& \leq  \mathbb{P}(M \geq 0 \land M_{+,w,t} > 0 
\land M_{-,v,t} = 0\mid U,U^*)\label{th:proof_that_M_star_satisfies_t_min_child_t_max_father_relationship_implies_M_also_does_eq3_paper_1}\\
& \leq  \mathbb{P}(M \geq 0 \land M^*_{j,t} > 0 \text{ for some } j \in \mathcal{S}^+_w \nonumber\\
&\qquad\qquad\land M^*_{j,t} = 0 \forall j \in \mathcal{S}^-_v \mid U,U^*) \label{eq:proof_child_father_birth_time_eq_1_paper_1}
\end{align}
where
\begin{itemize}
\item from \eqref{th:proof_that_M_star_satisfies_t_min_child_t_max_father_relationship_implies_M_also_does_eq1_paper_1} to \eqref{th:proof_that_M_star_satisfies_t_min_child_t_max_father_relationship_implies_M_also_does_eq2_paper_1}, we (a) apply the representation from Lemma \ref{th:representation_of_M_as_difference_paper_1},  which implies that $M_{w,t} > 0$ entails $M_{+,w,t}>0$;
and (b) we note that since $M \geq 0$,
the condition $M_{v,t}+M_{v,t-1} = 0$ implies $M_{v,t} = 0$, which in turn implies  $M_{+,v,t} = M_{-,v,t}$ by Lemma \ref{th:representation_of_M_as_difference_paper_1};
\item from \eqref{th:proof_that_M_star_satisfies_t_min_child_t_max_father_relationship_implies_M_also_does_eq2_paper_1} to \eqref{th:proof_that_M_star_satisfies_t_min_child_t_max_father_relationship_implies_M_also_does_eq3_paper_1}, we use the fact stated in Lemma \ref{th:representation_of_M_as_difference_paper_1} that $\mathbb{P}(M_{+,i,t} = M_{-,i,t} > 0\mid U,U^*) = 0$;
\item from \eqref{th:proof_that_M_star_satisfies_t_min_child_t_max_father_relationship_implies_M_also_does_eq3_paper_1} to \eqref{eq:proof_child_father_birth_time_eq_1_paper_1}, we use the definition of $M_{+,w,t}$ and $M_{-,v,t}$ from Lemma \ref{th:representation_of_M_as_difference_paper_1},
and recall that since $M^* \geq 0$,
if a sum of components of $M^*$ is zero, then each component must be zero. We also recall (cf. Lemma \ref{th:representation_of_M_as_difference_paper_1} and Remark \ref{rmk:S_do_not_depend_on_time_paper_1}) that the (multi)sets $\mathcal{S}^+_w$ do not depend on time.
\end{itemize}

We now consider the following three possible scenarios: 
\\
Scenario 1: $w$ is an ancestor of $v$ in $U^*$; \\
Scenario 2: $w$ is neither an ancestor nor a descendant of $v$ in $U^*$; \\
Scenario 3: $w$ is a descendant of $v$ in $U^*$.\\
Recall that by definition $\mathcal{S}'^-_v \supseteq \Delta^* w$ and $\mathcal{S}'^+_v  = \Delta^* v$ (there are no repeated elements in $\mathcal{S}'^+_v$ but there might be in $\mathcal{S}'^-_v$). Now, we analyze each scenario in detail.

\emph{Scenario 1}: If $w$ is an ancestor of $v$ in $U^*$, then $\Delta^* w \supseteq \Delta^* v$ and hence $\mathcal{S}^+_v = \mathcal{S}'^+_v - \mathcal{S}'^-_v = \emptyset$, so that ${\bf M}_{+,v,:} = {\bf 0}$. 
In this case, $M \geq 0$ implies ${\bf M}_{v,:} ={\bf M}_{+,v,:}-{\bf M}_{-,v,:} = -{\bf M}_{-,v,:} \geq {\bf 0}$. 
Since, conditioned on $U$ and $U^*$, $M^* \geq 0$ holds with probability $1$ by Assumption \ref{ass:continuity_of_M_for_calder_conditions_degeneracy_computations_paper_1}, this further implies that $ {\bf 0}\leq {\bf M}_{-,v,:} \leq {\bf 0}$ element-wise.
Consequently, there must exist some $j \in \mathcal{S}^-_v$ such that ${\bf M}^*_{j,:} = {\bf 0}$, which contradicts Assumption \ref{ass:continuity_of_M_for_calder_conditions_degeneracy_computations_paper_1}, stating that $M^*$ satisfies the longitudinal conditions that require all mutants to be observed to be alive for at least one time sample.
Therefore, we can bound \eqref{eq:proof_child_father_birth_time_eq_1_paper_1} by zero.

\emph{Scenario 2}: If $w$ is not related to $v$ in $U^*$, then $\mathcal{S}^-_v = \mathcal{S}'^-_v  - \mathcal{S}'^+_v = \mathcal{S}'^-_v - \Delta^* v \supseteq \Delta^* w$. 
At the same time, we always have $\mathcal{S}^+_w \subseteq \Delta^* w$. Therefore, the event in \eqref{eq:proof_child_father_birth_time_eq_1_paper_1} implies that $M \geq 0 \land M^*_{j,t} = 0 \land M^*_{j,t} > 0 \text{ for some } j \in \mathcal{S}^+_w$, an event which occurs with probability zero.

\emph{Scenario 3}: Assume now that $w$ is a descendant of $v$ in $U^*$.
For any mutant $i$, define birth and death times ${t'^*}^{\min}_i$ and ${t'^*}^{\max}_i$ from $M^*$ and ${t'}^{\min}_i$ and ${t'}^{\max}_i$ from $M$ as  in Section \ref{sec:redefinition_of_longitudional_conditions_paper_1}. 
Let $t$ be a time, and $v$ be a mutant with child $w$ in $U$, such that $ M_{w,t} > 0 \land M_{w,t-1} = 0 \land M_{v,t}+M_{v,t-1} = 0$. 
It follows that $t'^{\min}_w  = t$, and that ${\bf M}_{v,:}$ is dead at both $t$ and $t-1$ if $t \geq 2$. 
This last fact implies that if $t = 1$, then $t'^{\min}_v  > t$, and if $t > 2$, then either $t'^{\min}_v  > t$ or  $t'^{\max}_v < t$.

If $t =1$, we can write
\begin{align}
&\mathbb{P}(M \geq 0 \land M_{w,t} > 0 \land M_{v,t}= 0\mid U,U^*) \\
&\leq \mathbb{P}(  M \geq 0 \land t'^{\min}_v  >  t'^{\min}_w  \mid U,U^*) \label{eq:proof_that_M_star_satisfies_t_min_child_t_max_father_relationship_implies_M_also_does_eq44_paper_1}\\
&= \mathbb{P}(  M \geq 0 \land {t'^*}^{\min}_v  >  {t'^*}^{\min}_w  \mid U,U^*)\label{eq:proof_that_M_star_satisfies_t_min_child_t_max_father_relationship_implies_M_also_does_eq55_paper_1}\\
&\leq \mathbb{P}(  M \geq 0 \land {t'^*}^{\min}_v  >  {t'^*}^{\min}_w  \land  {t'^*}^{\min}_v  \leq  {t'^*}^{\min}_w  \mid U,U^*) \label{eq:proof_that_M_star_satisfies_t_min_child_t_max_father_relationship_implies_M_also_does_eq66_paper_1}\\
& =\ 0,
\end{align}
where
\begin{itemize}
    \item from \eqref{eq:proof_that_M_star_satisfies_t_min_child_t_max_father_relationship_implies_M_also_does_eq44_paper_1} to \eqref{eq:proof_that_M_star_satisfies_t_min_child_t_max_father_relationship_implies_M_also_does_eq55_paper_1}, we apply \eqref{eq:th_tmin_tmax_relation_theorem_eq_2_paper_1}-\eqref{eq:th_tmin_tmax_relation_theorem_eq_3_paper_1} in Lemma \ref{th:th_tmin_tmax_relation_theorem_paper_1}, which show that for any mutant $i$, the event $M\geq 0 \land t'^{\min}_i= {t'^*}^{\min}_i$ occurs with probability $1$;
    \item from \eqref{eq:proof_that_M_star_satisfies_t_min_child_t_max_father_relationship_implies_M_also_does_eq55_paper_1} to \eqref{eq:proof_that_M_star_satisfies_t_min_child_t_max_father_relationship_implies_M_also_does_eq66_paper_1}, we use the fact that $M^*$ satisfies Assumption \ref{ass:continuity_of_M_for_calder_conditions_degeneracy_computations_paper_1} with probability $1$, which implies that the birth time of a parent is never later than that of its child. 
    By induction, this implies that, with probability $1$, ${t'^*}^{\min}_v  \leq  {t'^*}^{\min}_w$, since $w$ is a descendant of $v$ in $U^*$.
    Because both conditions ${t'^*}^{\min}_v  >  {t'^*}^{\min}_w$ and ${t'^*}^{\min}_v  \leq  {t'^*}^{\min}_w$ cannot hold simultaneously, the result follows.
\end{itemize} 

For the rest of the proof, we consider $t \geq 2$. We can write
\begin{align}
&\mathbb{P}(M \geq 0 \land M_{w,t} > 0 \land M_{w,t-1} = 0 \land M_{v,t}+M_{v,t-1} = 0\mid U,U^*) \nonumber\\
&\leq \mathbb{P}(  M \geq 0 \land t'^{\min}_v  >  t'^{\min}_w  \mid U,U^*) \nonumber\\
&\qquad+ \mathbb{P}(  M \geq 0 \land t'^{\max}_v  <  t'^{\min}_w  \mid U,U^*)\label{eq:proof_that_M_star_satisfies_t_min_child_t_max_father_relationship_implies_M_also_does_eq4_paper_1}\\
&= \mathbb{P}(  M \geq 0 \land {t'^*}^{\min}_v  >  {t'^*}^{\min}_w  \mid U,U^*) \nonumber\\
&\qquad+ \mathbb{P}(  M \geq 0 \land t'^{\max}_v  <  t'^{\min}_w  \mid U,U^*)\label{eq:proof_that_M_star_satisfies_t_min_child_t_max_father_relationship_implies_M_also_does_eq5_paper_1}\\
&\leq \mathbb{P}(  M \geq 0 \land {t'^*}^{\min}_v  >  {t'^*}^{\min}_w  \land  {t'^*}^{\min}_v  \leq  {t'^*}^{\min}_w  \mid U,U^*) \nonumber\\
&\qquad+ \mathbb{P}(  M \geq 0 \land t'^{\max}_v  <  t'^{\min}_w  \mid U,U^*)\label{eq:proof_that_M_star_satisfies_t_min_child_t_max_father_relationship_implies_M_also_does_eq6_paper_1}\\
&= \mathbb{P}(  M \geq 0 \land t'^{\max}_v  <  t'^{\min}_w  \mid U,U^*)\label{eq:proof_that_M_star_satisfies_t_min_child_t_max_father_relationship_implies_M_also_does_eq7_paper_1}\\
&= \mathbb{P}(  M \geq 0 \land {t'^*}^{\max}_v \leq t'^{\max}_v  <  t'^{\min}_w = {t'^*}^{\min}_w \mid U,U^*)\label{eq:proof_that_M_star_satisfies_t_min_child_t_max_father_relationship_implies_M_also_does_eq8_paper_1},
\end{align}
where 
\begin{itemize}
\item from \eqref{eq:proof_that_M_star_satisfies_t_min_child_t_max_father_relationship_implies_M_also_does_eq4_paper_1} to \eqref{eq:proof_that_M_star_satisfies_t_min_child_t_max_father_relationship_implies_M_also_does_eq5_paper_1}, we apply \eqref{eq:th_tmin_tmax_relation_theorem_eq_2_paper_1}-\eqref{eq:th_tmin_tmax_relation_theorem_eq_3_paper_1} from Lemma \ref{th:th_tmin_tmax_relation_theorem_paper_1}, which imply that for any mutant $i$, the event $M\geq 0 \land t'^{\min}_i= {t'^*}^{\min}_i$ holds with probability $1$;
\item from \eqref{eq:proof_that_M_star_satisfies_t_min_child_t_max_father_relationship_implies_M_also_does_eq5_paper_1} to \eqref{eq:proof_that_M_star_satisfies_t_min_child_t_max_father_relationship_implies_M_also_does_eq6_paper_1}, we use the fact that $M^*$ satisfies Assumption \ref{ass:continuity_of_M_for_calder_conditions_degeneracy_computations_paper_1} with probability $1$, which implies that the birth time of a parent cannot exceed that of its child.
Hence, by induction, ${t'^*}^{\min}_v  \leq  {t'^*}^{\min}_w$ with probability $1$, since $w$ is a descendant of $v$ in $U^*$;
\item from \eqref{eq:proof_that_M_star_satisfies_t_min_child_t_max_father_relationship_implies_M_also_does_eq7_paper_1} to \eqref{eq:proof_that_M_star_satisfies_t_min_child_t_max_father_relationship_implies_M_also_does_eq8_paper_1}, we apply \eqref{eq:th_tmin_tmax_relation_theorem_eq_1_paper_1} from Lemma \ref{th:th_tmin_tmax_relation_theorem_paper_1}.
\end{itemize}
Let $u$ be the father of $w$ on $U^*$. If $u=v$, since $M^*$ satisfies the longitudinal conditions (in particular  \eqref{eq:longitudional_condition_bento_2_paper_1} with probability $1$), we obtain
\begin{align*}
\mathbb{P}&(  M \geq 0 \land {t'^*}^{\max}_v \leq t'^{\max}_v  <  t'^{\min}_w = {t'^*}^{\min}_w \mid U,U^*) \\
&\leq \mathbb{P}( {t'^*}^{\max}_v <  {t'^*}^{\min}_w \mid U,U^*)\ =\ 0.
\end{align*}
If $u \neq v$, we also arrive at a contradiction, as shown next.

Assume thus that $w$ is a descendant of $v$ in $U^*$, $u$ is the father of $w$ in $U^*$, and that $u \neq v$.
Represent $M_v$ and $M_w$ using  Lemma \ref{th:representation_of_M_as_difference_paper_1} and consider the corresponding sets $\mathcal{S}^+_v,\mathcal{S}^-_v,\mathcal{S}^+_w$ and $\mathcal{S}^-_w$. 
We show that it cannot simultaneously hold that (a) $w \notin \mathcal{S}^+_v \cup \mathcal{S}^-_v$ and (b) $u \notin \mathcal{S}^+_v \cup \mathcal{S}^-_v$, where the union of multisets preserves repeated elements.
To see this, recall that $\mathcal{S}'^+_v = \Delta^* v \supseteq \{w,u\}$, 
that
$\mathcal{S}^+_v = \mathcal{S}'^+_v - \mathcal{S}'^-_v$ has no repeated elements, and that since $w$ is a child of $v$ in $U$, we have $\mathcal{S}'^-_v =\Delta^* w \cup (\cdots)$. 

If (a) holds, then, since by definition $w \in \mathcal{S}'^-_v$, it must be that $w \in \mathcal{S}'^+_v$, and that $w$ appears only once in  $\mathcal{S}'^-_v$. 
If (b) also holds, then, since by definition $u \in \mathcal{S}'^+_v$, it must be that $u$ appears only once in $ \mathcal{S}'^-_v$. 
However, for $u$ to appear in $\mathcal{S}'^-_v$, it must belong to $\Delta^* a$ for some $a$ that is either $u$ itself or an ancestor of $u$ in $U^*$. 
Hence, the multiset $\mathcal{S}'^-_v = \Delta^* a \cup
\Delta^* w \cup (\cdots)$, where the  union preserves repetitions. Therefore, since $w \in \Delta^* a$ and $w \in \Delta^* w$, the multiset $\mathcal{S}'^-_v$ contains $w$ at least twice. 
This contradicts (a), which implies that  $\mathcal{S}'^-_v$ contains $w$ only once.

For the remainder of the proof, we assume that either (a) or (b) is true, and show that both lead to a contradiction.

Assume that $w \in \mathcal{S}^+_v \cup \mathcal{S}^-_v$. 
If $v$ is dead at time $t$, then $M_{v,t} \equiv M_{+,v,t} -M_{-,v,t} = 0$, and either $M_{+,v,t}=M_{-,v,t} =0$ or $M_{+,v,t}=M_{-,v,t}>0$. 
The event $M_{+,v,t}=M_{-,v,t}>0$ has probability zero by Lemma \ref{th:representation_of_M_as_difference_paper_1}. 
Since $w \in \mathcal{S}^+_v \cup \mathcal{S}^-_v$ and $\mathcal{S}^+_v \cap \mathcal{S}^-_v = \emptyset$, $M^*_{w,t}$ is a term in exactly one of $M_{+,v,t}$ or $M_{-,v,t}$.
Thus, the event $M_{+,v,t}=M_{-,v,t} =0$ implies $M^*_{w,t} = 0$, as $M^* \geq 0$. 
Hence, with probability $1$, the death time of ${\bf M}_{v,:}$ is not smaller than that of ${\bf M}^*_{w,:}$, i.e. $t'^{\max}_v \geq {t'^*}^{\max}_w$. 
We can thus start from \eqref{eq:proof_that_M_star_satisfies_t_min_child_t_max_father_relationship_implies_M_also_does_eq7_paper_1} and write,
\begin{align*}
    \mathbb{P}&(  M \geq 0\ \land \ t'^{\max}_v  <  t'^{\min}_w) \\
    &=      \mathbb{P}(  M \geq 0\ \land\ t'^{\max}_v  <  t'^{\min}_w={t^*}'^{\min}_w 
    \ \land\ t'^{\max}_v \geq {t^*}'^{\max}_w) \\ 
    &\leq  \mathbb{P}(  {t^*}'^{\max}_w \leq t'^{\max}_v  <  {t^*}'^{\min}_w ) = 0,
\end{align*}
where we have used Lemma \ref{th:th_tmin_tmax_relation_theorem_paper_1}, which states that $M \geq 0 \land t'^{\min}_w={t^*}'^{\min}_w$ holds with probability $1$, and the fact that $M^*$ satisfies the longitudinal conditions (in particular \eqref{eq:longitudional_condition_bento_1_paper_1}) with probability $1$.

Similarly, assume that $u \in \mathcal{S}^+_v \cup \mathcal{S}^-_v$.
By the same reasoning as above, we have $t'^{\max}_v \geq {t'^*}^{\max}_u$ with probability $1$. 
We can thus start from \eqref{eq:proof_that_M_star_satisfies_t_min_child_t_max_father_relationship_implies_M_also_does_eq7_paper_1} and write,
\begin{align*}
    \mathbb{P}&(  M \geq 0 \land  t'^{\max}_v  <  t'^{\min}_w) \\
    & =      \mathbb{P}(  M \geq 0\ \land \ t'^{\max}_v  <  t'^{\min}_w={t^*}'^{\min}_w \ \land\ t'^{\max}_v \geq {t^*}'^{\max}_u)\\
    &\leq  \mathbb{P}(  {t^*}'^{\max}_u \leq t'^{\max}_v  <  {t^*}'^{\min}_w ) = 0,
\end{align*}
where we again use Lemma \ref{th:th_tmin_tmax_relation_theorem_paper_1}, which ensures $M \geq 0 \land t'^{\min}_w={t^*}'^{\min}_w$ with probability $1$, and  the fact that since $u$ is the father of $w$ and with probability $1$, and $M^*$ satisfies the longitudinal conditions,  \eqref{eq:longitudional_condition_bento_2_paper_1} holds, implying ${t^*}'^{\max}_u \geq  {t^*}'^{\min}_w$.

\end{proof}

\section{Auxiliary results to prove the main results in Section \ref{sec:effect_of_dynamic_restrictions_on_degeneracy_paper_1}}\label{app:definition_of_increments_paper_1}

To prove Theorem \ref{th:main_theorem_for_our_dynamic_constraints_paper_1} we use Lemmas \ref{th:analytical_bound_on_degeneracy_middle_paper_1}, \ref{th:ergodic_expression_goes_to_one_paper_1}, and \ref{th:bound_on_prob_M1_less_M2_under_brownian_motion_paper_1}. 
In this section, we provide proofs of these lemmas.

\subsection{Proof of Lemma \ref{th:analytical_bound_on_degeneracy_middle_paper_1}}\label{app:proof_of_lemma_5}

\begin{proof}[Proof of Lemma \ref{th:analytical_bound_on_degeneracy_middle_paper_1}]

We have 
    $$        E = 1 + \mathbb{E}_{U^*,M^*}\bigg(\sum_{U \in \mathcal{D}(U^*)} \mathbb{I}(U^{-1}{U^*}M^* \geq 0)\bigg), $$
and we will show that the right-hand side equals \eqref{eq:degeneracy_without_regu_for_one_leaf_perturb_paper_1}.
Similarly,
$$E' \leq 1 + \mathbb{E}_{U^*,M^*}\bigg(\sum_{U \in \mathcal{D}(U^*)} \mathbb{I}(\| U^{-1} U^* \dot{M}^*\| \leq \| \dot{M}^*\| )\bigg) $$
where we have dropped the constraint $U^{-1}{U^*}M^* \geq 0$,
and we will show that the right-hand side equals \eqref{eq:degeneracy_with_regu_for_one_leaf_perturb_paper_1}.

We first derive \eqref{eq:degeneracy_without_regu_for_one_leaf_perturb_paper_1}. By direct calculation and reinterpretation of terms, for any node $i$ we obtain,
\begin{equation}\label{eq:relationship_for_uinv_ustar_mstar_paper_1}
(U^{-1}U^* {M}^*)_{i,:}= ((I - {T})U^* {M}^*)_{i,:} = \sum_{j \in \Delta^* i} {\bf M}^*_{j,:} - \sum_{k \in \partial i} \sum_{j \in \Delta^* k}{\bf M}^*_{j,:},
\end{equation}
 where ${T}$ is the matrix representation of the operator that maps children to their parents in $U$, $\partial i$ denotes the children of $i$ in $U$, and $\Delta^* i$ (resp. $\Delta^* k$) denotes the descendants of $i$ (resp. $k$) in $U^*$, including the node $i$ (resp. $k$) itself. 

Let $\partial^* i$ denote the children of $i$ in $U^*$. For all nodes $i$ for which $\partial i = \partial^* i$, we have from \eqref{eq:relationship_for_uinv_ustar_mstar_paper_1} and the non-negativity of $M^*$ that 
$$(U^{-1}U^* {M}^*)_{i,:} = {\bf M}^*_{i,:} \geq {\bf 0}.$$ 
Since $U \in \mathcal{D}(U^*)$ differs from $U^*$ only by a single leaf displacement, there is exactly one node whose children differ, i.e. $\partial i \neq \partial^* i$, namely the new parent (in $U$) of the leaf $j$ in $U^*$ whose parent has changed. 
For this node,
\begin{equation}\label{eq:1_left_diff_no_regu_expression_paper_1}
    (U^{-1}U^* {M}^*)_{i,:} = {\bf M}^*_{i,:} - {\bf M}^*_{j,:},
\end{equation}
which can be positive or negative depending on the realization of the random variable $M^*$. The nodes $i$ and $j$ depend on $U$ and $U^*$; however, in what follows, we omit this dependency for simplicity.

Using the independence of $M^*$ and $U^*$, linearity of expectation, the expression in \eqref{eq:1_left_diff_no_regu_expression_paper_1}, and the fact that the distribution for $M^*$ is, by assumption, invariant to node-label permutations, we write
\begin{align} 
        \mathbb{E}_{U^*,M^*}&\left(\sum_{U \in \mathcal{D}(U^*)} \mathbb{I}(U^{-1}{U^*}M^* \geq 0)\right) \\
        &= \mathbb{E}_{U^*}\left(\sum_{U \in \mathcal{D}(U^*)} \mathbb{P}_{M^*}(U^{-1}{U^*}M^* \geq 0)\right)\label{eq:intermedia_expression_deg_without_regu_start_paper_1}\\
        &=\mathbb{E}_{U^*}\left(\sum_{U \in \mathcal{D}(U^*)} \mathbb{P}_{M^*}({\bf M}^*_{i,:} - {\bf M}^*_{j,:} \geq 0)\right)\\
        & = \mathbb{E}_{U^*}\left(\sum_{U \in \mathcal{D}(U^*)} \mathbb{P}_{M^*}({\bf M}^*_{2,:} - {\bf M}^*_{1,:} \geq 0)\right)\\
        & = \mathbb{E}_{U^*}\left(|\mathcal{D}(U^*)| \mathbb{P}_{M^*}({\bf M}^*_{2,:} - {\bf M}^*_{1,:} \geq 0)\right)\\
        &=\mathbb{E}_{U^*}\left(|\mathcal{D}(U^*)| \right)\mathbb{P}_{M^*}({\bf M}^*_{1,:} \leq {\bf M}^*_{2,:} )\label{eq:intermedia_expression_deg_without_regu_end_paper_1}
    \end{align}

If $L(U^*)$ is the number of leaves in the tree associated with $U^*$, the size of $\mathcal{D}(U^*)$ is 
$L(U^*)(q-2)$, because each leaf can be made a child of any node except itself and its current parent. 
The probability that a node $i$ is a leaf in a random labeled tree on $q$ nodes is $(1-1/q)^{q-2}$~\cite{goldschmidt2016short}.
Since in a rooted tree the root cannot be a leaf, we obtain 
$$\mathbb{E}(|\mathcal{D}(U^*)|)=(q-2)\mathbb{E}(L(U^*))=(q-2)(q-1)(1-1/q)^{q-2}.$$
Substituting this expectation into the right-hand side of \eqref{eq:intermedia_expression_deg_without_regu_end_paper_1} and adding $1$ yields \eqref{eq:degeneracy_without_regu_for_one_leaf_perturb_paper_1}.

We now prove \eqref{eq:degeneracy_with_regu_for_one_leaf_perturb_paper_1} using an analogous argument. 
If $U \in \mathcal{D}(U^*)$, by the same reasoning as above, for all nodes $i$ for which $\partial i = \partial^* i$, 
$$(U^{-1}U^* \dot{{M}}^*)_{i,:} = \dot{{\bf M}}^*_{i,:},$$ 
and for the unique node $i$ for which $\partial i \neq \partial^* i$, 
$$
    (U^{-1}U^* \dot{{M}}^*)_{i,:} = \dot{{\bf M}^*_{i,:}} - \dot{{\bf M}^*_{j,:}}.
$$
Therefore, 
\begin{align*}
    \| U^{-1} U^* \dot{M}^*\|^2 - \| \dot{M}^*\|^2  &= \|\dot{{\bf M}^*_{i,:}} - \dot{{\bf M}^*_{j,:}}\|^2 - \|\dot{{\bf M}^*_{i,:}}\|^2 \\
    &= \dot{{\bf M}^*_{j,:}} ^{\top}(\dot{{\bf M}^*_{j,:}} - 2\dot{{\bf M}^*_{i,:}}).
\end{align*}%
We can now adapt the argument in \eqref{eq:intermedia_expression_deg_without_regu_start_paper_1}-\eqref{eq:intermedia_expression_deg_without_regu_end_paper_1} to complete the proof.
\end{proof}

\subsection{Proof of Lemma \ref{th:bound_on_prob_M1_less_M2_under_brownian_motion_paper_1}}

To prove Lemma \ref{th:bound_on_prob_M1_less_M2_under_brownian_motion_paper_1} we first require an intermediate lemma, which we now state and prove.

\begin{lem}
    Let ${\bf M}^*$ be distributed according to Assumption \ref{ass:brownian_motion_paper_1}. Conditioned on fixed values  ${M^*}_{1,0} < {M^*}_{2,0}$ we have that  %
    \begin{align*}
       \mathbb{P}_{M^*}({\bf M}^*_{1,:} \leq {\bf M}^*_{2,:} \mid {M^*}_{1,0} , {M^*}_{2,0}) &= 
\frac{ {M^*_{2,0}}^2-{M^*_{1,0}}^2}{6 n} +O \left(\frac{1}{n^2}
\right)
    \end{align*}
\end{lem}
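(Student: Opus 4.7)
My plan is to reduce this conditional probability to a survival problem for a standard 2D Brownian motion in a wedge and then invoke the long-time asymptotics of the Dirichlet heat kernel on that wedge. Throughout, I write $a=M^*_{1,0}$ and $b=M^*_{2,0}$ with $0<a<b$.

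First, I would use Assumption \ref{ass:brownian_motion_paper_1} to express $M^*_i$ as $\tilde M_i$ conditioned on $\{\tilde M_{i,t}\geq 0\ \forall t\in[0,n]\}$. By independence of $\tilde M_1$ and $\tilde M_2$,
\begin{align*}
\mathbb{P}_{M^*}({\bf M}^*_{1,:}\leq {\bf M}^*_{2,:}\mid a,b)
= \frac{\mathbb{P}(\tilde M_{1,\cdot}\geq 0,\ \tilde M_{2,\cdot}\geq \tilde M_{1,\cdot}\text{ on }[0,n]\mid a,b)}{\text{Erf}(a/\sqrt{2n})\,\text{Erf}(b/\sqrt{2n})},
\end{align*}
where I dropped the constraint $\tilde M_{2,\cdot}\geq 0$, which is automatic from $\tilde M_2\geq\tilde M_1\geq 0$. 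The denominator factors by independence, and using $\text{Erf}(x)=\tfrac{2x}{\sqrt{\pi}}+O(x^3)$ it expands as $\tfrac{2ab}{\pi n}+O(1/n^2)$.

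For the numerator, I would perform the isometric change of variables $U_t=(\tilde M_{1,t}+\tilde M_{2,t})/\sqrt{2}$, $V_t=(\tilde M_{2,t}-\tilde M_{1,t})/\sqrt{2}$, under which $(U,V)$ is a standard 2D Brownian motion starting at $(u_0,v_0)=((a+b)/\sqrt{2},(b-a)/\sqrt{2})$, and the two trajectory constraints become $U\geq V$ and $V\geq 0$. The admissible region is therefore the wedge $W=\{(u,v):u\geq v\geq 0\}$ of opening angle $\theta=\pi/4$, and the numerator equals $\mathbb{P}(\tau>n)$ for $\tau$ the exit time of $(U,V)$ from $W$, with polar starting coordinates $r_0=\sqrt{a^2+b^2}$ and $\phi_0=\arctan((b-a)/(a+b))\in(0,\pi/4)$.

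Next, I would invoke the Dirichlet heat-kernel eigenfunction expansion on $W$. Integrating the polar heat kernel in the angular variable (the only nonzero contributions coming from odd modes) gives
\begin{align*}
\mathbb{P}(\tau>n)=\frac{4}{\pi n}\sum_{k\text{ odd}}\frac{\sin(k\pi\phi_0/\theta)}{k}\int_0^\infty r\,e^{-(r_0^2+r^2)/(2n)}\,I_{k\pi/\theta}\!\left(\tfrac{rr_0}{n}\right)dr.
\end{align*}
Using the small-argument expansion $I_\nu(x)\sim (x/2)^\nu/\Gamma(\nu+1)$, the substitution $s=r/\sqrt{n}$, and $\int_0^\infty s^{\nu+1}e^{-s^2/2}\,ds=2^{\nu/2}\Gamma(1+\nu/2)$, the $k=1$ term (with $\nu=\pi/\theta=4$) asymptotically reduces to $r_0^4/(48n)$, yielding $\mathbb{P}(\tau>n)=\tfrac{r_0^4\sin(4\phi_0)}{12\pi n^2}+O(1/n^3)$. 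Applying the half-angle identities to $\tan\phi_0=(b-a)/(a+b)$ gives $r_0^4\sin(4\phi_0)=4ab(b^2-a^2)$, so the numerator equals $\tfrac{ab(b^2-a^2)}{3\pi n^2}+O(1/n^3)$. Dividing by the denominator produces the claimed $\tfrac{b^2-a^2}{6n}+O(1/n^2)$.

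The main obstacle will be justifying the error terms uniformly. The small-argument Bessel expansion is only accurate when $rr_0/n$ is small, so I must control the tail of the radial integral for $r\gg\sqrt{n}$ using the Gaussian factor $e^{-r^2/(2n)}$ together with a classical uniform upper bound such as $I_\nu(x)\leq e^x(x/2)^\nu/\Gamma(\nu+1)$; similar care is needed to verify that the higher angular modes ($k\geq 3$) contribute at order $O(n^{-k\pi/(2\theta)})=O(n^{-6})$ or smaller, so that they are absorbed into the $O(1/n^2)$ error of the final ratio.
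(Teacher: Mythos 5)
Your proposal is correct and follows essentially the same route as the paper: write the conditional probability as the ratio of the joint survival-and-ordering probability to the product of the two single-barrier survival probabilities, compute the denominator by the reflection principle, and extract the numerator's $\Theta(1/n^2)$ leading term from the lowest odd mode (Bessel order $\pi/\theta=4$) of the Dirichlet heat-kernel expansion in a wedge of opening angle $\pi/4$, arriving at the same $r_0^4\sin(4\phi_0)/(12\pi n^2)$ and hence the same final ratio. The only difference is presentational: you decorrelate explicitly by rotating to $(U,V)=\bigl((\tilde M_1+\tilde M_2)/\sqrt 2,\ (\tilde M_2-\tilde M_1)/\sqrt 2\bigr)$ and work with a standard planar Brownian motion in the wedge, whereas the paper keeps the correlated pair $(X_t,Y_t)$ with $\rho=-1/\sqrt 2$ and invokes the equivalent two-barrier non-exit formula from the literature, which is itself derived by exactly this change of variables.
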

\begin{proof}
To shorten notation, throughout this proof all probabilities are implicitly 
conditioned on fixed values $M^*_{1,0} < M^*_{2,0}$. For example, $\mathbb{P}_{\tilde{\bf M}}(\tilde{\bf M}_{1,:}< 0)$ denotes ``$\mathbb{P}_{\tilde{\bf M}}(\tilde{\bf M}_{1,:}< 0\mid M^*_{1,0} , M^*_{2,0})$ where $ M^*_{1,0} < M^*_{2,0}$.''

The event  ${\bf M}^*_{1,:} \leq {\bf M}^*_{2,:}$ is the same as the event $ (\tilde{\bf M}_{2,:} - \tilde{\bf M}_{1,:})/\sqrt{2} \geq 0$ conditioned on $\tilde{\bf M}_{1,:} , \tilde{\bf M}_{2,:}\geq 0$.
We compute separately (a) the probability that $\tilde{\bf M}_{1,:} , \tilde{\bf M}_{2,:}\geq 0$ and (b) the probability that $(\tilde{\bf M}_{2,:} - \tilde{\bf M}_{1,:})/\sqrt{2} \geq 0 \land \tilde{\bf M}_{1,:} \geq 0$. The ratio of (b) to (a) yields the desired probability.

The event $\tilde{\bf M}_{1,:}\geq 0$ is independent of the event $\tilde{\bf M}_{2,:}\geq 0$. Their probabilities are identical and are well known in literature from the distribution of the hitting time of a standard Brownian motion at a boundary; see e.g., \cite{borodin2012handbook}, and can be derived using a reflection principle. In particular, we have that
\begin{align*}
    \mathbb{P}_{\tilde{ M}}(\tilde{\bf M}_{1,:}< 0)&= \mathbb{P}_{\tilde{\bf M}_{1,:}}\Big(\max_{t\in[0,n]} M^*_{1,0}-\tilde{ M}_{1,t}> M^*_{1,0}\Big)  \\
    &= \mathbb{P}\Big(\max_{t\in[0,n]} { B}_t> M^*_{1,0}\Big),
\end{align*}
where $B_t$ is a standard Brownian motion started at zero. We recall that the observation interval is $t\in[0,n]$, where $n$ represents the ``number'' of samples.
Following, for example, \cite{morters2010brownian}, and using a reflection argument, we obtain $\mathbb{P}(\max_{t\in[0,n]} B_t> M^*_{1,0}) = 2\mathbb{P}(B_n > M^*_{1,0}) = 2 (1-\Phi(M^*_{1,0}/\sqrt{2T}))$,
where $\Phi$ is the cumulative distribution function of a standard Gaussian random variable.
Therefore,  
$$\mathbb{P}_{\tilde{\bf M}_{1,:}}(\tilde{\bf M}_{1,:}\geq 0) = 2\Phi(M^*_{1,0}/\sqrt{2n})-1,$$ 
and 
$$ \mathbb{P}_{\tilde{ M}}(\tilde{\bf M}_{1,:},\tilde{\bf M}_{2,:}\geq 0) = (2\Phi(M^*_{1,0}/\sqrt{2n}) - 1)(2\Phi(M^*_{2,0}/\sqrt{2n}) - 1).$$
Let 
$${Y}_t \equiv (\tilde{ M}_{2,0} - \tilde{ M}_{1,0})/\sqrt{2} - (\tilde{M}_{2,t} - \tilde{M}_{1,t})/\sqrt{2},$$
and 
$${ X}_t \equiv \tilde{ M}_{1,0}- \tilde{M}_{1,t}.$$ 
The processes ${X}_t$ and ${Y}_{t}$ are correlated standard Brownian motions with zero drift. 
Specifically, their correlation coefficient at $t=1$ is equal to $\rho = -1/\sqrt{2}$, their variance at $t=1$ equals to $1$, and both start at zero.
The value of (b) is the probability that the two-dimensional Brownian motion $(X_t,Y_t)$ does not hit the line $x = a \equiv M^*_{1,0} > 0$ nor the line $y = b \equiv (\tilde{ M}_{2,0} - \tilde{ M}_{1,0})/\sqrt{2} >0 $ before $t=n$.
This problem has been studied extensively; see, e.g., \cite{he1998double, dȩbicki2021finite, rogers2006correlation, kou2016first, shao2013estimates, metzler2010first, iyengar1985hitting , buckholtz1979first , zhou2001analysis}.
Many of these works follow a similar approach, which we briefly summarize here. Namely, if $f(x,y,t)$ is the joint density that $(X_t,Y_t)$ is at $(x,y)$ and that neither line $x=a,y=b$ is hit before time $t$, then $f$ satisfies the Kolmogorov forward equation
\begin{align}\label{eq:kolmogorov_eq_for_our_problem_paper_1}
    \frac{\partial f}{\partial t} = \frac{1}{2} \frac{\partial^2 f}{\partial x^2} + \frac{\partial^2 f}{\partial y^2} + 2 \rho \frac{\partial^2 f}{\partial x\partial y},
\end{align}
subject to the boundary and initial conditions
\begin{align*}
    &f(a,y,t)=f(x,b,t)=f(-\infty,b,t)=f(a,-\infty,t) = 0, \\
&f(x,y,0)=\delta(x)\delta(y),
\end{align*}
where $\delta$ is an indicator functions, and 
\begin{align}
&\int^a_{-\infty} \int^b_{-\infty} f(x,y,t) {\rm d}x{\rm d}y \leq 1 ,\; t > 0 \label{eq:last_constraint_kolmogorov_paper_1}.
\end{align}
Integrating $f$ over the region $x \leq a$ and $y \leq b$, which is the left-hand side of \eqref{eq:last_constraint_kolmogorov_paper_1}, yields the value of (b).
After a change of variables, \eqref{eq:kolmogorov_eq_for_our_problem_paper_1} becomes a heat equation that can be solved in polar coordinates. Integrating $f$ over $x \leq a$ and $y \leq b$ leads to the following series expression for (b), reproduced from \cite{metzler2010first},
\begin{align}
    \mathbb{P}&(\max_{t\in[0,n]} X_t \leq a \land \max_{t\in[0,n]} Y_t \leq b)  \nonumber\\
    &\qquad= e^{-\frac{r^2_0}{4n}}\frac{2 {r_0}}{\sqrt{2 \pi  n}} \sum_{k=1,3,5,...}\frac{1}{k}\bigg(I_{\frac{1}{2} \left(\frac{k \pi }{\alpha }-1\right)}\Big(\frac{{r_0}^2}{4 n}\Big)\nonumber\\
    &\qquad+ I_{\frac{1}{2} \left(\frac{\pi  k}{\alpha }+1\right)}\Big(\frac{{r_0}^2}{4 n}\Big)\bigg)\sin \Big(\frac{\pi  {\theta_0} k}{\alpha }\Big),\label{eq:full_series_expression_hit_time_paper_1}
\end{align}
where,using $\rho < 0$ and that both $X_{t=1}$ and $Y_{t=1}$ have unit variance,
\begin{align*}
    r_0 &= \sqrt{\frac{a^2 + b^2 - 2\rho a b}{1-\rho^2}}  \\
    \alpha &= \arctan\left(-\frac{\sqrt{1-\rho^2}}{\rho}\right)\\
    \theta_0 &= \arctan\left(\frac{b\sqrt{1-\rho^2}}{a-b\rho}\right),
\end{align*}
and $I_a(x)$ is the  modified Bessel function of the first kind of order $a$.
Since $\rho = -1/\sqrt{2}$, we havethat $1 - \rho^2 = 1/2$, $r_0 = \sqrt{{M^*}^2_{1,0}+{M^*}^2_{2,0}}$, $\alpha = \pi/4$, and $\theta_0 = \arctan(1-\frac{2 a}{2 a+b\sqrt{2} })$, and the entire expression depends only on $n$, $M^*_{1,0}$, and $M^*_{2,0}$.

Defining $z \equiv r_0/(2\sqrt{n})$, and using the series representation 
$$I_a(x) = \left(\frac{x}{2}\right)^a \sum^\infty_{j=0} \frac{(x/2)^{2j}}{j! \Gamma(n+j+1)},$$ 
we continue \eqref{eq:full_series_expression_hit_time_paper_1} as shown in
\eqref{eq:expression_before_expansion_of_small_z_paper_1},
where,  in the last line, we used the identity $\Gamma(x+1)=x\Gamma(x)$.
Implicit in the derivation is the fact that the order of summation may be exchanged, since the double series converges absolutely.
\begin{figure*}[h]
    \begin{equation}
        \begin{aligned}
            &e^{-z^2}\frac{4 z}{\sqrt{2 \pi}} \sum^{\infty}_{k=0}\frac{1}{2k+1}
 \Big(I_{-\frac{1}{2} + 2(2k+1)}\big(z^2\big) +I_{\frac{1}{2} + 2(2k+1)}\big(z^2\big)\Big)\sin(4(2k+1)\theta_0) \\
 &= e^{-z^2}\frac{4 z}{\sqrt{2 \pi}} \sum^{\infty}_{k=0}\frac{\sin(4(2k+1)\theta_0)}{2k+1}\sum^{\infty}_{j=0}\bigg( \frac{(z^4/4)^j\ (z^2/2)^{-\frac{1}{2} + 2(2k+1)}}{j!\Gamma(-\frac{1}{2} + 2(2k+1) + j + 1)} +\frac{(z^4/4)^j\ (z^2/2)^{\frac{1}{2} + 2(2k+1)}}{j!\Gamma(\frac{1}{2} + 2(2k+1) + j + 1)}  \bigg)\\
 &=\frac{4 z e^{-z^2}}{\sqrt{2 \pi}} \sum^{\infty}_{k=0}\sum^{\infty}_{j=0}\frac{\sin(4(2k+1)\theta_0)}{2k+1}\frac{(z^2/2)^{2j + 2(2k+1)}}{j! \Gamma(\frac{1}{2} + 2(2k+1) + j )}\bigg( \Big(\frac{z^2}{2}\Big)^{-1/2} + \frac{(z^2/2)^{1/2}}{\frac{1}{2} + 2(2k+1) + j }  \bigg).\label{eq:expression_before_expansion_of_small_z_paper_1}
        \end{aligned}
    \end{equation}
\end{figure*}

To see this, note that the expression being summed is bounded in absolute value by $$\frac{(z^4/4)^j }{j!}\frac{(z^4/4)^{s}}{s!}((z^2/2)^{1/2}+(z^2/2)^{-1/2}),$$ where $s=2k+1$.
Consequently, the sum of absolute values is bounded by 
\begin{align*}
&\sum^{\infty}_{s=0} \sum^{\infty}_{j=0} \frac{(z^4/4)^j }{j!}\frac{(z^4/4)^{s}}{s!}\big((z^2/2)^{1/2}+(z^2/2)^{-1/2}\big) \\
&= e^{z^4/2}\big((z^2/2)^{1/2}+(z^2/2)^{-1/2}\big) \ <\ \infty,
\end{align*}
since $z > 0$.
As a consequence of this bound, if we truncate the summation at order $z^s$,
the truncation error is bounded by $C(z) z^{s+1}$ for some function $C(z)$ that is a power-series convergent in absolute value for which $C(0) < +\infty$. 
In other words, for $z$ sufficiently small, the error is uniformly bounded by $C z^{s+1}$ for some constant $C$.

If we expand $e^{-z^2}$ and collect the lowest-order powers of $z$ in \eqref{eq:expression_before_expansion_of_small_z_paper_1}, we can express (b) as
\begin{equation*}
\mathbb{P}(0 \leq \tilde{\bf M}_{1,:} \leq \tilde{\bf M}_{2,:}) = \frac{4 \sin(4\theta_0) z^4}{3 \pi} \left(1-\frac{4 z^2}{5}+\frac{2 z^4}{5}+ O(z^6)\right),
\end{equation*}
where, after a few trigonometric manipulations and using the definition of $a$ and $b$,
\begin{align*}
\sin(4\theta_0)&=\frac{a b \left(\sqrt{2} a+2 b\right) \left(2 a+\sqrt{2} b\right)}{\left(a^2+\sqrt{2} a b+b^2\right)^2}\\
&= \frac{4 {M^*}_{1,0} {M^*}_{2,0}^3-4 {M^*}_{1,0}^3 {M^*}_{2,0}}{\left({M^*}_{1,0}^2+{M^*}_{2,0}^2\right){}^2}.
\end{align*}
At the same time, expanding (a) yields
\begin{align*}
\mathbb{P}&(\tilde{\bf M}_{1,:},  \tilde{\bf M}_{2,:} \geq 0)
\\
&=
(2\Phi(M^*_{1,0}/\sqrt{2n}) - 1)(2\Phi(M^*_{2,0}/\sqrt{2n}) - 1)\\ &=(2\Phi(z r_1) - 1)(2\Phi(z r_2) - 1) \\
&= \frac{4 r_1 r_2 z^2}{\pi }\left(1+\frac{1}{3} \left(-r_1^2-r_2^2\right) z^2+ O(z^4) \right)
\end{align*}
where $r_1=(\sqrt{2}M^*_{1,0}/r_0)$ and $r_2=(\sqrt{2}M^*_{2,0}/r_0)$ .

taking the ratio of (a) and (b), we obtain
\begin{align*}
 &\mathbb{P}_{{ M}^*}({\bf M}^*_{1,:} \leq {\bf M}^*_{2,:} | {M^*}_{1,0} , {M^*}_{2,0}) = \frac{\sin(4 \theta_0) z^2}{3 r_1 r_2} + O(z^4),
 \end{align*}
 where we also have that 
 \begin{align*}
 &\frac{\sin(4 \theta_0) z^2}{3 r_1 r_2}=\frac{\sin(4 \theta_0) r^2_0 z^2}{6 M^*_{1,0}M^*_{2,0}} \\
 &=\frac{ r^2_0 z^2}{6 M^*_{1,0}M^*_{2,0}} 
\frac{4 {M^*}_{1,0} {M^*}_{2,0}^3-4 {M^*}_{1,0}^3 {M^*}_{2,0}}{\left({M^*}_{1,0}^2+{M^*}_{2,0}^2\right){}^2}\\
&=\frac{ r^2_0 z^2}{6 } 
\frac{4  {M^*}_{2,0}^2-4 {M^*}_{1,0}^2 }{\left({M^*}_{1,0}^2+{M^*}_{2,0}^2\right){}^2}=\frac{ r^4_0 }{6 } 
\frac{  {M^*}_{2,0}^2- {M^*}_{1,0}^2 }{n\left({M^*}_{1,0}^2+{M^*}_{2,0}^2\right){}^2} \\
&=  
\frac{  {M^*}_{2,0}^2- {M^*}_{1,0}^2 }{6 n}.
\end{align*}
\end{proof}

\begin{proof}[Proof of Lemma \ref{th:bound_on_prob_M1_less_M2_under_brownian_motion_paper_1}]
Since $M^*_{1,0}$ and $M^*_{2,0}$ are i.i.d. and uniform on $[0,1]$, we have that
\begin{align*}
&\mathbb{P}_{M^*}({\bf M}^*_{1,:} \leq {\bf M}^*_{2,:} )\\
&=\int^1_0 \int^1_{M^*_{1,0}} 
\mathbb{P}_{M^*}({\bf M}^*_{1,:} \leq {\bf M}^*_{2,:} \mid {M^*}_{1,0} , {M^*}_{2,0})
\ {\rm d} M^*_{2,0}\ {\rm d} M^*_{1,0}\\
&=\int^1_0 \int^1_{M^*_{1,0}} 
\frac{ {M^*_{2,0}}^2-{M^*_{1,0}}^2}{6 n}
{\rm d} M^*_{2,0}{\rm d}M^*_{1,0} + O(1/n^2) \\
&= 1/(36n) + O(1/n^2).\numberthis 
\end{align*}
\end{proof}

\section{Details on generating $M^*$ for Section \ref{sec:our_new_numerical_results_paper_1}}
\label{appendix:details_to_generate_M_paper_1}

We initialize the trajectories by setting $\tilde{{\bf M}}_{:,1} =  {\bf 1}/n$ and recursively compute
$$\tilde{{\bf M}}_{:,t} = \tilde{{\bf M}}_{:,t-1}  + \alpha {\bf Z}_{:,t},$$
where $\{Z_{i,t}\}$ are i.i.d. random variables  following a normal distribution $\mathcal{N}(0,1)$. 

For Figure \ref{fig:degeneracy_with_and_without_time_constraints_paper_1} we set $\alpha = 0.05$.
To obtain $M^*$ we apply the matrix $P = I - {\bf 1}{\bf 1}^{\top}/n$ to each 
$\tilde{{\bf M}}_{:,t}$ so that ${\bf 1}^{\top} M^*_{:,t} = {\bf 1}^{\top}$, and discard any trajectory that does not satisfy the non-negativity constraint $M^* \geq 0$. 
A typical example is shown in Figure \ref{fig:trajectory_paper_1}.
\begin{figure}[h!]
\hfill
\includegraphics[trim={1.1cm 0.7cm 1.5cm 0.8cm},
clip,
width=0.92\columnwidth]{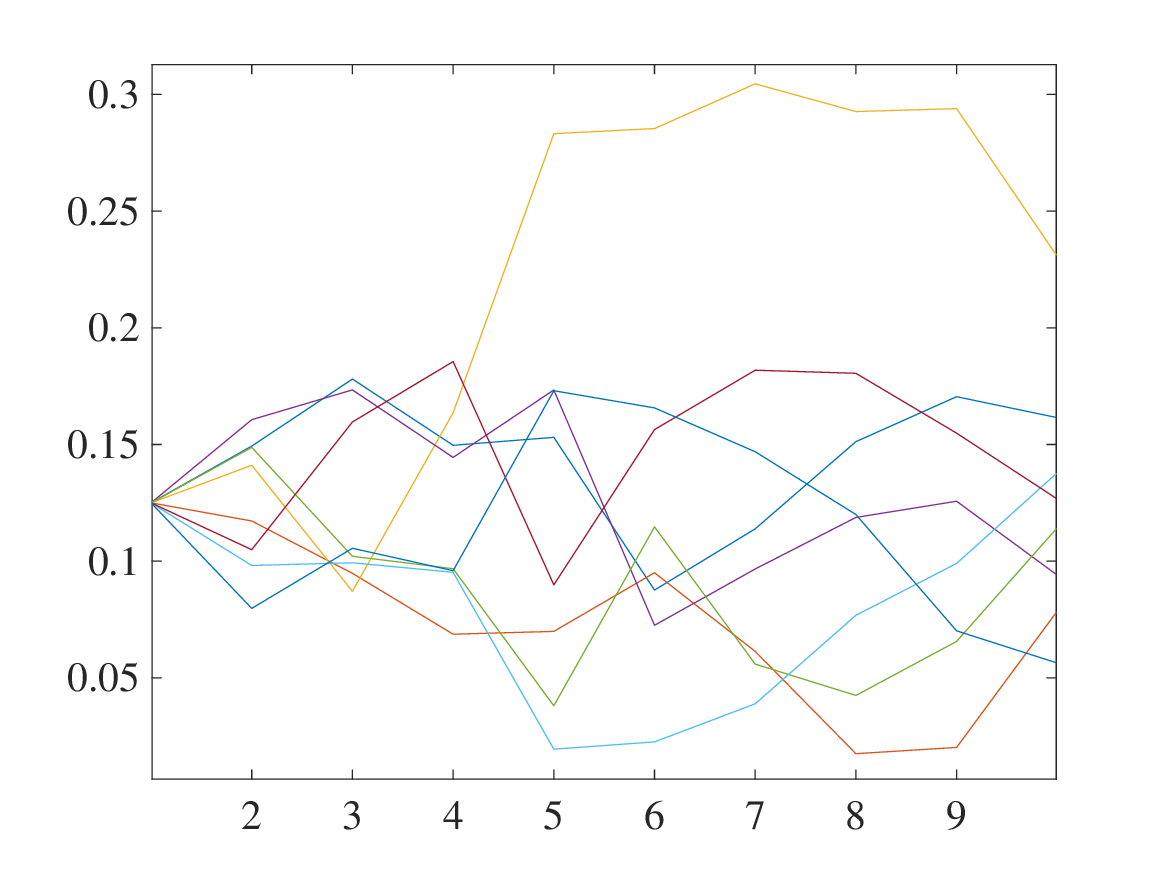}
\put(-245,20){\rotatebox{90}{\parbox{5cm}{\centering \fontsize{9}{11}\selectfont {Mutant abundance frequency, ${M}^*$}}}}
\put(-125,-10){\parbox{3cm}{\fontsize{9}{11}\selectfont {Sample $t$ }}}
\vspace{0cm}
\caption{Typical generated set of trajectories $M^*$.}
\label{fig:trajectory_paper_1}
\end{figure}
The procedure used to generate $M^*$ nearly satisfies Assumption \ref{ass:U_star_and_M_star_indep_and_partially_uniform_paper_1} and Assumption \ref{ass:brownian_motion_paper_1}. 
Specifically, $U^*$ and $M^*$ are generated independently,
and it is  impossible for row ${\bf M}^*_{i,:}$ to become identically zero.
The evolution of each ${\bf M}^*_{i,:}$ follows a Brownian motion that is accepted if it remains non-negative and rejected otherwise. 
However, we subsequently adjust the trajectories to enforce the constraint ${\bf 1}^{\top} M^* = {\bf 1}^{\top}$, and the initial condition is not drawn uniformly at random but instead fixed as ${\bf M}^*_{:,1} = {\bf 1}/n$. 
These modifications break both the independence condition in Assumption \ref{ass:U_star_and_M_star_indep_and_partially_uniform_paper_1} and the distributional assumptions on $\tilde{M}$ in Assumption \ref{ass:brownian_motion_paper_1}. 
These differences are small and are unlikely to affect the empirical results we present. 

Because of the way we generate $M^*$, Assumption \ref{ass:assumption_no_root_node_is_zero_paper_1} is always satisfied; that is, there exists at least one time index $t$ such that $M^*_{r^*,t} > 0$. 
By Lemma \ref{th:assumption_2_lemma_paper_1}, this implies that, during inference, we do not need to explicitly enforce the constraint ${\bf 1}^{\top} M = {\bf 1}^{\top}$ and instead only need to verify whether $U^{-1} U^* M^* \geq 0$ and whether $\|MD\| \leq \|M^*D\|$.

\end{appendices}

\end{document}